\newcolumntype{x}[1]{>{\centering\arraybackslash}p{#1}}
\newcommand{\tikzmark}[1]{\tikz[overlay,remember picture] \node (#1) {};}
\newcommand{\data}{x}
\newcommand{\dataObs}{x^0}
\newcommand{\datasim}{x^{\mbox{sim}}}
\newcommand{\distance}{d}
\newcommand{\parameter}{\theta}
\newcommand{\statistics}{s}
\newcommand{\logit}{\text{logit}}
\newcommand{\Var}{\mathbb{V}}
\DeclareMathOperator*{\argmin}{arg\,min}
\newcommand{\Hessian}{\operatorname{H}}
\newcommand{\E}{\mathbb{E}}
\newcommand{\R}{\mathbb{R}}
\newcommand{\X}{\mathcal{X}}
\newcommand{\s}{s}
\newcommand{\lr}{\text{lr}}
\newcommand{\dsim}{x}
\newcommand{\dobs}{y}
\newcommand{\Dsim}{X}
\newcommand{\Dobs}{Y}
\newcommand{\sumjk}{\sum_{\substack{j,k=1\\k\neq j}}^m}
\newcommand{\SE}{S_{\operatorname{E}}}
\newcommand{\DE}{D_{\operatorname{E}}}
\newcommand\independent{\protect\mathpalette{\protect\independenT}{\perp}}
\def\independenT#1#2{\mathrel{\rlap{$#1#2$}\mkern2mu{#1#2}}}
\newcommand\footnoteref[1]{\protected@xdef\@thefnmark{\ref{#1}}\@footnotemark}
\newtheorem{theorem}{Theorem}
\newtheorem{corollary}{Corollary}
\newtheorem{lemma}{Lemma}
\theoremstyle{definition}
\newtheorem{definition}{Definition}
\newtheorem{remark}{Remark}
\title{Score Matched Neural Exponential Families \\for Likelihood-Free Inference\thanks{Code for reproducing the experiments is available \href{https://github.com/LoryPack/SM-ExpFam-LFI}{here}.}}
\author{}
\author{
	Lorenzo Pacchiardi$^1$\thanks{Corresponding author: lorenzo.pacchiardi@stats.ox.ac.uk.}, Ritabrata Dutta$^2 $\\
	{\em \small $^1$Department of Statistics, University of Oxford, UK}\\
	{\em \small $^2$Department of Statistics, University of Warwick, UK}\\
}
\date{31st January 2022}
\begin{document}
	
	\maketitle		
	
	\begin{abstract}
		Bayesian Likelihood-Free Inference (LFI) approaches allow to obtain posterior distributions for stochastic models with intractable likelihood, by relying on model simulations. In Approximate Bayesian Computation (ABC), a popular LFI method, summary statistics are used to reduce data dimensionality. ABC algorithms adaptively tailor simulations to the observation in order to sample from an approximate posterior, whose form depends on the chosen statistics. In this work, we introduce a new way to learn ABC statistics: we first generate parameter-simulation pairs from the model independently on the observation; then, we use Score Matching to train a neural conditional exponential family to approximate the likelihood. The exponential family is the largest class of distributions with fixed-size sufficient statistics; thus, we use them in ABC, which is intuitively appealing and has state-of-the-art performance. In parallel, we insert our likelihood approximation in an MCMC for doubly intractable distributions to draw posterior samples. We can repeat that for any number of observations with no additional model simulations, with performance comparable to related approaches. We validate our methods on toy models with known likelihood and a large-dimensional time-series model. 
	\end{abstract}

	\section{Introduction}\label{sec:intro}
	Stochastic simulator models are used to simulate realizations of physical phenomena; usually, a set of parameters $ \theta $ govern the simulation output. As the model is stochastic, repeated simulations with fixed $\theta $ yield different outputs; their distribution is the model's \textit{likelihood}. Simple models provide an analytic expression of the likelihood, which is however unavailable for more complex ones. 
	
	Upon observing a real-world realization of the phenomenon the model is describing, researchers may want to obtain a \textit{posterior distribution} over parameters.
	If the likelihood is known, standard Bayesian inference tools (such as MCMC or variational inference) allow to get posterior samples; if otherwise the likelihood is missing, Likelihood-Free Inference (LFI) techniques are the only viable solution. LFI has been applied in several domains, including genomics \citep{tavare1997inferring}, biological science \citep{dutta2018parameter}, meteorology \citep{hakkarainen2012closure}, geological science \citep{pacchiardi2020}, genomics \citep{tavare1997inferring, toni2009approximate, marttinen2015recombination}, and epidemiology \citep{mckinley2018approximate, minter2019approximate, dutta2021using}.
	
	In some LFI techniques \citep{wood2010statistical, thomas2020likelihood, price2018bayesian}, the missing likelihood is replaced with an explicit approximation built from model simulations at each parameter value.
	Alternatively, Approximate Bayesian Computation (ABC) \citep{marin2012approximate, lintusaari2017} circumvents the unavailability of the likelihood by comparing model simulations with the observation according to some notion of distance. 
	To reduce data dimensionality, ABC usually relies on summary statistics, whose choice is not straightforward. Recently, the expressive capabilities of Neural Networks (NNs) have been leveraged to learn ABC statistics \citep{jiang2017learning, wiqvist2019partially, pacchiardi2020, aakesson2020convolutional}. These techniques train NNs parametrizing the statistics by minimizing a suitable loss on a training set of parameter-simulation pairs generated from the model.

	In the present work, we propose a new way to learn ABC statistics with NNs. Most previous works \citep{jiang2017learning, wiqvist2019partially, aakesson2020convolutional} trained a single NN with the regression loss introduced in \cite{fearnhead_constructing_2012}. Instead, we consider an exponential family with two NNs parametrizing respectively the sufficient statistics and natural parameters and fit this to the model.
	As the exponential family is the most general class of distributions with sufficient statistics of fixed size (Appendix~\ref{app:suff_stats}), it makes intuitive sense to use the learned sufficient statistics in ABC. Indeed, this approach empirically yields superior or equivalent performance with respect to state-of-the-art approaches.
	
	As in previous approaches, we consider a training set of parameter-simulation pairs; however, we train the NNs with Score Matching (SM) or its Sliced approximation (SSM), which do not require evaluating the normalizing constant of the exponential family. We extend the SM and SSM objectives to the setting of conditional densities, thus fitting the likelihood approximation for all values of data and parameters.

	In contrast to related approaches \citep{jiang2017learning, wiqvist2019partially, pacchiardi2020, aakesson2020convolutional}, our method provides a full likelihood approximation. We test therefore direct sampling from the corresponding posterior (in place of ABC) with an MCMC algorithm for doubly intractable distributions. This approach achieves satisfactory performance with no additional model simulations and is therefore a valid alternative to standard LFI schemes for expensive simulator models. The computational gain is even larger when inference is performed for several observations, as the same likelihood approximation can be used.

	The rest of our paper is organized as follows. In Section~\ref{sec:LFI}, we briefly review some LFI methods. Next, in Section~\ref{sec:lik_app} we introduce the neural conditional exponential family and show how to fit it with SM or SSM. In Section~\ref{sec:inference_strategy} we discuss how to exploit the exponential family to extract ABC statistics or for direct sampling. We extensively validate our proposed approaches in 	Section~\ref{sec:experiments}.
	Finally, we discuss related works in Section~\ref{sec:related_works} and conclude in Section~\ref{sec:conclusion}, where we also highlight directions for future research. 	
	\subsection{Notation}
	We set here notation for the rest of our manuscript. We will denote respectively by $ \mathcal{X} \subseteq \R^d $ and $ \Theta \subseteq \R^p $ the data and parameter space. 
	Upper case letters will denote random variables while lower case ones will denote observed (fixed) values. Subscripts will denote vector components and superscript in brackets sample indices, while
	$ \|\cdot \| $ will denote the $ \ell_2 $ norm.
	\section{Likelihood-Free Inference}\label{sec:LFI}
	
	Let us consider a model which allows to generate a simulation $ x \in \X $ at any parameter value $ \theta\in\Theta $, but for which it is not possible to evaluate the likelihood $ p_0(x|\theta) $.
	Given an observation $ x^0 $ and a prior on the parameters $ \pi(\theta) $, Bayesian inference obtains the posterior distribution $ \pi_0(\theta|x^0) = \frac{\pi(\theta) p_0(x^0|\theta)}{p_0(x^0)} $. However, obtaining that explicitly (or even sampling from it with Markov Chain Monte Carlo, MCMC) is impossible without having access to the likelihood, at least up to a normalizing constant that is independent on $ x $. 	
	
	Likelihood-Free Inference techniques yield approximations of the posterior distribution by replacing likelihood evaluations with model simulations. Broadly, they can be split into two kinds of approaches: Surrogate Likelihood, which explicitly builds a likelihood function, and Approximate Bayesian Computation (ABC), which instead uses discrepancy between simulated and observed data. We detail those two approaches in the following.
	
	\subsection{Approximate Bayesian Computation (ABC)}\label{sec:ABC}
	Approximate Bayesian Computation (ABC, \citealt{marin2012approximate,lintusaari2017}) algorithms sample from an approximate posterior distribution by finding parameter values that yield simulated data resembling the observations to a sufficient degree. 	
	Usually, ABC methods rely on statistics function to reduce the dimensionality of simulated and observed data and thus improve the computational efficiency of the algorithm. Therefore, the similarity between simulated and observed data is normally defined via a distance function between the corresponding statistics.
	ABC algorithms sample from:
	\begin{equation}\label{Eq:ABC_post}
		\pi^\epsilon (\theta|\statistics(\dataObs)) \propto \pi (\theta) \int_\X \mathbbm{1}[\distance(\statistics(\dataObs), \statistics(\datasim)) \le \epsilon]\  p_0(\datasim|\theta) d\datasim	\end{equation}
	where $ s $ and $ d $ denote respectively the statistics function and the discrepancy measure, while $ \mathbbm 1[\cdot] $ is an indicator function and $ \epsilon $ is an acceptance threshold.\footnote{In general, $ \mathbbm{1}(\distance(\statistics(\dataObs), \statistics(\datasim))) $ can be replaced with $ \mathbb K_\epsilon (\distance(\statistics(\dataObs), \statistics(\datasim))) $, where $ \mathbb K_\epsilon $ is a rejection kernel which concentrates around 0 and goes to 0 for argument going to $ \pm \infty $, and whose width decreases when $ \epsilon $ decreases. We restrict here on $ \mathbbm{1}$, which is the most common choice.}
	
	It is easy to sample from $ \pi^\epsilon (\theta|\statistics(\dataObs)) $ with rejection, by generating $ (\theta^{(j)}, x^{(j)}) \sim \pi(\theta) p_0(x|\theta) $ and accepting $ \theta^{(j)} $ if $ \distance(\statistics(\dataObs), \statistics(x^{(j)})) \le \epsilon $. This is inefficient when $ \epsilon $ is small, as most simulations end up being rejected; more efficient ABC algorithms have been developed, based for instance on Markov Chain Monte Carlo \citep{marjoram2003markov}, Population Monte Carlo \citep{beaumont2010approximate} or Sequential Monte Carlo \citep{del2012adaptive}, which reach lower values of $ \epsilon $ with the same computational budget. 
	
	If $ d $ is a distance measure and the statistic $ s $ is sufficient for $ p(x|\theta) $, then the ABC posterior $ \pi^\epsilon (\theta|\statistics(\dataObs)) $ converges to the true one as $ \epsilon \to 0 $ (\citealt{lintusaari2017}; see Appendix~\ref{app:suff_stats} for definition of sufficient statistics). 
	
	In general, however, sufficient statistics are unavailable so that ABC can at best give an approximation of the exact posterior even with $ \epsilon\to0 $. Hand-picking the best summary statistics for a given model limits the applicability of ABC. Therefore, some approaches to automatically build them have been developed \citep{fearnhead_constructing_2012,nunes2010optimal, joyce2008approximately, aeschbacher2012novel, blum2013comparative, pacchiardi2020}. We describe here the approach that is most commonly used.
	
	\paragraph{ABC with semi-automatic statistics.}
	The summary statistic $ s^\star(x) = \E[\parameter| \data] $ yields an approximate posterior $ \pi^\epsilon(\theta|s^\star (\dataObs)) $ whose mean minimizes the expected squared error loss from the true parameter value, as $ \epsilon \to 0 $ \citep{fearnhead_constructing_2012}. Of course, $ \E[\parameter| \data] $ is unknown; therefore, \cite{fearnhead_constructing_2012} suggested to consider a set of functions $ s_\beta(x) $ and find the one that best approximates $ \E[\parameter|\data] $. Empirically, this corresponds to finding: 
	\begin{equation}\label{Eq:FP_loss}
		\hat \beta = \argmin_\beta  \frac{1}{N} \sum_{j=1}^N  \left\|s_\beta(\data^{(j)}) - \parameter^{(j)}\right\|_2^2,
	\end{equation}			
	where $(\parameter^{(j)}, \data^{(j)})_{j=1}^{N}, \ \parameter^{(j)} \sim \pi(\theta), \ \data^{(j)} \sim p_0(\cdot|\parameter^{(j)}) $ are N data-parameter pairs generated before performing ABC.
	
	In practice, \cite{fearnhead_constructing_2012} considered $ s_\beta(x)  $ to be a linear function of the weights $ \beta $, such that solving the minimization problem in Eq.~\eqref{Eq:FP_loss} amounts to linear regression. \cite{jiang2017learning, wiqvist2019partially} and \cite{aakesson2020convolutional} instead used Neural Networks to parametrize $ s_\beta $ and train them to minimize the loss in Eq.~\eqref{Eq:FP_loss}.

	\subsection{Surrogate Likelihood}\label{sec:surr_lik}
	
	Surrogate likelihood approaches exploit simulations to build an explicit likelihood approximation, which is then inserted in standard likelihood-based sampling schemes (say, MCMC; \citealt{wood2010statistical, thomas2020likelihood, fasiolo2018extended, an2019accelerating, an2020robust, price2018bayesian}).
	Here, we discuss two methods that fall under this framework.

	\paragraph{Synthetic Likelihood.}
	
	Synthetic Likelihood (SL, \citealt{wood2010statistical}) replaces the exact likelihood with a normal distribution for the summary statistics $ s=s(x) $; specifically, it assumes $S|\theta \sim \mathcal{N}(\mu_\theta, \Sigma_\theta)$, where the mean vector $ \mu_\theta $ and covariance matrix $ \Sigma_\theta $ depend on $ \theta $. For each $ \theta $, an estimate of $ \mu_\theta $ and $ \Sigma_\theta $ is built with model simulations, and the likelihood of the observed statistics $ s(\dataObs) $ is evaluated.
	
	Using the statistics likelihood to define a posterior distribution yields a Bayesian SL (BSL). In \cite{price2018bayesian}, MCMC is used to sample from the BSL posterior; as in ABC, this approach requires generating simulations for each considered $ \theta $. However, one single simulation per parameter value is usually sufficient in ABC, while estimating $ \mu_\theta, \Sigma_\theta $ in BSL requires multiple simulations. Nevertheless, a parametric likelihood approximation seemingly allows scaling to larger dimensional statistics space \citep{price2018bayesian}. 	
	
	\paragraph{Ratio Estimation.} 	
	
	\cite{thomas2020likelihood} tackle LFI by estimating the ratio between likelihood and data marginal: $r(x, \theta) =  p(x|\theta) /p(x)$. As $ r(x,\theta) \propto p(x|\theta) $ with respect to $ \theta $, this can be inserted in a likelihood-based sampling scheme in order to sample from the posterior, similarly to BSL. This method is termed Ratio Estimation (RE). 
	
	In practice, for each value of $\theta$, an estimate $ \hat r(x,\theta) $ is built by generating simulations from the model $ p_0(x|\theta) $ and from the marginal\footnote{Simulating from the marginal can be done by drawing $\theta^j \sim p(\theta)$, $x^j \sim p(x|\theta^j)$ and discarding $\theta^j$.} $p(x)$, and then fitting a logistic regression discriminating between the two sets of samples. In fact, logistic regression attempts to find a function $\hat h(x; \theta)$ for which $e^{\hat h(x;\theta)} \approx r(x;\theta)$. 
	
	In \cite{thomas2020likelihood}, $ \hat h $ is chosen in the class of functions $h_\beta(x;\theta) = \sum_{i=1}^k \beta_i(\theta) \psi_i(x)=  \beta(\theta)^T \psi(x)$, where $\psi$ is a vector of summary statistics and $ \beta(\theta) $ are coefficients determined independently for each $ \theta $; this therefore boils down to a linear logistic regression. Viewed differently, this approach approximates the likelihood with an exponential family, as in fact it assumes $ p(x|\theta) \propto \hat r(x,\theta) = \exp(\hat \beta(\theta)^T \psi(x)) $, for some coefficients $ \hat{\beta} $ determined by data; it is thus a more general likelihood assumption than SL (as the normal distribution belongs to the exponential family). 
	\begin{remark}[Reducing the number of simulations.]
		Several approaches have attempted to reduce the number of simulations required in SL, RE and ABC. Specifically, Gaussian Processes can be exploited to replace evaluations of simulation-based quantities with emulated values, while at the same time providing an uncertainty quantification used to guide the next model simulations; that has been done for both ABC \citep{meeds2014gps, wilkinson2014accelerating, gutmann2016bayesian, jarvenpaa2019efficient, jarvenpaa2020batch} and SL \citep{meeds2014gps, wilkinson2014accelerating, moores2015pre, jarvenpaa2021parallel}. These approaches all consider a fixed observation $ x $ and emulate over parameter values.
		
		In \cite{hermans2019likelihood}, a classifier more powerful than logistic regression is used to learn the likelihood ratio estimate for all $ (x, \theta) $ using parameter-simulation pairs (similarly to what we propose in Section.~\ref{sec:exp_fish_div_likelihood}), amortizing therefore RE with respect to the observation.
	\end{remark}

	\section{Likelihood approximation with the exponential family}\label{sec:lik_app}
	
	In this Section, we first introduce the parametric family which we use to approximate the likelihood. Then, we describe Score Matching (SM) and Sliced Score Matching (SSM), which allow us to fit distributions with unknown normalizing constants to data. Finally, we discuss how to extend SM and SSM to the setting of conditional densities, to obtain a likelihood approximation valid for all data and parameter values.

	\subsection{Conditional exponential family}\label{sec:exp_fam}

	A probability distribution belongs to the exponential family if it has a density of the following form:\footnote{As we are concerned with continuous random variables, across the work we use the Lebesgue measure as a base measure, without explicitly referring to it.}

	\begin{equation}\label{Eq:expon_family_general}
		p(x|\theta) = \frac{e^{\eta(\theta) ^T f(x)} h(x)}{Z(\theta)},
	\end{equation}
	where $ x  \in \X ,\  \theta \in \Theta$. Here, $ f: \X \to \R^{d_s} $ is a function of the data (sufficient statistics), $ \eta: \Theta \to \R^{d_s} $ is a function of the parameters (natural parameters) and $ h(x): \X \to \R $ is a scalar function of data (base measure). The normalizing constant $ Z(\theta) = \int_\X \exp \{\eta(\theta) ^T f(x)\} h(x) dx $ is intractable and assumed to be finite $\forall\ \theta \in \Theta $; we will discuss later (Sec.~\ref{sec:exp_fish_div_likelihood}) why this assumption is not an issue. 
	
	Across this work, we refer to Eq.~\eqref{Eq:expon_family_general} as \textit{conditional} exponential family to stress that we learn an approximation valid for all $ x $'s and $ \theta $'s by selecting functions $ f $, $ \eta $ and $ h $. 
	
	In the following, we rewrite Eq.~\eqref{Eq:expon_family_general} as: $ p(x|\theta) =  \exp (\bar \eta(\theta)^T \bar f(x)) /Z(\theta ) $, where $\bar \eta (\theta )= (\eta(\theta), 1)\in \R^{d_s+1} $ and $\bar f(x) = (f(x), \log h(x))\in \R^{d_s+1} $.  For simplicity, we will drop the bar notation, using as convention that $ \eta(\theta )$ contains the natural parameters of the exponential family plus an additional constant term, and that the last component of $ f(x) $ is $ \log h(x) $. 
	
	\paragraph{Neural conditional exponential family.}
	Let now $ f_w $ and $ \eta_w $ denote two Neural Networks (NNs) with collated weights $ w $ (in practice the two NNs do not share parameters); then, the neural conditional exponential family is defined as:
	\begin{equation}\label{Eq:expon_family_NNs}
		p_w(x|\theta) =  \frac{e^{\eta_w(\theta) ^T f_w(x)}}{Z_w(\theta)}.	\end{equation}

	\begin{remark} Ratio Estimation (Sec.~\ref{sec:surr_lik}) parametrizes the likelihood as an exponential family with user-specified statistics and natural parameters $ \beta $ independently learned for each $ \theta \in \Theta $. In contrast, here we learn both the $ f_w(\data) $ and the $ \eta_w(\theta) $ transformations over all $ \mathcal{X} $ and $ \Theta $ at once by selecting the best $ w $.
	\end{remark}

	\begin{remark}[Identifiability]
		For a family of distributions indexed by a parameter $ \phi $, identifiability means that $ p_\phi(x|\theta) = p_{\phi'}(x|\theta) \ \forall \ x, \theta \implies \phi = \phi' $. The weights $ w $ in the neural conditional exponential family are not identifiable for two reasons: first, NNs have many intrinsic symmetries. Secondly, replacing $ f_w $ in Eq.~\eqref{Eq:expon_family_NNs} with $ A \cdot f_w$ and $ \eta_w $ with $ (A^T)^{-1}\cdot \eta_w $ does not change the probability density. In \cite{khemakhem2020ice}, two suitable concepts of \textit{function} identifiability up to some linear transformations are defined. $ f_w $ and $ \eta_w $ in Eq.~\eqref{Eq:expon_family_NNs} are identifiable according to those definitions, under strict conditions on the architectures of the Neural Networks parametrizing them \citep{khemakhem2020ice}. Moreover, they empirically verify that NNs not satisfying the above assumptions result in approximately identifiable $ f_w $ and $ \eta_w $, according to their definition. More details are given in Appendix~\ref{app:identifiability}.
	\end{remark}

	\begin{remark}[Universal approximation]
		Using larger $ {d_s} $ in Eq.~\eqref{Eq:expon_family_general} increases the expressive power of the approximating family. \cite{khemakhem2020ice} proved that, by considering a freely varying $ {d_s} $ and generic $ f $ and $ \eta $, the conditional exponential family has universal approximation capabilities for the set of conditional probability densities; we give more details in Appendix~\ref{app:universal_appr}.
		This result does not consider the practicality of fitting the approximating family to data, which arguably becomes harder when $ {d_s} $ increases.	
	\end{remark}

	\subsection{Score Matching}\label{sec:density_est}

	In order to fit a parametric density $ p_w $ to data, the standard approach is finding the Maximum Likelihood Estimator (MLE) for $ w $; in the limit of infinite data, that corresponds to minimizing the Kullback-Leibler divergence from the data distribution. MLE requires however the normalizing constant of $ p_w $ to be known, which is not the case for our approximating family (Eq.~\ref{Eq:expon_family_NNs}). 
	
	Score Matching (SM, \citealt{hyvarinen2005estimation}) is a possible way to bypass the intractability of the normalizing constant. In this Subsection, we review SM for unconditional densities, discuss a faster version and provide an extension for distributions with bounded domain.

	\paragraph{The original Score Matching (SM)}

	Let us discard now the conditional dependency on $ \theta $ and, following \cite{hyvarinen2005estimation}, consider a random variable $ X $ distributed according to $ p_0(x) $. We want to use samples from $ p_0 $ to fit a generic model $ p_w(x) = \tilde p_w(x) / Z_w $, where $ \tilde p_w $ is unnormalized and $ Z_w $ is intractable. 
	
	\begin{definition} 
		Score Matching (SM) corresponds to finding: 
		\begin{equation}\label{}
			\argmin_{w} D_F(p_0\|p_w),
		\end{equation}	where the Fisher Divergence $ D_F $ is defined as: 
		\begin{equation}\label{Eq:Fisher_div}
			D_F(p_0\|p_w) = \frac{1}{2} \int_{\mathcal{X}} p_0(x) \|\nabla_x\log p_0(x) - \nabla_x \log p_w(x) \|^2 dx.
		\end{equation} 
	\end{definition}	
	
	The Fisher divergence depends only on the the logarithmic derivatives $ \nabla_x \log p_0(x) $ and $ \nabla_x \log p_w(x) $, which are termed \textit{scores}\footnote{In most of the statistics literature, \textit{score} usually refer to the derivative of the log-likelihood with respect to the parameter; here, the nomenclature is slightly different.}; computing $ D_F $ does not require therefore knowing the normalizing constant, as in fact:
	\begin{equation}\label{}
		\nabla_x \log p_w(x) = \nabla_x (\log \tilde p_w(x) - \log Z_w) = \nabla_x \log \tilde p_w(x).
	\end{equation}
	
	Nevertheless, a Monte Carlo estimate of $ D_F $ in Eq.~\eqref{Eq:Fisher_div} using samples from $ p_0  $ would require knowing the logarithmic gradient of $ p_0 $; for this reason, Eq.~\eqref{Eq:Fisher_div} is usually termed \textit{implicit} Fisher divergence. If $ \mathcal{X} = \R^d $ and under mild conditions, Theorem 1 in \cite{hyvarinen2005estimation} obtains an equivalent form for $ D_F $ in which $ p_0 $ only appears as the distribution over which expectation needs to be computed. We give here a similar result holding for more general $ \mathcal{X}  = \bigotimes_{i=1}^d (a_i,b_i) $, where $ a_i, b_i \in \R \cup \{\pm \infty\}$; specifically, we consider the following assumptions:

	\begin{enumerate}[label=\textbf{A\arabic*}]
		\item\label{ass:A1} $ p_0(x) \frac{\partial}{\partial x_i} \log p_w(x) \to 0 $ when $ x_i \searrow  a_i $ and $x_i \nearrow  b_i, \forall w, 	i $,
		\item \label{ass:A2} $ \E_{p_0}[\|\nabla_x \log p_0(X)\|^2] <\infty$, $ \E_{p_0}[\|\nabla_x \log p_w(X)\|^2] <\infty, \forall \ w$,
		\item \label{ass:A3} $\E_{p_0} \left| \frac{\partial^2}{\partial x_i \partial x_j} \log p_w(X)  \right| < \infty, \forall w, \forall i,j=1,\ldots, d $.\footnote{We remark that this assumption was not present in \cite{hyvarinen2005estimation}, but it is necessary to apply Fubini-Tonelli theorem in the proof (see Appendix~\ref{app:th_FD_explicit_proof}), as already discussed in \cite{yu2018generalized}.}
	\end{enumerate}
	With the above, we can state the following: 
	
	\begin{theorem}\label{Th:FD_explicit}
		Let $ \mathcal X = \bigotimes_{i=1}^d (a_i,b_i) $, where $	 a_i, b_i \in \R \cup \{\pm \infty\} $. If $ p_0(x) $ is differentiable and $ p_w(x) $ doubly differentiable over $\mathcal{X}$, 
		then, under assumptions \ref{ass:A1}, \ref{ass:A2}, \ref{ass:A3}, the objective in Eq.~\eqref{Eq:Fisher_div} can be rewritten as: 
		
		\begin{equation}\label{Eq:FD_explicit}
			D_F(p_0\|p_w) =  \int_{\X} p_0(x) \sum_{i=1}^d \left[ \frac{1}{2} \left( \frac{\partial \log p_w(x)}{\partial x_i}  \right)^2 +  \left( \frac{\partial^2 \log p_w(x)}{\partial x_i^2}  \right) \right] dx + C.
		\end{equation}
		Here, $C  $ is a constant w.r.t. $p_w$ and $x_i$ is the \textit{i}-th coordinate of $x$. 	\end{theorem}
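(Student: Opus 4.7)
The plan is to imitate Hyvärinen's classical argument, with care taken at the boundary of the product domain $\mathcal{X} = \bigotimes_{i=1}^d (a_i, b_i)$ where some endpoints may be finite.

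First I would expand the squared norm inside the integral defining $D_F$:
\begin{equation}
\|\nabla_x\log p_0(x) - \nabla_x \log p_w(x)\|^2 = \|\nabla_x\log p_0(x)\|^2 - 2 \,\nabla_x\log p_0(x)^T \nabla_x\log p_w(x) + \|\nabla_x\log p_w(x)\|^2.
\end{equation}
Integrated against $p_0$, the first term is finite by \ref{ass:A2} and is independent of $w$, so it is absorbed into the constant $C$. The third term contributes directly the squared-score piece $\tfrac{1}{2} \sum_i (\partial_{x_i} \log p_w)^2$ in Eq.~\eqref{Eq:FD_explicit}, again finite by \ref{ass:A2}. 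The only non-trivial step is rewriting the cross term.

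For the cross term, I would treat each coordinate separately. Using $p_0 \,\partial_{x_i}\log p_0 = \partial_{x_i} p_0$, one must show
\begin{equation}
-\int_{\mathcal X} \partial_{x_i} p_0(x)\,\partial_{x_i} \log p_w(x)\, dx = \int_{\mathcal X} p_0(x)\,\partial_{x_i}^2 \log p_w(x)\, dx,
\end{equation}
which is a one-dimensional integration by parts in the coordinate $x_i$ together with integration over the other coordinates. To be allowed to iterate the integrals, I would invoke Fubini--Tonelli, whose hypotheses are guaranteed by \ref{ass:A3} (this is precisely why the assumption was added, as the footnote highlights). After fixing the remaining coordinates $x_{-i}$, I would apply the fundamental theorem of calculus / integration by parts on $(a_i, b_i)$:
\begin{equation}
\int_{a_i}^{b_i} \partial_{x_i} p_0(x)\, \partial_{x_i} \log p_w(x)\, dx_i = \Bigl[p_0(x)\,\partial_{x_i}\log p_w(x)\Bigr]_{x_i = a_i}^{x_i = b_i} - \int_{a_i}^{b_i} p_0(x)\, \partial_{x_i}^2 \log p_w(x)\, dx_i.
\end{equation}
Assumption \ref{ass:A1} makes the boundary term vanish at both endpoints (including the finite ones, which is where our statement genuinely generalises Hyvärinen's). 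Summing over $i$ and combining with the squared-score piece then yields Eq.~\eqref{Eq:FD_explicit}.

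The main subtlety, and the place where this proof departs from the $\mathcal{X} = \mathbb{R}^d$ version, is verifying that the Fubini swap and the pointwise integration by parts remain legal when the $a_i, b_i$ are finite: there one cannot rely on a decay assumption on $p_0$ at infinity, and it is \ref{ass:A1} alone that kills the boundary contributions. I would therefore state that step explicitly, noting that \ref{ass:A1} controls each endpoint one-sidedly (as $x_i \searrow a_i$ or $x_i \nearrow b_i$), and check that \ref{ass:A3} suffices to justify exchanging the order of integration when combining the coordinate-wise identities back into a single integral over $\mathcal{X}$. Everything else is essentially bookkeeping.
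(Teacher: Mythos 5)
Your proposal is correct and follows essentially the same route as the paper's proof in Appendix~\ref{app:th_FD_explicit_proof}: expand the square, absorb the $\|\nabla_x\log p_0\|^2$ term into $C$, and integrate the cross term by parts coordinate-wise, with \ref{ass:A1} killing the boundary terms and Fubini--Tonelli justifying the iterated integrals. One small refinement: the paper justifies the \emph{first} interchange of integration order via the absolute integrability of the cross term, which follows from \ref{ass:A2} and $2|ab|\le a^2+b^2$, and reserves \ref{ass:A3} for the \emph{second} interchange when recombining $\int p_0\,\partial_{x_i}^2\log p_w$ into a single integral over $\mathcal X$.
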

	A proof is given in Appendix~\ref{app:th_FD_explicit_proof}. We refer to Eq.~\eqref{Eq:FD_explicit} as the \textit{explicit} Fisher Divergence, as it is now immediate to build an unbiased Monte Carlo estimate using samples from $ p_0 $.

	From Eq.~\eqref{Eq:Fisher_div}, it is clear that $ D_F(p_0\|p_w) \ge0\ \forall\ w$, and that choosing $ p_w=p_0 $ implies $ D_F(p_0\|p_w)=0 $; however, the converse can not be said in general unless $ p_0 $ is supported over all $ \X $, as stated in the following theorem:

	\begin{theorem}[Theorem 2 in \citealp{hyvarinen2005estimation}]\label{Th:FD}
		Assume $ p_0(x)>0 \ \forall x \in \mathcal{X}$. Then: 
		\begin{equation}\label{}
			D_F(p_0\|p_w) = 0 \iff p_0(x) = p_w(x) \ \forall x \in \mathcal{X}.
		\end{equation}
	\end{theorem}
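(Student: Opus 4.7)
The plan is to handle the two directions of the equivalence separately. The forward implication $p_0 = p_w \Rightarrow D_F(p_0\|p_w)=0$ is immediate: substituting $p_w = p_0$ into Eq.~\eqref{Eq:Fisher_div} makes the integrand vanish pointwise, so no work is needed. The content of the theorem lies in the reverse direction, and I would attack it through a standard three-step argument that leverages the positivity hypothesis.

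First, I would exploit the non-negativity of the integrand in the implicit form of the Fisher divergence. Since $p_0(x) > 0$ for every $x \in \X$ and $\|\nabla_x \log p_0(x) - \nabla_x \log p_w(x)\|^2 \ge 0$, the assumption $D_F(p_0\|p_w) = 0$ forces the squared-norm factor to vanish for Lebesgue-almost every $x$. The differentiability hypotheses of Theorem~\ref{Th:FD_explicit} (which are implicit here because the scores must be defined to make sense of $D_F$) give continuity of both $\nabla_x \log p_0$ and $\nabla_x \log p_w$ on $\X$, upgrading almost-everywhere agreement to pointwise agreement: $\nabla_x \log p_0(x) = \nabla_x \log p_w(x)$ for all $x \in \X$.

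Second, I would set $\psi(x) = \log p_0(x) - \log p_w(x)$, which is well-defined on $\X$ because both densities are strictly positive there (for $p_w$, strict positivity follows from the exponential-family form and finiteness of $Z_w$; for $p_0$ it is the hypothesis). The step above yields $\nabla_x \psi(x) \equiv 0$ on $\X$. Because $\X = \bigotimes_{i=1}^d (a_i,b_i)$ is an open connected set, $\psi$ must be constant, so $p_0(x) = e^c\, p_w(x)$ for some $c \in \R$ and all $x \in \X$.

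Third, I would conclude by normalization: integrating the identity $p_0 = e^c p_w$ over $\X$ and using that both are probability densities gives $1 = e^c \cdot 1$, hence $c = 0$ and $p_0 \equiv p_w$. The step most likely to be picked apart is the first one, where one has to be careful that positivity of $p_0$ is what rules out the pathology of the integrand vanishing trivially on regions where $p_0 = 0$ while the scores disagree; this is precisely why the assumption $p_0(x) > 0$ on all of $\X$ cannot be dropped. Connectedness of the rectangular domain $\X$ is also essential in step two, so the argument would not generalize verbatim to disconnected supports without additional bookkeeping of the constants on each component.
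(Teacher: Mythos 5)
Your proof is correct and follows essentially the same route as the paper's: the zero divergence together with $p_0>0$ forces the scores to agree, hence $\log p_0 - \log p_w$ is constant on the connected domain, and the normalization of both densities kills the constant. Your version just makes explicit the almost-everywhere-to-everywhere upgrade and the role of connectedness, which the paper's terser argument leaves implicit.
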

	
	We prove Theorem~\ref{Th:FD} in Appendix~\ref{app:th_FD_proof}. If $ p_{0}(x) $ is zero for some $ x \in \mathcal{X} $, there could be a distribution $ p_w $ such that $ D_F(p_0\|p_w)=0  $ even if $p_w\neq p_0 $ (as discussed in Appendix~\ref{app:Fish_div_prop_disjoint_supp}).

	\paragraph{Sliced Score Matching (SSM)}
	
	In the explicit Fisher Divergence (Eq.~\ref{Eq:FD_explicit}), the second derivatives of the log-density with respect to all components of $ x $ are required. When using the neural conditional exponential family (Eq.~\ref{Eq:expon_family_NNs}), this amounts to evaluating the second derivatives of $ f_w $ with respect to its input (see Appendix \ref{app:sm_exp_fam}). {Practically}, automatic differentiation libraries to obtain the derivatives effortlessly; however, the computational cost of the second derivatives is substantial, as it requires a number of forward and backward passes proportional to the dimension of the data $ x $ (see Appendix~\ref{app:SM_computational_complexity})\footnote{Computing the derivatives during the forward pass offers some speedup, as automatic differentiation repeats some computations several times. However, this approach requires custom NN implementation (see Appendix~\ref{app:forward_der_comp}) and is thus not scalable to complex architectures. More importantly, the computational gain is limited with respect to what is achieved by the Sliced SM version introduced next.}.

	Some approaches to reducing the computational burden have been proposed; we review some in Section~\ref{sec:related_works}. Here, we consider Sliced Score Matching (SSM, \citealt{song2019sliced}), which considers projections of the scores on random directions; matching the projections on \textit{all} random directions ensures the two distributions are identical (under the same conditions as the original SM). More precisely, let $ v \in \mathcal V \subseteq \R^d $ be a noise vector with a distribution $ q$; SSM is defined as follows:

	\begin{definition} 
		Sliced Score Matching (SSM) corresponds to finding: 
		\begin{equation}\label{}
			\argmin_{w} D_{FS}(p_0\|p_w),
		\end{equation} where the Sliced Fisher Divergence is defined as: 
		\begin{equation}\label{Eq:Fisher_div_sliced}
			D_{FS}(p_0\|p_w) = \frac{1}{2} \int_{\mathcal{V}} q(v) \int_{\mathcal{X}} p_0(x) (v^T \nabla_x\log p_0(x) - v^T \nabla_x \log p_w(x) )^2 dx dv.
		\end{equation}
	\end{definition}

	We will require the noise distribution $ q(v) $ to satisfy the following Assumption:
	\begin{enumerate}[label=\textbf{A4}]
		\item\label{ass:A4} For the random vector $ V\sim q $, the covariance matrix $ \E[V V^T] \succ 0$ is positive definite and $ \E_{}\|V\|_2^2<\infty $.
	\end{enumerate}
	
	As for SM, we can obtain an \textit{explicit} formulation from the \textit{implicit} one in Eq.~\eqref{Eq:Fisher_div_sliced}.
	This is done in the following Theorem, which we prove for convenience in Appendix~\ref{app:th_FDS_explicit_proof}:
	
	\begin{theorem}[Theorem 1 in \cite{song2019sliced}]\label{Th:FDS_explicit}
		Let $ \mathcal X = \bigotimes_{i=1}^d (a_i,b_i) $, where $ a_i, b_i \in \R \cup \{\pm \infty\} $.  If $ p_0(x) $ is differentiable and $ p_w(x) $ doubly differentiable over $\mathcal{X}$, 
		then, under assumptions \ref{ass:A1}, \ref{ass:A2}, \ref{ass:A3}, \ref{ass:A4}, the objective in Eq.~\eqref{Eq:Fisher_div_sliced} can be rewritten as: 
		\begin{equation}\label{Eq:FDS_explicit}
			D_{FS}(p_0\|p_w) = \int_{\mathcal{V}} q(v)  \int_{\X} p_0(x) \left[ v^T (\Hessian_x \log p_w(x)) v +  \frac{1}{2} \left(v^T \nabla_x\log p_w(x)  \right)^2 \right] dx dv + C,
		\end{equation}
		where $ \Hessian_x \log p_w(x) $ denotes the Hessian matrix of $ \log p_w(x) $ with respect to components of $ x $ and $C $ is a constant w.r.t. $p_w$.
	\end{theorem}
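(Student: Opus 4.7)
The plan is to mirror the strategy used for Theorem~\ref{Th:FD_explicit}, treating the extra projection onto $v$ and the outer integral over $q(v)$ as a light extension of that proof. First I would expand the square inside the implicit Sliced Fisher divergence:
\begin{equation}
(v^T \nabla_x\log p_0(x) - v^T \nabla_x \log p_w(x))^2 = (v^T \nabla_x\log p_0(x))^2 - 2(v^T \nabla_x\log p_0(x))(v^T \nabla_x \log p_w(x)) + (v^T \nabla_x \log p_w(x))^2.
\end{equation}
The first term is independent of $w$ and, once integrated against $p_0(x) q(v)$, contributes a finite constant $C$ by Assumptions~\ref{ass:A2} and~\ref{ass:A4} (Cauchy--Schwarz gives $\E_{p_0,q}[(V^T\nabla_x\log p_0(X))^2]\le \E_q\|V\|^2 \cdot \E_{p_0}\|\nabla_x\log p_0(X)\|^2<\infty$). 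The third term is exactly the second summand in Eq.~\eqref{Eq:FDS_explicit}, so the real work lies in rewriting the cross term.

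Second, I would treat the cross term by integration by parts coordinatewise, exactly as in the unconditional case. Writing $v=(v_1,\dots,v_d)$ and using $p_0\,\partial_{x_i}\log p_0 = \partial_{x_i} p_0$,
\begin{equation}
-\int_\X p_0(x)\,(v^T\nabla_x \log p_0(x))(v^T\nabla_x \log p_w(x))\,dx = -\sum_{i,j=1}^d v_i v_j \int_\X \frac{\partial p_0(x)}{\partial x_i}\,\frac{\partial \log p_w(x)}{\partial x_j}\,dx.
\end{equation}
For each fixed $(i,j)$, I would apply one-dimensional integration by parts in the $x_i$ variable on the product domain $\X=\bigotimes_{i}(a_i,b_i)$, use Assumption~\ref{ass:A1} to kill the boundary contributions at $x_i\searrow a_i$ and $x_i\nearrow b_i$, and be left with $\sum_{i,j}v_iv_j\int p_0(x)\,\partial^2_{x_ix_j}\log p_w(x)\,dx = \int p_0(x)\,v^T(\Hessian_x \log p_w(x))v\,dx$. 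Combined with the factor $-2\cdot\tfrac12$ out front, this produces precisely the Hessian term of Eq.~\eqref{Eq:FDS_explicit}.

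Third, I would handle Fubini. Assumption~\ref{ass:A3} ensures that $\E_{p_0}|\partial^2_{x_ix_j}\log p_w(X)|<\infty$, and together with $\E_q\|V\|^2<\infty$ from Assumption~\ref{ass:A4}, this bounds $\int q(v)\int p_0(x)|v_iv_j\,\partial^2_{x_ix_j}\log p_w(x)|\,dx\,dv$ by $\E_q[|V_iV_j|]\cdot \E_{p_0}|\partial^2_{x_ix_j}\log p_w|<\infty$. This lets me freely swap the order of the $dv$ integral with the $dx_i$ integration by parts, so the argument from Theorem~\ref{Th:FD_explicit} applies inside the $q(v)$ expectation. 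Similarly, Assumptions~\ref{ass:A2},~\ref{ass:A4} guarantee that the squared terms are integrable jointly in $(x,v)$, so each piece of the expansion is well-defined.

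The main obstacle, as in the unconditional case, is the justification of the integration by parts: I have to argue carefully that the boundary terms vanish on the possibly non-Euclidean product domain $\X$, and that interchanging the $dv$ integral with the single-coordinate integration by parts is legal. Assumption~\ref{ass:A1} gives the pointwise boundary vanishing uniformly in $v$ (once multiplied by a component of $v$), and Assumptions~\ref{ass:A2},~\ref{ass:A3},~\ref{ass:A4} provide the dominated/Tonelli hypotheses. Everything else is algebraic bookkeeping that lifts the proof of Theorem~\ref{Th:FD_explicit} through the extra $q(v)$ expectation.
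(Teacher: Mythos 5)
Your proposal is correct and follows essentially the same route as the paper's proof in Appendix~\ref{app:th_FDS_explicit_proof}: expand the square, absorb the $(v^T\nabla_x\log p_0)^2$ term into the constant, keep the $(v^T\nabla_x\log p_w)^2$ term, and convert the cross term via coordinatewise integration by parts with the boundary terms killed by \ref{ass:A1} and the interchanges of integration justified by Fubini--Tonelli under \ref{ass:A2}, \ref{ass:A3}, \ref{ass:A4} (including the same Cauchy--Schwarz bound on $\E_q|V_iV_j|$). The only cosmetic difference is that you expand the cross term fully into the double sum over $(i,j)$ before integrating by parts, whereas the paper keeps $v^T\nabla_x\log p_w$ intact and sums only over the differentiated index $i$; the two bookkeepings are equivalent.
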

	
	Assumption \ref{ass:A4} is satisfied by, among others, Gaussian and Rademacher random variables \citep{song2019sliced}. Additionally, these two distributions allow to computes explicitly the expectation with respect to $ v $ of $ \left(v^T \nabla_x\log p_0(x)  \right)^2  $ in Eq.~\eqref{Eq:FDS_explicit}, which leads to:
	\begin{equation}\label{Eq:FDS_explicit_vr}
		D_{FS}(p_0\|p_w) = \int_{\X} p_0(x) \left\{ \int_{\mathcal{V}} q(v) \left[ v^T (\Hessian_x \log p_w(x)) v \right]dv +  \frac{1}{2} \left\|\nabla_x\log p_w(x) \right\|_2^2 \right\} dx + C;
	\end{equation}
	in \cite{song2019sliced}, the Monte Carlo estimate of the latter expression is found to perform better than the one for Eq.~\eqref{Eq:FDS_explicit}; across this work, we will therefore consider Eq.~\eqref{Eq:FDS_explicit_vr} with the Rademacher noise when using SSM.
	
	Analogously to SM, a non-negative $ p_0 $ ensures that the sliced Fisher divergence is zero if and only if $ p_w=p_0 $:
	
	\begin{theorem}[Lemma 1 in \cite{song2019sliced}]\label{Th:FDS}
		Assume Assumption \ref{ass:A4} holds and that $ p_0(x)>0 \ \forall x \in \mathcal{X}$. Then: 
		\begin{equation}\label{}
			D_{FS}(p_0\|p_w) = 0 \iff p_0(x) = p_w(x) \ \forall x \in \mathcal{X}.
		\end{equation}
	\end{theorem}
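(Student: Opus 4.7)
The plan is to essentially repeat the standard argument for the non-sliced case (Theorem~\ref{Th:FD}), with Assumption~\ref{ass:A4} used to bridge between the projected version and the full gradient statement. I start from the \emph{implicit} form in Eq.~\eqref{Eq:Fisher_div_sliced}, since equivalence with the explicit form already requires differentiability hypotheses that we have, and the implicit form is where the geometric content of the claim is most transparent.

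First, the ($\Leftarrow$) direction is immediate: if $p_w(x)=p_0(x)$ for all $x\in\mathcal{X}$, the integrand of Eq.~\eqref{Eq:Fisher_div_sliced} vanishes pointwise, hence $D_{FS}(p_0\|p_w)=0$. For the ($\Rightarrow$) direction, assume $D_{FS}(p_0\|p_w)=0$. Set $g(x):=\nabla_x \log p_0(x)-\nabla_x \log p_w(x)$ and observe that
\begin{equation}
D_{FS}(p_0\|p_w)=\frac{1}{2}\int_\mathcal{X} p_0(x)\int_\mathcal{V} q(v)\,(v^T g(x))^2\,dv\,dx,
\end{equation}
where I swap the order of integration by Fubini--Tonelli (the integrand is non-negative). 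Because the integrand is non-negative and $p_0(x)>0$ on all of $\mathcal{X}$, vanishing of the double integral forces
\begin{equation}
\int_\mathcal{V} q(v)\,(v^T g(x))^2\,dv = 0\quad \text{for (Lebesgue) a.e. } x\in\mathcal{X}.
\end{equation}

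Next I use Assumption~\ref{ass:A4}. Expanding $(v^T g(x))^2 = g(x)^T (vv^T) g(x)$ and taking the expectation over $V\sim q$ (which is finite by the $\E\|V\|_2^2<\infty$ part of \ref{ass:A4}), the displayed equation becomes $g(x)^T\, \E[VV^T]\, g(x)=0$ a.e. Since $\E[VV^T]\succ 0$, this quadratic form vanishes only when $g(x)=0$, so $\nabla_x \log p_0(x)=\nabla_x \log p_w(x)$ for a.e.\ $x\in\mathcal{X}$. By continuity of both gradients (which follows from the differentiability hypotheses inherited from Theorem~\ref{Th:FDS_explicit}'s setting, i.e.\ $p_0$ differentiable and $p_w$ doubly differentiable), the equality actually holds \emph{for all} $x\in\mathcal{X}$.

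Finally, since $\mathcal{X}=\bigotimes_{i=1}^d(a_i,b_i)$ is connected (in fact convex), a function whose gradient vanishes identically is constant, so $\log p_0(x)-\log p_w(x)=c$ for some $c\in\R$, i.e.\ $p_0(x)=e^c\, p_w(x)$. Integrating both sides over $\mathcal{X}$ and using that each is a probability density forces $e^c=1$, hence $p_0\equiv p_w$ on $\mathcal{X}$. The only step requiring genuine care is the passage from ``projected scores agree in all directions in an $\E[VV^T]$-weighted sense'' to ``gradients agree'', which is precisely the purpose of the positive-definiteness part of \ref{ass:A4}; without it, $q$ could be supported on a proper subspace and the implication would fail.
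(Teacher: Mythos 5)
Your proof is correct and follows essentially the same route as the paper's: reduce the vanishing of the sliced divergence to $g(x)^T\,\E[VV^T]\,g(x)=0$, invoke positive definiteness from Assumption~\ref{ass:A4} to conclude the scores agree, and then integrate up to a constant killed by normalization. You are somewhat more careful than the paper about the almost-everywhere-to-everywhere upgrade via continuity and about the connectedness of $\mathcal{X}$ needed to pass from a vanishing gradient to a constant, but these are refinements of the same argument rather than a different one.
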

	The above result is proven in Appendix~\ref{app:th_FDS_proof}.

	SSM has a reduced computational cost with respect to SM when automatic differentiation is used. In fact, computing the second derivatives in Eq.~\eqref{Eq:FD_explicit} requires $ d $ backward propagations, while the quadratic form involving the Hessian in Eq.~\eqref{Eq:FDS_explicit_vr} only requires two backward propagations independently on the dimension of $ x $ (see Appendix~\ref{app:SM_computational_complexity}).

	\paragraph{SM and SSM over transformed domain.}	
	If the domain $ \X $ is unbounded ($ a_i=-\infty $, $ b_i = +\infty \ \forall i $), \ref{ass:A1} is usually satisfied. Instead, it is easy for that assumption to be violated if $ \X $ is bounded: for instance, if $ p_0(x) $ converges to a constant at the boundary of $ \X $, \ref{ass:A1} requires $ \nabla_x \log p_w(x) $ to go to 0. Further, if $ p_0(x) $ diverges, $ \nabla_x \log p_w(x) $ has to converge to 0 faster than some rate, which is in general a strong requirement.

	To apply SM to distributions with bounded domain $ \X $ under looser conditions, we then transform $ \X $ to $ \mathcal{Y} = \R^d $ with a suitable bijection mapping $ t $; this creates a new random variable $ Y = t(X) $, with distributions $ p^Y_0(y)  $ and $ p^Y_w(y) $ induced by $ p_0 $ and $ p_w $ on $ \X $. 
	
	\begin{definition}
		We define respectively as Transformed Score Matching (TranSM) and Transformed Sliced Score Matching (TranSSM) the procedures: 
		\begin{equation}\label{}
			\argmin_{w} D_F(p^Y_0\|p^Y_w) \qquad \text{ and } \qquad \argmin_{w} D_{FS}(p^Y_0\|p^Y_w).
		\end{equation}
	\end{definition}
	
	TranSM and TranSSM enjoy similar properties as SM and SSM, as stated in the following Theorem, which mirrors Theorem~\ref{Th:FD} and Theorem~\ref{Th:FDS}:
	
	\begin{theorem}\label{Th:FD_y}
		Let $ Y=t(X) \in \mathcal Y$ for a bijection $ t $.
		Assume Assumption \ref{ass:A4} is satisfied, $ p_0(x) > 0\ \forall x \in \mathcal{X}$, and let $ D $ denote either $ D_F $ or $ D_{FS}$; then:
		\begin{equation}\label{}
			D(p^Y_0\|p^Y_w) = 0 \iff p_w(x) = p_0(x) \ \forall x \ \in \X.
		\end{equation}
	\end{theorem}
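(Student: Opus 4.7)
The plan is to reduce the claim to the unconditional results already established (Theorem~\ref{Th:FD} for $D=D_F$ and Theorem~\ref{Th:FDS} for $D=D_{FS}$) by pushing forward the positivity hypothesis through the bijection $t$, and then to invert the change of variables to return from equality on $\mathcal Y$ to equality on $\mathcal X$.

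First I would check that $p_0^Y > 0$ on $\mathcal Y$. For each $y \in \mathcal Y$ there is a unique $x = t^{-1}(y) \in \mathcal X$, and the change-of-variables formula gives $p_0^Y(y) = p_0(t^{-1}(y))\,|\det J_{t^{-1}}(y)|$ (assuming $t$ is a diffeomorphism, as is standard for the transformations used in practice, e.g.\ componentwise logit). Since $p_0(t^{-1}(y)) > 0$ by hypothesis and $|\det J_{t^{-1}}(y)| > 0$ because $t^{-1}$ is a smooth bijection, we get $p_0^Y(y) > 0$ for all $y \in \mathcal Y$. The same identity with $p_w$ in place of $p_0$ shows that $p_w^Y$ is a valid density on $\mathcal Y$.

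Next, I would apply Theorem~\ref{Th:FD} (respectively Theorem~\ref{Th:FDS}) to the pair $(p_0^Y, p_w^Y)$: since $p_0^Y$ is everywhere positive on $\mathcal Y$, the theorem yields
\begin{equation}
D(p_0^Y \| p_w^Y) = 0 \iff p_0^Y(y) = p_w^Y(y) \quad \forall\, y \in \mathcal Y.
\end{equation}
Finally, substituting the change-of-variables expressions on both sides, the common Jacobian factor $|\det J_{t^{-1}}(y)|$ is strictly positive and cancels, so $p_0^Y(y) = p_w^Y(y)$ for all $y$ is equivalent to $p_0(t^{-1}(y)) = p_w(t^{-1}(y))$ for all $y \in \mathcal Y$. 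Because $t$ is a bijection onto $\mathcal Y$, $t^{-1}$ ranges over all of $\mathcal X$, so this is exactly $p_0(x) = p_w(x)$ for all $x \in \mathcal X$, completing both directions.

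The only real subtlety is making the change-of-variables step legitimate: $t$ must be smooth with nonsingular Jacobian so that $p_0^Y, p_w^Y$ are genuine densities and the Jacobian factor can be cancelled pointwise. This is mild and holds for the concrete transformations used in the paper, so I would state this as a standing regularity assumption on $t$ rather than attempting to weaken it. With that in place the argument is essentially a transport of Theorems~\ref{Th:FD}--\ref{Th:FDS} through $t$, and no further analytic work is required.
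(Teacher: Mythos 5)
Your proposal is correct and follows essentially the same route as the paper's proof in Appendix~\ref{app:th_FD_y_proof}: push the positivity of $p_0$ through the bijection via the change-of-variables formula, apply Theorem~\ref{Th:FD} (or Theorem~\ref{Th:FDS}) to the transformed densities $p_0^Y, p_w^Y$, and cancel the common Jacobian factor to return to equality on $\mathcal X$. Your explicit remark that $t$ must be a diffeomorphism with nonsingular Jacobian is a reasonable regularity condition that the paper leaves implicit.
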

	
	We prove this Theorem (in Appendix~\ref{app:th_FD_y_proof}) by relying on the equivalence between distributions for the random variables $ Y $ and $ X $.

	Motivated by Theorem~\ref{Th:FD_y}, across this work we will apply TranSM and TranSSM whenever $ \X  $ is bounded; precisely, we adopt the same bijections as in the \texttt{Stan} package for statistical modeling (Appendix~\ref{app:transformations}, \citealt{carpenter2017stan}). 
	
	\begin{remark}
		Another way to extend SM to distributions with bounded support involves multiplying the scores in the implicit Fisher divergence by a correction factor that allows to obtain an explicit for under looser assumptions. This Corrected SM (CorrSM) approach was first proposed for non-negative random variables in \cite{hyvarinen2007some}, which also remarked how CorrSM and TranSM are equivalent (Appendix \ref{App:corrected_transformed_SM_equivalence}). TranSM is however more practically viable, as a single SM implementation is needed, while CorrSM requires separate implementations for different kinds of domains. Additionally, the transformations can be straightforwardly applied to SSM, while (to the best of our knowledge) a correction approach for SSM has not yet been proposed. 
	\end{remark}
	
	\begin{remark}
		Across this work we restrict to domains $ \X $ defined by independent constraints across the coordinates, i.e. $ \mathcal X = \bigotimes_{i=1}^d (a_i,b_i) $. However, TranSM, TranSSM and CorrSM can be potentially applied to distributions with more general supports. We briefly review this and other extensions in Appendix~\ref{app:SM_extensions}.
	\end{remark}

	\subsection{Score Matching for conditional densities}\label{sec:exp_fish_div_likelihood}
	
	We now go back to conditional densities $p_0(\cdot|\theta)$ and define an expected Fisher divergence  \citep{arbel2017kernel}
	by considering $\theta \sim \pi(\theta)$: 
	\begin{equation}\label{Eq:exp_fisher_div}
		\begin{aligned}
			D_F^E(p_0\|p_w) &= \int_{\Theta} \pi(\theta) D_F(p_0(\cdot|\theta)\| p_w(\cdot|\theta)) d\theta
			\\&= \frac{1}{2} \int_{\Theta} \int_{\mathcal{X}} p_0(x|\theta) \pi(\theta) \|\nabla_x\log p_0(x|\theta) - \nabla_x \log p_w(x|\theta) \|^2 dx d\theta.
		\end{aligned}
	\end{equation}
	Analogously, we define an expected sliced Fisher divergence $ D_{FS}^E(p_0\|p_w) $ from $ D_{FS} $.
	
	Note that $ D_F^E(p_0\| p_w) \ge 0 $; moreover, under the assumption that $p_0(x|\theta) $ is supported on the whole domain $\mathcal X$ $\forall \ \theta$, the above objective is equal to 0 if and only if $p_0(\cdot|\theta) =  p_w(\cdot|\theta) $ $\pi(\theta)$-almost everywhere \citep{arbel2017kernel}. In fact, $D_F(p_0(\cdot|\theta)\| p_w(\cdot|\theta)) \ge 0\ \forall \ \theta$, with equality holding if and only if the two conditional distributions are the same, as long as they are both supported on the whole $\mathcal X$ (by Theorem~\ref{Th:FD}). Exploiting Theorem~\ref{Th:FDS}, a similar result can be obtained for $ D_{FS}^E $.

	Requiring $ p_0(x|\theta) >0\ \forall\ x \in \X,\ \forall\ \theta \in \Theta$ means that the model is capable of generating all possible values of $ x \in \mathcal{X} $ with non-zero probability for all $ \theta \in \Theta $; otherwise, there may be distributions $ p_w \neq p_0$ minimizing the objective (see Appendix~\ref{app:Fish_div_prop_disjoint_supp}).

	Analogously to Eqs.~\eqref{Eq:FD_explicit} and \eqref{Eq:FDS_explicit}, we can obtain explicit formulations\footnote{\label{Note:assumptions_Fish_divergence}Requiring $ p_0(x|\theta) $ to be differentiable and $ p_w(x|\theta) $ doubly differentiable over $\mathcal{X}$ for all $ \theta $ and:
		\begin{enumerate}[label=\textbf{A\arabic*.}]
			\item $ p_0(x|\theta) \frac{\partial}{\partial x_i} \log p_w(x|\theta) \to 0 $ for $ x_i \searrow  a_i $ and $x_i \nearrow  b_i, \forall w, i, \theta $,
			\item$ \E_{p_0}[\|\nabla_x \log p_0(X|\theta)\|^2] < \infty,\ \E_{p_0}[\|\nabla_x \log p_w(X|\theta)\|^2] < \infty, \forall w, \theta$,
			\item $\E_{p_0} \left| \frac{\partial^2}{\partial x_i \partial x_j} \log p_w(X|\theta)  \right| < \infty, \forall w,\theta, \ \forall i,j=1,\ldots, d $.
		\end{enumerate}.
		
	}of $ D^E_F $ and $ D_{FS}^E $:
	\begin{equation}\label{Eq:exp_fisher_div_exp}
		\begin{aligned}
			&D_F^E(p_0\|p_w) =\\  &\qquad \underbrace{\int_{\Theta} \int_{\X} p_0(x|\theta) \pi(\theta) \sum_{i=1}^d \left[ \frac{1}{2} \left( \frac{\partial \log p_w(x|\theta)}{\partial x_i}  \right)^2 +  \left( \frac{\partial^2 \log p_w(x|\theta)}{\partial x_i^2}  \right) \right] dx d\theta}_{J(w)} + C,\\
			&D_{FS}^E(p_0\|p_w) =\\ &\quad\underbrace{\int_{\mathcal{V}}\int_{\Theta} \int_{\X}  q(v) p_0(x|\theta) \pi(\theta) \left[ v^T (\Hessian_x \log p_w(x|\theta)) v +  \frac{1}{2} \left\|\nabla_x\log p_w(x|\theta) \right\|_2^2  \right] dx d\theta dv}_{J_S(w)} + C,
		\end{aligned}
	\end{equation}
	where $ C $ denotes constants with respect to $ w $. For these two expressions, it is immediate to obtain Monte Carlo estimates using samples $(\theta^{(j)}, x^j)_{j=1}^N, \theta^{(j)} \sim \pi$ and $x^{(j)} \sim p(\cdot|\theta^{(j)})$, and draws from the noise model $ \{v^{(j,k)}\}_{1\le j\le N, 1\le k\le M}$ : 
	\begin{equation}\label{Eq:FD_expected_explicit_MC}
		\begin{aligned}
			\hat J(w) &=\frac{1}{N} \sum_{j=1}^N \left[ \sum_{i=1}^d \left(  \frac{1}{2} \left( \frac{\partial \log p_w(x^{(j)}|\theta^{(j)})}{\partial x_i}  \right)^2 +  \left( \frac{\partial^2 \log p_w(x^{(j)}|\theta^{(j)})}{\partial x_i^2}  \right) \right) \right] , \\
			\hat J_S(w) &=\\ &\frac{1}{NM} \sum_{j=1}^N \sum_{k=1}^M \left[ v^{(j,k),T} (\Hessian_x \log p_w(x^{(j)}|\theta^{(j)})) v^{(j,k)} +  \frac{1}{2} \left\|\nabla_x\log p_w(x^{(j)}|\theta^{(j)}) \right\|_2^2 \right].
		\end{aligned}
	\end{equation}
	
	As we discussed before, computing the square bracket term in $ \hat J_S(w)  $ only requires two backward propagations, while the square bracket term in $ \hat J(w)  $ requires $ d $. However, $ \hat J_S(w)  $ sums over $ M\cdot N $ terms, while $ \hat J(w) $ sums over $ N $. Nevertheless, when using $ \hat J_S(w)  $ to train a neural network, good results can be obtained by using a single different noise realization for each training sample $ (\theta^{(j)}, x^j) $ at each epoch, thus leading to smaller computational cost with respect to $ \hat J(w) $ \citep{song2019sliced}.

	\paragraph{Score Matching for conditional exponential family.}
	Both the implicit and explicit versions $ D_F^E(p_0\|p_w) $ and $ D_{FS}^E(p_0\|p_w) $ are well defined only if $ p_w  $ is a proper density, which requires $ Z_w(\theta) <\infty$. In practice, we are interested in:
	\begin{equation}\label{Eq:J_min_problem}
		\hat w = \argmin \hat J(w)\quad \text{ or } \quad \hat w = \argmin \hat J_S(w). 
	\end{equation}
	When we compute $ \hat J(w) $ or $ \hat J_S(w) $,  we only need to evaluate $ f_w $ in a finite set of points $ \{x^{(j)}\}_{j=1}^N $ which belong to a bounded subset of $ \X $, say $ A\subset \X $. We can thus redefine the approximating family as follows: 
	
	\begin{equation}\label{}
		p'_w(x|\theta) = \frac{\tilde p'_w(x|\theta)}{Z'_w(\theta)}, \quad \tilde p'_w(x|\theta) = \begin{cases}
			&\exp (\eta_w(\theta)^T f_w(x)) \quad \text{ if }x \in A \\
			& g_w(x,\theta) \quad \text{ otherwise,}
		\end{cases}
	\end{equation}
	where $ g_w $ is such that $ Z'_w(\theta) < \infty $, and can always be chosen such that $ \tilde p'_w(x|\theta) $ is a smooth and continuous function of $ x $.
	
	Replacing $ p_w $ in Eq.~\eqref{Eq:FD_expected_explicit_MC} with $ p'_w $ does not change the value of $ \hat J(w) $ and $ \hat J_S(w) $; however, as $ p'_w $ is normalized, Eqs.~\eqref{Eq:exp_fisher_div} and \eqref{Eq:exp_fisher_div_exp} are now well defined for all $ w $. Additionally, we do not need to specify $ A $ or $ g_w $ explicitly as we never evaluate the normalizing constant $ Z'_w(\theta) $ (which depends on them). In the following we will thus use interchangeably $ p_w $ and $ p'_w $.

	\begin{remark}[\textbf{Notation}] For notational simplicity, in the rest of the work we drop the hat in $ \hat w $, denoting by $ p_w $ the approximation obtained by one of the strategies in Eq.~\eqref{Eq:J_min_problem}, and by $ f_w $ and $ \eta_w $ the corresponding sufficient statistics and natural parameters networks.	
	\end{remark}	
	
	\section{Inference using the likelihood approximation}\label{sec:inference_strategy}

	By fitting the neural conditional exponential family (Eq.~\ref{Eq:expon_family_NNs}) with SM or SSM to parameter-simulation pairs generated from the model, we obtain an approximation of the likelihood up to the normalization constant $ Z_w(\theta) $: for each fixed $ \theta $, the function $ x \mapsto \exp(\eta_w(\theta)^T f_w(x)) $ is approximately proportional to $ p_0(x|\theta) $. We exploit $ p_w $ in two ways: first, by using $ f_w $ as summaries in ABC; secondly, using the full approximation to draw samples from the posterior with MCMC for doubly intractable distributions. Both approaches are illustrated in Figure~\ref{fig:diaram} and discussed next.

	\begin{figure}[!tb]
		\centering
		\includegraphics[width=1\linewidth]{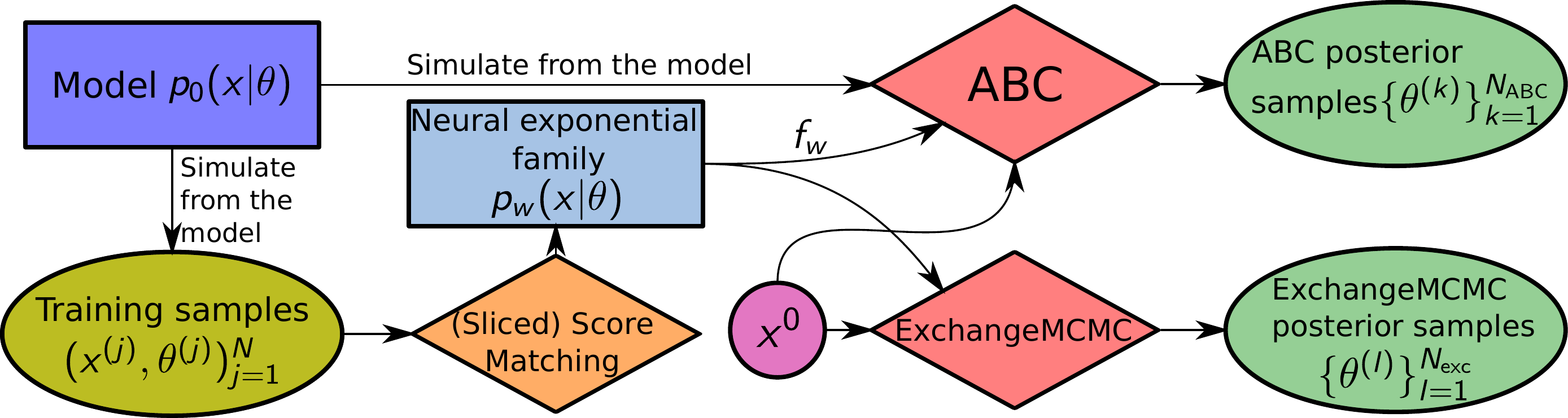}
		\caption{\textbf{Diagram showing the roles of the different components in the inference routines}. SSM or SM are used to train the neural exponential family on model simulations. Then, the sufficient statistics $ f_w $ are used in ABC or, alternatively, the neural exponential family is inserted in ExchangeMCMC. Notice how additional model simulation are needed for ABC but not for ExchangeMCMC.}
		\label{fig:diaram}
	\end{figure}

	\subsection{ABC with neural conditional exponential family statistics}\label{sec:inf_ABC_exp_fam}
	
	The exponential family is the most general set of distributions with sufficient statistics of a given size (see Appendix~\ref{app:suff_stats}). Using $ f_w $ as summaries in ABC is therefore intuitively appealing: $ f_w $ represents in fact the sufficient statistics of the best exponential family approximation to $ p_0 $, according to the (sliced) Fisher divergence. If $ p_0 $ belongs to the exponential family, $ D_F^E(p_0\|p_w) =  D_{FS}^E(p_0\|p_w) = 0 $ if $ f_w $ and $ \eta_w $ are sufficient statistics and natural parameters for $ p_0 $.
	
	To use $ f_w$ in ABC, some practicalities are needed: first, we discard the last component of $ f_w $, which represents the base measure $ h_w(x) $. Secondly, as discussed in Section~\ref{sec:exp_fam}, $ f_w $ is identifiable only up to a scale parameter, so that the magnitude of the different components of $ f_w $ can vary significantly. Before using ABC, then, we rescale the different components by their standard deviation on new samples generated from the model, to prevent the ones with larger magnitude from dominating the ABC distance.
	
	In the following, we will call the approach described here ABC with Score Matching statistics, for short ABC-SM, or ABC-SSM in the Sliced Score Matching case.

	\subsection{ExchangeMCMC with neural conditional exponential family}\label{sec:inf_doubly_intractable}
	
	In contrast to other statistics learning methods \citep{jiang2017learning, wiqvist2019partially, pacchiardi2020, aakesson2020convolutional}, our technique provides a full likelihood approximation; it is therefore tempting to sample from the corresponding posterior directly, bypassing in this way ABC (and the additional model simulations required) altogether.
	
	Unfortunately, $ p_w $ is known only up to the normalizing constant $ Z_w(\theta) $; therefore, standard MCMC cannot be directly exploited , and methods for \textit{doubly-intractable} distributions are required (see \citealp{park2018bayesian} for a review). Here, we use ExchangeMCMC (\cite{murray2012mcmc}, Algorithm~\ref{alg:ExchangeMCMC} in Appendix~\ref{app:exchange}), an MCMC where, for each proposed $ \theta' $, a simulation from the distribution $ p_w(x|\theta') $ is used within a Metropolis-Hastings rejection step. 
	
	In our case, we cannot generate samples from $ p_w(\cdot|\theta') $ directly; to circumvent such issue, \cite{murray2012mcmc} suggested to run an MCMC targeting $ p_w(\cdot|\theta') $ for each ExchangeMCMC step; if the chain is long enough, the last step can be considered as a draw from $ p_w $. Empirical results \citep{caimo2011bayesian, alquier2016noisy, everitt2012bayesian, liang2010double} show that a relatively small number of inner MCMC steps are enough for good performance and that initializing the inner chain at the observation improves convergence; we employ therefore this strategy in our work.

	Further, a variant of ExchangeMCMC employs Annealed Importance Sampling to improve the mixing of the chain \citep{murray2012mcmc}. Specifically, a sequence of $ K $ auxiliary variables are drawn from Metropolis-Hastings kernels targeting some intermediate distributions, and the ExchangeMCMC acceptance rate is modified accordingly. This approach, termed \textit{bridging}, decreases the number of rejections in ExchangeMCMC due to poor auxiliary variables. See Appendix~\ref{app:exchange} for more details.

	In the following, we will refer to using ExchangeMCMC with our likelihood approximation as Exc-SM or Exc-SSM, according to whether SM or SSM are used to obtain the likelihood approximation.
	Exc-SM and Exc-SSM avoid additional model simulations (beyond the ones required to train $ p_w $) at the cost of running a nested MCMC. However, the number of steps in the inner MCMC required to obtain good performance increases with the dimension of $ \mathcal{X} $. Exc-SM and Exc-SSM are therefore ideally applied to models with moderate dimension $ x $ (up to a few hundred), for which simulation is expensive. The same likelihood approximation can be used to perform inference on several observations, which makes the computational gain even greater in this case.
	
	\begin{remark}[Model misspecification]
		The neural exponential family approximation is only valid close to where training data was distributed. Specifically, if $ \dataObs $ is far from that region, $ p_{\hat w}(\dataObs|\theta) $ may be assigned a large value rather than a (correct) small one. This could happen when the model $ p_0 $ is unable to generate data resembling the observation for any parameter value, such that standard Bayesian inference in presence of the likelihood would also perform poorly. To get better inference in such scenarios, we could use the generalized posterior introduced in \cite{matsubara2021robust}, which is robust to outliers and suitable for doubly-intractable distributions (as the exponential family).
		
		In Exc-SM and Exc-SSM, we may wonder whether running MCMC over $ x $ targeting $ p_{w}(x|\theta) $ for a fixed $ \theta $ can fail due to what we discussed above.
		We believe this is not the case: if the MCMC is initialized close to training data and $ p_{ w}(\cdot|\theta) $ is a good representation of $ p_0(\cdot|\theta) $ in that region, $ p_{w}(\cdot|\theta) $ is small for values of $ x $ close to the boundary of the region where training data was distributed. Then, the MCMC is ``trapped'' inside that region and has no way of reaching regions of $ \X $ where $ p_{w} $ may behave irregularly.
	\end{remark}

	\section{Simulation Studies}\label{sec:experiments} 
	
	We perform simulation studies with our proposed approaches (Exc-SM, Exc-SSM, ABC-SM and ABC-SSM) and we compare with:
	
	\begin{itemize}
		\item ABC with statistics learned with the rejection approach discussed in Section~\ref{sec:ABC} \citep{fearnhead_constructing_2012, jiang2017learning}, which we term ABC-FP.
		\item Population Monte Carlo (PMC, \citealt{cappe2004population}) with Ratio Estimation (PMC-RE).
		\item PMC with Synthetic Likelihood, using the robust covariance matrix estimator developed in \cite{ledoit2004well} to estimate $ \Sigma_\theta $. We will denote this as PMC-SL.
	\end{itemize}
	
	Specifically, we consider three exponential family models and two time-series models (AR(2) and MA(2)) for which the exact likelihood is available, as well as a large-dimensional model with unknown likelihood (the Lorenz96 model, \citealt{lorenz1996predictability, wilks2005effects}).
	
	Exc-SM and Exc-SSM do not require additional simulations and run an MCMC, in contrast to sequential algorithms for the other methods, which we run with parallel computing. Comparing the computational cost is therefore not easy; in Appendix~\ref{app:evolution_WD_iter}, we report the number of simulations needed by the different methods to reach the same performance achieved by Exc-SM for the models with known likelihood.

	\paragraph{Choice of neural network architecture} 
	In our exponential family approximation, we fix $ d_s $ to the number of parameters in each model. We added a Batch Normalization layer (see Appendix~\ref{app:batch_norm}) on top of the neural network representing $ \eta_w $ to reduce the unidentifiability introduced by the dot product between $ f_w(x) $ and $ \eta_w(\theta) $ (as discussed in Appendix~\ref{app:sm_exp_fam}). For all techniques, we use $ 10^4 $ training samples. All NNs use SoftPlus nonlinearity (NNs using the more common ReLU nonlinearity cannot be used with SM and SSM as they have null second derivative with respect to the input).
	
	For all models, $ \eta_w $ is represented by fully connected NNs. For the exponential family models, fully connected NNs are also used for $ f_w $ and $ s_\beta $.
	For the time series and Lorenz96 models, we parametrize $ f_w $ and $ s_\beta $ with Partially Exchangeable Networks (PENs, \citealt{wiqvist2019partially}). The output of an $ r $-PEN is invariant to input permutations which do not change the probability density of data distributed according to a Markovian model of order $ r $ (see Appendix \ref{app:PENs}).
	As AR(2) is a 2-Markovian model, we use a $ 2$-PEN; similarly, a 1-PEN is used for the Lorenz96 model, which is 1-Markovian. Finally, we use a $ 10 $-PEN for the MA(2) model; albeit not being Markovian, \cite{wiqvist2019partially} argued that MA(2) can be considered as ``almost'' Markovian so that the loss of information in imposing a PEN structure of high enough order is negligible.
	Further details are given in Appendix~\ref{app:experim_details}. 
	
	\paragraph{Choice of inferential parameters.}
	For ABC, we employ Simulated annealing ABC (SABC, \citealt{albert_2015}), which considers a set of particles and updates their position in parameter space across several iterations. We use 100 iterations and 1000 particles (posterior samples), corresponding to 1000 model simulations per iteration. 
	
	In Exc-SM and Exc-SSM, we run Exchange MCMC for 20000 steps, of which the first 10000 are burned-in. During burn-in, at intervals of 100 outer steps, we tune the proposal sizes according to the acceptance rate in the previous 100 steps.
	Our implementation of the Exchange algorithm is detailed in Appendix~\ref{app:exchange_implementation}.
	For the exponential family and time-series models, we test different numbers of inner MCMC steps, and eventually use $ 30 $ for the former and $ 100 $ for the latter, above which there was no noticeable performance improvement (more details in Appendix~\ref{app:inner_steps_exp_fam_models} and Appendix~\ref{app:inner_MCMC_arma}).

	In PMC-SL and PMC-RE, we run the PMC algorithm with 10 iterations, 1000 posterior samples and respectively 100 (with SL) and 1000 (with RE) simulations per parameter value in order to estimate the approximate likelihood; such a large number of simulations (respectively $ 10^5 $ and $ 10^6 $ for each iteration) is required for the likelihood estimate to be numerically stable. For the exponential family models, we use the true sufficient statistics; for AR(2) and MA(2), we instead use autocovariances with lag 1 and 2 (as for instance in \citealt{marin2012approximate}). For PMC-RE, the cross-product of the statistics is also added to the list of statistics, as PMC-SL implicitly uses it.

	\subsection{Exponential family models}
	First, we consider three models for which a sample is defined as a 10-dimensional independent and identical distributed (i.i.d.) draw from either a Gaussian, Gamma or Beta distribution (all belonging to the exponential family). We put uniform priors on the parameters, with bounds given in Table~\ref{Tab:priors}. These models have different data ranges: unbounded for Gaussian, and respectively lower bounded by 0 and bounded in $ [0,1] $ for Gamma and Beta. Therefore, we directly apply SM and SSM to the Gaussian model, while TranSM and TranSSM are applied to Gamma and Beta.

	\begin{table}[!tb]
		\centering
		
		\begin{tabular}{ lx{0.46cm}x{0.46cm}x{0.46cm}x{0.46cm}x{0.46cm}x{0.46cm}x{0.46cm}x{0.46cm}x{0.46cm}x{0.46cm}x{0.46cm}x{0.46cm}x{0.46cm}x{0.46cm}}
			\toprule
			\textit{Model} & \multicolumn{2}{c}{Beta}  &  \multicolumn{2}{c} {Gamma\footnote{}} &  \multicolumn{2}{c} {Gaussian} &  \multicolumn{2}{c} {AR(2)} &  \multicolumn{2}{c} {MA(2)} &  \multicolumn{4}{c} {Lorenz96} \\ \cmidrule(r){2-3} \cmidrule(r){4-5} \cmidrule(r){6-7} \cmidrule(r){8-9} \cmidrule(r){10-11} \cmidrule(r){12-15} \textit{Parameter}
			& $\alpha$ & $ \beta $ & $ k $ & $ \theta $ & $ \mu $ & $ \sigma $ & $ \theta_1 $ & $ \theta_2 $ &$ \theta_1 $ & $ \theta_2 $ & $ b_0 $ & $ b_1 $ & $ \sigma_e $ & $\phi$ \\ 
			\midrule
			\textit{Lower bound} & 1 & 1 & 1 & 1 &$-10 $& $ 1 $ & $ -1 $ & $ -1 $ & $ -1 $ & 0 & 1.4 & 0 & $ 1.5 $ & $ 0 $ \\ 
			\textit{Upper bound} & 3 & 3 & 3 & 3 & $ 10 $ & $ 10 $ & $ 1 $ & $ 0 $ & $ 1 $ & 1 & 2.2 & 1 & $ 2.5 $ & $ 1 $ \\ 
			\bottomrule
		\end{tabular}

		\caption{\textbf{Bounds of uniform priors for the considered models.}}
		\label{Tab:priors}	
	\end{table}
	
	\footnotetext{The scale parameter of the Gamma distribution is usually called $ \theta $; in contrast, in all the rest of this work we use $ \theta  $ to denote all parameters. Please beware of the difference.}

	\paragraph{Inferred sufficient statistics and natural parametrization.}
	With these models, our exponential family approximation is well specified. Thus, we compare the learned $ f_w $ and $ \eta_w $ with the exact sufficient statistics and natural parameters using the Mean Correlation Coefficient (MCC, Appendix~\ref{app:MCC}) metric, which ranges in $ [0,1] $ and measures how well a multivariate function is recovered. We report results obtained with SM in Table~\ref{Tab:MCC_toy}; the values are close to $ 1 $, indicating that the sufficient statistics and natural parameters are recovered quite well by our method. In Appendix~\ref{app:exp_fam_models}, we report values corresponding to SSM (Table~\ref{Tab:MCC_toy_SSM}), as well as comparisons of exact and learned embeddings in Figures~\ref{fig:statistics_exp_fam_models} and \ref{fig:statistics_exp_fam_models_SSM}.
	\begin{table}[htbp]
		\centering
		\resizebox{\linewidth}{!}{\begin{tabular}{lx{2.73cm}x{2.73cm}x{2.73cm}x{2.73cm}}
				\toprule
				\textit{Model} & \textit{MCC weak in} & \textit{MCC weak out} & \textit{MCC strong in} & \textit{MCC strong out} \\
				\midrule
				Beta (statistics) & 0.964 & 0.958 & 0.723 & 0.723 \\
				Beta (nat. par.) & 0.990 & 0.991 & 0.807 & 0.812 \\
				\midrule
				Gamma (statistics) & 0.911 & 0.924 & 0.894 & 0.883 \\
				Gamma (nat. par.) & 0.967 & 0.967 & 0.872 & 0.873 \\
				\midrule
				Gaussian (statistics) & 0.944 & 0.937 & 0.808 & 0.824 \\
				Gaussian (nat. par.) & 0.974 & 0.974 & 0.970 & 0.972 \\
				\bottomrule
		\end{tabular}}
		\caption{\textbf{MCC for exponential family models between exact embeddings and those learned with SM}. We show weak and strong MCC values; MCC is between 0 and 1 and measures how well an embedding is recovered up to permutation and rescaling of its components (strong) or linear transformation (weak); the larger, the better. ``in'' denotes MCC on training data used to find the best transformation, while ``out'' denote MCC on test data. We used 500 samples in both training and test data sets.}
		\label{Tab:MCC_toy}
	\end{table}

	\paragraph{Inferred posterior distribution.}
	Figure~\ref{fig:posteriors_exp_fam_models} shows posteriors obtained with the proposed methods, for a possible observation for each model; {we see that all approximate posteriors are remarkably close to the exact one; moreover, the results with SM and SSM are indistinguishable.}
	For all methods, we also estimate the Wasserstein distance between true and approximate posterior and compute the Root Mean Squared Error (RMSE) between mean of the true and approximate posterior; this is repeated for 100 simulated observations, with results reported in Figure~\ref{fig:boxplots_exp_fam_models}. ABC-FP is here the worst method; additionally, ABC-SM and ABC-SSM perform similarly to PMC-SL and PMC-RE. Finally, Exc-SM and Exc-SSM are marginally worse.

	\begin{figure}[!tb]
		\centering
		\begin{subfigure}{\textwidth}
			\centering
			\begin{subfigure}{0.2\textwidth}
				\centering
				\includegraphics[width=\linewidth]{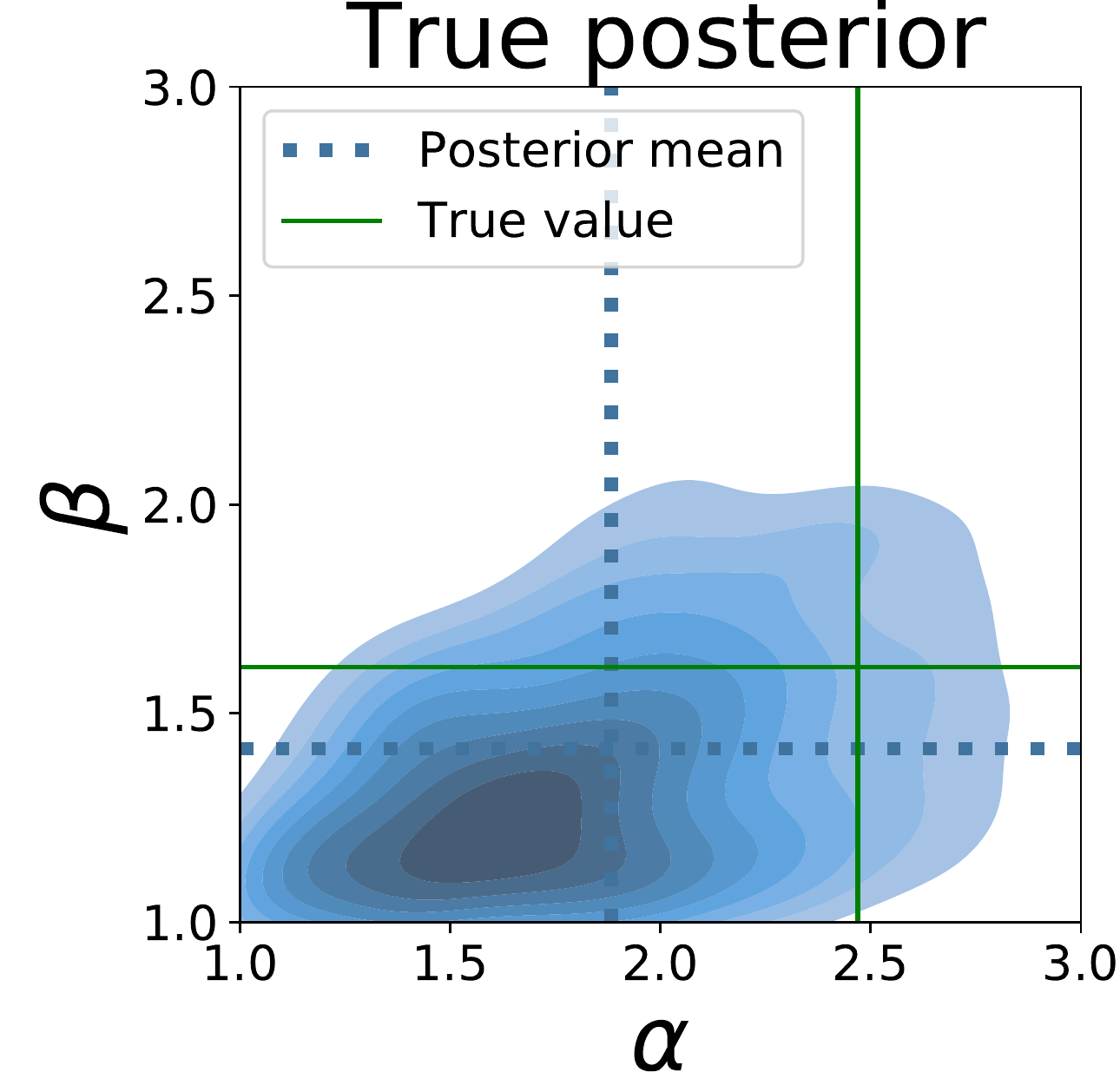}
			\end{subfigure}\begin{subfigure}{0.2\textwidth}
				\centering
				\includegraphics[width=\linewidth]{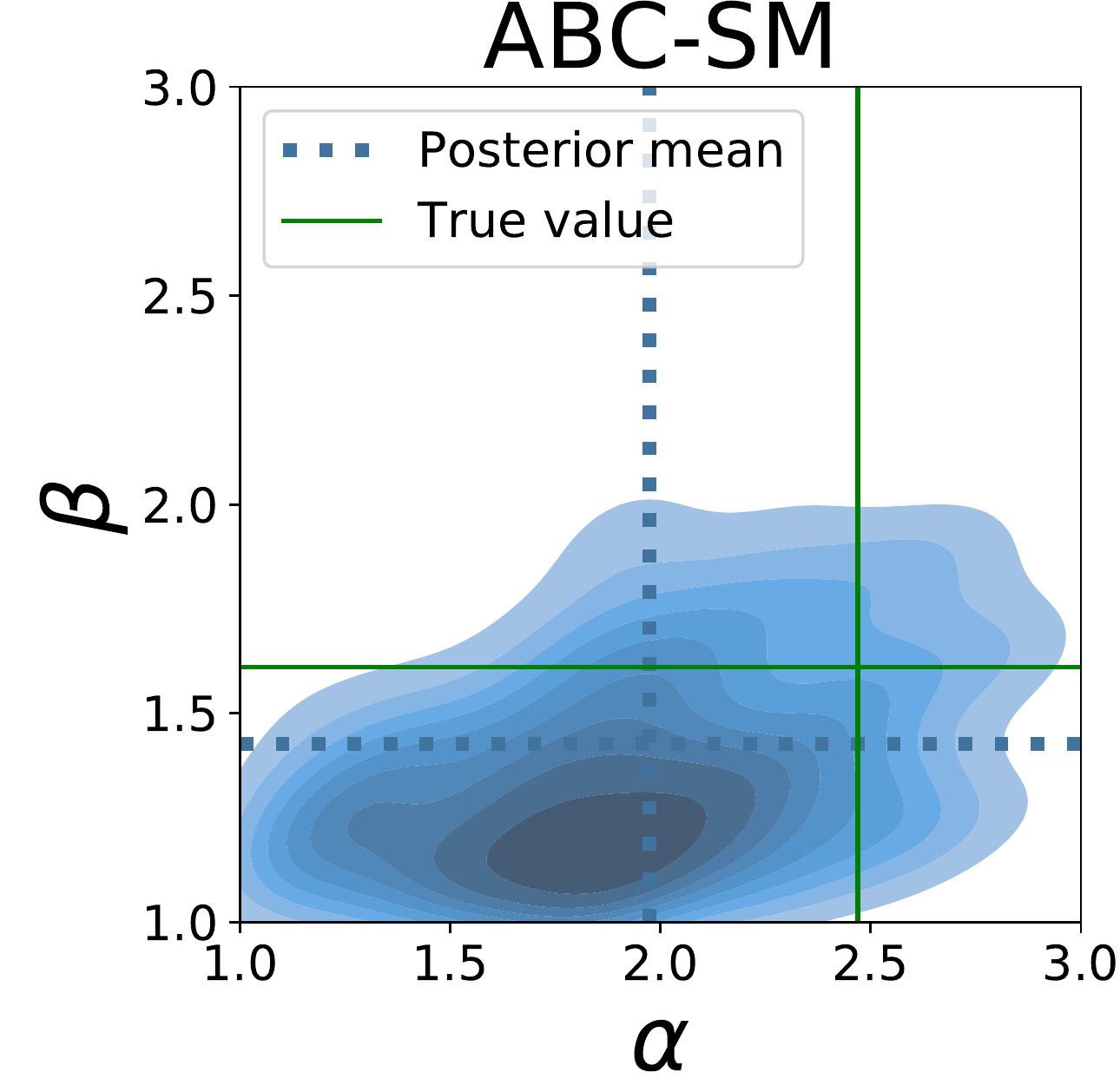}
			\end{subfigure}\begin{subfigure}{0.2\textwidth}
				\centering
				\includegraphics[width=\linewidth]{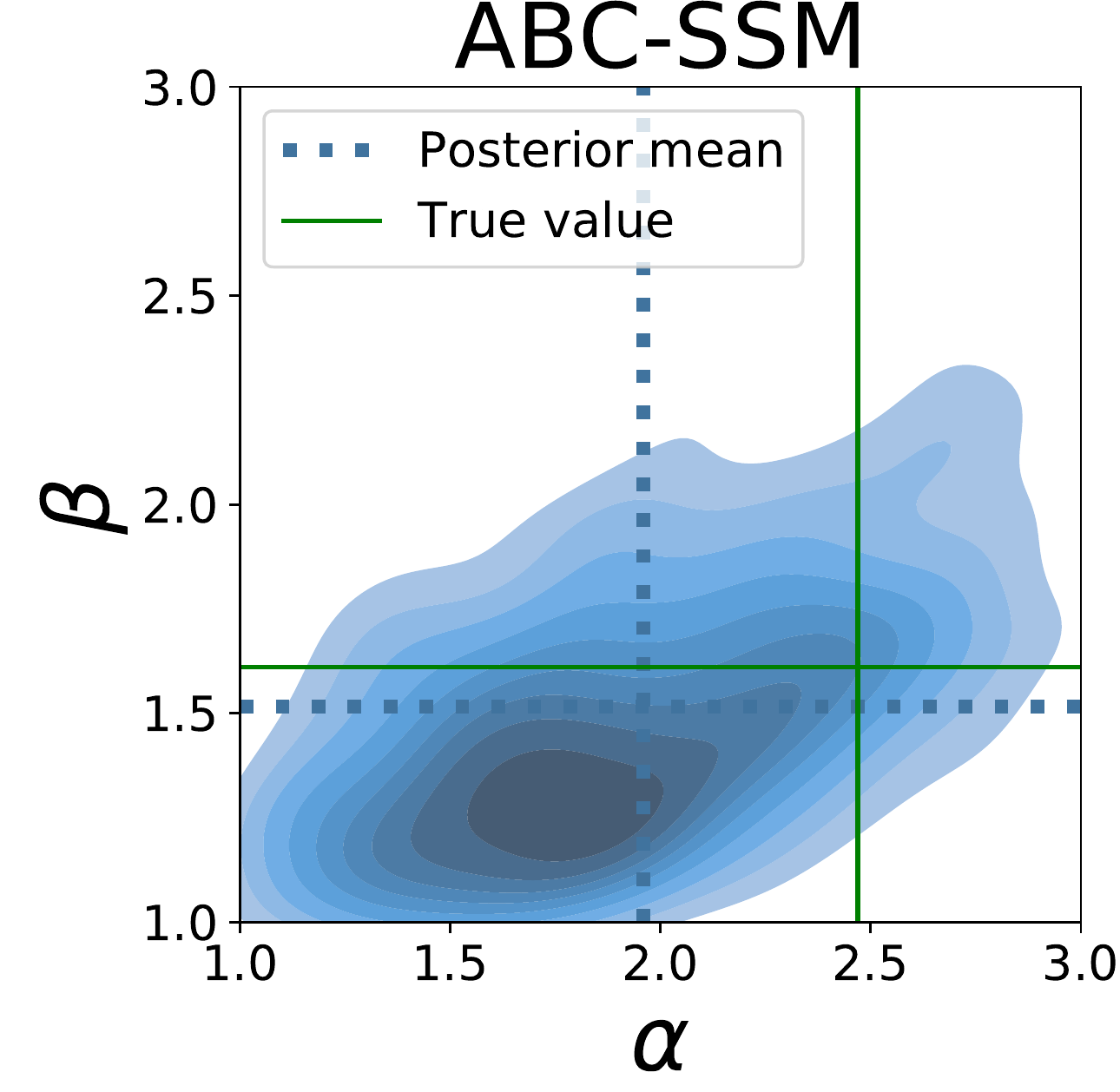}
			\end{subfigure}\begin{subfigure}{0.2\textwidth}
				\centering
				\includegraphics[width=\linewidth]{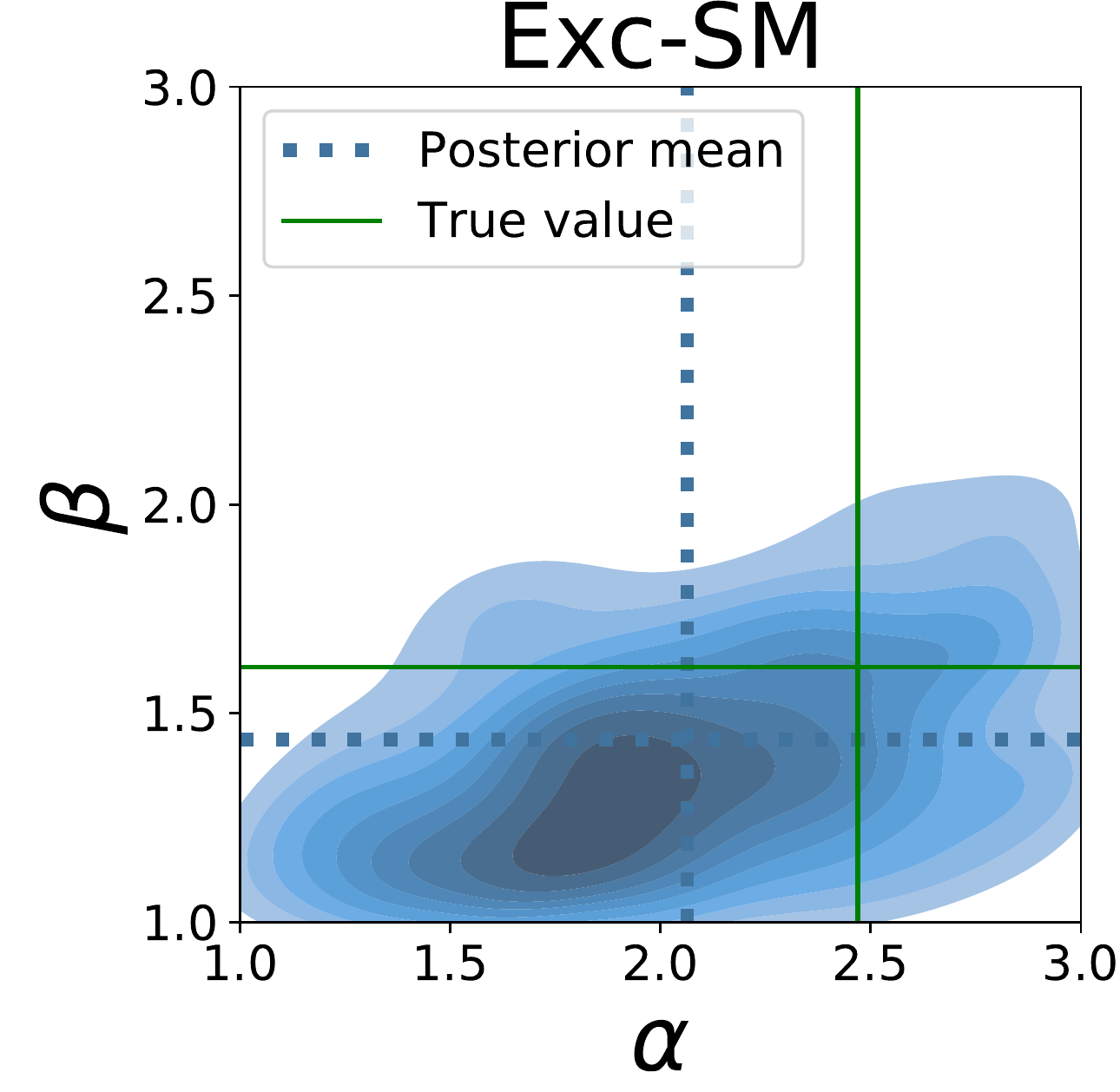}
			\end{subfigure}\begin{subfigure}{0.2\textwidth}
				\centering
				\includegraphics[width=\linewidth]{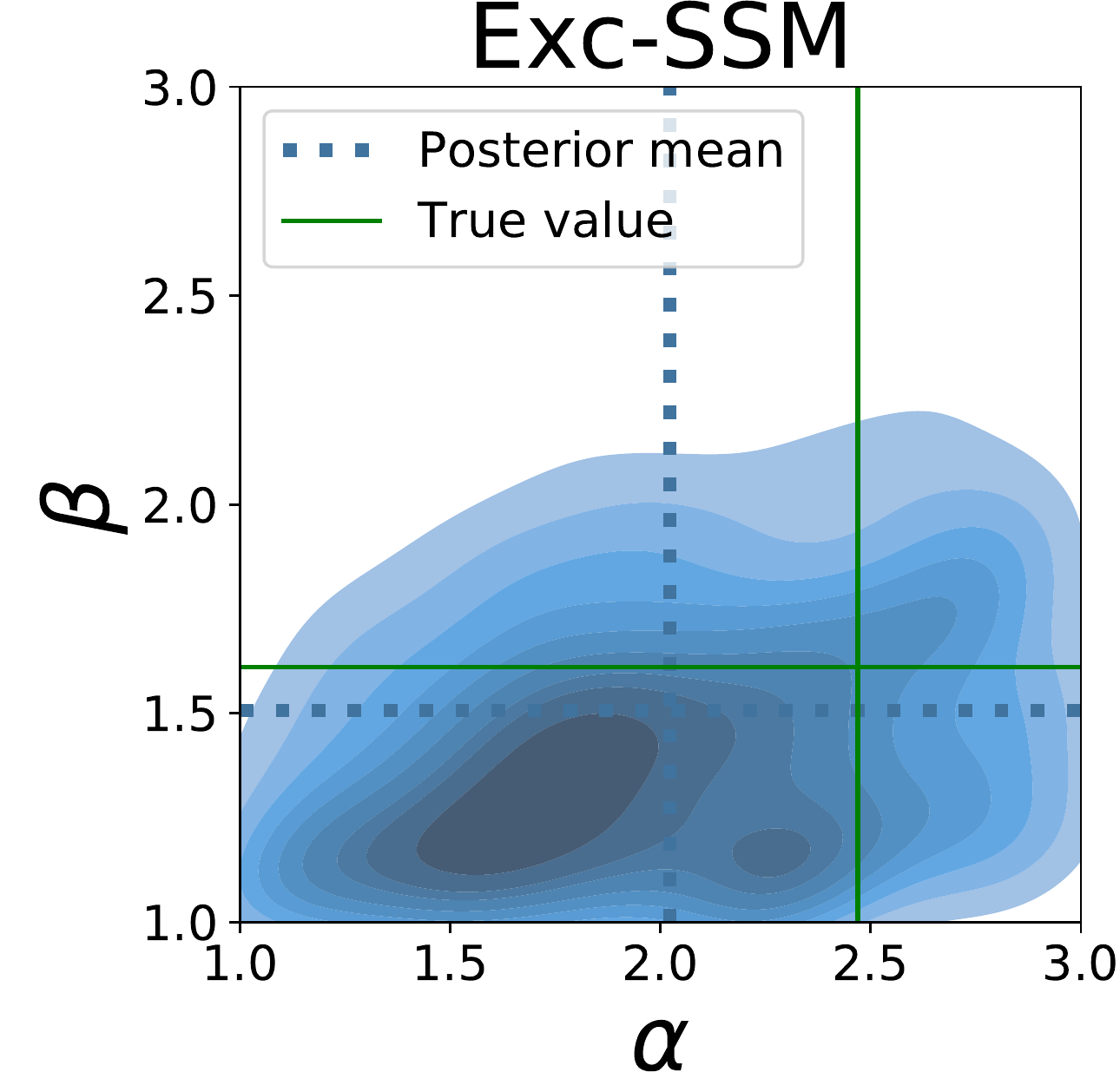}
			\end{subfigure}
			\caption{Beta}
		\end{subfigure}\\
		\begin{subfigure}{\textwidth}
			\centering
			\begin{subfigure}{0.2\textwidth}
				\centering
				\includegraphics[width=\linewidth]{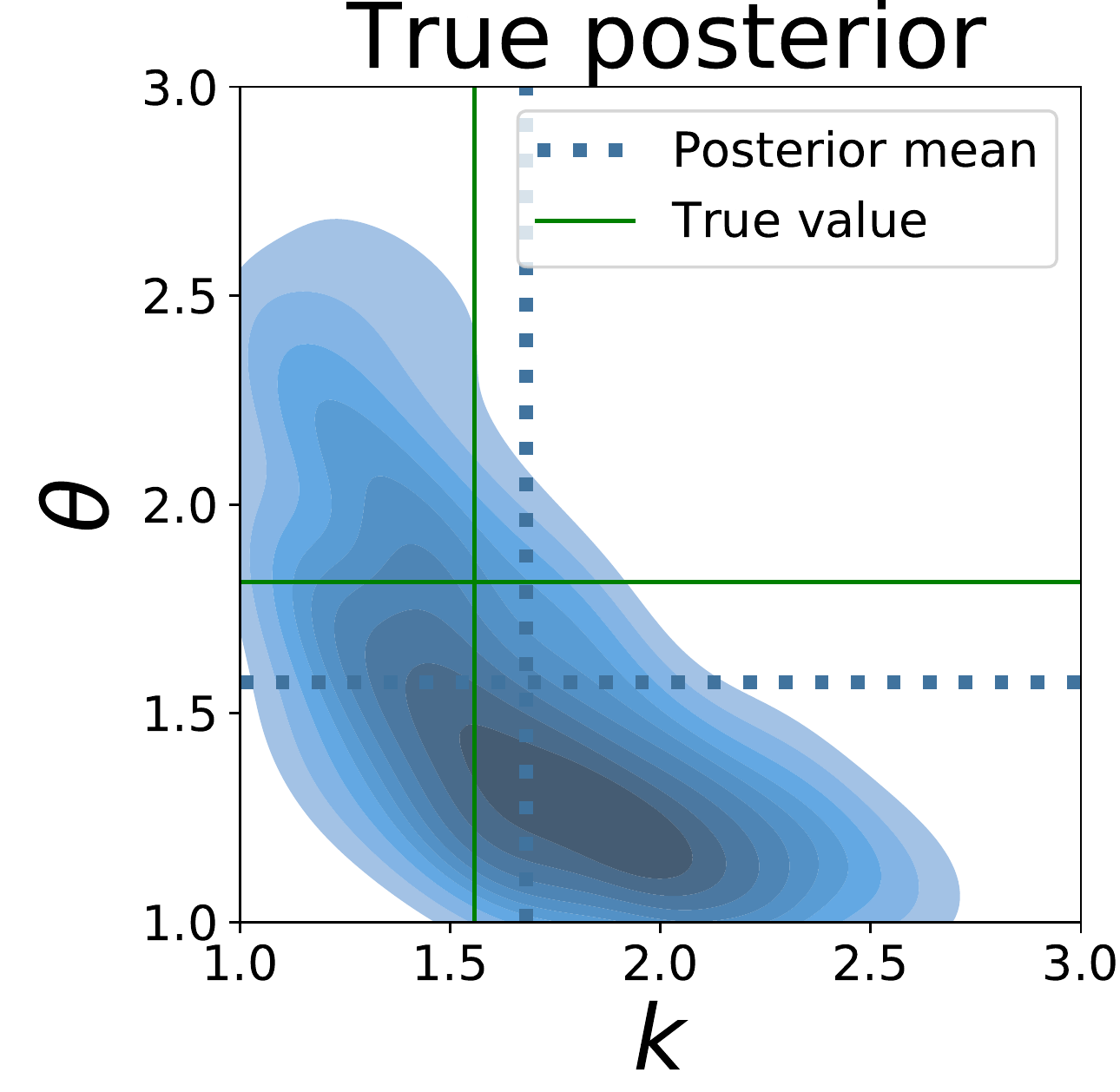}
			\end{subfigure}\begin{subfigure}{0.2\textwidth}
				\centering
				\includegraphics[width=\linewidth]{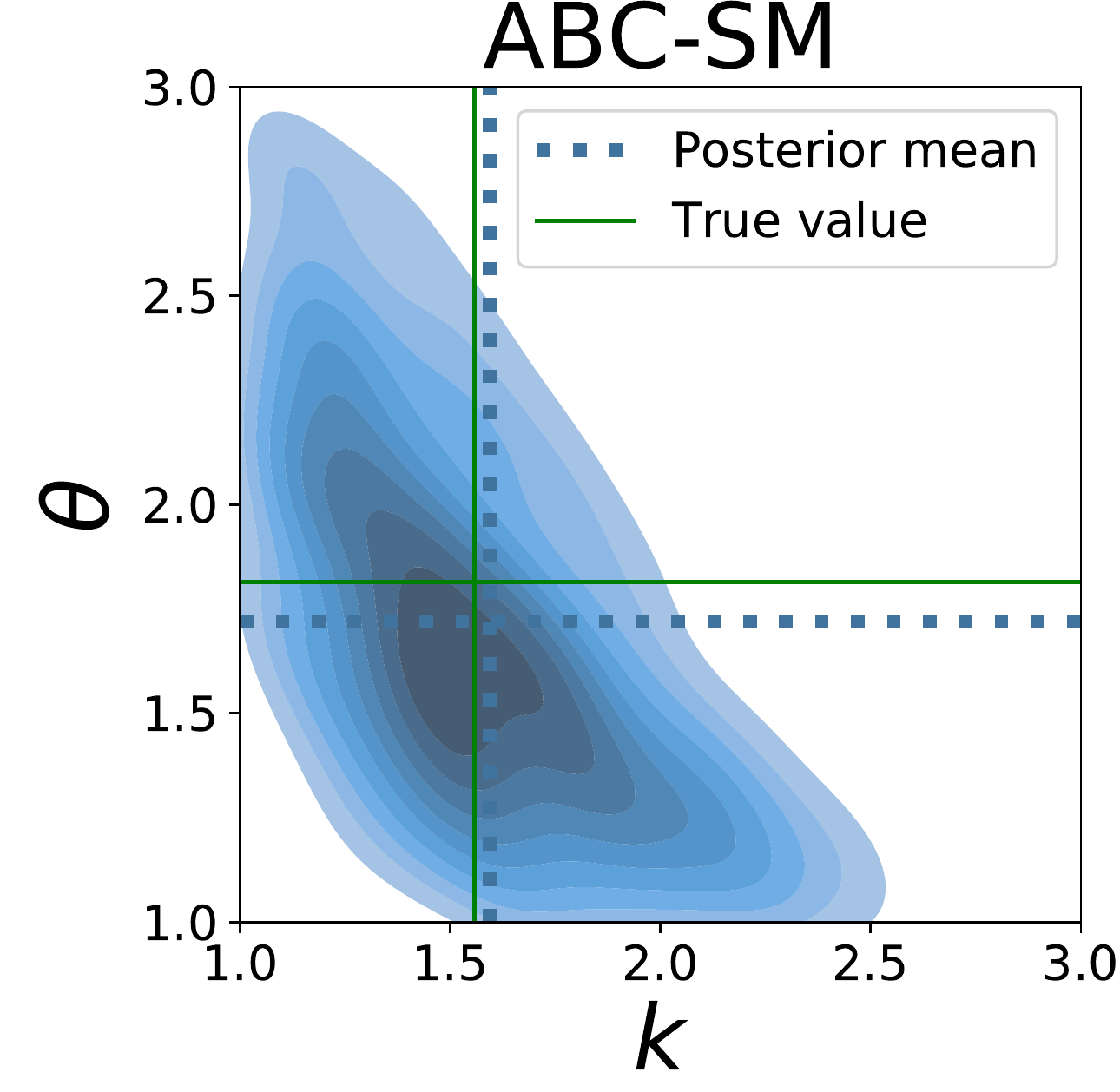}
			\end{subfigure}\begin{subfigure}{0.2\textwidth}
				\centering
				\includegraphics[width=\linewidth]{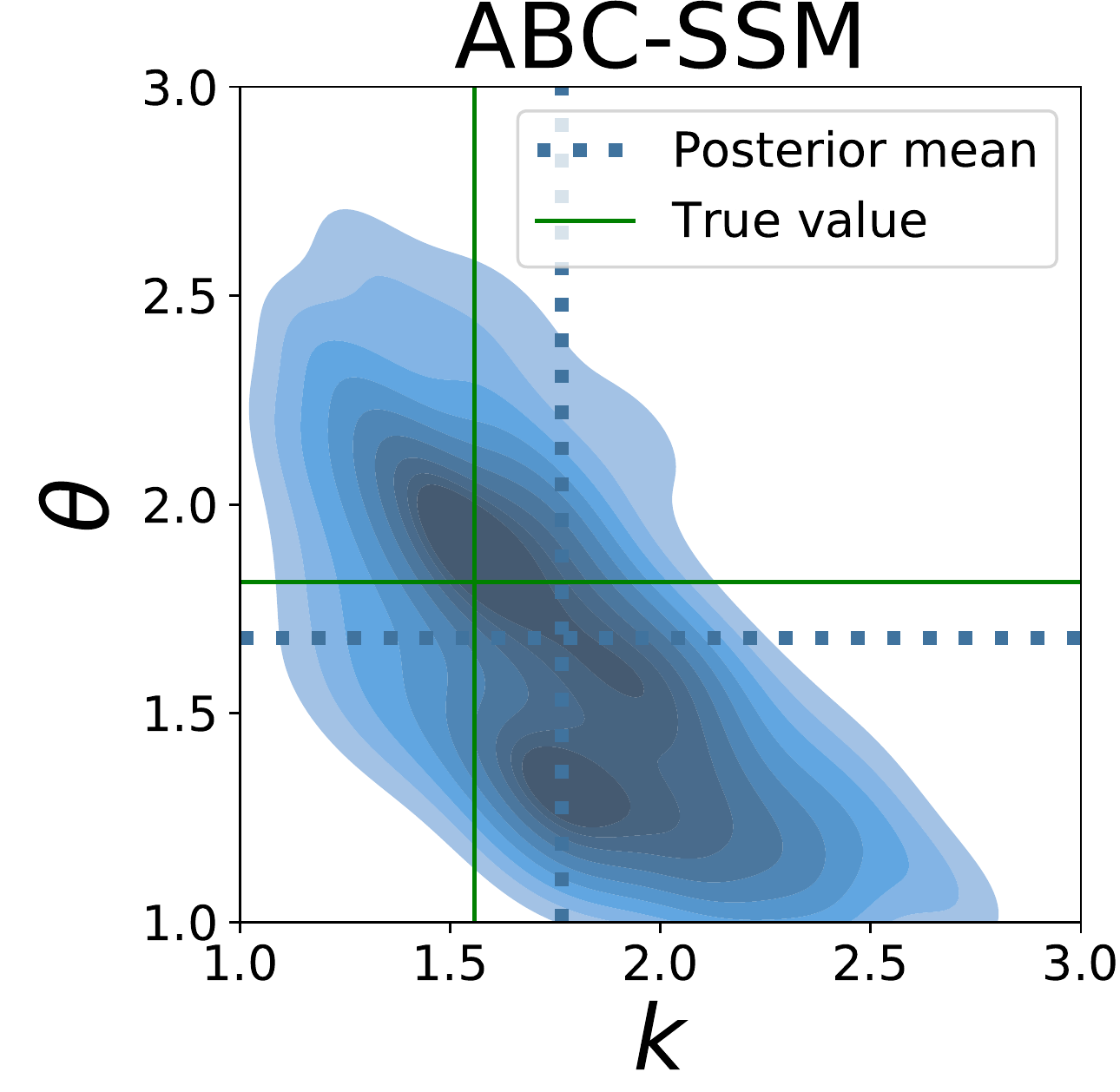}
			\end{subfigure}\begin{subfigure}{0.2\textwidth}
				\centering
				\includegraphics[width=\linewidth]{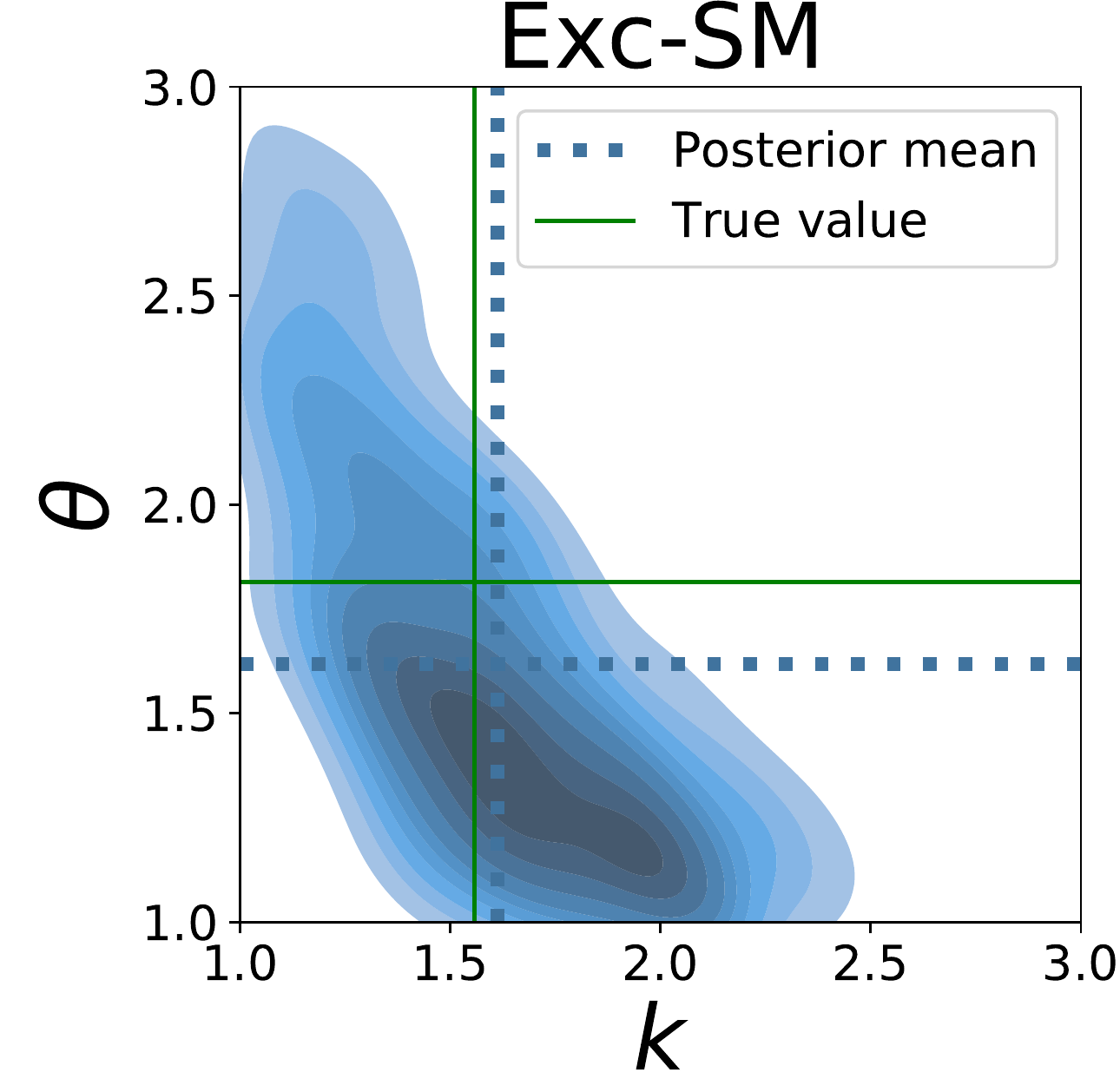}
			\end{subfigure}\begin{subfigure}{0.2\textwidth}
				\centering
				\includegraphics[width=\linewidth]{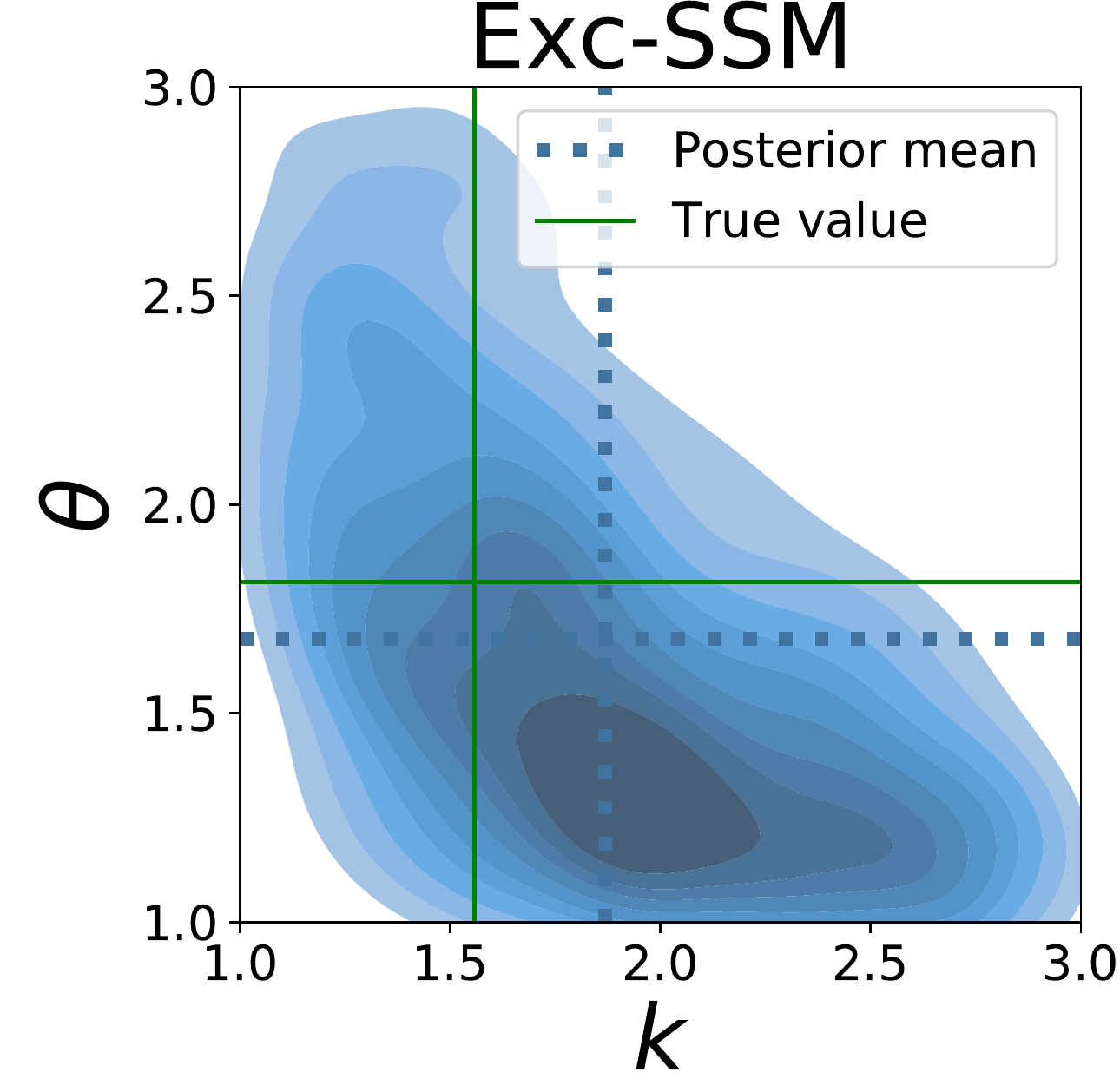}
			\end{subfigure}
			\caption{Gamma}
		\end{subfigure}\\
		\begin{subfigure}{\textwidth}
			\centering
			\begin{subfigure}{0.2\textwidth}
				\centering
				\includegraphics[width=\linewidth]{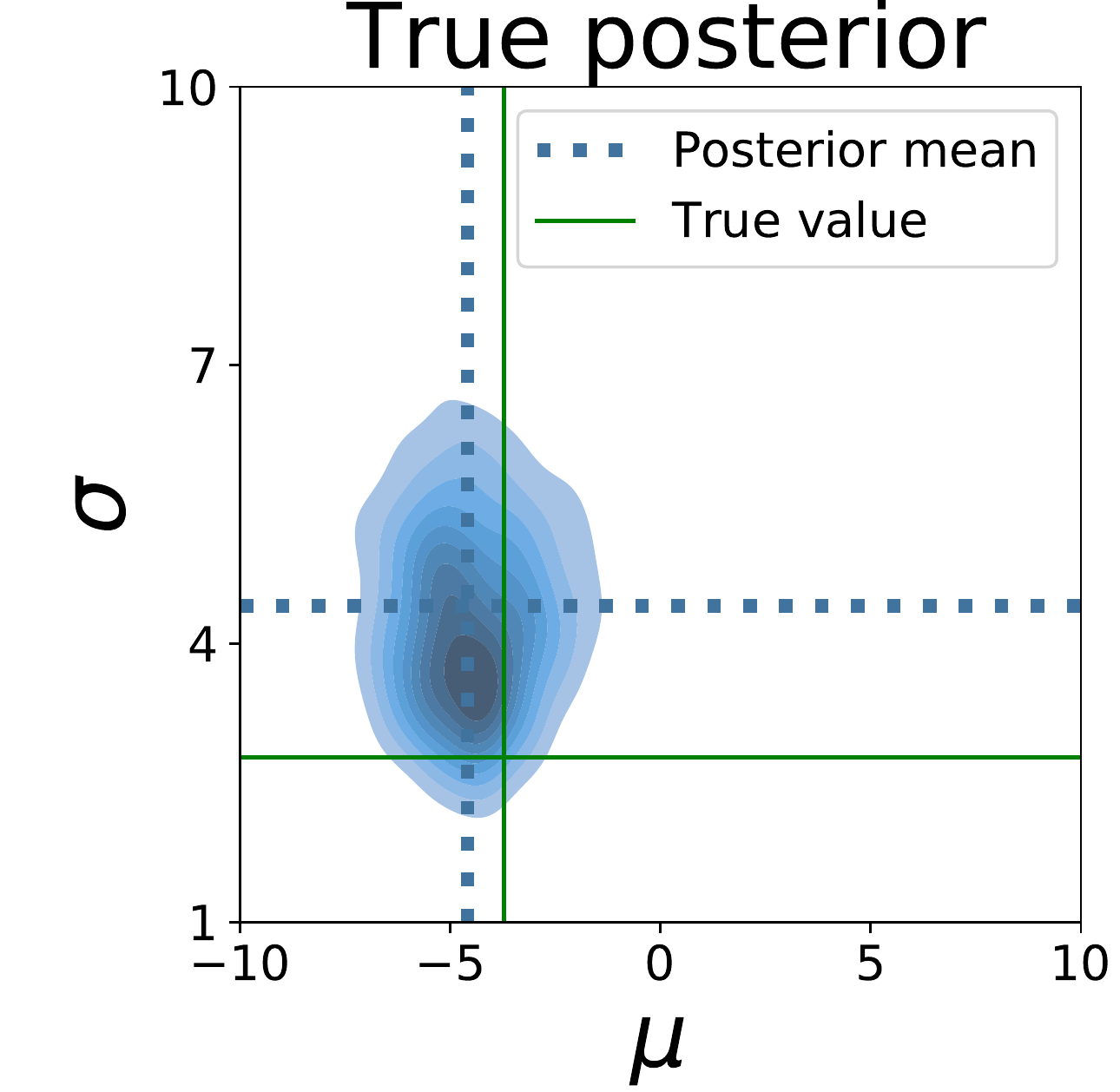}
			\end{subfigure}\begin{subfigure}{0.2\textwidth}
				\centering
				\includegraphics[width=\linewidth]{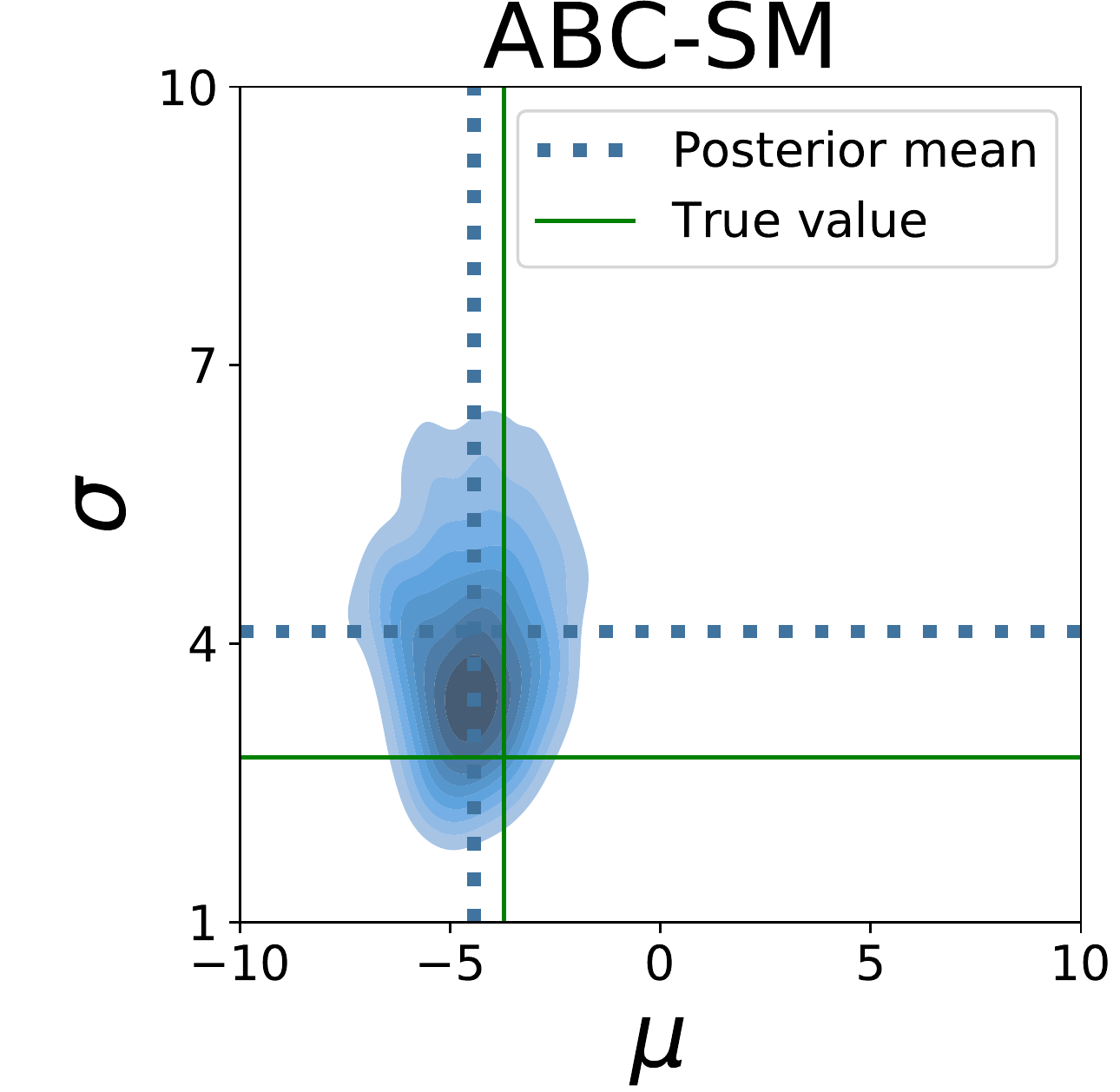}
			\end{subfigure}\begin{subfigure}{0.2\textwidth}
				\centering
				\includegraphics[width=\linewidth]{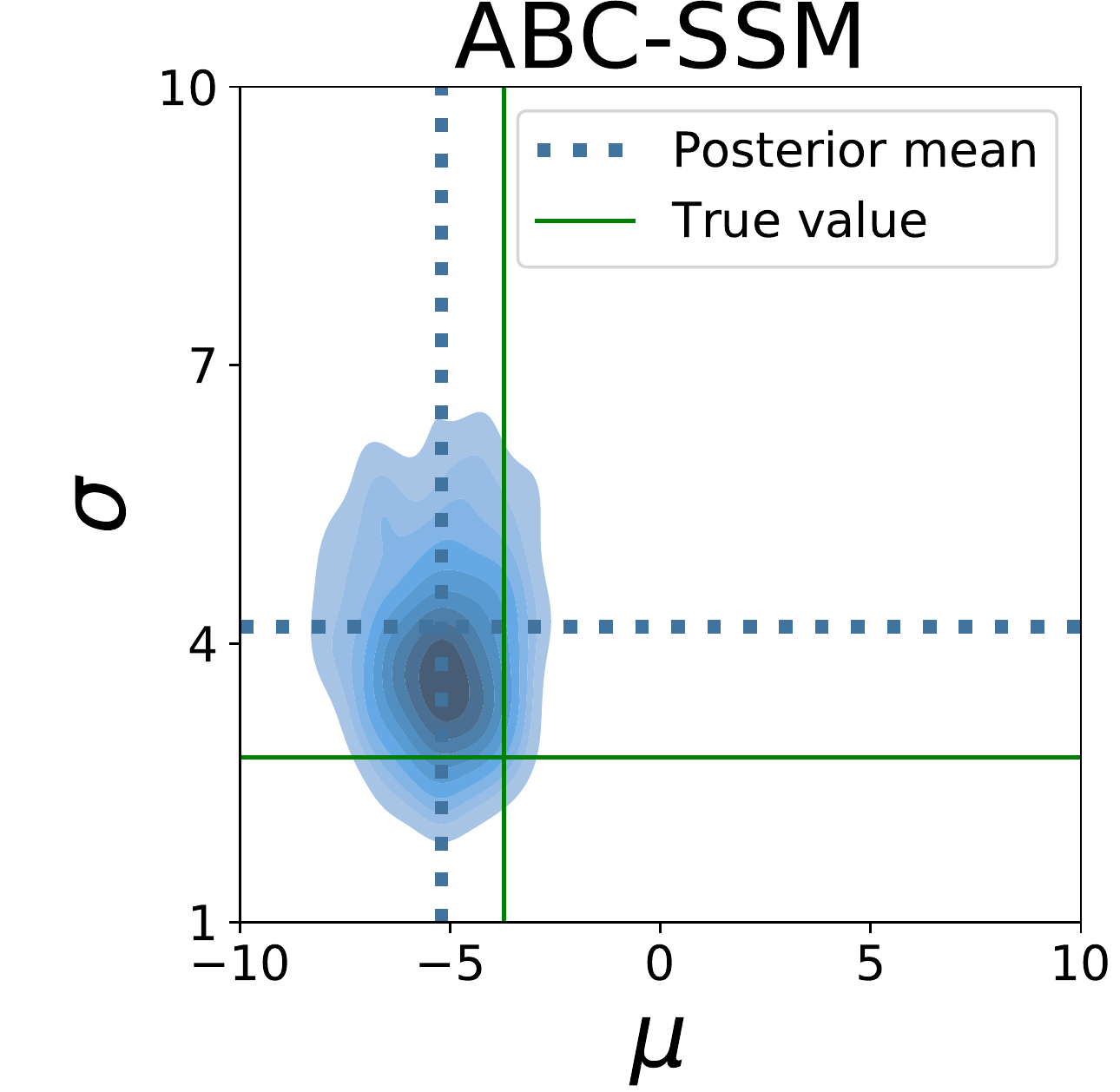}
			\end{subfigure}\begin{subfigure}{0.2\textwidth}
				\centering
				\includegraphics[width=\linewidth]{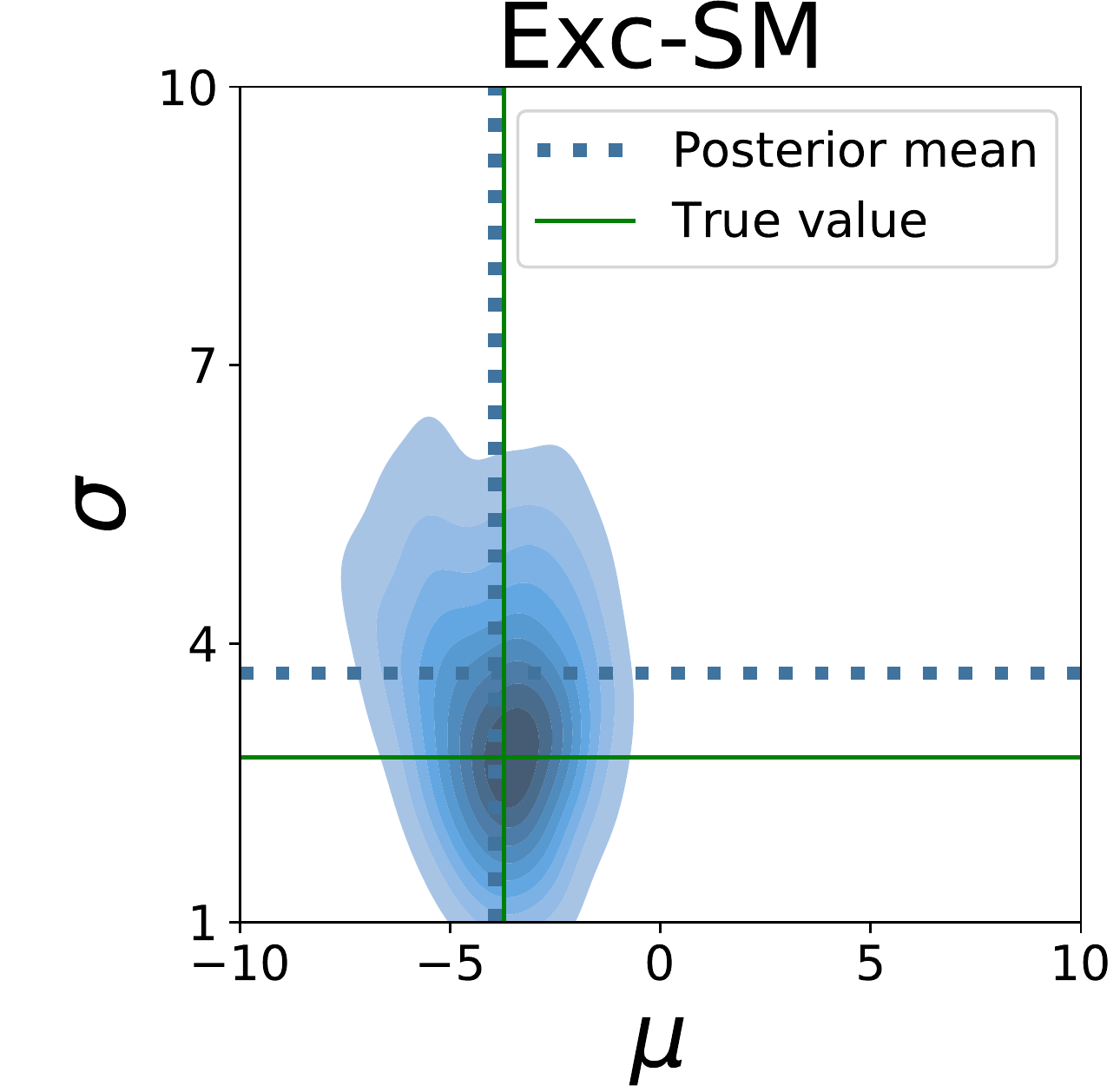}
			\end{subfigure}\begin{subfigure}{0.2\textwidth}
				\centering
				\includegraphics[width=\linewidth]{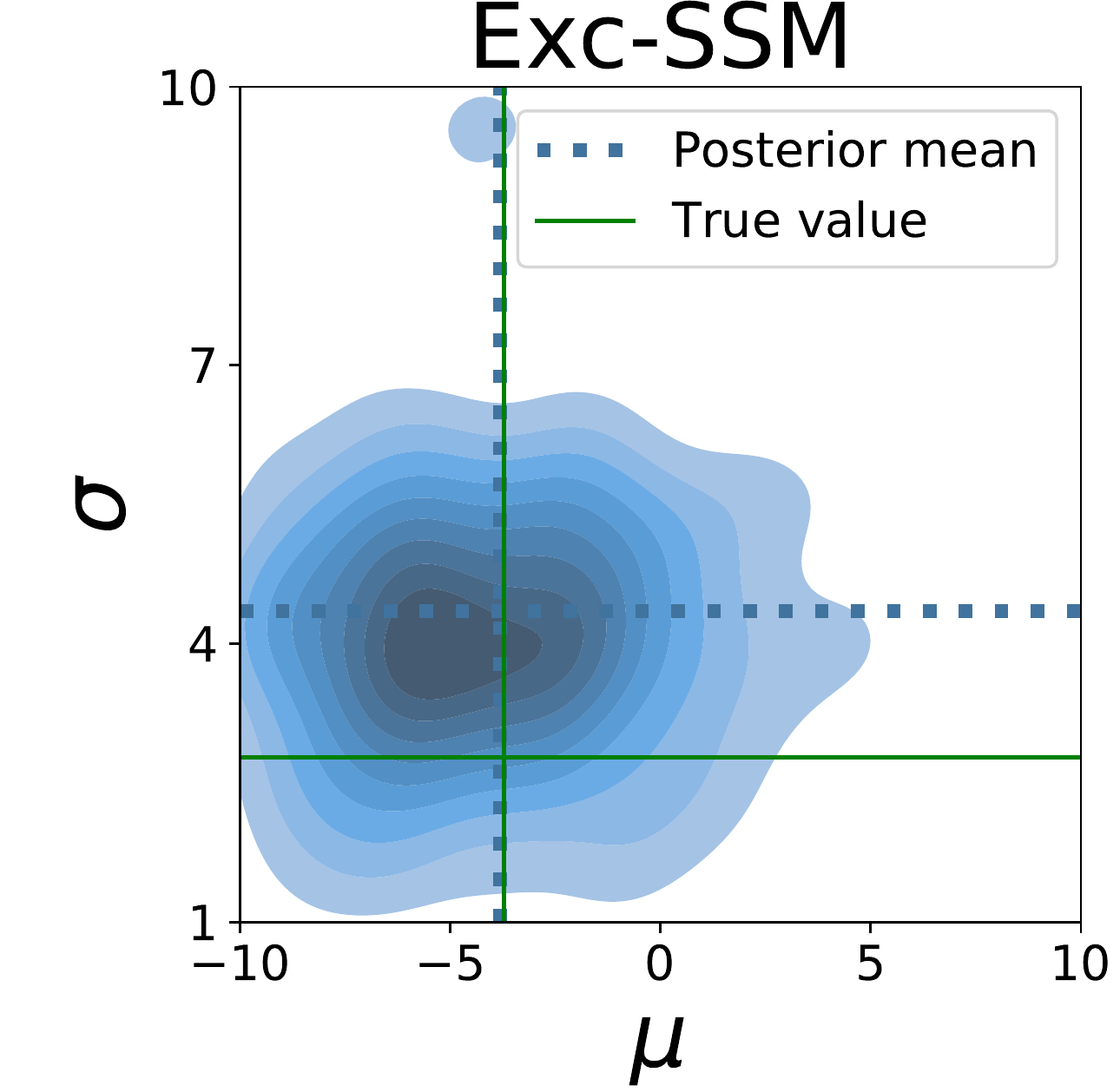}
			\end{subfigure}
			\caption{Gaussian}
		\end{subfigure}\\
		\caption{\textbf{True and approximate posteriors for exponential family models,} for a single observation per model. Dashed line represents posterior mean, while green solid line represents the exact parameter value.}		\label{fig:posteriors_exp_fam_models}
	\end{figure}
	
	\begin{figure}[!tb]
		\centering
		\begin{subfigure}{0.32\textwidth}
			\centering
			\includegraphics[width=1\linewidth]{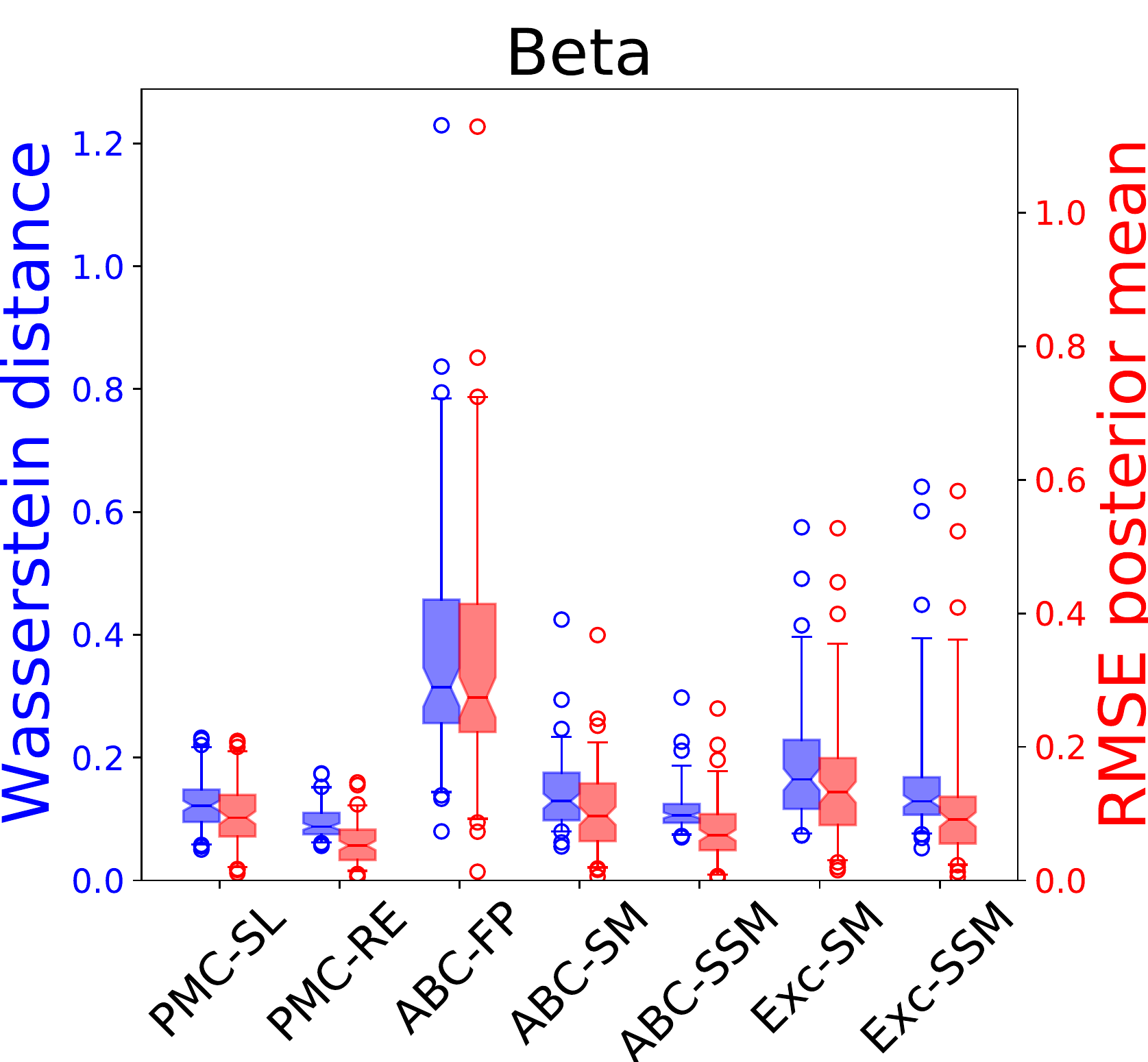}
			
		\end{subfigure}~
		\begin{subfigure}{0.32\textwidth}
			\centering
			\includegraphics[width=1\linewidth]{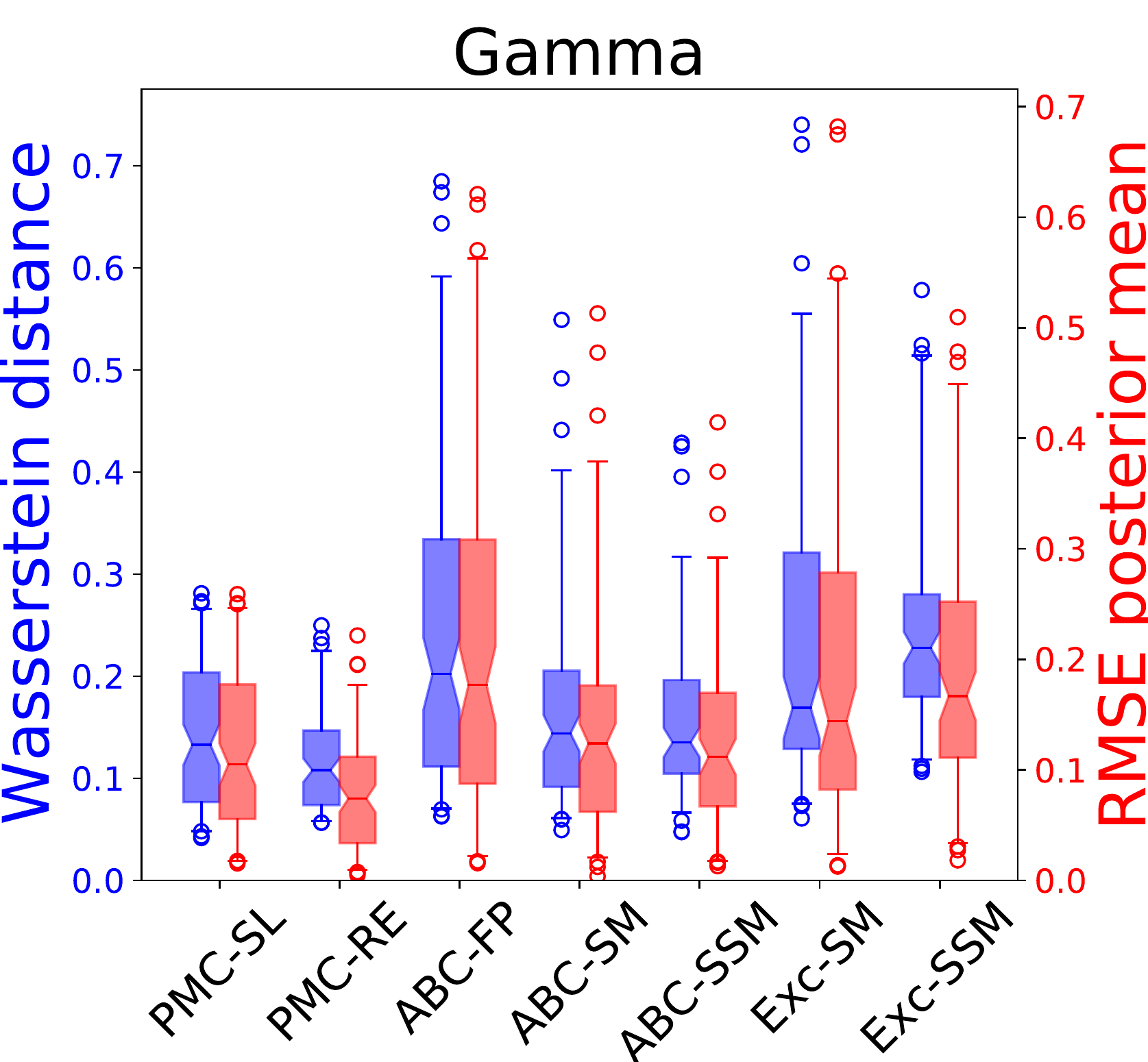}
			
		\end{subfigure}~
		\begin{subfigure}{0.32\textwidth}
			\centering
			\includegraphics[width=1\linewidth]{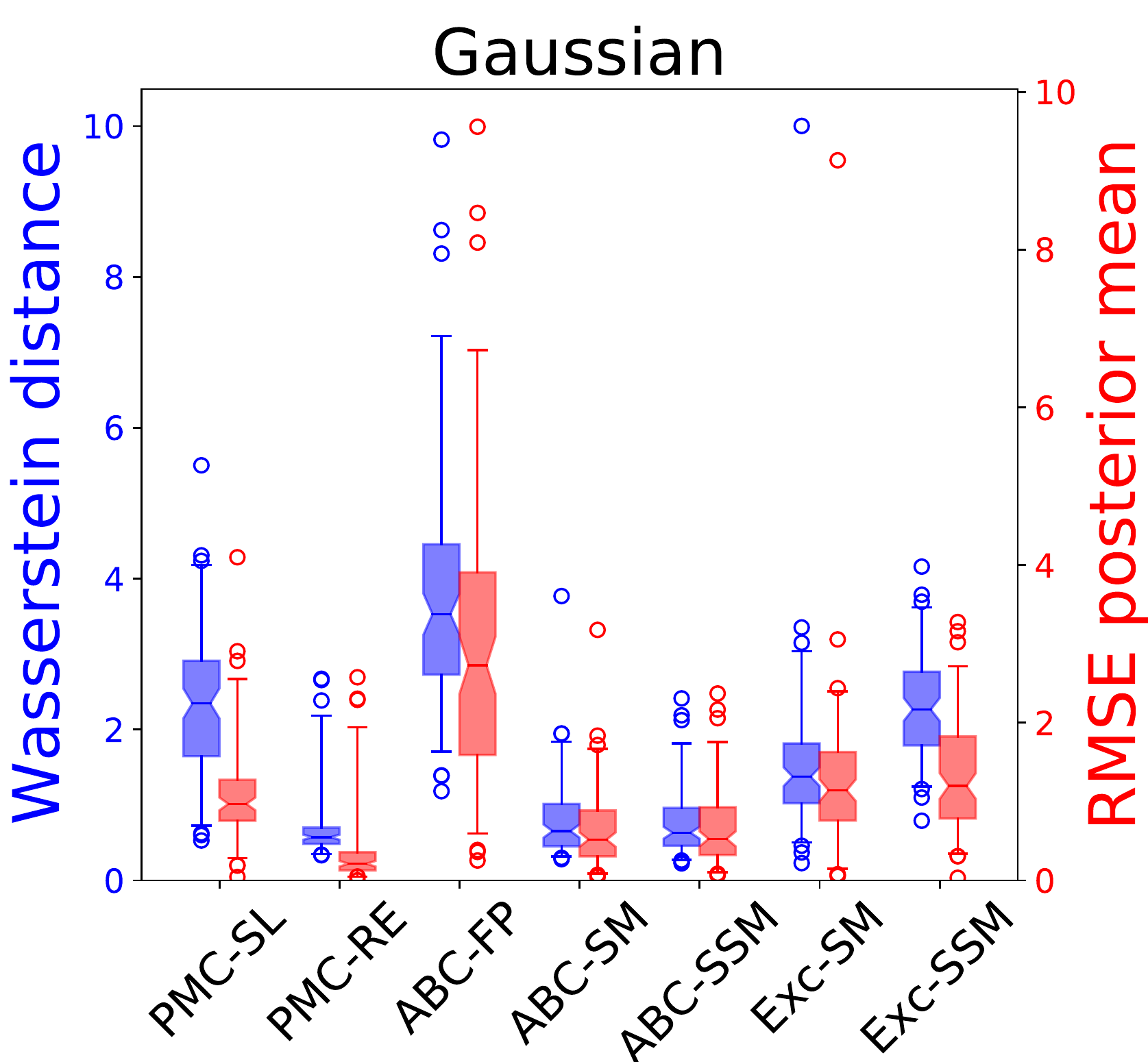}
			
		\end{subfigure}
		\caption{\textbf{Performance of the different techniques for exponential family models.} Wasserstein distance from the exact posterior and RMSE between exact and approximate posterior means are reported for 100 observations using boxplots. Boxes span from 1st to 3rd quartile, whiskers span 95\% probability density region and horizontal line denotes median. The numerical values are not comparable across examples, as they depend on the range of parameters. Here, SL and RE used the true sufficient statistics.}
		\label{fig:boxplots_exp_fam_models}
	\end{figure}

	\subsection{Time series models}
	The Moving Average model of order 2, or MA(2), and the AutoRegressive model of order 2, or AR(2), are special cases of the ARMA time-series model. The MA(2) model is defined by: 
	\begin{equation}
		X_1 = \xi_1, \quad  X_2 = \xi_2 + \theta_1 \xi_{1}, \quad X_j = \xi_j + \theta_1 \xi_{j-1} + \theta_2 \xi_{j-2}, \quad j=3,\ldots,t,
	\end{equation}
	while the AR(2) is defined as: 
	\begin{equation}
		X_1 = \xi_1, \quad  X_2 = \xi_2 + \theta_1 X_1, \quad X_j = \xi_j + \theta_1 X_{j-1} + \theta_2 X_{j-2}, \quad j=3,\ldots,t;
	\end{equation}	
	in both, $ \xi_j $'s are i.i.d. standard normal error terms (unobserved). We take here $ t=100 $ and we put uniform priors on the parameters of the two models, with bounds given in Table~\ref{Tab:priors}. For these models, the true likelihood can be evaluated, but they do not belong to the exponential family\footnote{More precisely, they cannot be written as an exponential family with embedding dimension fixed with data size; in fact, MA(2) can be written as a Gaussian distribution with $ t\times t $ covariance matrix, which is an exponential family whose embedding dimension increases with data size.}. 
	
	\paragraph{Inferred posterior distribution.}
	
	Figure~\ref{fig:posteriors_AR2_MA2} shows the posterior obtained with our proposed methods, for a possible observation for each model; again, {our approximations are close to the exact posterior, with Exc-SM and Exc-SSM leading to slightly broader posteriors.}
	Again, we assess performance with the Wasserstein distance from the true posterior and the RMSE between the means of true and approximate posterior for all methods, over 100 fresh observations. The results are reported in Figure~\ref{fig:boxplots_AR2_MA2}; here, ABC-FP is the best method, with ABC-SM and ABC-SSM marginally worse. Exc-SM and Exc-SSM follow and perform better or comparably to PMC-RE and PMC-SL.
	
	\begin{figure}[!tb]
		\centering
		\begin{subfigure}{\textwidth}
			\centering
			\begin{subfigure}{0.2\textwidth}
				\centering
				\includegraphics[width=\linewidth]{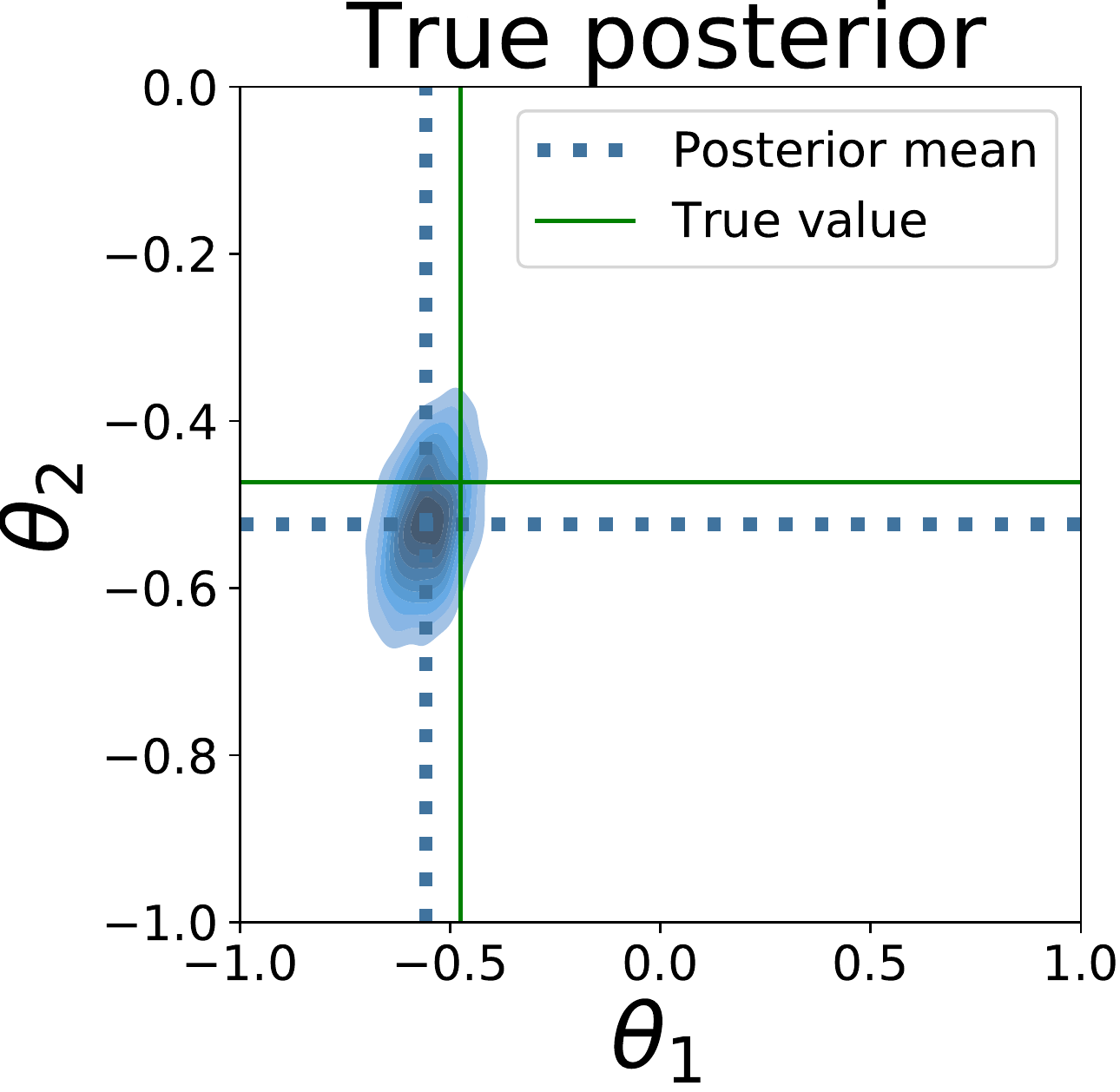}
			\end{subfigure}\begin{subfigure}{0.2\textwidth}
				\centering
				\includegraphics[width=\linewidth]{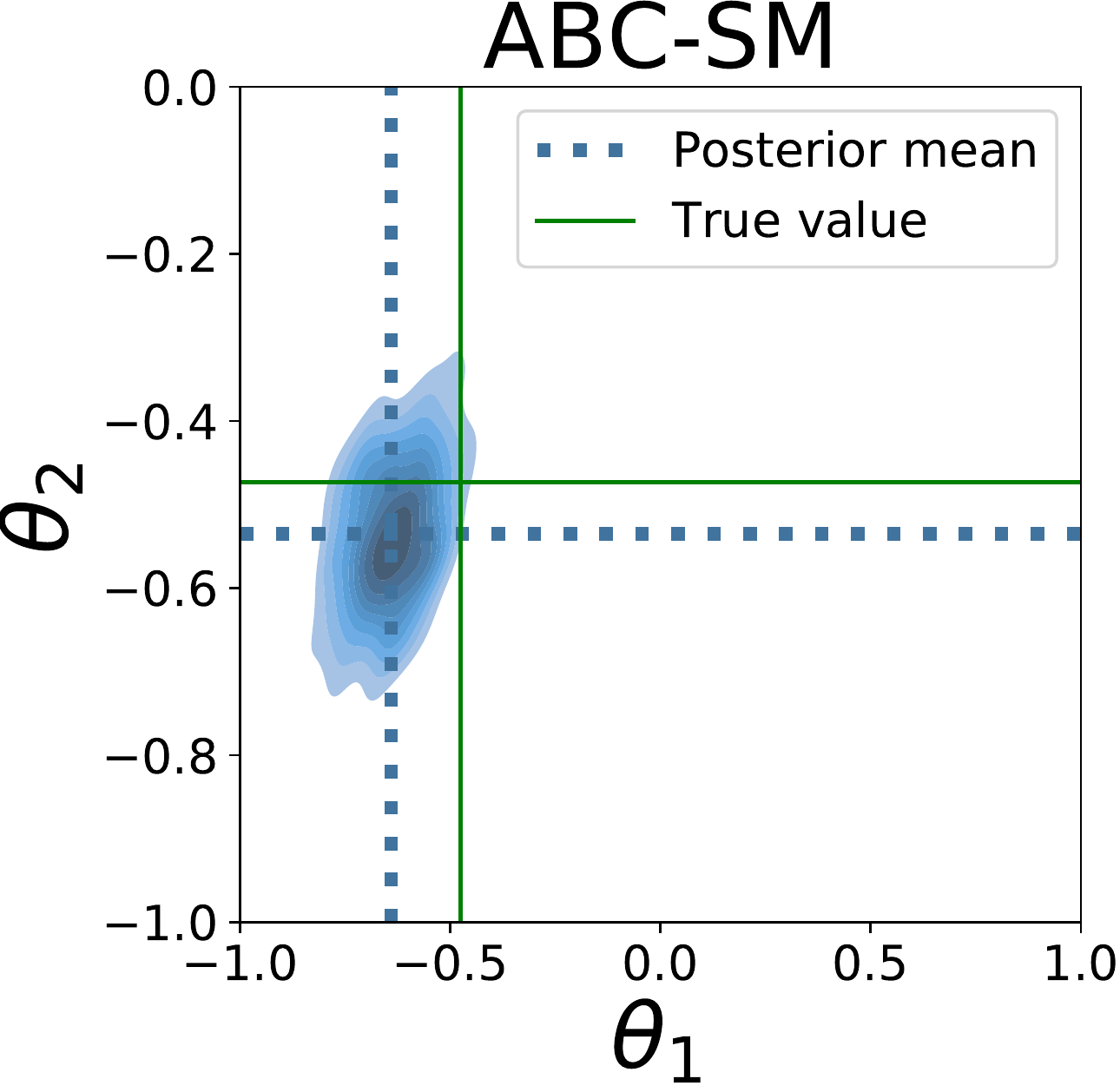}
			\end{subfigure}\begin{subfigure}{0.2\textwidth}
				\centering
				\includegraphics[width=\linewidth]{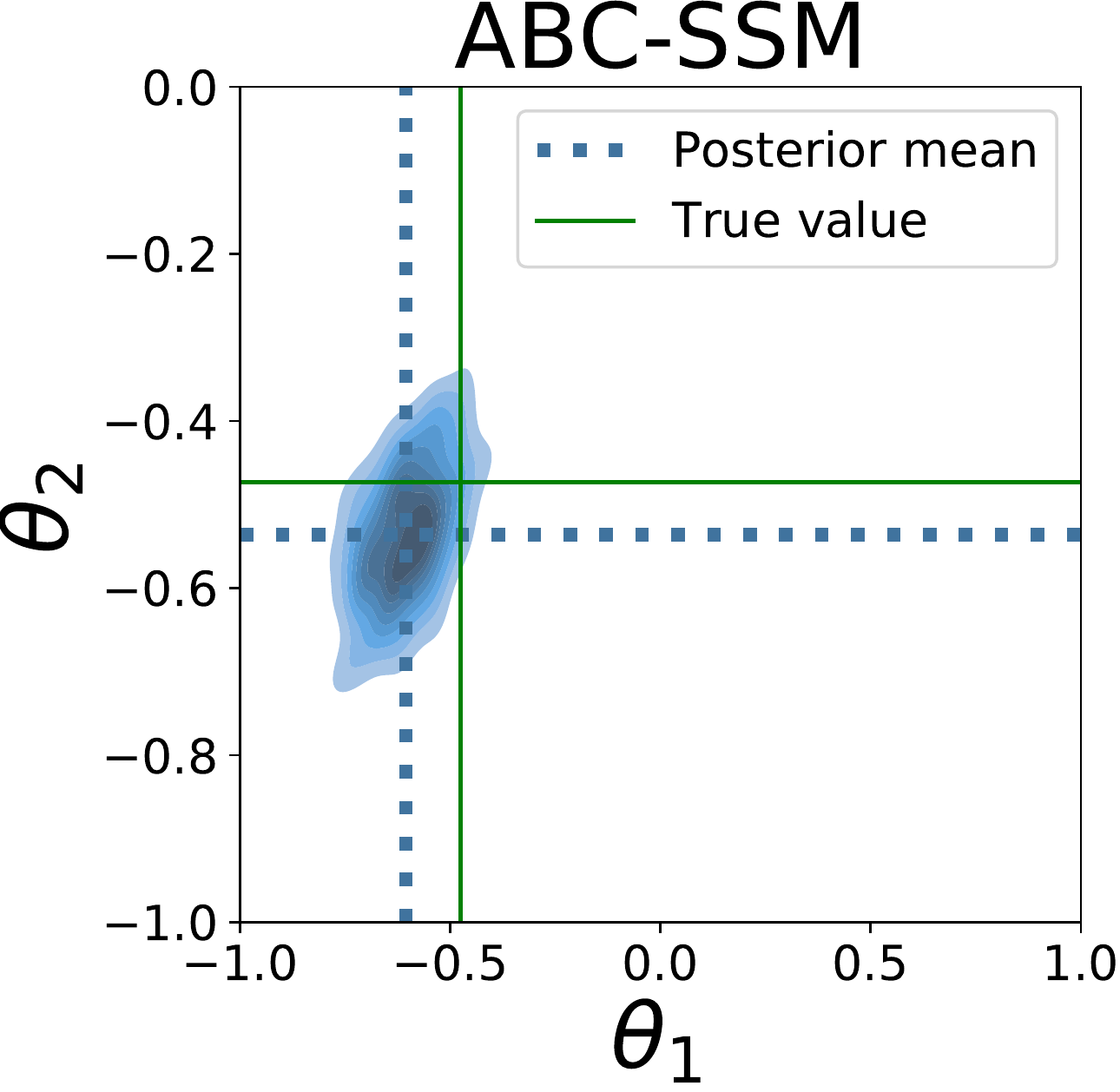}
			\end{subfigure}\begin{subfigure}{0.2\textwidth}
				\centering
				\includegraphics[width=\linewidth]{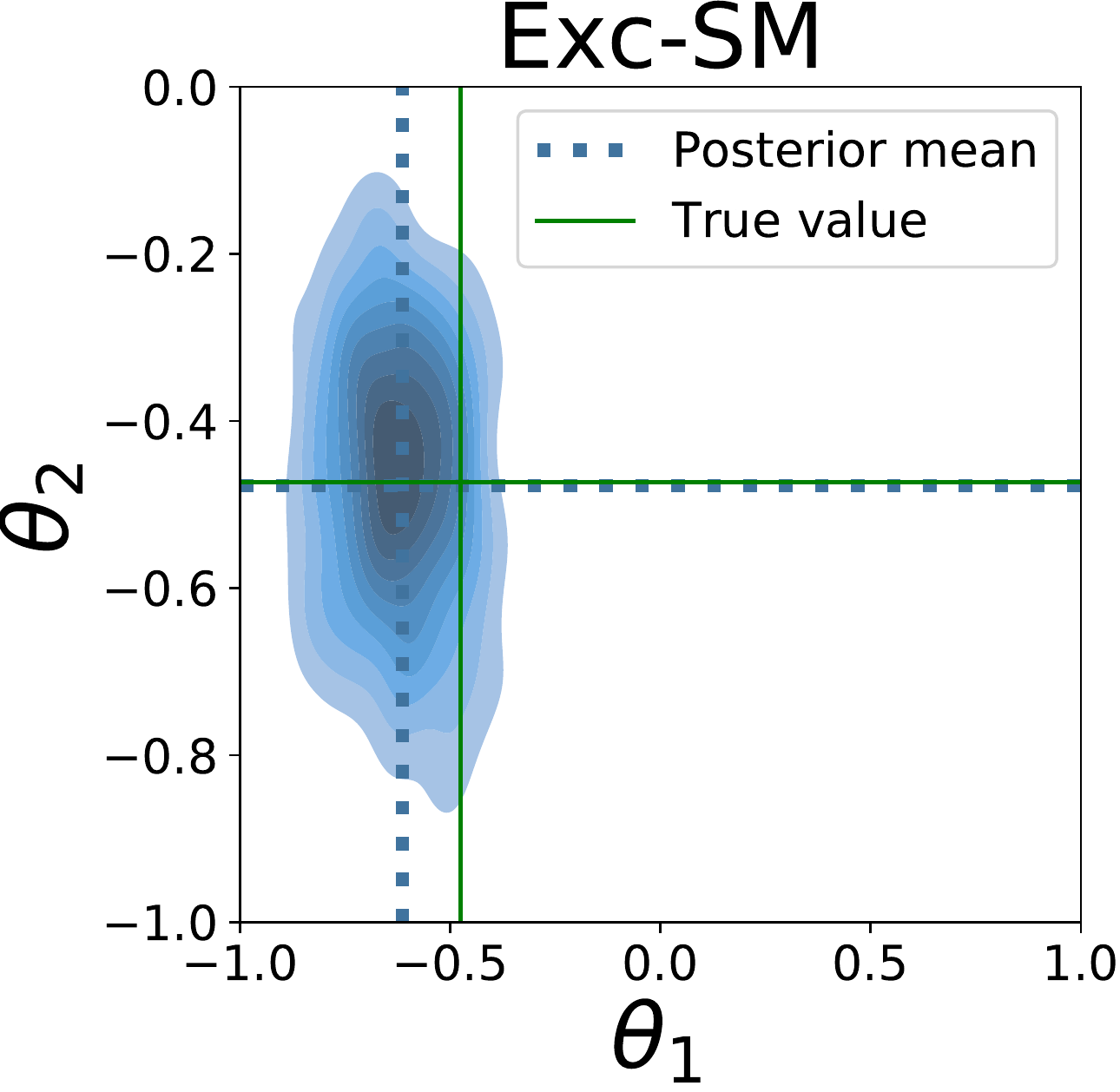}
			\end{subfigure}\begin{subfigure}{0.2\textwidth}
				\centering
				\includegraphics[width=\linewidth]{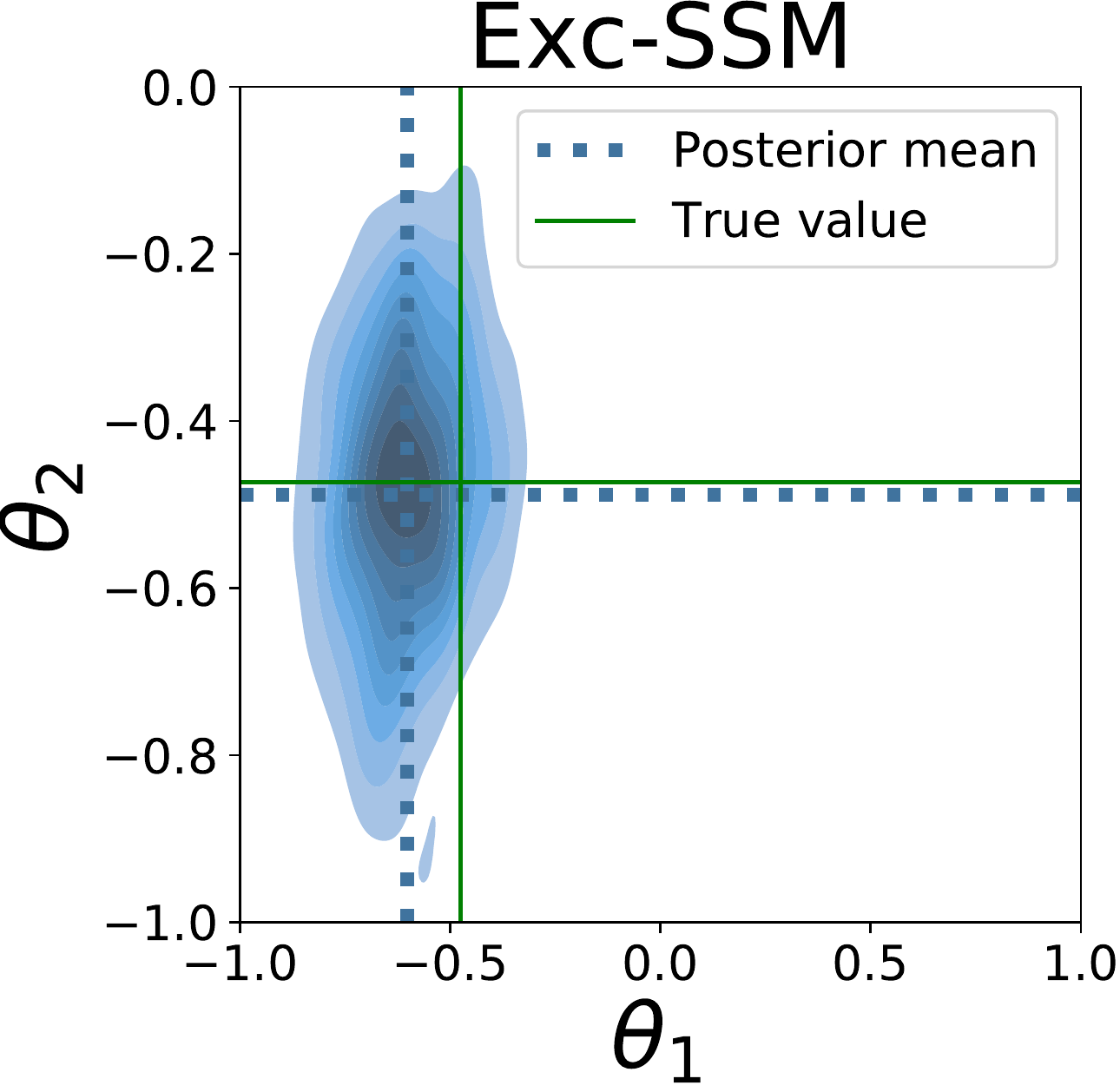}
			\end{subfigure}
			\caption{AR(2)}
		\end{subfigure}\\
		\begin{subfigure}{\textwidth}
			\centering
			\begin{subfigure}{0.2\textwidth}
				\centering
				\includegraphics[width=\linewidth]{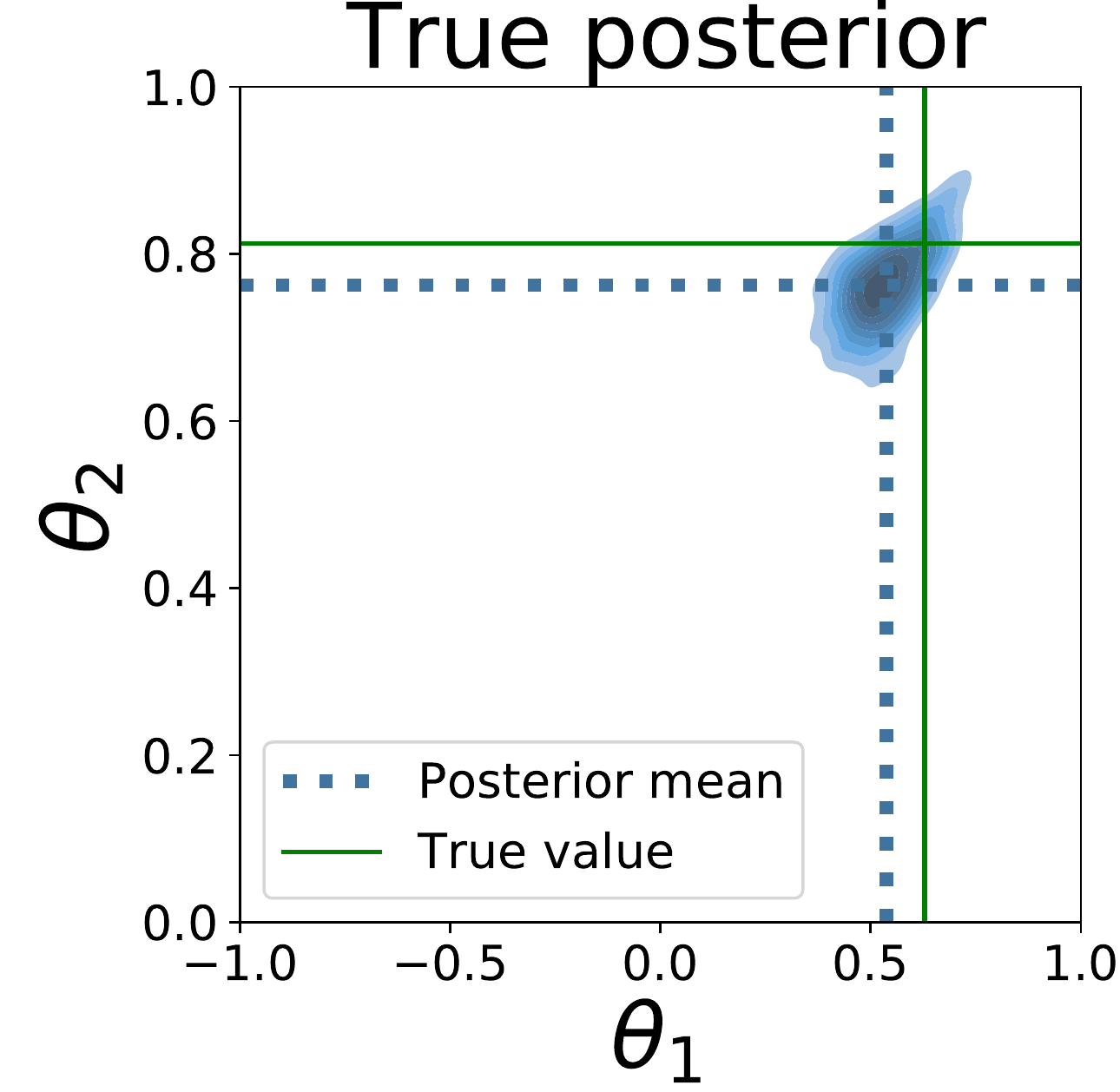}
			\end{subfigure}\begin{subfigure}{0.2\textwidth}
				\centering
				\includegraphics[width=\linewidth]{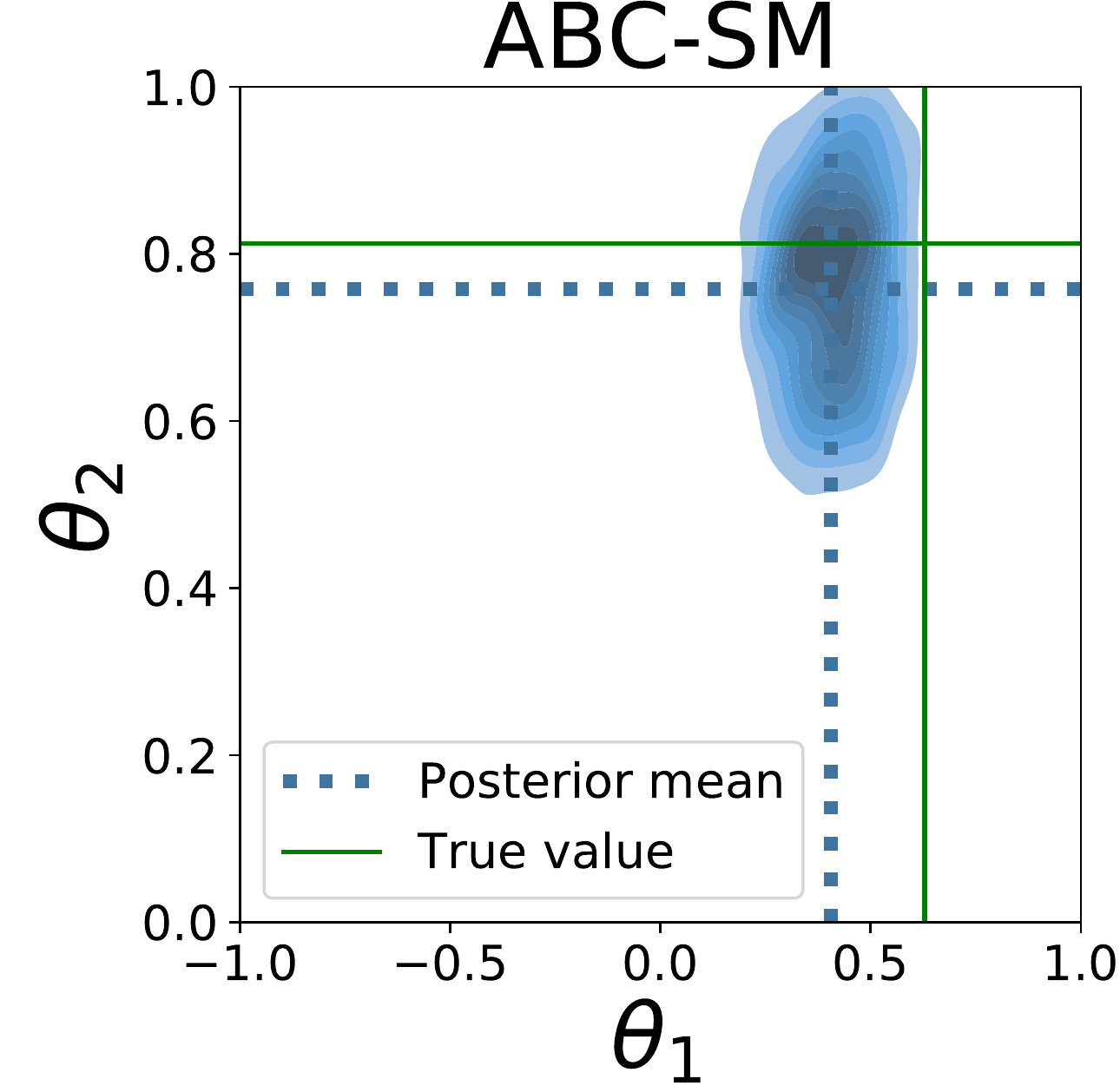}
			\end{subfigure}\begin{subfigure}{0.2\textwidth}
				\centering
				\includegraphics[width=\linewidth]{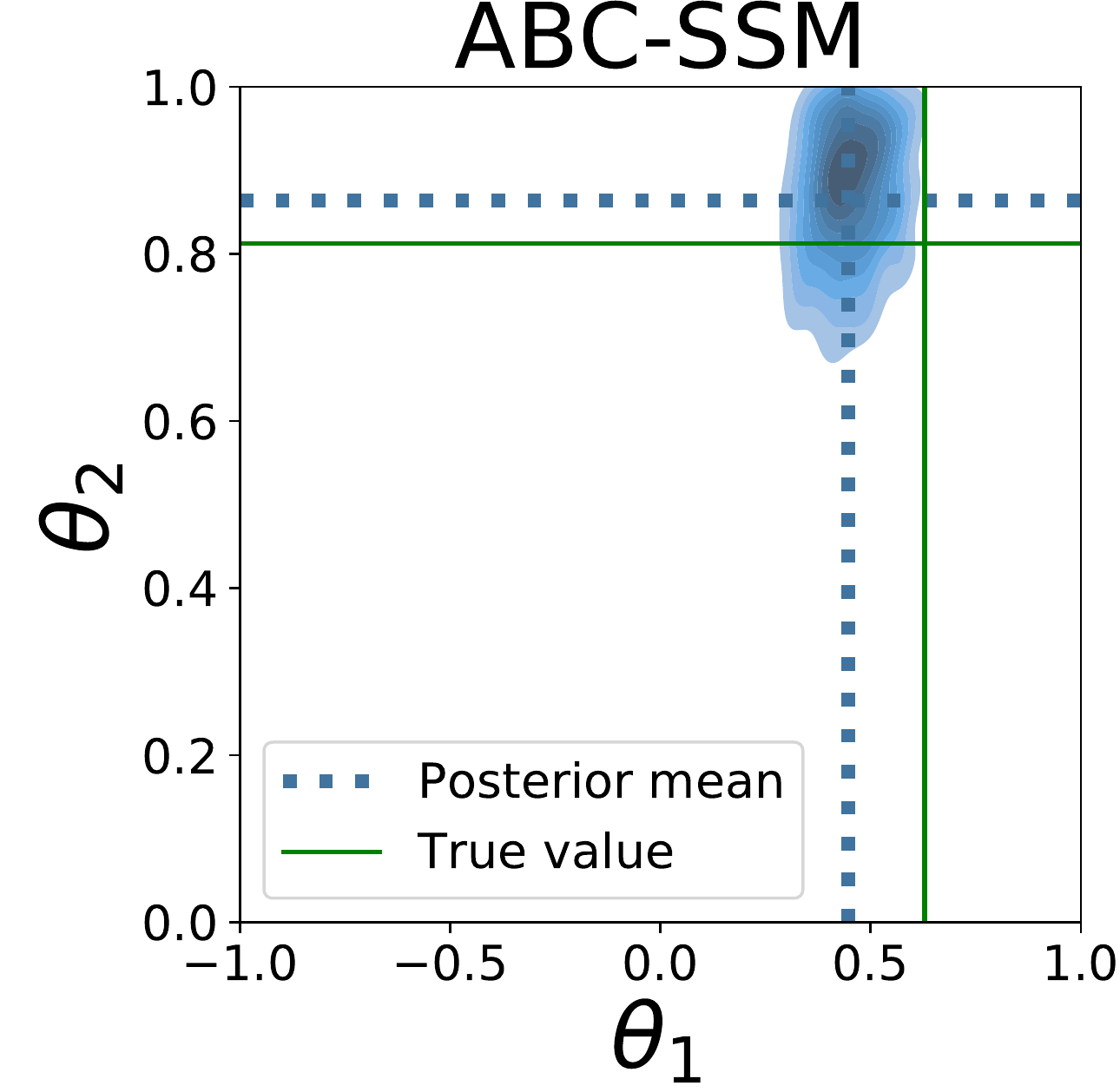}
			\end{subfigure}\begin{subfigure}{0.2\textwidth}
				\centering
				\includegraphics[width=\linewidth]{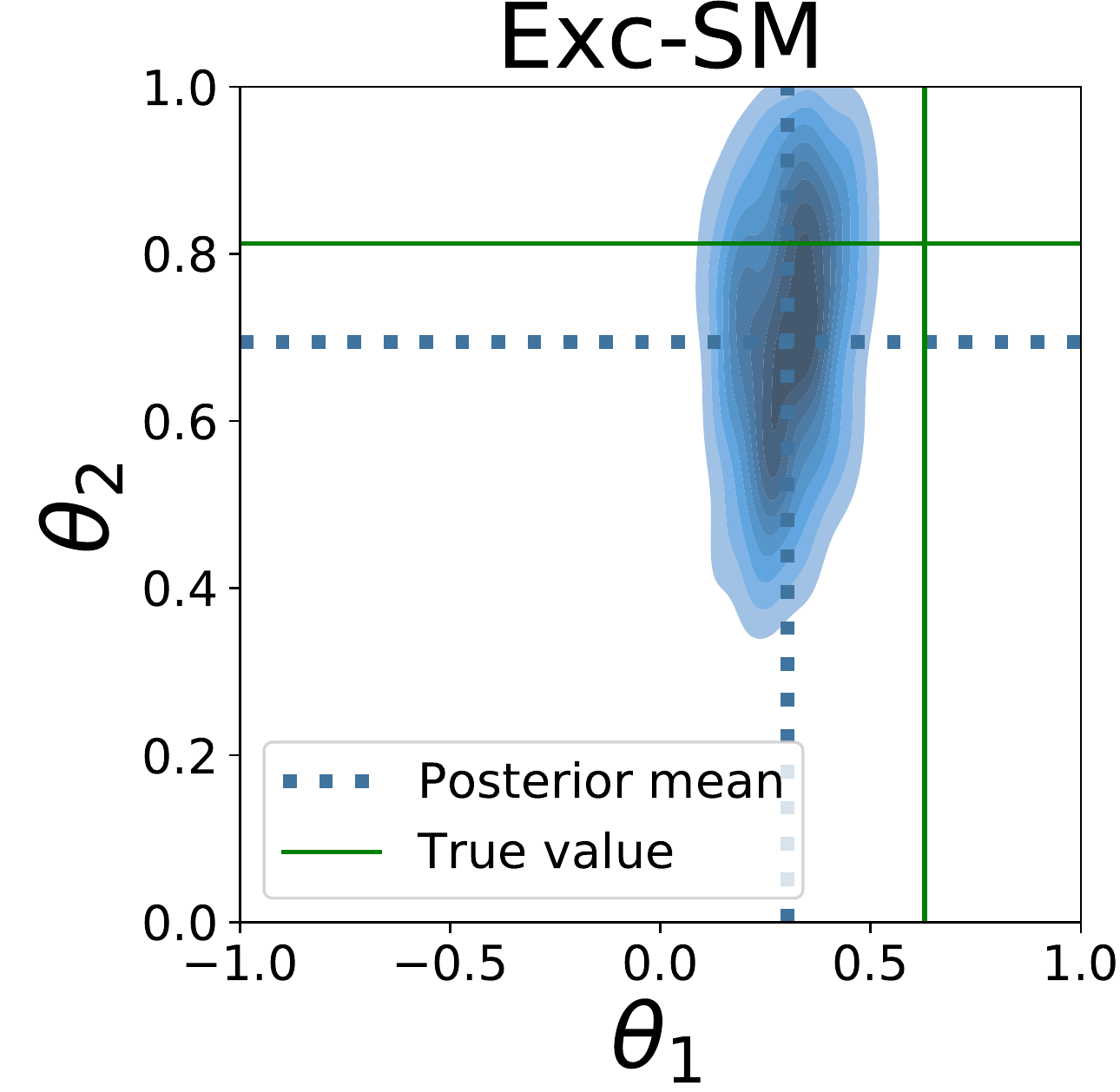}
			\end{subfigure}\begin{subfigure}{0.2\textwidth}
				\centering
				\includegraphics[width=\linewidth]{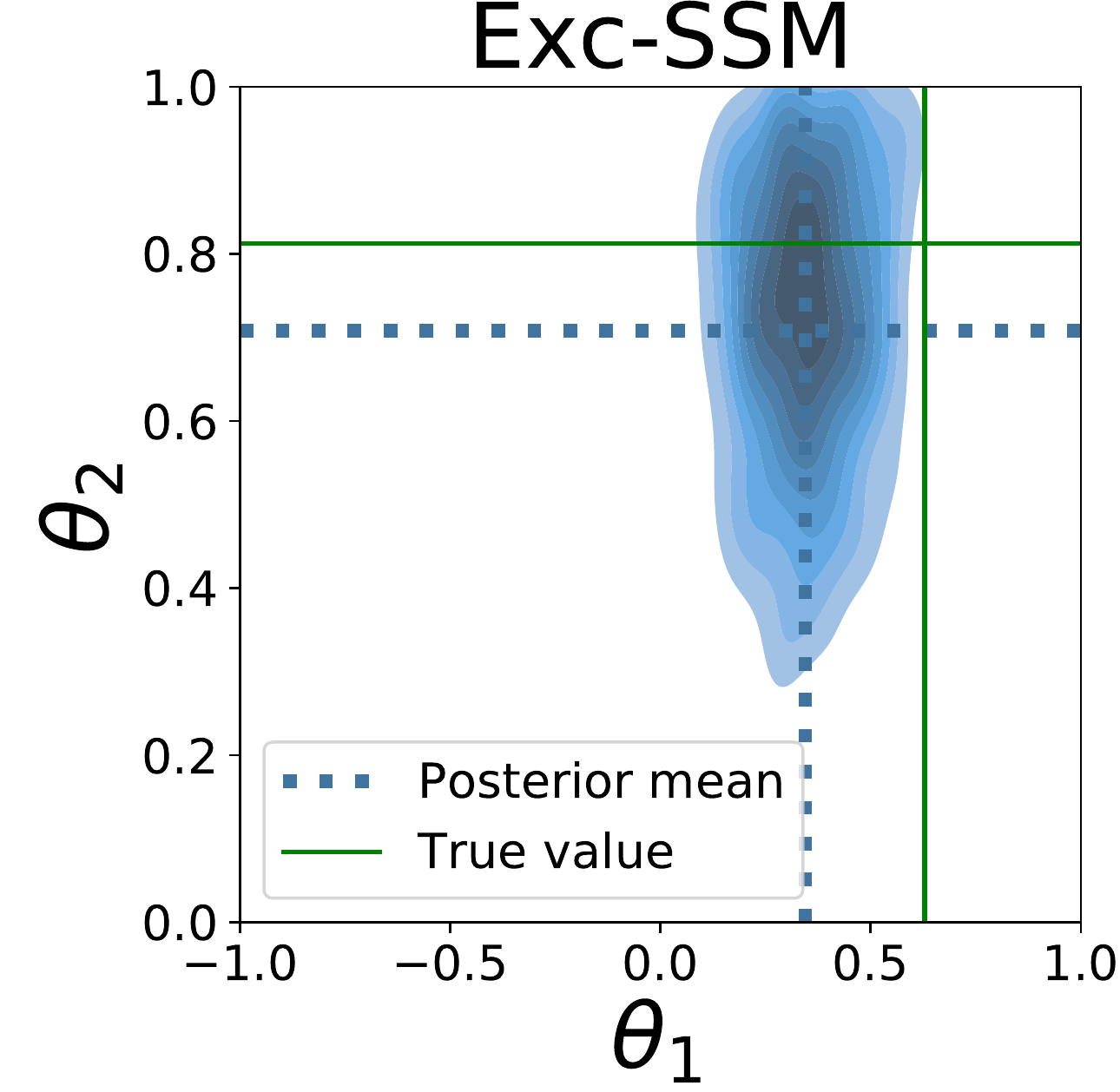}
			\end{subfigure}
			\caption{MA(2)}
		\end{subfigure}\\
		\caption{\textbf{True and approximate posteriors for AR(2) and MA(2) models,} for a single observation per model. Dashed line represents posterior mean, while green solid line represents the exact parameter value.}
		\label{fig:posteriors_AR2_MA2}
	\end{figure}
	
	\begin{figure}[!tb]
		\centering
		\begin{subfigure}{0.32\textwidth}
			\centering
			\includegraphics[width=1\linewidth]{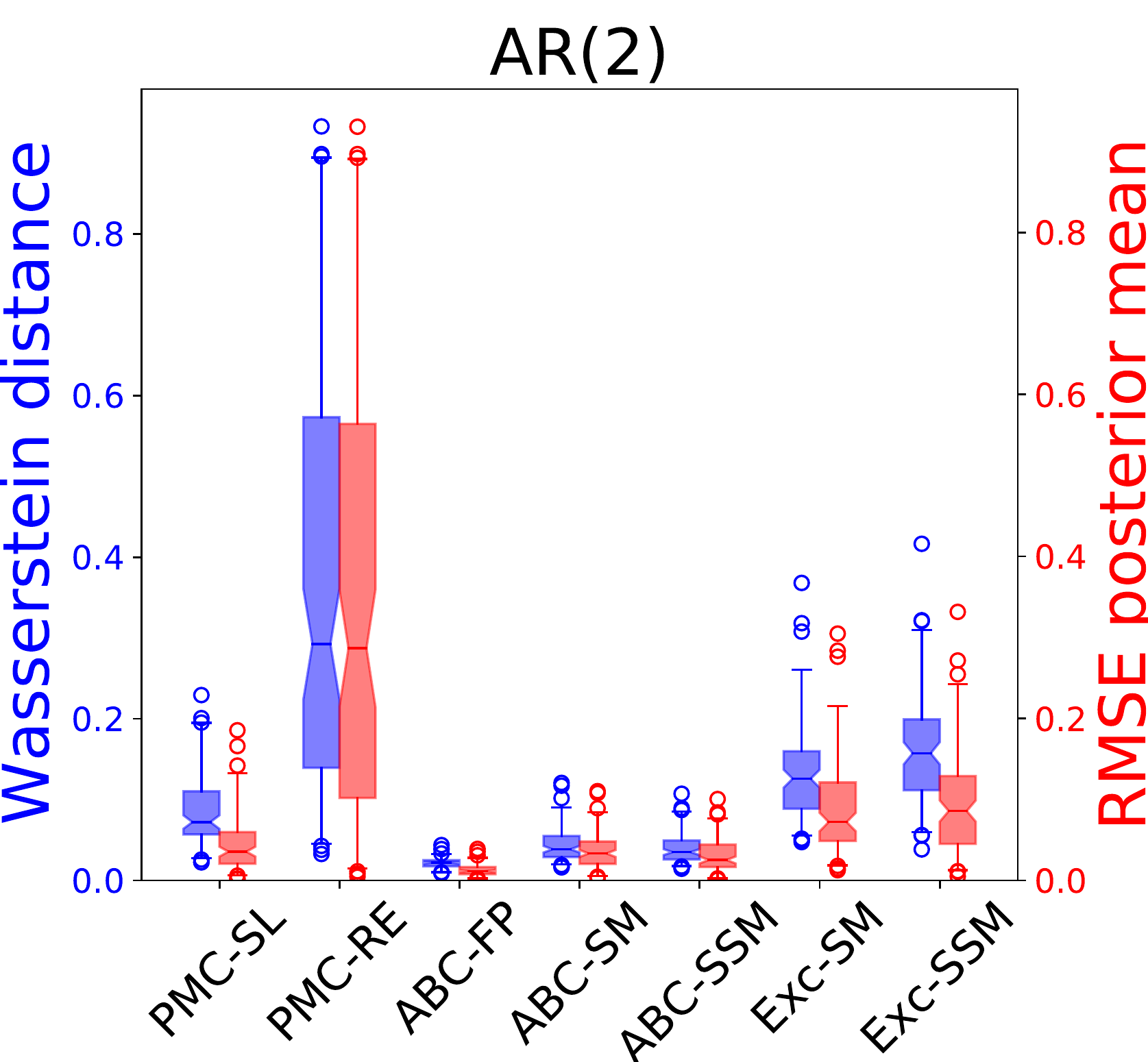}
		\end{subfigure}~
		\begin{subfigure}{0.32\textwidth}
			\centering
			\includegraphics[width=1\linewidth]{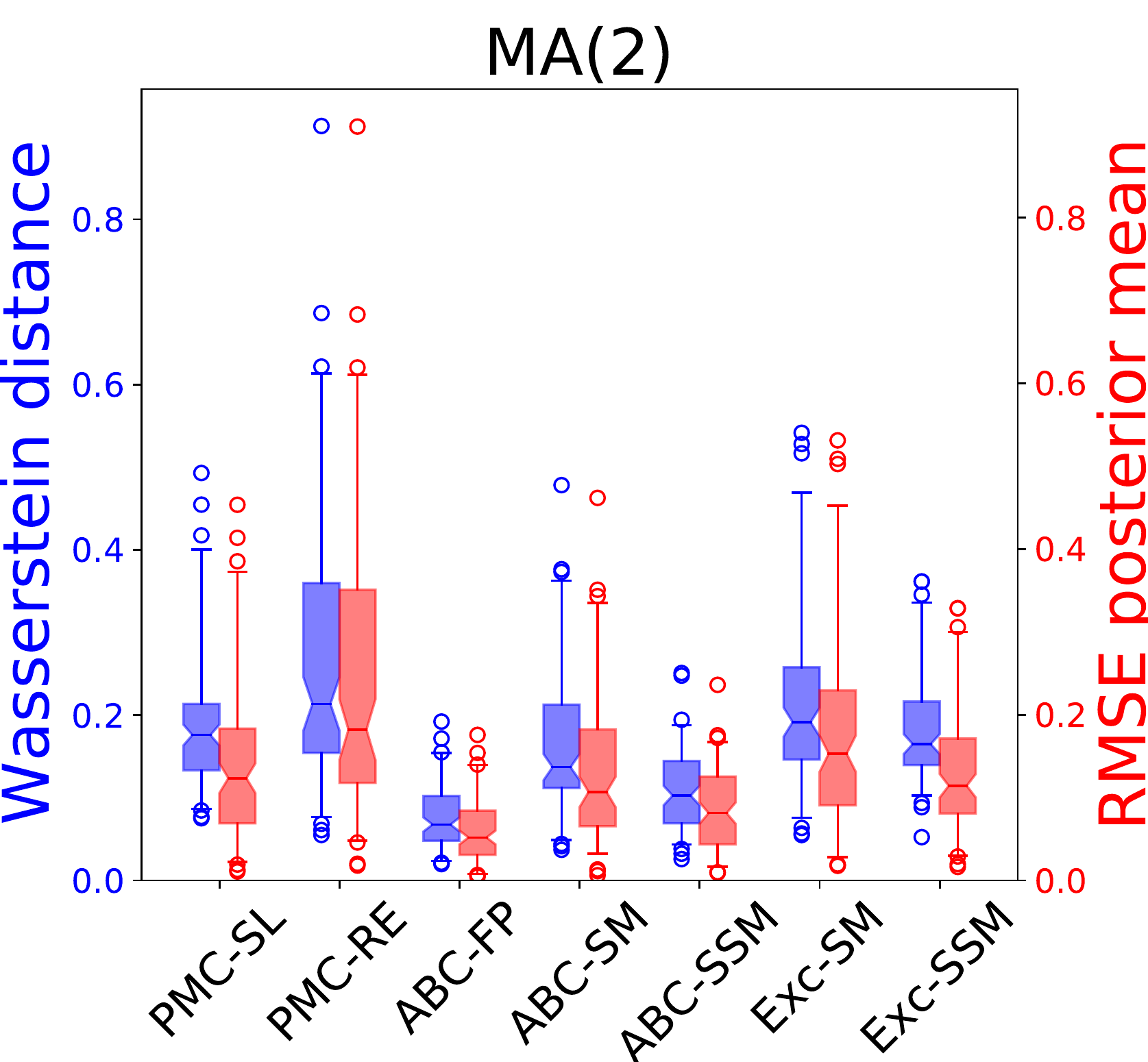}
			
		\end{subfigure}
		\caption{\textbf{Performance of the different techniques for AR(2) and MA(2) models.} Wasserstein distance from the exact posterior and RMSE between exact and approximate posterior means are reported for 100 observations using boxplots. Boxes span from 1st to 3rd quartile, whiskers span 95\% probability density region and horizontal line denotes median.}
		\label{fig:boxplots_AR2_MA2}
	\end{figure}

	\subsection{Lorenz96 meteorological model}\label{sec:Lorenz}
	The Lorenz96 model \citep{lorenz1996predictability} is a toy model of chaotic atmospheric behaviour, including interacting slow and fast variables. We use here a modified version \citep{wilks2005effects} where the effect of the fast variables on the slow ones is replaced by a stochastic effective term depending on a set of parameters. Specifically, this model is defined by the following coupled Differential Equations (DEs):
	
	\begin{equation}\label{Eq:Lorenz}
		\frac{dy_k}{dt} = - y_{k-1}(y_{k-2}-y_{k+1}) -y_k + 10 - g(y_k, t; \parameter); \quad k=1, \ldots, K, 
	\end{equation}
	where $y_k(t)$ is the value at time $ t $ of the \textit{k}-th variable and indices are cyclic (index $ K+1 $ corresponds to $ 1 $, and so on). The effective term $ g $ depends on $ \parameter = (b_0, b_1, \sigma_e, \phi) $, and is defined upon discretizing the DEs with a timestep $ \Delta t$: 
	\begin{equation*}
		g(y, t; \parameter) = \overbrace{b_0 + b_1 y}^{\text{linear deterministic term}} + \overbrace{\phi \cdot \eta(t -\Delta t) + \sigma_e (1 - \phi^2)^{1/2} \eta(t)}^{\text{zero mean AR(1)}},
	\end{equation*}
	where $ \eta(t) \sim \mathcal{N}(0,1) $. We numerically integrate the model using 4th order Runge-Kutta method with $ \Delta t = 1/30 $ in an interval $ [0,T] $; further, we fix the initial condition to a value $ y(0) $ which is independent on parameters. Here, the true likelihood is unaccessible, so that sampling from the exact posterior is impossible.
	
	We consider the model in a small and large configuration: in the small one, we take $ K=8 $ and $ T=1.5 $, which lead to 45 timesteps and a data dimension of $ d=45\cdot 8 = 360 $. In the large one, we take instead $ K=40 $ and $ T=4 $, corresponding to 120 timesteps and a data dimension of $ d=120 \cdot 40 = 4800 $.

	\paragraph{Inferred posterior distribution.}
	For both configurations, we perform inference with ABC-SSM and ABC-FP; additionally, we use Exc-SSM for the small configuration (as running the exchange algorithm in the large one is too costly). In ExchangeMCMC, we used 500 inner MCMC steps and 200 bridging steps for each outer step. Figure~\ref{fig:Lorenz_posteriors} reports posteriors for a single observation in both setups. The ABC-FP posterior is narrower than the one for ABC-SSM, but concentrated around similar parameter values. The posterior for Exc-SSM looks slightly different: it concentrates on similar parameter values as the other two for $ \theta_1 $ and $ \theta_2 $, but is broader for $ \sigma_e $ and $ \phi $.

	\begin{figure}[!tb]
		\centering
		\begin{subfigure}{\textwidth}
			\begin{subfigure}{0.32\textwidth}
				\centering
				\includegraphics[width=1\linewidth]{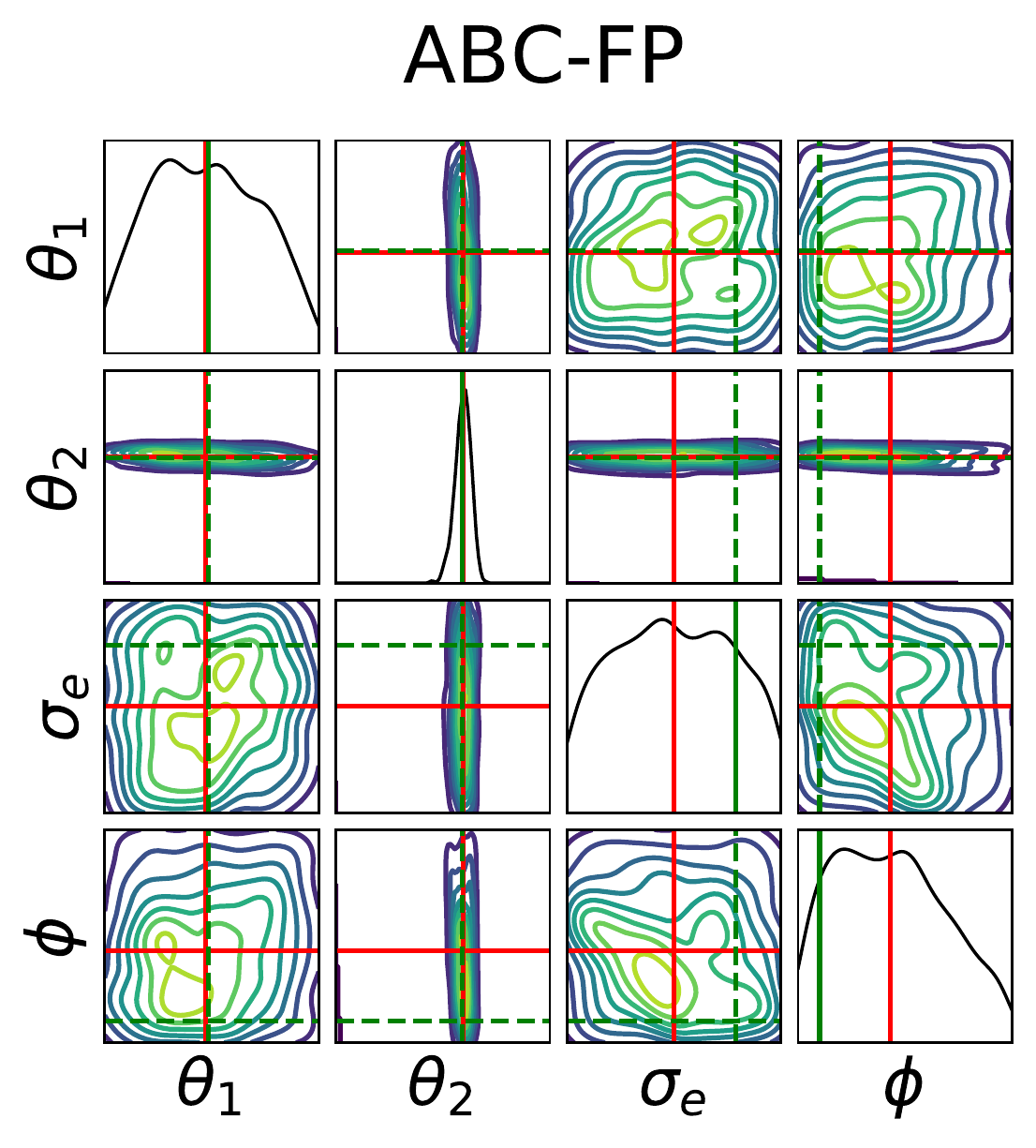}
			\end{subfigure}~
			\begin{subfigure}{0.32\textwidth}
				\centering
				\includegraphics[width=1\linewidth]{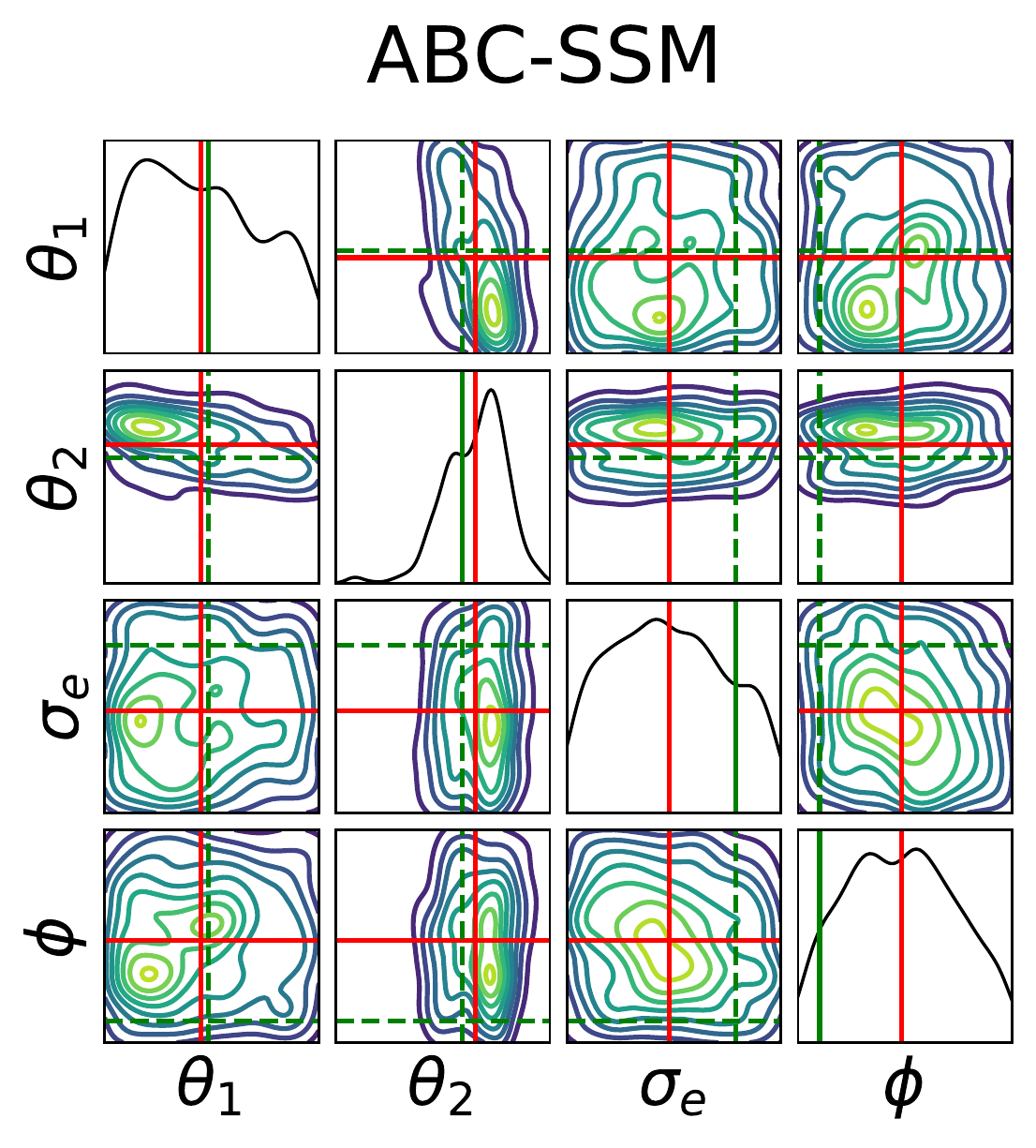}
			\end{subfigure}~
			\begin{subfigure}{0.32\textwidth}
				\centering
				\includegraphics[width=1\linewidth]{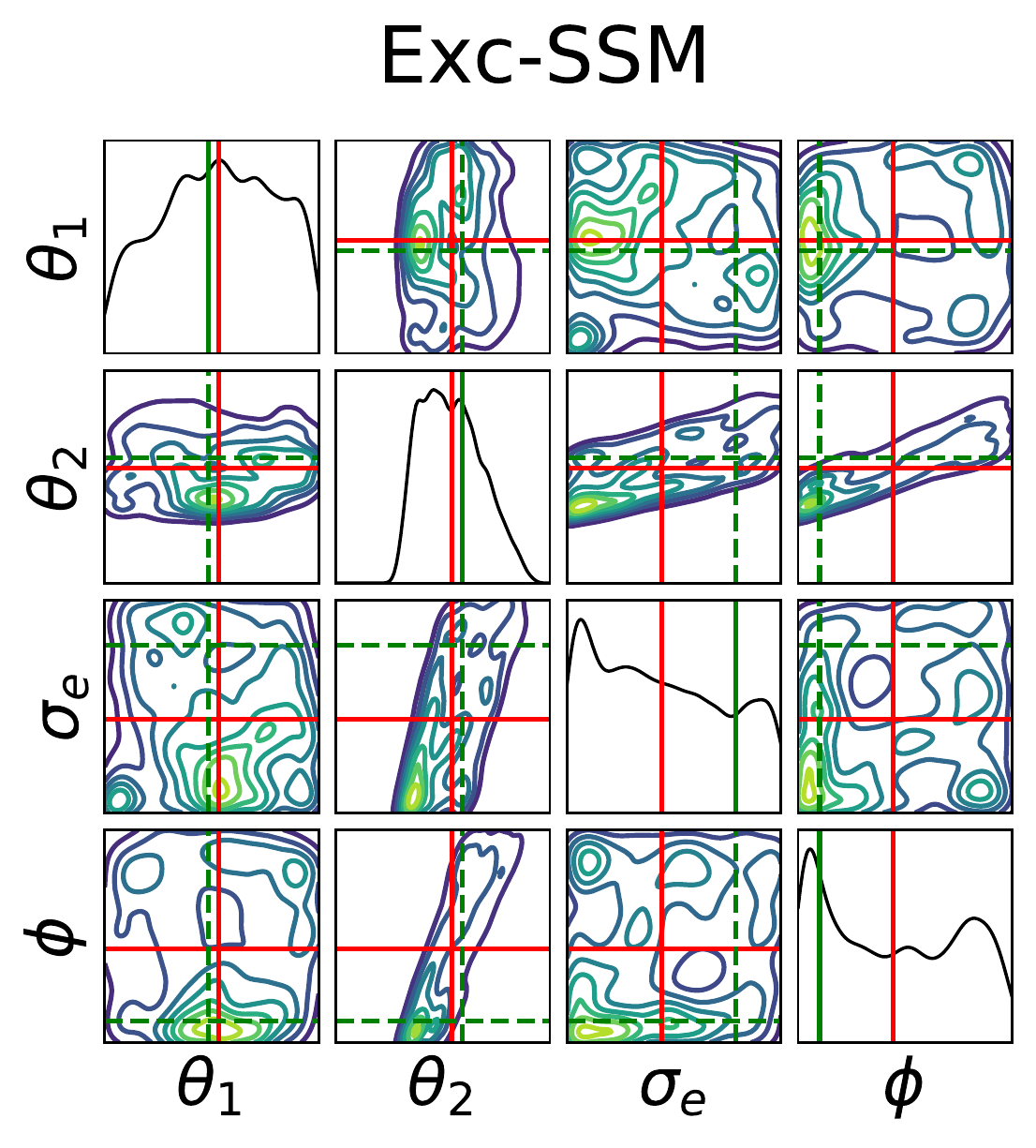}
			\end{subfigure}
			\caption{\textbf{Lorenz96 - Small configuration}}
		\end{subfigure}\\
		\begin{subfigure}{\textwidth}
			\centering
			\begin{subfigure}{0.32\textwidth}
				\centering
				\includegraphics[width=1\linewidth]{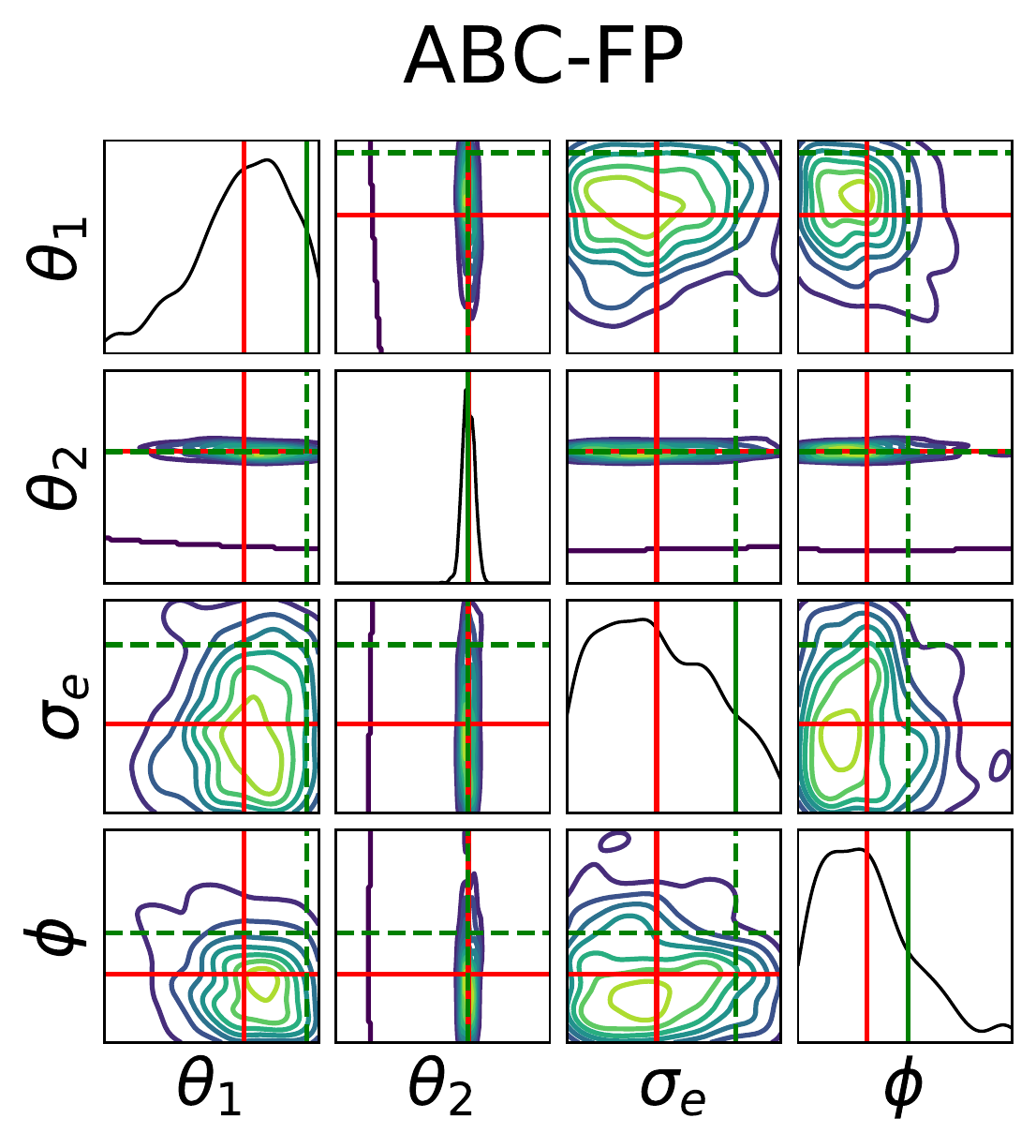}
			\end{subfigure}	~ \begin{subfigure}{0.32\textwidth}
				\centering
				\includegraphics[width=1\linewidth]{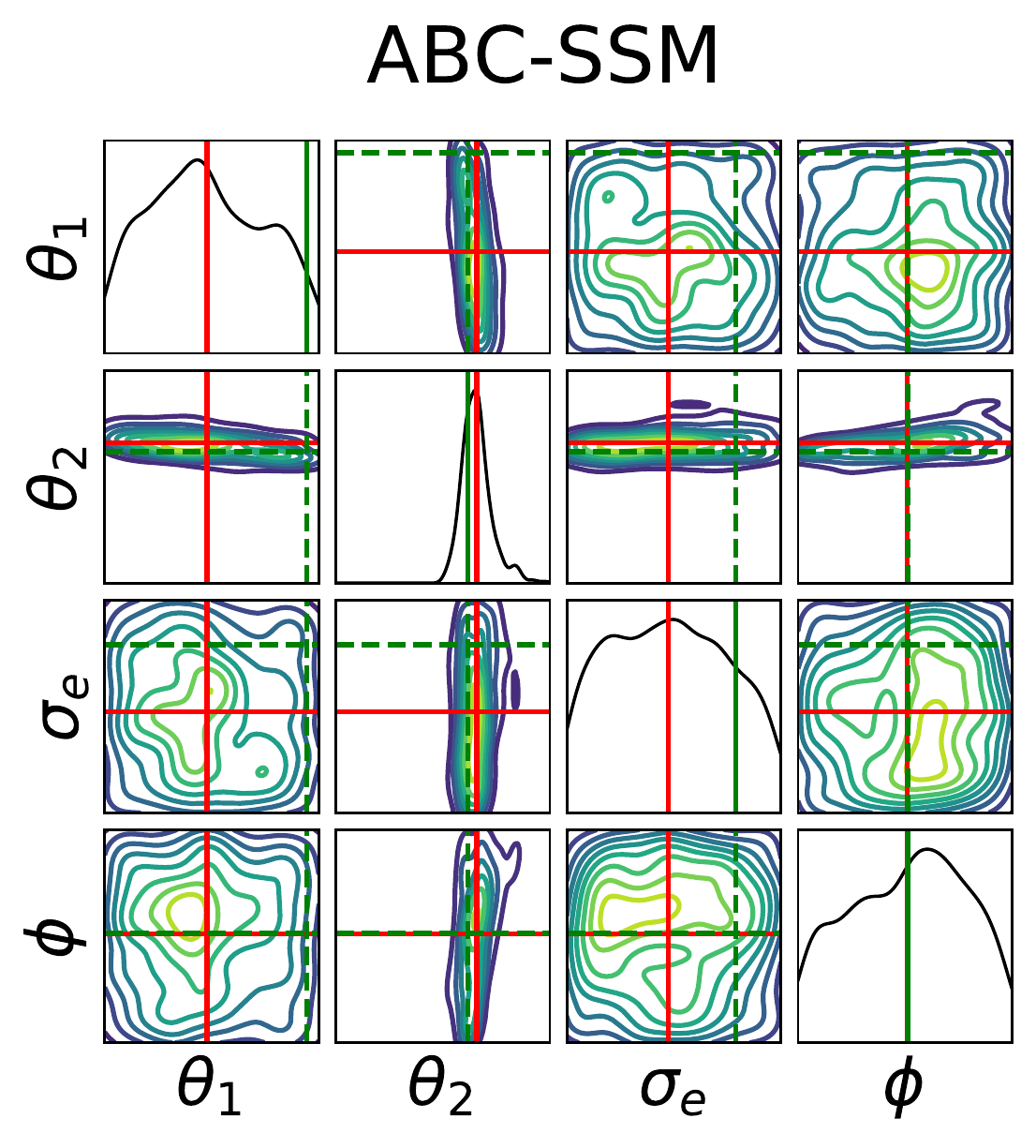}
			\end{subfigure}
			\caption{\textbf{Lorenz96 - Large configuration}}
		\end{subfigure}
		\caption{\textbf{Example of posterior inference for a single Lorenz96 observation with different approaches,} for both small (first row) and large (second row) configurations. In each panel, the diagonal plots represent the univariate marginal distributions, while the off-diagonal ones are bivariate density contour plots. Moreover, the green and red lines represent respectively exact parameter value and posterior mean. All axes span full prior range (see Table~\ref{Tab:priors}).}
		\label{fig:Lorenz_posteriors}
	\end{figure}

	\paragraph{Posterior predictive validation.}
	We assess the performance of the posterior predictive distribution
	$ p(x|\dataObs) = \int p(x|\theta) \pi(\theta|\dataObs) d\theta, $
	in which $ \pi(\theta|y) $ is the posterior obtained with one of the considered techniques. Specifically, we use the Kernel and Energy Scoring Rules (SRs) to evaluate how well the predictive $ p(x|\dataObs) $ predicts the observation $ \dataObs $. A SR \citep{gneiting2007strictly} is a function of a distribution and an observation, and assesses the mismatch between them (the smaller, the better). More details on Scoring Rules are given in Appendix~\ref{app:SRs}, together with the definition of those used here.

	As the Lorenz96 model returns a (multivariate) time-series, we compare predictive distribution and observation at each timestep independently. 
	We repeat this validation with 100 different observations and corresponding posteriors, for all techniques and both model configurations. We obtain cumulative scores by summing the scores over timesteps; median and some quantile values over the 100 observations are shown in Fig.~\ref{fig:SRs}. In both model configurations, the posterior predictive generated by ABC-SSM and Exc-SSM marginally outperform ABC-FP, according to both SRs. Score values at each timestep, together with more details on the validation technique, are reported in Appendix~\ref{app:Lorenz_validation_results}.

	\begin{figure}[!tb]
		\centering
		\includegraphics[height=6cm,keepaspectratio]{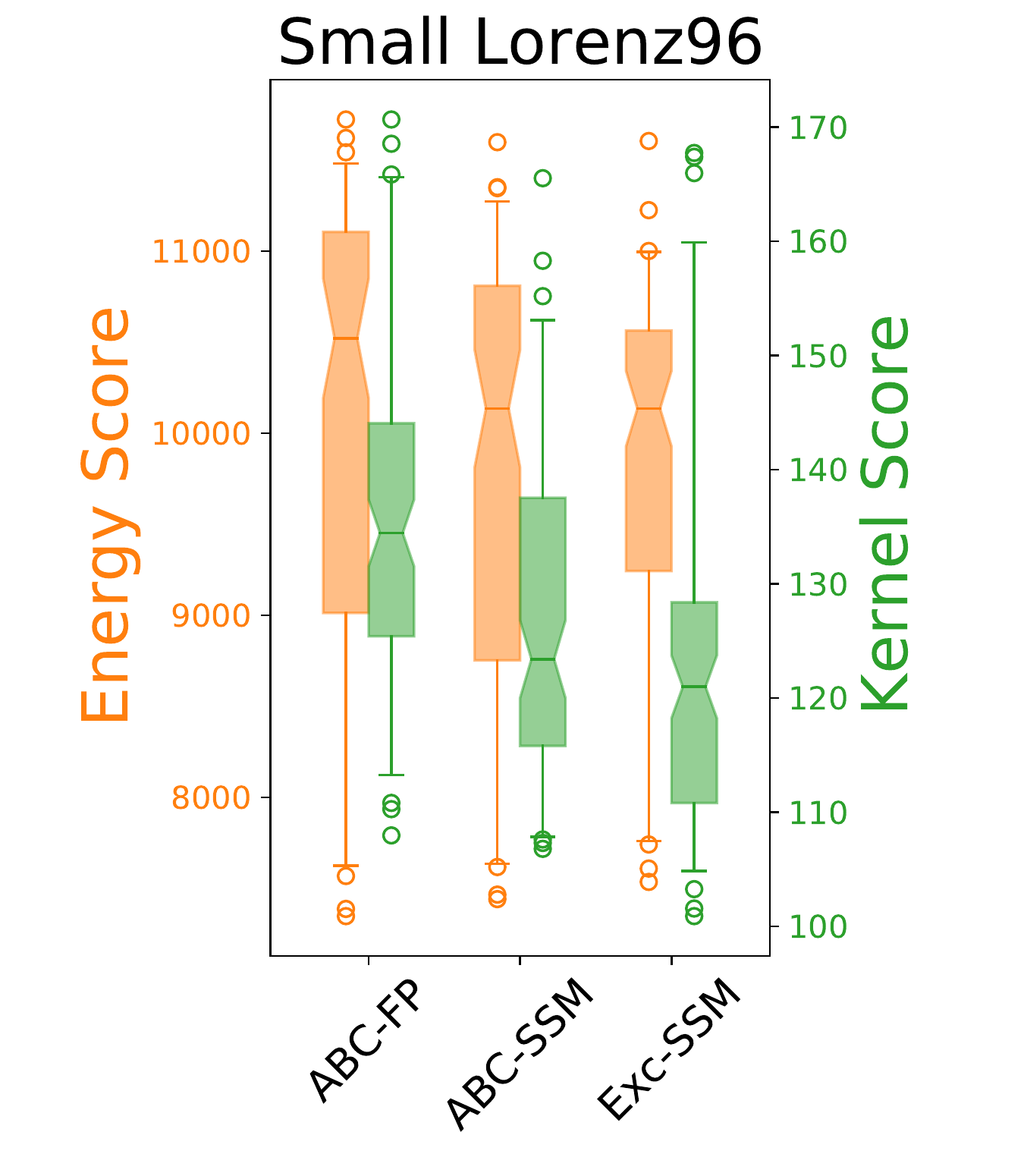}
		\includegraphics[height=6cm,keepaspectratio]{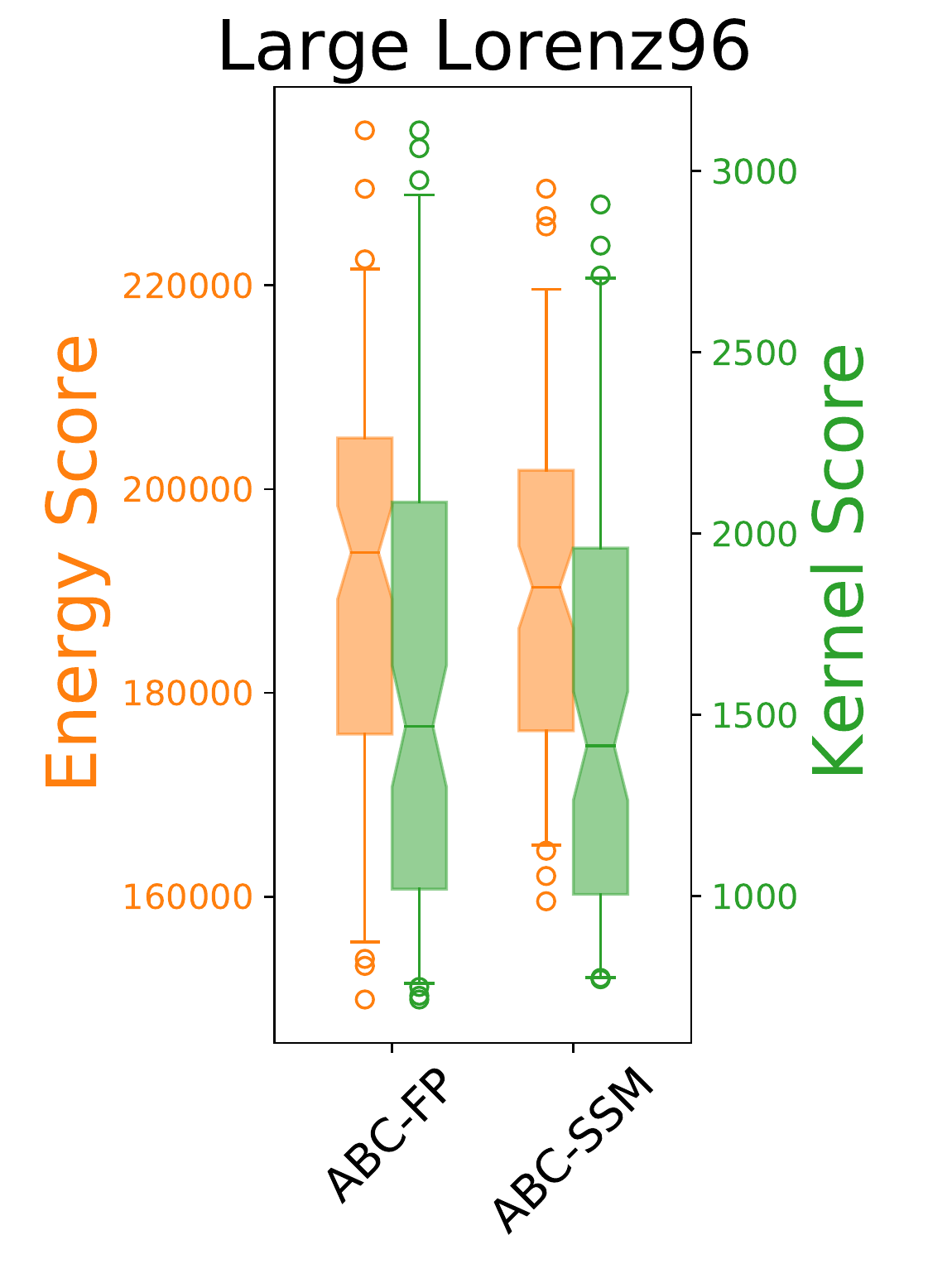}
		\caption{\textbf{Predictive performance of the different methods according to the Kernel and Energy Scoring Rules.} Each boxplot represents cumulative (i.e., summed over the time index) scoring rule value for a given method for the small (left) and large (right) Lorenz96 configuration. Boxes span from 1st to 3rd quartile, whiskers span 95\% probability density region and horizontal line denotes median.}
		\label{fig:SRs}
	\end{figure}

	\section{Related works}\label{sec:related_works}

	\paragraph{ABC statistics with NN.} Using NNs for learning statistics for ABC has been previously suggested, first with the regression approach discussed in Sec.~\ref{sec:ABC} (used in \citealt{jiang2017learning, wiqvist2019partially, aakesson2020convolutional}). In \cite{pacchiardi2020}, a NN is trained so that the distance between pairs of simulated statistics best reproduces the distance between the corresponding parameter values. \cite{chen2020neural} instead fit a NN by maximizing an estimate of the mutual information between statistics and parameters.  
	\paragraph{ABC statistics from auxiliary models.}
	Some similarities with our work can be found in \cite{gleim2013approximate} and \cite{ruli2016approximate}, which consider respectively an auxiliary model and a composite likelihood alongside the simulator model and obtain summary statistics from them; similarly, our approach can be seen as building an auxiliary exponential family model with easily accessible summary statistics. 
	
	\paragraph{LFI with NNs.}
	The idea of sampling directly from the approximate posterior defined by $ p_w $ is related to a suite of LFI methods using NNs. A large part of these works use normalizing flows (NNs implementing invertible transformations, suitable for efficiently representing probability densities; see \citealp{papamakarios2019normalizing} for a review); specifically, \cite{papamakarios2018sequential, lueckmann2019likelihood} use them to learn an approximation of the likelihood, while \cite{papamakarios2016fast, lueckmann2017flexible, greenberg2019automatic, radev2020bayesflow} learn the posterior. Most of the above approaches focus on inference for a single observation, tailoring the simulations to better approximate the posterior for the relevant parameter values at lower computational cost; instead, \cite{radev2020bayesflow} and \cite{lueckmann2019likelihood} propose to amortize across observations, similarly to our approach. Besides normalizing flows, \cite{klein2020marginally} casts the LFI problem as a distributional regression one. Finally, \cite{tabak2020conditional} does not employ NNs but rather solves a Wasserstein barycenter problem to model conditional maps which allow sampling from the distribution of an observation conditional on some covariates.

	\paragraph{Fitting unnormalized models.} Several techniques besides Score Matching have been proposed for fitting unnormalized models: MCMC-MLE \citep{geyer1991markov} exploits MCMC to estimate the normalizing constant for different values of the parameter and uses that in MLE. Contrastive Divergence (CD, \citealt{hinton2002training}) instead uses MCMC to obtain a stochastic approximation of the gradient of the log-likelihood; this requires a smaller number of MCMC steps with respect to MCMC-MLE, but the stochastic gradient estimate is biased. Minimum Probability Flow (MPF, \citealt{sohl2011new}) considers a dynamics from data to model distribution, and minimizes the Kullback Leibler divergence between the data distribution and the one obtained by running the dynamics for a short time; however, the efficacy of MPF depends significantly on the considered dynamics. Noise Contrastive Estimation (NCE, \citealt{gutmann2012noise}) converts the parameter estimation problem to ratio estimation between data distribution and a suitable noise distribution; in practice, NCE uses logistic regression to discriminate the observed data and data generated from the noise; in the loss, the normalizing constant appears explicitly and can be estimated independently. To be effective, NCE requires the noise distribution to overlap well with the data density while being easy to sample from and to evaluate, which is not easy to get. Finally, some works \citep{dai2019kernel, dai2019exponential} use the dual formulation of MLE to avoid estimating the normalizing constant at the price of introducing dual variables to be jointly estimated.

	\paragraph{Fitting unnormalized conditional models.} The approaches above running dynamics to estimate quantities (CD, MCMC-MLE, MPF) cannot easily be applied to the conditional setting. NCE has been instead used in \cite{ton2019noise} with a single noise distribution for all values of the conditioning variable $ \theta $; it however requires an independent NN to parametrize the normalizing constant $ Z(\theta) $. SM can be instead easily applied to the conditional setting, as previously done in \cite{arbel2017kernel} and further demonstrated in our work. To the best of our knowledge, SSM was not applied to a conditional setting before, although the extension straightforwardly follows what is done for SM.

	\paragraph{Fast approximations of SM.}
	Besides SSM, some approximations to SM have been investigated, all of which only require first derivatives. For instance, Denoising Score Matching \citep{vincent2011connection} computes the Fisher divergence between the model and a Kernel Density Estimate of the data distribution, which is equal to a quantity independent on the second derivative. Alternatively, Kernel Stein Discrepancy \citep{liu2016stein} intrinsically depends on first derivatives only. For both techniques, however, several samples for each $ \theta $ would be required in the conditional setting.
	Finally, \cite{wang2020wasserstein} exploits a connection between the Fisher divergence and gradient flows in the 2-Wasserstein space to develop an approximation that only relies on first-order derivatives.

	\paragraph{Kernel Conditional Exponential Families (KCEFs).}
	In KCEFs \citep{arbel2017kernel}, the summary statistics and natural parameters are functions in a Reproducing Kernel Hilbert Space, whose properties allow to evaluate the density although an infinite-dimensional embedding space is used (using the \textit{kernel trick}). In \cite{arbel2017kernel}, SM was used to fit KCEFs instead of our neural conditional exponential families. They build on \cite{sriperumbudur2017density}, which first used SM to perform density estimation with (non-conditional) Kernel Exponential Families (KEFs). In \cite{wenliang2018learning} NNs are used to parametrize the kernel in a KEF and trained with the SM loss.
	However, as KCEFs do not have finite-dimensional sufficient statistics, they are unsuitable for learning ABC statistics. Additionally, KCEFs have a worse complexity in terms of training dataset size with respect to NN-based methods. KCEFs have also been used with a dual MLE objective in \cite{dai2019kernel}.

	\section{Conclusions and extensions}\label{sec:conclusion}
	
	We proposed a technique to approximate the likelihood using a neural conditional exponential family, trained via (Sliced) Score Matching to handle the intractable normalizing constant.

	We tested this approximation in two setups: first, by using the exponential family sufficient statistics as ABC statistics, which is intuitively appealing as the exponential family is the largest class of distributions with fixed-size sufficient statistics. We empirically showed this to be comparable or outperform ABC with summaries built via the state-of-the-art regression approach \citep{fearnhead_constructing_2012, jiang2017learning}.
	
	Secondly, we used MCMC for doubly-intractable distributions to sample from the posterior corresponding to the likelihood approximation, which we found to have performance comparable to the other approaches. This can be repeated for any new observation without additional model simulations, making it advantageous for expensive simulator models.
	
	Our proposed direct sampling approach based on exponential family likelihood approximation and ExchangeMCMC could be improved as follows:

	\begin{itemize}

		\item  we used ExchangeMCMC \citep{murray2012mcmc} to handle double intractability, but other algorithms could be more efficient, (for instance the one in \cite{liang2016adaptive}, which makes use of parallel computing). Alternatively, we could exploit the generalized posterior introduced in \cite{matsubara2021robust}, which allows standard MCMC to be used for double intractable distributions and is robust to outliers.
		
		\item To infer the posterior for a single observation, approximating the likelihood for all $ x $'s and $ \theta $'s as we do now is suboptimal. Similar performance may in fact be obtained with fewer simulations tailored to the observation. Sequential schemes implementing such ideas have been introduced for LFI using normalizing flows (see for instance \citealt{papamakarios2018sequential}) and could be extended to our setup.
		
		\item The motivation for the current work was learning ABC statistics, hence the exponential family formulation. However, the dot-product structure between $ f_w $ and $ \eta_w $ is not beneficial for the direct sampling approach. An \textit{energy-based model}, which employs a single NN with input $ (x,\theta) $ in the exponent, may be more expressive and easier to train.

		\item An energy-based posterior approximation $ \pi_w(\theta|x) $ could be trained by minimizing the expectation over the data marginal $ p(x) $ of the (sliced) Fisher divergence with respect to the true posterior. This is complementary to the strategy employed in this work to fit $ p_w(x|\theta) $. Interestingly, 	
		$ \pi_w(\theta|x) $ would be known up to a normalizing constant depending on $ x $ only, making use of standard MCMC possible.

	\end{itemize}

	\bibliographystyle{abbrvnat}

	\appendix
	
	\section{Sufficient statistics}\label{app:suff_stats}
	
	Consider a conditional probabilistic model $ p(x|\theta) $; moreover, abusing notation, we will also denote as $ p(x|t) $ the density $ p_{X|T}(X=x|T=t) $, as well as $ p(x|t; \theta) $ the density $ p_{X|T,\Theta} (X=x|T=t; \Theta=\theta) $, where here $ \Theta $ denotes the random variable which takes values $ \theta $. Finally, we use $ \pi $ to denote distributions over the parameter values $ \theta $; specifically, $ \pi(\theta|x) $ denotes the standard posterior, and $ \pi(\theta|t) $ is an abuse of notation for the density $ \pi_{\Theta|T}(\Theta=\theta|T=t) $.  
	
	\begin{definition}
		A statistic $ t = t(x) $ is sufficient if $ p(x|t; \theta) = p(x|t) $, where $ \theta $ is a parameter of the distribution. Alternatively, we have, in the Bayesian setting: 
		\begin{equation}\label{eq:suff_bayesian}
			\pi(\theta|x) = \pi(\theta|t(x)),
		\end{equation}
		for any (non-degenerate) choice of prior distribution $ \pi(\theta) $.
	\end{definition}
	
	The existence of sufficient statistics implies a precise form of the distribution: 
	
	\begin{theorem}(\textbf{Fisher-Neyman factorization theorem}):
		A statistic is sufficient $ \iff  p(x|\theta) = h(x) g(t(x) | \theta)$, where $ h $ and $ g $ are non-negative functions. 		
	\end{theorem}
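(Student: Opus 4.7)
The plan is to prove both directions by exploiting the decomposition of the joint density $p(x,t|\theta)$, which concentrates on the graph $\{(x,t) : t = t(x)\}$. I will treat $T = t(X)$ as a measurable function and rely on the existence of a disintegration so that $p(x|\theta) = p(x|t(x);\theta)\, p_T(t(x)|\theta)$, where $p_T(\cdot|\theta)$ denotes the marginal density of $T$ under $p(\cdot|\theta)$ and $p(x|t;\theta)$ the conditional density of $X$ given $T = t$.

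For the forward direction, I will assume sufficiency: $p(x|t;\theta) = p(x|t)$. Substituting into the disintegration gives
\begin{equation}
p(x|\theta) = p(x|t(x))\, p_T(t(x)|\theta),
\end{equation}
so setting $h(x) := p(x|t(x))$ (independent of $\theta$) and $g(t|\theta) := p_T(t|\theta)$ yields the factorization.

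For the converse, I will assume $p(x|\theta) = h(x)\, g(t(x)|\theta)$ and marginalize over the level set $\{x' : t(x') = t\}$ to obtain $p_T(t|\theta) = g(t|\theta) \int_{\{t(x')=t\}} h(x')\, dx'$ (understood via the coarea formula when needed). Dividing the joint by the marginal then gives, for $x$ with $t(x) = t$,
\begin{equation}
p(x|t;\theta) = \frac{h(x)}{\int_{\{t(x')=t\}} h(x')\, dx'},
\end{equation}
which is independent of $\theta$, establishing sufficiency. For the Bayesian form \eqref{eq:suff_bayesian}, I will plug the factorization into Bayes' rule:
\begin{equation}
\pi(\theta|x) = \frac{\pi(\theta)\, h(x)\, g(t(x)|\theta)}{\int \pi(\theta')\, h(x)\, g(t(x)|\theta')\, d\theta'} = \frac{\pi(\theta)\, g(t(x)|\theta)}{\int \pi(\theta')\, g(t(x)|\theta')\, d\theta'},
\end{equation}
which depends on $x$ only through $t(x)$, hence equals $\pi(\theta|t(x))$ for any non-degenerate prior.

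The main obstacle is measure-theoretic: the level sets $\{x : t(x) = t\}$ generally have Lebesgue measure zero in the continuous setting, so both the disintegration used in the forward direction and the marginalization in the converse need to be justified rigorously (via the coarea formula or a Radon-Nikodym argument over the $\sigma$-algebra generated by $T$). Apart from this technical care, the algebraic content of the argument is straightforward.
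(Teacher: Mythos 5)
The paper does not actually prove this statement: it is quoted as a classical result in Appendix~\ref{app:suff_stats} and used only to observe that $f(x)$ in an exponential family is sufficient, so there is no in-paper argument to compare yours against. On its own merits, your sketch is the standard textbook proof and its algebraic skeleton is sound in both directions: the forward direction via the disintegration $p(x|\theta) = p(x|t(x);\theta)\,p_T(t(x)|\theta)$ with $h(x) := p(x|t(x))$, the converse via computing $p(x|t;\theta) = h(x)\big/\int_{\{t(x')=t\}} h(x')\,dx'$ and noting it is free of $\theta$, and the Bayes'-rule computation for the form in Eq.~\eqref{eq:suff_bayesian} is exactly right. You are also right to flag the measure-theoretic obstacle as the only real gap: for continuous $x$ the level sets $\{x' : t(x')=t\}$ are Lebesgue-null, so the expressions $p(x|t(x))$ and $\int_{\{t(x')=t\}} h(x')\,dx'$ are not literally well defined as written; the rigorous version (Halmos--Savage) replaces them by densities with respect to a common dominating measure and conditional expectations given $\sigma(T)$, or, in the smooth case where $t$ is a submersion, by the coarea formula with the attendant Jacobian factor absorbed into $h$. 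In the discrete setting your argument is complete as stated. Since the paper treats the theorem as known, a citation to a standard reference for the general measure-theoretic proof would be the appropriate way to close the remaining gap rather than carrying out the disintegration argument in full.
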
	
	It is clear from the above theorem that $ f(x) $ in the exponential family is sufficient. 
	
	A stronger result regarding exponential family also exists, which goes under the name of Pitman–Koopman–Darmois theorem. This theorem says that the exponential family is the most general family of distributions for which there is a sufficient statistics whose size is fixed with the number of samples, provided that the domain of the probability distribution does not vary with the parameter $ \theta $ \citep{koopman1936distributions}.

	\section{Properties of the conditional exponential family}
	\subsection{Linear identifiability}\label{app:identifiability}
	
	Let us consider the exponential family model $ p_w(x|\theta) =  \exp (\eta_w(\theta)^T  f_w(x))/ Z_w(\theta ) $, where here $ \eta_w $ and $ f_w $ are not restricted to be Neural Networks. \cite{khemakhem2020ice} studies the identifiability properties of the above family; specifically, they consider identifiability properties of the feature extractors $ \eta_w $ and $ f_w $. Identifiability of representations is useful as it means that two different models from the above family learn similar representations when trained with different initialization on the same data set. In the framework of our likelihood-free inference task, it also means that if the true model belongs to the family we are considering, it is theoretically possible to recover the natural parameters and the true sufficient statistics.
	
	If we consider now $ \eta_w $ and $ f_w $ to be Neural Networks, this causes problems as they are not identifiable in the standard sense. In fact, many different parameter configurations lead to the same function (as there are many symmetries in how the transformations in a Neural Network layer are defined). Therefore, \cite{khemakhem2020ice} introduces, following previous works, two more general notions of identifiable representations. Subsequently, let $ \mathcal{W} $ denote the space of the possible Neural Network weights $ w $.
	
	\begin{definition}\textbf{Weak identifiability (Section 2.2  in \citealp{khemakhem2020ice}).}
		Let $ \sim^f_w $ and  $ \sim^\eta_w $ be equivalence relations on $ \mathcal{W} $ defined as: 
		\begin{equation}\label{}
			\begin{aligned}
				&w \sim^f_w w' \iff f_w(x) = A f_{w'}(x) +c \\
				&w \sim^\eta_w w' \iff \eta_w(\theta) = B \eta_{w'}(\theta) + e
			\end{aligned}
		\end{equation}
		where $ A $ and $ B $ are $ (d_s \times d_s) $-matrices of rank at least $ \min(d_s, d) $ and $\min(d_s, p)$ respectively, and $ c  $ and $ e $ are vectors.
	\end{definition}
	
	\begin{definition}\textbf{Strong identifiability (Section 2.2  in \citealp{khemakhem2020ice}).}
		Let $ \sim^f_s $ and  $ \sim^\eta_s $ be equivalence relations on $ \mathcal{W} $ defined as: 
		\begin{equation}\label{}
			\begin{aligned}
				&w \sim^f_s w' \iff \forall i, f_{i,w}(x) = a_i f_{\sigma(i),w'}(x) +c_i \\
				&w \sim^\eta_s w' \iff \forall i,  \eta_{i,w}(\theta) = b_i \eta_{\gamma(i),w'}(\theta) + e_i
			\end{aligned}
		\end{equation}
		where $ \sigma $ and $ \gamma $ are permutations of $ [[1,n]] $, $ a_i $ and $ b_i $ are non-zero scalars and $ c_i $ and $ e_i $ are scalars.
	\end{definition}
	
	Weak identifiability means that two parameters are equivalent if the corresponding feature extractors are the same up to linear transformation. Strong identifiability is a specific case of the weak one, in which the linear transformation is restricted to be a scaled permutation. 
	
	After introducing these concepts, \cite{khemakhem2020ice} provides two theorems (Theorem 1 and 2) in which weak or strong identifiability of the representations are implied if different parameter values lead to same distributions, i.e. $ p_w(x|\theta) = p_{w'}(x|\theta) \forall \ x, \theta \implies w \sim^f_w w' $ or $ p_w(x|\theta) = p_{w'}(x|\theta) \forall \ x, \theta \implies w \sim^f_s w' $ (and similar for $ \sim^\eta_w $ and $ \sim^\eta_s $). These two results hold under some strict conditions on the functional form of the feature extractors $ f $ and $ \eta $ (concerning differentiability, rank of the Jacobian and other properties). 
	
	Then, they verify explicitly these conditions for a simple fully connected Neural Network architecture, in which they restrict the activation functions ot be LeakyReLUs and the layer widths to be monotonically increasing or decreasing. Of course, this architecture is quite restrictive; it is in fact impossible to study theoretically the properties of more complex architectures and to show that they satisfy the necessary conditions for identifiability to hold. However, they show empirically that the conditional structure, even with more complex architectures, is helpful in increasing identifiability of the representations (as computed by the MCC, see Appendix~\ref{app:MCC}). 
	
	Therefore, even if the architectures which are used throughout this work do not satisfy the assumptions needed to explicitly show identifiability of the representations, we argue that the presence of such results is an hint towards the fact that identifiability (according to the above definitions) is actually achievable.

	\subsubsection{Mean Correlation Coefficient (MCC)}\label{app:MCC}
	
	In order to evaluate empirically whether the above identifiability results are satisfied, we need a way to measure similarity between two embeddings of the same set of data. By following \cite{khemakhem2020ice}, we describe here the Mean Correlation Coefficient (MCC), which is a simple measure to do that. In the main text, we used this technique to assess how well our approximating family recovered the exact sufficient statistics and natural parameters, in the case where the data generating model $ p_0 $ belongs to the exponential family.

	In the following, we describe two versions of MCC, which are directly linked to the weak and strong identifiability definitions in Appendix~\ref{app:identifiability}.

	\paragraph{Strong MCC.}
	
	Let us consider two sets of embeddings $ \{y_i \in \R^{d_s}\}_{i=1}^n $ and $ \{z_i \in \R^{d_s}\}_{i=1}^n $, which can be thought of as samples from two multivariate random variables $ Y = t_1(X) $ and $ Z=t_2(X) $, for two functions $ t_1, t_2 $ and a random variable $ X $; in general, the order of the components of these two vectors is arbitrary, so that we cannot say, for instance, that the 1st component of $ Y $ corresponds to the 1st of $ Z $. Then, MCC computes all the correlation coefficients between each pair of components of $ Y $ and $ Z $. Next, it solves a linear sum assignment problem which identifies each component of $ Y $ with a component of $ Z $ aiming to maximize the sum of the absolute value of the corresponding correlation coefficient. In this way, it tries to couple the embeddings which are most linear one to the other. Finally, the MCC is computed as the mean of the absolute correlation coefficients after the right permutation of elements of the vector. 
	
	MCC is therefore a metric between 0 and 1 which measures how well each component of the original embedding (say, $ Y $) has been recovered independently by the other one (say, $ Z $). Moreover, as it relies on the correlation coefficient, it is not sensitive to rescaling or translation of each of the embeddings (in fact, the correlation coefficient of two univariate random variables is $ \pm 1 $ when a deterministic linear relationship between the two exists, unless the relationship is is perfectly horizontal or vertical).
	
	Finally, in order to get a better estimate of the MCC, we can split the set of embeddings in two: one which will be used to determine the permutation and which will give an in-sample estimate of the MCC, and the other one to which the previous permutation will be applied and will give an out-of-sample estimate of MCC.
	
	\paragraph{Weak MCC.}
	
	The above definition of MCC aims to estimate how well each single element of the embedding is recovered, independently on the other; we call it, following \cite{khemakhem2020ice}, strong MCC. However, there may be cases where the recovered embedding is equal to the correct one up to a more generic linear transformation $ A $. In that case, we would like to have a way to measure, up to a linear transformation, how far are two embeddings. Following \cite{khemakhem2020ice}, we therefore apply Canonical Correlation Analysis (CCA) \cite{hotelling1936relations} to learn $ A $, and after that compute MCC, which we will call weak MCC. Specifically, CCA is a way to compute linear transformation $ A $  so that the correlations between corresponding components of $ A \cdot Y $ and $ Z $ is maximized. The so-defined weak MCC is therefore a number between 0 and 1 which measures how close is $ Y $ to $ Z $ after the best linear transformation is applied to Y. Similarly as before, a part of data is used to learn the best embedding; we can therefore use fresh samples to get an out-of-sample estimate along the in-sample one.

	\subsection{Universal approximation capability}\label{app:universal_appr}
	
	\cite{khemakhem2020ice} provide a result (Theorem 3) in which they prove universal approximation capability of the conditional exponential family. Specifically, they show that, considering the dimension of the representations $ d_s $ as an additional parameter, it is possible to find an arbitrarily good approximation of any conditional probability density $ p_0(x|\theta) $, provided that $ \mathcal X $ and $ \Theta $ are compact Hausdorff spaces. In practice, good approximation may be achieved with a value of $ d_s $ larger than $ d $ or $ p $. Moreover, as remarked in the main text, this result is not concerned with the way the approximating family is fit; indeed, we expect that this task becomes (both statistically and computationally) more challenging when $ d_s $ increases.

	\section{Some properties of Score Matching}\label{app:Fish_div_prop}
	
	\subsection{Proof of Theorem \ref{Th:FD_explicit}}\label{app:th_FD_explicit_proof}
	
	Our Theorem~\ref{Th:FD_explicit} extends Theorem 1 in \cite{hyvarinen2005estimation}, that considers the case $ \X = \R^d $, and which is recovered in the case $ a_i = - \infty $ and $ b_i = + \infty $ $ \forall \ i $. Our proof follows \cite{hyvarinen2005estimation}, which however explicitly stated only Assumptions \ref{ass:A1} and \ref{ass:A2}. Following \cite{yu2018generalized}, we add the additional Assumption \ref{ass:A3} which is required for Fubini-Tonelli theorem to apply.
	
	In order to prove Theorem \ref{Th:FD_explicit}, Assumption \ref{ass:A3} can be weakened to $\E_{p_0} \left| \frac{\partial^2}{\partial x_i^2} \log p_w(X)  \right| < \infty, \forall w, \forall i=1,\ldots, d $. We state the more general one in the main text as that allows Theorem~\ref{Th:FDS_explicit} for SSM to be proved as well.
	
	\begin{proof}
		Let $ \s_0(x) = \nabla_{x}\log p_0(x) $ denote the score of the data distribution, and analogously $ \s(x;w) = \nabla_{x}\log p_w(x) $. Then, Eq.~\eqref{Eq:Fisher_div} can be rewritten as: 
		\begin{equation*}
			\begin{aligned}
				D_F(p_0\|p_w) &= \frac{1}{2} \int_{\mathcal{X}} p_0(x) \|\s_0(x) - \s(x;w) \|^2 dx \\&= \frac{1}{2} \int_{\mathcal{X}} p_0(x) \left[\|\s_0(x)\|^2 + \|\s(x;w) \|^2 -2\s_0(x)^T \s(x;w) \right] dx \\&= \underbrace{\frac{1}{2} \int_{\mathcal{X}} p_0(x) \|\s_0(x)\|^2 dx}_{C} +  \underbrace{\frac{1}{2} \int_{\mathcal{X}} p_0(x)\|\s(x;w) \|^2dx}_{A} \underbrace{- \int_{\mathcal{X}} p_0(x)\s_0(x)^T \s(x;w)  dx}_{B}
			\end{aligned}
		\end{equation*}
		Note that we can split the integral into the three parts as the first two are assumed to be finite in \ref{ass:A2}; as a consequence, $ B $ is also finite thanks to $ |2ab| \le a^2 + b^2 $. 
		
		The first element does not depend on $ w $, so that we can safely ignore that. The second one appears as it is in the final Eq.~\eqref{Eq:FD_explicit}. Therefore, we will focus on the last term, which we can write as: 
		\begin{equation}\label{Eq:FD_explicit_proof_sum}
			B = - \sum_{i=1}^d \int_{\X}  p_0(x) s_{0,i}(x) s_i(x;w) dx;
		\end{equation}
		Let's now consider a single $ i $, which we can rewrite as: 
		\begin{equation*}
			- \int_{\X}  p_0(x) \frac{\partial \log p_0(x) }{\partial x_i} s_i(x;w) dx = - \int_{\X}  \frac{p_0(x)}{p_0(x)} \frac{\partial p_0(x) }{\partial x_i} s_i(x;w) dx = - \int_{\X} \frac{\partial p_0(x) }{\partial x_i} s_i(x;w) dx.
		\end{equation*}
		Now, let's consider the integral over $ x_i $ first, and apply partial integration to it. Doing this relies on the Fubini-Tonelli's theorem, which can be safely applied due to $ B $ being finite as discussed above. 
		We now apply partial integration to the integral over $ x_i $: 
		\begin{equation*}
			\begin{aligned}
				- &\int_{a_1}^{b_1} \int_{a_2}^{b_2} \ldots \int_{a_d}^{b_d} \frac{\partial p_0(x) }{\partial x_i} s_i(x;w) dx = - \int_{a_1}^{b_1} \ldots \int_{a_{i-1}}^{b_{i-1}} \int_{a_{i+1}}^{b_{i+1}} \ldots \int_{a_{d}}^{b_{d}}  \Bigg[ \left. p_0(x) s_i(x;w) \right|_{x_i \searrow  a_i}^{x_i \nearrow  b_i} \\ &- \int_{a_i}^{b_i} p_0(x) \frac{\partial s_i(x;w)}{\partial x_i} dx_i \Bigg] dx_1 \ldots dx_{i-1} dx_{i+1} \ldots  dx_{d} \\ &= \int_{a_1}^{b_1} \ldots \int_{a_{i-1}}^{b_{i-1}} \int_{a_{i+1}}^{b_{i+1}} \ldots  \int_{a_{d}}^{b_{d}} \Bigg[ \int_{a_i}^{b_i} p_0(x) \frac{\partial s_i(x;w)}{\partial x_i} dx_i \Bigg] dx_1 \ldots dx_{i-1} dx_{i+1} \ldots  dx_{d} \\ &= \int_{\X}  p_0(x) \frac{\partial s_i(x;w) }{\partial x_i} dx.
			\end{aligned}
		\end{equation*}
		where the second equality holds thanks to Assumption \ref{ass:A1}. For stating the last equality rigorously, we need to invoke Fubini-Tonelli's theorem again; this relies on the assumption:
		
		\begin{equation*}
			\int_{\X}  \left|p_0(x) \frac{\partial s_i(x;w) }{\partial x_i} \right| dx = \E_{p_0} \left|\frac{\partial s_i(X;w) }{\partial x_i} \right| < \infty,
		\end{equation*}
		which is equivalent to Assumption \ref{ass:A3}.
		
		By repeating this argument for all terms in the sum in Eq.~\eqref{Eq:FD_explicit_proof_sum}, we obtain that: 
		\begin{equation*}
			B = \int_{\X}  p_0(x) \sum_{i=1}^d  \frac{\partial s_i(x;w) }{\partial x_i} dx =  \int_{\X}  p_0(x) \sum_{i=1}^d  \frac{\partial^2 \log p_w(x)}{\partial x_i^2} dx ,
		\end{equation*}
		which concludes our proof. 
	\end{proof}

	\subsection{Proof of Theorem \ref{Th:FD}}\label{app:th_FD_proof}
	We give here an extended version of Theorem \ref{Th:FD}, which we then prove following \cite{hyvarinen2005estimation}.
	
	\begin{theorem}\label{Th:FD_extended}
		Assume $ \exists w^\star : p_0(\cdot) = p_{w^\star}(\cdot) $. Then: 
		\begin{equation}\label{}
			w = w^\star \implies D_F(p_0\|p_w) = 0.
		\end{equation}
		Further, if $ p_0(x)>0 \ \forall x \in \mathcal{X}$, you also have: 
		\begin{equation}\label{}
			D_F(p_0\|p_w) = 0 \implies p_w(\cdot) = p_0(\cdot).
		\end{equation}
		Finally, if no other value $ w \neq w^\star $ gives a pdf $ p_w $ that is equal to $ p_{w^\star} $:
		\begin{equation}\label{}
			D_F(p_0\|p_w) = 0 \implies w = w^\star.
		\end{equation}
	\end{theorem}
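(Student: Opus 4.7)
The plan is to dispatch the three implications in the order stated, as each depends on the previous one in an essentially cosmetic way: the bulk of the work lies in the middle implication.

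For the first implication, note that if $w = w^\star$ then $\log p_0 = \log p_{w^\star}$ on $\mathcal{X}$ and hence $\nabla_x \log p_0(x) = \nabla_x \log p_w(x)$ pointwise, so the integrand in Eq.~\eqref{Eq:Fisher_div} vanishes identically and $D_F(p_0 \| p_w) = 0$.

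The key step is the second implication. The integrand in Eq.~\eqref{Eq:Fisher_div} is the product of $p_0(x) > 0$ and the non-negative quantity $\|\nabla_x \log p_0(x) - \nabla_x \log p_w(x)\|^2$, so $D_F(p_0 \| p_w) = 0$ forces $\nabla_x \log p_0(x) = \nabla_x \log p_w(x)$ Lebesgue-almost everywhere on $\mathcal{X}$. Since both $\log p_0$ and $\log p_w$ are differentiable on $\mathcal{X}$, the gradient of their difference vanishes a.e., and, $\mathcal{X} = \bigotimes_{i=1}^d (a_i, b_i)$ being an open connected set, this forces $\log p_0(x) - \log p_w(x) = c$ for some constant $c$ on all of $\mathcal{X}$. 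Equivalently, $p_0(x) = e^c \, p_w(x)$ everywhere, and since both densities integrate to $1$ over $\mathcal{X}$, we must have $e^c = 1$, so $p_w = p_0$.

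The third implication is then immediate: under the uniqueness hypothesis, $D_F(p_0\|p_w)=0$ yields $p_w = p_0 = p_{w^\star}$ by the previous step, and the assumption that no $w \ne w^\star$ reproduces $p_{w^\star}$ forces $w = w^\star$. The main obstacle I anticipate is the step from pointwise-a.e.\ equality of gradients to an everywhere constant difference: this needs the connectedness of $\mathcal{X}$ together with continuity of $\nabla_x(\log p_0 - \log p_w)$ (upgrading the a.e.\ statement to everywhere), and the fact that both logarithms are defined on all of $\mathcal{X}$ so that the constant of integration is single-valued; the normalisation argument fixing $c = 0$ is routine once this is in place.
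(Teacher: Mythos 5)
Your proposal is correct and follows essentially the same route as the paper's proof in Appendix~\ref{app:th_FD_proof}: positivity of $p_0$ forces the scores to agree, hence $\log p_w = \log p_0 + c$ on $\mathcal{X}$, and normalisation kills the constant, with the third implication following from the uniqueness hypothesis. Your added care about upgrading the a.e.\ equality of gradients to an everywhere-constant difference via connectedness of $\mathcal{X}$ is a sound refinement of a step the paper simply asserts.
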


	\begin{proof}
		The first statement is straightforward. For the second one, if $ D_F(p_0\|p_w) = 0  $ and $ p_0(x) > 0\ \forall\ x \in \mathcal{X} $, then $ \nabla_x \log p_w(x) = \nabla_x \log p_0(x) $; this in turn implies that $ \log p_w(x) = \log p_0(x) + c \ \forall\ x \in \mathcal{X}$ for a constant $ c $, which however is 0 as both $ p_0 $ and $ p_w $ are pdf's.

		The third statement follows from the second as, for the additional assumption, $ w^\star $ is the only choice of $ w $ which gives $ p_w = p_0 $. 
	\end{proof}
	
	\subsection{Proof of Theorem \ref{Th:FDS_explicit}}\label{app:th_FDS_explicit_proof}

	Similarly to the SM case, our Theorem~\ref{Th:FDS_explicit} extends Theorem 1 in \cite{song2019sliced}, that considers the case $ \X = \R^d $, and which is recovered in the case $ a_i = - \infty $ and $ b_i = + \infty $ $ \forall \ i $. Our proof follows \cite{song2019sliced}, which however explicitly stated only Assumptions \ref{ass:A1}, \ref{ass:A2} and \ref{ass:A4}. Following \cite{yu2018generalized}, we add the additional Assumption \ref{ass:A3} which is required for Fubini-Tonelli theorem to apply. The strategy of the proof is very similar to Theorem~\ref{Th:FD_explicit}.

	\begin{proof}
		As before, let $ \s_0(x) = \nabla_{x}\log p_0(x) $ denote the score of the data distribution, and analogously $ \s(x;w) = \nabla_{x}\log p_w(x) $. Then, Eq.~\eqref{Eq:Fisher_div_sliced} can be rewritten as: 
		\begin{equation*}
			\begin{aligned}
				D_{FS}(p_0\|p_w) &= \frac{1}{2} \int_{\mathcal{V}} \int_{\mathcal{X}} q(v) p_0(x) [v^T \s_0(x) - v^T \s(x;w)]^2 dx dv \\
				&= \frac{1}{2} \int_{\mathcal{V}} \int_{\mathcal{X}} q(v) p_0(x) \left[(v^T \s_0(x))^2 + (v^T \s(x;w))^2 -2 (v^T \s_0(x))(v^T \s(x;w))\right] dx dv \\
				&= \int_{\mathcal{V}} \int_{\mathcal{X}} q(v) p_0(x) \left[\frac{1}{2} (v^T \s(x;w))^2 - (v^T \s_0(x))(v^T \s(x;w))\right] dx dv + C 
			\end{aligned}
		\end{equation*}
		Note that we can split the integral into the three parts thanks to Assumptions \ref{ass:A2} and \ref{ass:A4}, which ensure that the expectation of each term is bounded. Additionally, we have absorbed in the constant $ C $ the term which does not depend on $ w $. 
		
		In the last row of the above Equation, the first term in the square brackets appears in Eq.~\eqref{Eq:FDS_explicit}, which is what we want to prove. We focus therefore on the second term:
		\begin{equation}\label{Eq:FDS_explicit_proof_sum}
			\begin{aligned}
				- &\int_{\mathcal{V}} \int_{\mathcal{X}} q(v) p_0(x) \left[(v^T \s_0(x))(v^T \s(x;w))\right] dx dv \\
				= - &\int_{\mathcal{V}} \int_{\mathcal{X}} q(v) p_0(x) \left[(v^T \nabla_x  \log p_0(x))(v^T  \nabla_x  \log p_w(x))\right] dx dv \\
				= - &\int_{\mathcal{V}} \int_{\mathcal{X}} q(v) \left[(v^T \nabla_x    p_0(x))(v^T  \nabla_x \log p_w(x))\right] dx dv \\
				= -&\sum_{i=1}^d \int_{\mathcal{V}} \int_{\mathcal{X}} q(v) \left[v_i\frac{\partial p_0(x)}{\partial x_i}(v^T  \nabla_x \log p_w(x))\right] dx dv \\
			\end{aligned}
		\end{equation}
		We will now consider one single element of the sum for a chosen $ i $, and consider the integral over $ x_i $ first. Swapping the order of integration relies on Fubini-Tonelli's theorem, which can be safely applied due to the above quantity being finite as discussed above.
		We then apply partial integration to the integral over $ x_i $: 
		\begin{equation*}
			\begin{aligned}
				- &\int_{\mathcal{V}} \int_{a_1}^{b_1} \int_{a_2}^{b_2} \ldots \int_{a_d}^{b_d} q(v) \left[v_i\frac{\partial p_0(x)}{\partial x_i}(v^T  \nabla_x \log p_w(x))\right] dx dv \\= - &\int_{\mathcal{V}} \int_{a_1}^{b_1} \ldots \int_{a_{i-1}}^{b_{i-1}} \int_{a_{i+1}}^{b_{i+1}} \ldots \int_{a_{d}}^{b_{d}}  q(v) \Bigg[ \left. v_i p_0(x)(v^T  \nabla_x \log p_w(x)) \right|_{x_i \searrow  a_i}^{x_i \nearrow  b_i} \\ &\qquad \quad - \int_{a_i}^{b_i} v_i p_0(x) \left(v^T  \frac{\partial }{\partial x_i}\nabla_x \log p_w(x)\right) dx_i \Bigg]dx_{-i},
			\end{aligned}
		\end{equation*}
		where $ dx_{-i}=dx_1 \ldots dx_{i-1} dx_{i+1} \ldots  dx_{d}  $. 
		The first element in the square bracket goes to 0 thanks to Assumption \ref{ass:A1}. We are left therefore with:
		\begin{equation}\label{Eq:FDS_explicit_step}
			\begin{aligned}
				\int_{\mathcal{V}} \int_{a_1}^{b_1} \ldots \int_{a_{i-1}}^{b_{i-1}} \int_{a_{i+1}}^{b_{i+1}} \ldots \int_{a_{d}}^{b_{d}} \Bigg[\int_{a_i}^{b_i} &q(v) p_0(x) v_i \left(v^T  \frac{\partial }{\partial x_i}\nabla_x \log p_w(x)\right)  dx_i \Bigg] dx_{-i}\\
				=\int_{\mathcal{V}} \int_{\X} &q(v) p_0(x)v_i\left(v^T  \frac{\partial }{\partial x_i}\nabla_x \log p_w(x)\right) dx dv.
			\end{aligned}	
		\end{equation}
		The last equality again exploits Fubini-Tonelli theorem, which in this case requires:
		\begin{equation*}
			\int_{\mathcal{V}} \int_{\X} q(v) p_0(x)\left| v_i \left(v^T  \frac{\partial }{\partial x_i}\nabla_x \log p_w(x)\right) \right| dx dv =\E_{V\sim q} \E_{X\sim p_0} \left| V_i \left(V^T  \frac{\partial }{\partial x_i}\nabla_x \log p_w(X)\right) \right|< \infty.
		\end{equation*}
		This is satisfied by combining Assumptions \ref{ass:A3} and \ref{ass:A4}, as in fact: 
		\begin{equation*}
			\begin{aligned}
				&\E_{V\sim q} \E_{X\sim p_0} \left| V_i \left(V^T  \frac{\partial }{\partial x_i}\nabla_x \log p_w(X)\right) \right|	= \E_{V\sim q} \E_{X\sim p_0} \left| V_i \sum_{j=1}^{d} V_j  \frac{\partial^2 }{\partial x_i \partial x_j} \log p_w(X)\right|\\
				\le& \sum_{j=1}^{d} \E_{V\sim q} \E_{X\sim p_0} \left| V_i V_j  \frac{\partial^2 }{\partial x_i \partial x_j} \log p_w(X)\right| = \sum_{j=1}^{d} \E_{V\sim q}  \left| V_i V_j \right|\cdot \E_{X\sim p_0}\left|\frac{\partial^2 }{\partial x_i \partial x_j} \log p_w(X)\right| \\
				\le &\sum_{j=1}^{d} \sqrt{\E_{V\sim q} V_i^2 \cdot \E_{V\sim q} V_j^2  }\cdot \E_{X\sim p_0}\left|\frac{\partial^2 }{\partial x_i \partial x_j} \log p_w(X)\right|,
			\end{aligned}
		\end{equation*}
		where the last inequality holds thanks to Cauchy-Schwarz inequality; Assumptions \ref{ass:A3} and \ref{ass:A4} ensure the last row is $ <\infty $.
		
		Back to Eq.~\eqref{Eq:FDS_explicit_proof_sum}, we have therefore: 
		\begin{equation}\label{}
			\begin{aligned}
				-&\sum_{i=1}^d \int_{\mathcal{V}} \int_{\mathcal{X}} q(v) \left[v_i\frac{\partial p_0(x)}{\partial x_i}(v^T  \nabla_x \log p_w(x))\right] dx dv \\
				=&\int_{\mathcal{V}} \int_{\X} \sum_{i=1}^d q(v) p_0(x)v_i\left(v^T  \frac{\partial }{\partial x_i}\nabla_x \log p_w(x)\right) dx dv\\
				=&\int_{\mathcal{V}} \int_{\X} \sum_{i=1}^d \sum_{j=1}^d q(v) p_0(x)v_i v_j  \frac{\partial^2 }{\partial x_i\partial x_j} \log p_w(x) dx dv\\
				=&\int_{\mathcal{V}}\int_{\X} q(v) p_0(x)\left[ v^T  \left(\Hessian_x \log p_w(x)\right) v\right]\ dx dv.
			\end{aligned}
		\end{equation}
	\end{proof}

	\subsection{Proof of Theorem \ref{Th:FDS}}\label{app:th_FDS_proof}
	We give here an extended version of Theorem \ref{Th:FDS}. This is a version of Lemma 1 in \cite{song2019sliced}, whose proof we adapt.
	
	\begin{theorem}\label{Th:FDS_extended}
		Assume $ \exists w^\star : p_0(\cdot) = p_{w^\star}(\cdot) $. Then: 
		\begin{equation}\label{}
			w = w^\star \implies D_{FS}(p_0\|p_w) = 0.
		\end{equation}
		Further, if $ p_0(x)>0 \ \forall x \in \mathcal{X}$, you also have: 
		\begin{equation}\label{}
			D_{FS}(p_0\|p_w) = 0 \implies p_w(\cdot) = p_0(\cdot).
		\end{equation}
		Finally, if no other value $ w \neq w^\star $ gives a pdf $ p_w $ that is equal to $ p_{w^\star} $:
		\begin{equation}\label{}
			D_{FS}(p_0\|p_w) = 0 \implies w = w^\star.
		\end{equation}
	\end{theorem}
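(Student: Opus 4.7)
The strategy mirrors the proof of Theorem~\ref{Th:FD_extended}, but the crucial extra ingredient is that the sliced divergence controls the squared difference of scores only through its projections onto random directions $v$, so Assumption~\ref{ass:A4} must be used to recover the full vector difference.

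The first implication is immediate: if $w=w^\star$ then $p_w = p_0$, so $\nabla_x \log p_0(x) - \nabla_x \log p_w(x) \equiv 0$, making the integrand in Eq.~\eqref{Eq:Fisher_div_sliced} vanish pointwise; hence $D_{FS}(p_0\|p_w)=0$. The third implication follows from the second: if the second implication gives $p_w = p_0 = p_{w^\star}$ and $w^\star$ is the unique parameter with this property, then necessarily $w = w^\star$.

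The main work is the second implication. Set $\Delta(x) := \nabla_x\log p_0(x) - \nabla_x\log p_w(x)$ and $M := \E_{V\sim q}[VV^T]$. Using $(v^T\Delta(x))^2 = v^T \Delta(x)\Delta(x)^T v$ and Fubini-Tonelli (whose hypotheses are verified by Assumptions~\ref{ass:A2} and~\ref{ass:A4}), one rewrites
\begin{equation}
D_{FS}(p_0\|p_w) = \frac{1}{2}\int_{\mathcal{X}} p_0(x)\, \Delta(x)^T M\, \Delta(x)\, dx.
\end{equation}
If this vanishes, then since $p_0(x)>0$ for all $x\in\mathcal{X}$ the nonnegative integrand $\Delta(x)^T M\Delta(x)$ must be zero almost everywhere. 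Assumption~\ref{ass:A4} states $M \succ 0$, so the quadratic form vanishes only when $\Delta(x)=0$; hence $\nabla_x \log p_0(x) = \nabla_x \log p_w(x)$ almost everywhere on $\mathcal{X}$.

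From here the remainder is standard and parallels Theorem~\ref{Th:FD_extended}. Since $\log p_0$ is differentiable and $\log p_w$ is doubly differentiable on the connected domain $\mathcal{X} = \bigotimes_{i=1}^d (a_i,b_i)$, the almost-everywhere identity of the gradients upgrades to an everywhere identity by continuity; integrating yields $\log p_w(x) = \log p_0(x) + c$ for some constant $c$, and the fact that both are probability densities integrating to one forces $c=0$, giving $p_w(x) = p_0(x)$ for all $x\in\mathcal{X}$. The main obstacle is the single step that uses Assumption~\ref{ass:A4} to pass from vanishing projected scores to vanishing score differences; everything else is a direct transcription of the SM argument.
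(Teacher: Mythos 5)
Your proof is correct and follows essentially the same route as the paper's: both arguments rewrite the inner integral over $v$ as the quadratic form $\Delta(x)^T \E[VV^T]\,\Delta(x)$ and invoke the positive definiteness of $\E[VV^T]$ from Assumption~\ref{ass:A4} to conclude that the score difference itself vanishes, after which the argument reduces to the unsliced case. Your added remarks on the almost-everywhere-to-everywhere upgrade and the connectedness of $\mathcal{X}$ are slightly more careful than the paper's presentation but do not change the substance.
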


	\begin{proof}
		The first statement is straightforward. 
		
		For the second one, if $ D_F(p_0\|p_w) = 0  $ and $ p_0(x) > 0\ \forall\ x \in \mathcal{X} $, then:
		
		\begin{equation}\label{}
			\begin{aligned}
				&\int_{\mathcal{V}} q(v) (v^T (\nabla_x\log p_0(x) -  \nabla_x \log p_w(x) ))^2 dv=0 \\
				\iff &\int_{\mathcal{V}} q(v) v^T (\nabla_x\log p_0(x) -  \nabla_x \log p_w(x) )(\nabla_x\log p_0(x) -  \nabla_x \log p_w(x) )^T v dv=0 \\
				\iff &(\nabla_x\log p_0(x) -  \nabla_x \log p_w(x) )^T E[VV^T] (\nabla_x\log p_0(x) -  \nabla_x \log p_w(x) )=0 \\
				\stackrel{(\star)}{\iff} &\nabla_x\log p_0(x) -  \nabla_x \log p_w(x) =0 \\
				\iff &\log p_w(x) = \log p_0(x) + c \ \forall\ x \in \mathcal{X},
			\end{aligned}	
		\end{equation}
		where in the third line above $ V $ is a random variable distributed according to $ q $, for which therefore $ \E[VV^T] $ is positive definite by Assumption \ref{ass:A4}, which ensures equivalence $ (\star) $ holds. Additionally, as both $ p_0 $ and $ p_w $ are pdf's (and therefore normalized), the constant $ c =0$.
		
		The third statement follows from the second as, for the additional assumption, $ w^\star $ is the only choice of $ w $ which gives $ p_w = p_0 $. 
	\end{proof}

	\subsection{Proof of Theorem \ref{Th:FD_y}}\label{app:th_FD_y_proof}
	We give here an extended version of Theorem \ref{Th:FD_y}, which we then prove.

	\begin{theorem}\label{Th:FD_y_extended}
		Let $ Y=t(X) \in \mathcal Y$ for a bijection $ t $, and denote by $ p_0^Y $ and $ p_w^Y $ the distributions on $ \mathcal{Y} $ induced by the distributions $ p_0 $ and $ p_w $ on $ \X $.	
		Assume $ \exists w^\star : p_0(\cdot) = p_{w^\star}(\cdot) $, and let $ D $ denote either $ D_F $ or $ D_{FS} $. Then: 
		\begin{equation}\label{}
			w = w^\star \implies D(p^Y_0\|p^Y_w) = 0.
		\end{equation}
		Further, if $ p_0(x) > 0\ \forall x \in \mathcal{X}$ and Assumption \ref{ass:A4} holds, you also have: 
		\begin{equation}\label{}
			D(p^Y_0\|p^Y_w) = 0 \implies p_w(\cdot) = p_0(\cdot).
		\end{equation}
		Finally, if no other value $ w \neq w^\star $ gives a pdf $ p_w $ that is equal to $ p_{w^\star} $:
		\begin{equation}\label{}
			D(p^Y_0\|p^Y_w) = 0 \implies w = w^\star.
		\end{equation}
	\end{theorem}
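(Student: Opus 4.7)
The plan is to reduce the claim to the previously established Theorems~\ref{Th:FD_extended} and \ref{Th:FDS_extended} by exploiting the fact that $t$ is a bijection. The key observation is that two probability distributions on $\mathcal{X}$ coincide everywhere if and only if their pushforwards under $t$ coincide everywhere on $\mathcal{Y}$: by the change-of-variables formula, $p^Y(y) = p(t^{-1}(y))\,|\det J_{t^{-1}}(y)|$, and the Jacobian factor is strictly positive since $t$ is a smooth bijection (the concrete transformations used, following \texttt{Stan}, have non-vanishing Jacobian; see Appendix~\ref{app:transformations}). Hence $p_0 = p_w$ on $\mathcal{X}$ is equivalent to $p^Y_0 = p^Y_w$ on $\mathcal{Y}$, and the strict positivity $p_0(x) > 0\ \forall\, x\in\mathcal{X}$ transfers to $p^Y_0(y) > 0\ \forall\, y\in\mathcal{Y}$.

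For the first implication, if $w = w^\star$ then $p_0 = p_w$ on $\mathcal{X}$, so $p^Y_0 = p^Y_w$ on $\mathcal{Y}$, which means their scores (resp. projected scores) match pointwise on $\mathcal{Y}$; plugging into the implicit definitions \eqref{Eq:Fisher_div} or \eqref{Eq:Fisher_div_sliced} directly gives $D(p^Y_0 \| p^Y_w) = 0$. For the second implication, assume $p_0(x) > 0\ \forall x \in \mathcal{X}$ (and, if $D = D_{FS}$, also Assumption~\ref{ass:A4}). As argued above, $p^Y_0$ is strictly positive on $\mathcal{Y}$, so I invoke the appropriate preceding theorem (Theorem~\ref{Th:FD_extended} for $D_F$ or Theorem~\ref{Th:FDS_extended} for $D_{FS}$) with the pair $(p^Y_0, p^Y_w)$ to conclude $p^Y_w = p^Y_0$ on $\mathcal{Y}$. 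Pulling this equality back through $t^{-1}$, again by change of variables, yields $p_w(x) = p_0(x)$ for all $x \in \mathcal{X}$. The third implication is then immediate: combining the second with the uniqueness hypothesis on $w^\star$ forces $w = w^\star$.

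No step should present a real obstacle; the only care needed is in the book-keeping around the change-of-variables, namely verifying that positivity transfers between $\mathcal{X}$ and $\mathcal{Y}$ and that we may apply the underlying theorems on $\mathcal{Y}$. Since the entire argument uses only the implicit formulations of $D_F$ and $D_{FS}$ together with the bijective correspondence of densities, no additional regularity (boundary decay, integrability of second derivatives, etc.) needs to be re-verified beyond what is already available for the transformed model. This keeps the proof short and essentially a direct corollary of the earlier results.
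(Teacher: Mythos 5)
Your proposal is correct and follows essentially the same route as the paper's own proof: establish the equivalence $p_0 = p_w$ on $\mathcal{X}$ if and only if $p^Y_0 = p^Y_w$ on $\mathcal{Y}$ via the change-of-variables formula, note that strict positivity of $p_0$ transfers to $p^Y_0$ because $t$ is a bijection, and then invoke Theorem~\ref{Th:FD_extended} or Theorem~\ref{Th:FDS_extended} on the transformed pair. No gaps.
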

	\begin{proof}
		The proof relies on the equivalence between distributions for the random variables $ Y $ and $ X $; in fact, by fixing $ y=t(x) $ and denoting by  $ |J_t(x)| $ the determinant of the Jacobian matrix of $ t $ evaluated in $ x $, we have that $ p_0^Y(y) = \frac{p_0(x)}{|J_t(x)|} $ and $ p_w^Y(y) = \frac{p_w(x)}{|J_t(x)|} $, so that $ p_0(\cdot) = p_{w}(\cdot) \iff p_0^Y(\cdot) = p_{w}^Y(\cdot)$. The first and third statements follow directly from this fact by applying Theorem~\ref{Th:FD_extended} (if $ D $ is chosen to be $ D_F $) or Theorem~\ref{Th:FDS_extended} (if $ D = D_{FS} $)  to $ D(p^Y_0\|p^Y_w) $; for the second, notice also that $ p_0(x) > 0\ \forall\ x \in \X \implies p^Y_0(y)\ \forall\ y \in \mathcal{Y} > 0 $ as $ t $ is a bijection; then, Theorem~\ref{Th:FD_extended} or Theorem~\ref{Th:FDS_extended} imply that $ p^Y_0(\cdot) = p^Y_w(\cdot) \implies  p_0(\cdot) = p_w(\cdot) $.  		\end{proof}
	
	\subsection{Discussion on the positivity condition for SM}\label{app:Fish_div_prop_disjoint_supp}
	
	We follow here the discussion in Appendix D of \cite{arbel2017kernel}. 
	
	Consider again the Fisher divergence in Eq.~\eqref{Eq:Fisher_div}; we want to understand the conditions under which this is a divergence between probability measures, which essentially means it is 0 if and only if the probability distributions are the same.
	
	Besides the fact that both $p_0$ and $p_w$ need to be continuous (otherwise the gradient would be a delta function), it turns out that a necessary condition is that $p_0$ is positive on the whole space to which the random variable belong (let's say $\mathcal X$), if we don't put any restrictions on $p_w$. Otherwise, the following scenario may happen (see Appendix D in \citealp{arbel2017kernel}): consider the case in which $p_0(x)$ is a mixture of two densities supported on disjoint sets: $p_0(x) = \alpha_A p_A(x) + \alpha_B p_B(x)$, with non-negative weights $\alpha_A, \alpha_B$, such that 
	$$ p_A(x) > 0 \iff x \in \mathcal{X}_A, \quad p_B(x) > 0 \iff x \in \mathcal{X}_B, \quad  \mathcal{X}_A, \mathcal{X}_B \in \mathcal{X};  \quad \mathcal{X}_A \cap \mathcal{X}_B = \varnothing. $$
	
	Note that this implies $\exists x \in \mathcal{X} : p_0(x)=0 $. In this setting, any $ p_w $ of the form $p_w(x) = \beta_A p_A(x) + \beta_B p_B(x)$ will give $ D_F(p_0\|p_w) = 0 $. 
	This can be seen by computing the Fisher divergence directly:
	
	\begin{equation}\label{}
		\begin{aligned}
			D_F(p_0\|p_w) &= \frac{1}{2} \int_\mathcal{X} p_0(x) \|\nabla_x\log p_0(x) - \nabla_x \log p_w(x) \|^2 dx \\ 
			&=\frac{1}{2} \int_\mathcal{X} p_0(x) \|\nabla_x\log (\alpha_A p_A(x) + \alpha_B p_B(x)) - \nabla_x \log (\beta_A p_A(x) + \beta_B p_B(x)) \|^2 dx \\
			&=\frac{1}{2} \int_{\mathcal{X}_A} p_0(x) \|\nabla_x\log (\alpha_A p_A(x)) - \nabla_x \log (\beta_A p_A(x)) \|^2 dx +\\
			&\qquad \frac{1}{2} \int_{\mathcal{X}_B} p_0(x) \|\nabla_x\log (\alpha_B p_B(x)) - \nabla_x \log (\beta_B p_B(x)) \|^2 dx \\ 
			&=\frac{1}{2} \int_{\mathcal{X}_A} p_0(x) \underbrace{\|\nabla_x\log p_A(x) - \nabla_x \log p_A(x) \|^2}_{=0} dx +\\ &\qquad \frac{1}{2} \int_{\mathcal{X}_B} p_0(x) \underbrace{\|\nabla_x\log p_B(x) - \nabla_x \log p_B(x) \|^2}_{=0} dx \\ &= 0,
		\end{aligned}
	\end{equation}
	where the third equality relies on splitting the integration domain over the two subsets on which $ p_A $ and $ p_B $ are supported (and the other is 0) and the fourth equality relies on the presence of the logarithmic derivatives, that removes the importance of the mixture weights.

	Due to the above, in the case of the conditional Fisher divergence in Eq.~\eqref{Eq:exp_fisher_div}, $ p_0(x|\theta) $ needs to be supported on the whole $ \mathcal{X} $ for each $ \theta $ in order for $ D_F^E(p_0|p_w) = 0 \iff p_0(\cdot|\theta) = p_w(\cdot|\theta)$ $ \pi(\theta) $-almost everywhere. This can be seen by considering the case of univariate parameter $	\theta $  and by choosing $ p_0(x|\theta) = p_A(x) H(\theta) + (1-H(\theta)) p_B(x)$, where $ p_A $ and $ p_B $ are as above and $ H(\cdot) $ represents the Heaviside function. In this case, choosing $ p_w(x|\theta) = q(x) $, where $ q(x)  = \beta_A p_A(x) + \beta_B p_B(x) $ will give $ D_F^E(p_0|p_w) = 0 $, as for each fixed $ \theta $, $ D_F(p_0(\cdot|\theta)|p_w(\cdot|\theta)) $ falls in the case considered above.

	\subsection{Equivalence of Correction Factor and Transformed Score Matching}\label{App:corrected_transformed_SM_equivalence}

	As discussed in the main text (Section~\ref{sec:density_est}), the first extension of score matching to non-negative random variables was given in \cite{hyvarinen2007some}:
	\begin{equation}\label{Eq:Fisher_div_non_neg}
		D_F^+(p_0\|p_w) = \frac{1}{2} \int_{\R^d_+} p_0(x) \|\nabla_x\log p_0(x) \odot x - \nabla_x \log p_w(x) \odot x \|^2 dx,
	\end{equation}
	where $ \odot $ denotes element wise product between vectors and $ \R^d_+ $ is the positive octant of $ \R^d $. The correction factor $ x $ relaxes assumption \ref{ass:A1} to $ p_0(x) x_i^2 \frac{\partial}{\partial x_i} \log p_w(x) \to 0 $, so that it is possible to get an explicit form of the above with looser assumptions. This is an example of Corrected Score Matching (CorrSM, \ref{sec:density_est}), in which the issue arising due to distribution having a compact support is fixed by introducing a correction factor in the formulation of the objective. 
	
	\cite{yu2018generalized} proposed a more general score matching for non-negative random variables by allowing freedom of choice in the factor that is used in the integrand to correct for the integration by parts step (Appendix~\ref{app:th_FD_explicit_proof}), leading to the following objective:
	\begin{equation}\label{Eq:generalized_nonneg_Fish_div}
		D_F^+(p_0\|p_w) = \frac{1}{2} \int_{\mathbb{R}^n_+} p_0(x) \|(\nabla_x\log p_0(x))  \odot \sqrt{h(x)} - (\nabla_x \log p_w(x))  \odot \sqrt{h(x)}| \|^2 dx,
	\end{equation}
	where $ h(x) $ has the same dimension as $ x $, and has positive elements almost surely. 
	
	The explicit formulation associated to Eq.~\eqref{Eq:generalized_nonneg_Fish_div} can be obtained under the following assumptions:
	\begin{enumerate}[label=\textbf{A\arabic*b}]
		\item \label{ass:A1b} $ p_0(x) h_j(x_j) \partial_j \log p_w(x) \to 0 $ for $ x_i \searrow 0$ and $ x_i \nearrow \infty,\  \forall \ w, i$,
		
		\item\label{ass:A2b} $ \E_{p_0} \|\nabla_x \log p_0(X) \odot h^{1/2}(X)\|_2^2 < \infty, $ $ \E_{p_0} \|\nabla_x \log p_w(X) \odot h^{1/2}(X)\|_2^2 < \infty\ \forall w$,
		\item\label{ass:A3b} $\E_{p_0} \| (\nabla_x \log p_w(X) \odot h(X)))' \|_1 < \infty \ \forall w $, where the prime symbol denotes element-wise differentiation.
	\end{enumerate}

	Under the above assumption, Eq.~\eqref{Eq:generalized_nonneg_Fish_div} is equal to:
	\begin{equation}\label{Eq:generalized_Fish-div_nonnegative_partial_integr}
		\begin{aligned}
			D_F^+(p_0\|p_w) =  \int_{\mathbb{R}^n_+} p_0(x) \sum_{i=1}^d \Bigg[ &\frac{1}{2} h_i(x) \left( \frac{\partial \log p_w(x)}{\partial x_i}  \right)^2 + \\&h_i(x) \left( \frac{\partial^2 \log p_w(x)}{\partial x_i^2}  \right)  + h_i'(x)  \frac{\partial \log p_w(x)}{\partial x_i} \Bigg] dx + C,
		\end{aligned}
	\end{equation}
	where $ C$ is a constant with respect to $ p_w $.
	
	In Proposition 2 in \cite{yu2018generalized}, they give a result similar to our Theorems~\ref{Th:FD} and \ref{Th:FD_y} guaranteeing that minimization of $ D_F^+(p_0\|p_w) $ is a valid procedure for estimating a probabilistic model. When considering the finite-sample estimate of \ref{Eq:generalized_Fish-div_nonnegative_partial_integr}, different choices of $ h(x) $ may allow to focus more on smaller/larger values of $x$, which may in practice have better properties than the original form for non-negative data in \cite{hyvarinen2007some}, which is recovered for $ h(x) = x^2 $ (where the square is applied element-wise).

	This formulation in Eq.~\eqref{Eq:generalized_nonneg_Fish_div}, albeit originally considered for non-negative random variables only, can be extended to random variables with any bounded domain, by choosing a suitable function $ h $ and modifying \ref{ass:A1b} to hold for $ x_i $ going to the limits of the domain. In the next Sections, we therefore compare this approach with TranSM without specifying the domain; we will show that both the implicit and explicit formulation are the same with both TranSM and CorrSM, implying that the two are equivalent (we will show this in the specific case in which the transformation and the function $ h $ act independently on the different coordinates, but we believe this to be the case more in general; see Appendix~\ref{app:SM_extensions}).

	\subsubsection{Equivalence of the implicit form}
	
	As mentioned in the main text (Section~\ref{sec:density_est}), another approach to apply Fisher divergence to distributions with bounded domain (on one side or both) is to transform the data space to the real line and then apply standard Fisher divergence; we called this Transformation Score Matching (TranSM). Let us denote $ t $ such a transformation, which we assume to be bijective. Then, starting from $ p_0(x) $ and $ p_w(x) $ we get $ p_0(y) = \frac{p_0(x)}{|J_t(x)|} $ and $ p_w(y) = \frac{p_w(x)}{|J_t(x)|} $ for $ y=t(x) $, where $ J_t $ is the Jacobian matrix of $ t $ and $ |\cdot| $ denotes here the determinant; here, differently from the main text, we use a lighter notation where $ p_0(y) $ and $ p_0(x) $ are two different densities associated to the different name of the argument (same for $ p_w $). We investigate what is the Fisher divergence between the densities of the transformed distributions. Recall that $ p_0(y) dy = p_0(x) dx $, for $ y=t(x) $. Moreover, we also have that: 
	
	\begin{equation}\label{}
		\nabla_y g(y) = J_{t^{-1}}(t(x)) \nabla_x g(t(x)) = (J_{t}(x))^{-1} \nabla_x g(t(x)),
	\end{equation}
	where the second equality comes from the fact that $ J_{t^{-1}}(t(x)) = (J_{t}(x))^{-1} $ due to the inverse function theorem. Then, we can compute the Fisher Divergence between $ p_0(y) $ and $ p_w(y) $ (corresponding to the TranSM objective):
	\begin{equation}\label{}
		\begin{aligned}
			D_F(&p_0(y)\|p_w(y)) = \frac{1}{2} \int p_0(y) \|\nabla_y\log p_0(y) - \nabla_y \log p_w(y) \|^2 dy\\
			&= \frac{1}{2} \int p_0(x) \|(J_{t}(x))^{-1} [\nabla_x \log p_0(x) - \nabla_x \log |J_{t}(x)| - \nabla_x \log p_w(x) + \nabla_x \log |J_{t}(x)| ]\|^2 dx \\
			&= 	\frac{1}{2} \int p_0(x) \|(J_{t}(x))^{-1} [\nabla_x \log p_0(x) - \nabla_x \log p_w(x) ]\|^2 dx.
		\end{aligned}		
	\end{equation}
	
	In the rather common case in which the transformation $ t $ acts on each component independently, the Jacobian matrix is diagonal; in this case, the latter is equivalent to Eq.~\eqref{Eq:generalized_nonneg_Fish_div} upon defining $ h(x) $ to be a vector containing the squares of the diagonal elements of the Jacobian, i.e. putting $ \sqrt{h_i(x)} = (J_{t}(x))^{-1}_{ii} $. 
	
	\subsubsection{Equivalence of the explicit form}
	
	For both TranSM and CorrSM it is possible to get an explicit form of the objective (Eqs.~\ref{Eq:FD_explicit} and \ref{Eq:generalized_Fish-div_nonnegative_partial_integr}), in which the integrand does not depend on the data distribution $ p_0 $. In case in which the transformation $ t $ acts on the different components independently, the original explicit divergence for the transformed variable $ Y=t(X) $ is equivalent to the corrected explicit divergence for the original $ X $, analogously to the implicit Fisher divergence form. In fact, by applying the definition of explicit Fisher divergence (Eq.~\ref{Eq:FD_explicit}) to the transformed $ Y $, you get: 
	\begin{equation}\label{}
		D_F(p_0(y)\|p_w(y)) = \int p_0(y) \sum_{i=1}^d \underbrace{\left[ \frac{1}{2} \left( \frac{\partial \log p_w(y)}{\partial y_i}  \right)^2 +  \left( \frac{\partial^2 \log p_w(y)}{\partial y_i^2}  \right) \right]}_{\star} dy + C 
	\end{equation}
	where $ C $ is a constant with respect to $ p_w $. Considering only the term in square brackets, denoting $ \partial_i = \frac{\partial}{\partial x_i} $ and setting $ \sqrt{h_i(x)} = (J_{t}(x))^{-1}_{ii} $, we get: 
	\begin{equation}\label{}
		\begin{aligned}
			\star &= \frac{1}{2} h_i(x) \left[  \left( \partial_i \log p_w(x) \right)^2 +  \frac{1}{4} \left( \partial_i \log h_i(x) \right)^2 + \partial_i \log p_w(x)  \cdot \partial_i \log h_i(x) \right] \\ &+ \frac{1}{2} h_i'(x) \cdot \partial_i \log p_w(x) + \frac{1}{4} h_i'(x) \partial_i \log h_i(x) + h_i(x) \partial_i^2 \log p_w(x) + \frac{1}{2} h_i(x) \partial_i^2 \log h_i(x) \\
			&= \textcolor{blue}{\frac{1}{2} h_i(x) \left( \partial_i \log p_w(x) \right)^2} +  \frac{1}{8} h_i(x) \left( \partial_i \log h_i(x) \right)^2 \\ &+ \textcolor{blue}{h_i'(x) \cdot \partial_i \log p_w(x)} + \frac{1}{4} h_i'(x) \partial_i \log h_i(x) + \textcolor{blue}{h_i(x) \partial_i^2 \log p_w(x)} + \frac{1}{2} h_i(x) \partial_i^2 \log h_i(x).
		\end{aligned}
	\end{equation}	
	
	The blue terms are the same that appear in the CorrSM explicit formulation (Eq.~\ref{Eq:generalized_Fish-div_nonnegative_partial_integr}), while all other terms are constants with respect to $ p_w $. 
	
	We have shown therefore that CorrSM and TranSM are equivalent in both the explicit and implicit formulation if the transformation is applied independently on the elements of $ x $. Therefore, the two approaches are completely equivalent when it comes to minimizing them.

	\subsection{Specific formulation of TranSM}\label{app:transformations}

	We discuss here the transformations we apply in this work in TranSM; specifically, we only consider the case in which the support for the multivariate $ x $ is defined by an intersection of element-wise inequalities, i.e. $ x \in \bigotimes_{i=1}^d (a_i,b_i) $, where $ a_i, b_i $ can take on the values $ \pm \infty $ as well. In this case, then, a transformation can be applied independently on each element of $ x $. We consider here the following transformations (which are also used in the \texttt{Stan} package \citealp{carpenter2017stan}):

	\begin{itemize}
		\item When $ X \in [0, \infty)^d $, the transformation we use is $ y_i = \log (x_i) \in \R^d$. This corresponds to diagonal Jacobian with elements $ (J_{t}(x))^{-1}_{ii} = x_i $, so that the above expression becomes the same as the original Fisher divergence for non-negative random variables discussed in Eq.~\eqref{Eq:Fisher_div_non_neg}. 
		
		\item More generally, if $ x_i \in [a_i, +\infty) $  for $ |a_i| < \infty $, we can transform the data as $ y_i = \log (x_i - a_i) \in \R $, while if $ x_i \in (-\infty, b_i] $  for $ |b_i| < \infty $, we simply reverse the transformation: $ y_i = \log (b_i - x_i) \in \R $. These correspond to $ (J_{t}(x))^{-1}_{ii} = x_i - a_i$ and $ (J_{t}(x))^{-1}_{ii} = b_i - x_i$.

		\item Finally, if $ x_i \in (a_i,b_i) $ for $ |a_i|, |b_i| < \infty $, we can use the transformation defined as: $ y_i = t(x_i) = \logit \left(\frac{x_i - a_i}{b_i - a_i} \right) $ with inverse transformation $x_i = t^{-1} (y_i) = a + (b-a) \frac{e^{y_i}}{e^{y_i} + 1}$. This corresponds to $ (J_{t}(x))^{-1}_{ii} = \frac{(x_i - a_i)(b_i - x_i)}{b_i-a_i}$.

	\end{itemize}

	\subsection{Score matching for distributions with more general domain}\label{app:SM_extensions}
	
	As highlighted in the main text, across this work we are concerned with applying score matching to distributions whose support is defined by independent constraints on the different coordinates, as for instance $ \mathcal X = \bigotimes_{i=1}^d (a_i,b_i) $. That is arguably the most common case in the literature. However, there have been some recent works which applied SM to more general cases. For instance, \cite{mardia2016score} devised a way to apply it to a directional distribution defined on an oriented Riemannian manifold (for instance, a sphere). It is interesting how their derivation of the explicit form from the implicit one relies on the classical divergence theorem (also known as Stokes' theorem), of which the partial integration trick used in Theorem~\ref{Th:FD_explicit} is a specific case. \cite{liu2019estimating} introduced instead a way to apply score matching for a distribution on Euclidean space with complex truncation boundaries; their approach boils down to introducing a smart correction factor which goes to 0 at the boundary (thus allowing partial integration) but still being tractable; again, they need a more general version of Theorem~\ref{Th:FD_explicit} to obtain an objective for which the integrand does not depend on the data distribution.

	In Appendix~\ref{App:corrected_transformed_SM_equivalence}, we established that CorrSM and TranSM are equivalent if the transformation is applied independently on the elements of $ x $, which requires the domain to be defined by independent constraints on the coordinates. In the more general case of a irregular subset of Euclidean space (as in the setup of \citealp{liu2019estimating}), it is not clear whether it is always possible to associate a correction factor to a transformation which maps the space to $ \R^d $. That seems to be plausible if the domain satisfies some regularity conditions which may be related to convexity (for instance think of a triangle in $ \R^2 $, which can be easily stretched to cover the full space). We are not aware however of any work investigating this.

	\subsection{Score matching with exponential family}\label{app:sm_exp_fam}
	
	We consider here the exponential family:  
	$$p_w(x|\theta) = \exp (\eta_w(\theta)^T f_w(x))/Z_w(\theta ),$$
	and want to find the value of $w$ minimizing either $D_F^E(p_0\|p_w)$ or $D_{FS}^E(p_0\|p_w)$, which are defined in Eq.~\eqref{Eq:exp_fisher_div_exp}.

	Under the conditions discussed in Section~\ref{sec:exp_fish_div_likelihood}, if $\pi(\theta) > 0\ \forall\ \theta$, then $D_F^E(p_0\|p_w)=0$ and $D_{FS}^E(p_0\|p_w)=0 \iff p_w(x|\theta)= p_0(x|\theta) $ $ \pi(\theta) $-almost everywhere. In this case, if $ f_w $ and $ \eta_w $ satisfy the conditions required for the theorems mentioned in Appendix~\ref{app:identifiability} to hold, then $ f_w $ and $ \eta_w $ are respectively sufficient statistics and natural parameters of $ p_0 $. 
	
	In order to find the value of the empirical estimate of the explicit for of both $D_F^E(p_0\|p_w)$ and $D_{FS}^E(p_0\|p_w)$, we insert the definition of the exponential family with in Eq.~\ref{Eq:FD_expected_explicit_MC}, which leads to:
	\begin{equation}
		\begin{aligned}
			\hat J(w) &=\frac{1}{N} \sum_{j=1}^N \left[ \sum_{i=1}^d \left( \frac{1}{2} \left(\eta_w(\theta^{(j)})^T \frac{\partial}{\partial x_i} f_w(x^{(j)})\right)^2 + \eta_w(\theta^{(j)})^T  \frac{\partial^2}{\partial x_i^2} f_w(x) \right) \right] , \\
			\hat J_S(w) &=\frac{1}{NM} \sum_{j=1}^N \sum_{k=1}^M \left[ v^{(j,k),T} \Hessian_x (\eta_w(\theta^{(j)})^T f_w(x^{(j)}))  v^{(j,k)} +  \frac{1}{2}\sum_{i=1}^d \left(\eta_w(\theta^{(j)})^T \frac{\partial}{\partial x_i} f_w(x^{(j)})\right)^2 \right].
		\end{aligned}
	\end{equation}
	
	Note that the objective does not change if you set $f_w(x) $ to $c + f_w(x)$, for a constant vector $ c $; in fact, this constant gets absorbed into the normalizing constant in $p_w(x|\theta)$. 
	
	Similarly, $ \eta_w(\theta)^T f_w(x) = (1/c \cdot \eta_w(\theta))^T (c \cdot f_w(x) ) $ for some constant $c\neq0$. Therefore, statistics and corresponding parameters are only defined up to a scale with respect to one another; if you use two Neural Networks to learn both of them, different network initializations may lead to different learned statistics and natural parameters, but their product should be fixed (up to translation of $ f_w(x) $).

	However, this degeneracy may make training the approximate likelihood $ p_w $ with the score matching approach harder. In order to improve training, we usually add a Batch Normalization layer on top of the $ \eta_w $ network. Basically, Batch Normalization fixes the scale of the output of $ \eta_w $ over a training batch, therefore removing this additional degree of freedom and making training easier. We discuss in more detail this in Appendix~\ref{app:batch_norm}.

	\section{Scoring Rules}\label{app:SRs}
	
	A Scoring Rule (SR) $ S $ \citep{dawid2014theory, gneiting2007strictly} is a function of a probability distribution over $ \X $ and of an observation in $ \X $. In the framework of probabilistic forecasting, $ S(P, \dobs) $ represents the \textit{penalty} which you incur when stating a forecast $ P $ for an observation $ \dobs $.\footnote{Some authors \citep{gneiting2007strictly} use the convention of $ S(P,\dobs) $ representing a \textit{reward} rather than a {penalty}, which is equivalent up to change of sign.}
	
	If the observation $\dobs$ is a realization of a random variable $ \Dobs $ with distribution $ Q $, the expected Scoring Rule can be defined as: 
	\begin{equation}
		S(P,Q) := \E_{\Dobs \sim Q} S(P, \Dobs),
	\end{equation}
	where we overload notation in the second argument of $ S $.
	The Scoring Rule $ S $ is said to be \textit{proper} relative to a set of distributions $ \mathcal{P}(\X) $ over $ \X $ if $$ S(Q, Q) \le S(P,Q) \ \forall \ P,Q \in \mathcal{P}(\X),$$ i.e., if the expected Scoring Rule is minimized in $ P $ when $ P=Q $. Moreover, $ S $ is \textit{strictly proper} relative to $ \mathcal{P}(\X) $ if $ P = Q $ is the unique minimum: $$ S(Q,Q) < S(P,Q)  \ \forall \ P, Q \in \mathcal{P}(\X)  \text{ s.t. }  P\neq Q;$$ 
	when minimizing an expected strictly proper Scoring Rule, a forecaster would provide their true belief \citep{gneiting2007strictly}.

	By following \cite{dawid2014theory}, we define the divergence related to a proper Scoring Rule as $ D(P,Q) := S(P,Q) - S(Q,Q) \ge 0 $. Notice that $ P=Q \implies D(P,Q) = 0$, but there may be $ P\neq Q $ such that $ D(P,Q)=0 $. However, if $ S $ is strictly proper, $ D(P,Q) = 0 \iff P=Q $, which is the commonly used condition to define a statistical divergence (as for instance the Kullback-Leibler, or KL, divergence).
	Therefore, each strictly proper Scoring Rule corresponds to a statistical divergence between probability distributions

	In the following, we detail the two Scoring Rules which we consider in the main text (Section~\ref{sec:Lorenz}).
	
	\paragraph{Energy score}
	The energy score is given by: 
	\begin{equation}
		\SE^{(\beta)}(P, \dobs) = 2 \cdot \E \left[\| \Dsim - \dobs\|_2^\beta\right] - \E\left[\|\Dsim- \Dsim'\|_2^\beta\right] ,\quad  \Dsim \independent  \Dsim' \sim P, 
	\end{equation}
	where $ \beta \in (0,2) $ and $ \independent $ denotes independence between random variables.  
	This is a strictly proper Scoring Rule for the class $ \mathcal{P}_\beta (\X)$ of probability measures $ P $ such that $ \E_{X\sim P}\|\Dsim\|^\beta < \infty $ \citep{gneiting2007strictly}. The related divergence is the square of the energy distance, which is a metric between probability distributions (\citealt{rizzo2016energy})\footnote{The probabilistic forecasting literature \citep{gneiting2007strictly} use a different convention of the energy score and the subsequent kernel score, which amounts to multiplying our definitions by $ 1/2 $. We follow here the convention used in the statistical inference literature \citep{rizzo2016energy, cherief2020mmd, nguyen2020approximate}}:
	
	\begin{equation}
		\DE(P,Q) = 2\cdot \E \left[\| \Dsim - \Dobs \|_2^\beta \right]- \E\left[\|\Dsim- \Dsim'\|_2^\beta \right]  - \E \left[\|\Dobs- \Dobs'\|_2^\beta \right], 
	\end{equation}
	for $ \Dsim\independent \Dsim' \sim P $ and $ \Dobs \independent \Dobs' \sim Q$. 
	
	In our case of interest (Section~\ref{sec:Lorenz}, Appendix~\ref{app:Lorenz_validation_results}), we are unable to evaluate exactly $ \SE^{(\beta)} $ as we do not have a closed form for $ P $. Thus, we obtain samples $ \{\dsim_j\}_{j=1}^m$, $ \dsim_j\sim P $, and unbiasedly estimate the Energy Score with:
	\begin{equation}
		\hat S_{\text{E}}^{(\beta)}(\{\dsim_j\}_{j=1}^m, \dobs) =\frac{2}{m} \sum_{j=1}^m \left\| \dsim_j - \dobs\right\|_2^\beta - \frac{1}{m(m-1)}\sumjk \left\|\dsim_j-\dsim_k\right\|_2^\beta.
	\end{equation}
	In the main text (Section~\ref{sec:Lorenz}), we use $ \beta=1$, in which case we simplify notation $ S_{\text{E}}^{(1)} = \SE $.
	
	\paragraph{Kernel score}
	For a positive definite kernel $ k(\cdot, \cdot) $ (see \citealt{gretton2012kernel}), the kernel Scoring Rule for $ k $ is defined as \citep{gneiting2007strictly}: 
	\begin{equation}
		S_k(P, \dobs) = \E[k(\Dsim,\Dsim')] - 2\cdot\E [k(\Dsim, \dobs)],\quad  \Dsim \independent  \Dsim' \sim P. 
	\end{equation}
	The corresponding divergence is the squared Maximum Mean Discrepancy (MMD, \citealp{gretton2012kernel}) relative to the kernel $ k $:
	\begin{equation}
		D_k(P, Q) = \E [k(\Dsim, \Dsim')] + \E [k(\Dobs, \Dobs')]- 2\cdot \E [k(\Dsim,\Dobs)],
	\end{equation}
	for $ \Dsim \independent\Dsim' \sim P $ and $ \Dobs\independent \Dobs' \sim Q$. 
	
	The Kernel Score is proper for the class of probability distributions for which $ \E[k(\Dsim,\Dsim')] $ is finite (by Theorem 4 in \citealp{gneiting2007strictly}). Additionally, it is strictly proper under conditions which ensure that the MMD is a metric for probability distributions on $ \X $ (see for instance \citealp{gretton2012kernel}). These conditions are satisfied, among others, by the Gaussian kernel (which we use in this work):
	\begin{equation}\label{Eq:gau_k}
		k(x, y)=\exp \left(-\frac{\|x-y\|_{2}^{2}}{2 \gamma^{2}}\right);
	\end{equation}
	there, $ \gamma $ is a scalar bandwidth, which is tuned as described in Appendix~\ref{app:tuning_gamma}. 	As for the Energy Score, when the exact form of $ P $ is inaccessible and therefore $ S_k $ is impossible to be evaluated exactly, we use samples $ \{\dsim_j\}_{j=1}^m$, $ \dsim_j\sim P $, and unbiasedly estimate the Kernel Score $ S_k(P,\dobs) $ with:
	\begin{equation}
		\hat S_k(\{\dsim_j\}_{j=1}^m, \dobs) = \frac{1}{m(m-1)}\sumjk  k(\dsim_j,\dsim_k )-\frac{2}{m} \sum_{j=1}^m k(\dsim_j,\dobs).
	\end{equation}

	\section{Computational practicalities}

	\subsection{Computational cost of SM and SSM}\label{app:SM_computational_complexity}
	
	For SM, as discussed in Section~\ref{sec:density_est}, exploiting automatic differentiation libraries to compute the second derivatives of the log density requires $ d $ times more backward derivative computations with respect to the first derivatives. In fact, automatic differentiation libraries are able to compute derivatives of a scalar with respect to several variables at once. One single call is therefore sufficient to obtain $ \nabla_{x} \log p_w(x) $. However, $ d $ additional calls are required to obtain the second derivatives $ \frac{\partial^2}{\partial x_i^2} \log p_w(x),\ i=1, \ldots, d $, which are the diagonal elements of the Hessian matrix of $ \log p_w(x) $; each additional call computes the gradient of $ \frac{\partial}{\partial x_k} \log p_w(x) $ with respect to all components of $ x $, for some $ k\in [1,2,\ldots,d] $. Algorithm 2 in \cite{song2019sliced} gives a pseudocode implementation of this approach. Computational improvement can be obtained by implementing custom code which performs the gradient computation in the forward pass (i.e., along the computation of the neural network output for a given input $ x $). This avoids repeating some computations multiple times, which is done when performing the backward step repeatedly; however, the implementation is tricky and needs custom code for each different neural network type (we discuss how it can be done for a fully connected neural network in Appendix~\ref{app:forward_der_comp}). Additionally, the computational speed-up achievable in this way is limited with respect to what is offered by, for instance, SSM.

	SSM instead requires only two backward propagation steps independently on the input size of the network. This is possible by exploiting the vector-Hessian product structure and computing the linear products with $ v $ (which is independent on the input $  x$, thus can be swapped with gradient computation) after the gradient has been computed once, so that you only ever require the gradient of a scalar quantity. See Algorithm 1 in \cite{song2019sliced} for a precise description of how that can be done.

	\subsubsection{Forward computation of derivatives}\label{app:forward_der_comp}

	In the standard score matching approach, the first and second derivatives of Neural Network outputs with respect to the inputs are required, namely: 
	\begin{equation}\label{}
		\frac{\partial f_w(x)}{\partial x_i} \qquad \text{ and } \qquad \frac{\partial^2 f_w(x)}{\partial x_i^2}. 
	\end{equation}
	Neural network training is possible thanks to the use of autodifferentiation libraries, which allow to keep trace of the different operations and to automatically compute the gradients required for training. These libraries can be used to obtain the above derivatives. 
	
	However, as discussed previously, it is more efficient to compute the required derivatives during the forward pass of training (i.e. when the output of the Neural Network for a given input is computed). This requires additional coding effort specific to the chosen Neural Network architecture. For instance, \cite{avrutskiy2017backpropagation} provide formulas to compute derivatives of any order recursively for fully connected Neural Networks. For more complex NN architectures, this approach is not viable as the coding effort becomes substantial. Additionally, with large $ d $ the improvement obtained by forward derivatives computation is much smaller than what offered, for instance, by SSM.
	
	In the current work, the forward derivatives approach has been implemented for fully connected Neural Networks and Partially Exchangeable Networks (Appendix~\ref{app:PENs}). The computational advantage is evident for the computation of the second derivatives, as shown below. This approach allowed us to apply SM to relatively high dimensional data spaces (up to 100 dimensional for the MA(2) and AR(2) case), but not to the larger Lorenz96 model.

	\paragraph{Forward computation of derivatives of fully connected NNs}
	
	We revisit here the work in \cite{avrutskiy2017backpropagation}. Let us consider a fully connected Neural Networks with $ L $ layers, where the weights and biases of the $ l $-th layer are denoted by $ W_l $ and $ b_l $, $ l=1,\ldots,L $. Let us denote by $ x $ the input of the Neural Network, and by $ z_l $ the hidden values after the $ l $-th layer, before the activation function (denoted by $ \sigma $) is applied. Specifically, the Neural Network hidden values are determined by: 
	\begin{equation}\label{}
		z_1 = W_{1} \cdot x + b_1, \quad z_l = W_l \cdot \sigma(z_{l-1}) + b_l, \quad l=2,\ldots,L,
	\end{equation}
	where the activation function is applied element wise. Similar recursive expressions can be given for the first derivatives:
	
	\begin{equation}
		\frac{\partial z_{1}}{\partial x_i} = (W_1)_{i,\cdot}, \quad \frac{\partial z_l}{\partial x_i}  = W_l \cdot \left[\sigma'(z_{l-1}) \odot \frac{\partial z_{l-1}}{\partial x_i}\right], \quad l=2,\ldots,L,
	\end{equation}
	where $ (W_1)_{i,\cdot} $ denotes the $ i $-th column of $ W_1 $, and $ \odot $ denotes element wise multiplication. Expressions for the second derivatives are instead: 
	\begin{equation}
		\frac{\partial^2 z_{1}}{\partial x_i^2} = \mathbf{0}, \quad \frac{\partial^2 z_l}{\partial x_i^2}  = W_l \cdot \left[\sigma''(z_{l-1}) \odot \left(\frac{\partial z_{l-1}}{\partial x_i}\right)^2 + \sigma'(z_{l-1}) \odot \frac{\partial^2 z_{l-1}}{\partial x_i^2} \right], \quad l=2,\ldots,L,
	\end{equation}
	where here $ \mathbf{0} $ denotes a 0 vector with the same size as $ z_1 $. 
	
	When instead we are interested in cross terms of the form $ \frac{\partial z_l}{\partial x_i \partial x_j} $, we can apply the following: 
	\begin{equation}
		\frac{\partial z_1}{\partial x_i \partial x_j} = \mathbf{0}, \ \frac{\partial z_l}{\partial x_i \partial x_j}   = W_l \cdot \left[\sigma''(z_{l-1}) \odot \frac{\partial z_{l-1}}{\partial x_i}\odot \frac{\partial z_{l-1}}{\partial x_j} + \sigma'(z_{l-1}) \odot \frac{\partial^2 z_{l-1}}{\partial x_i \partial x_j} \right], \ l=2,\ldots,L.
	\end{equation}
	
	With respect to naively using auto-differentiation libraries, computing the derivatives in the forward step is much cheaper; specifically, for fully connected NNs, we found empirically the first to scale quadratically with the output size, while the second scales linearly (Figure~\ref{fig:NN_derivatives_time}).
	
	\begin{figure}[!tb]
		\centering
		\begin{subfigure}{0.49\textwidth}
			\centering
			\includegraphics[width=1\linewidth]{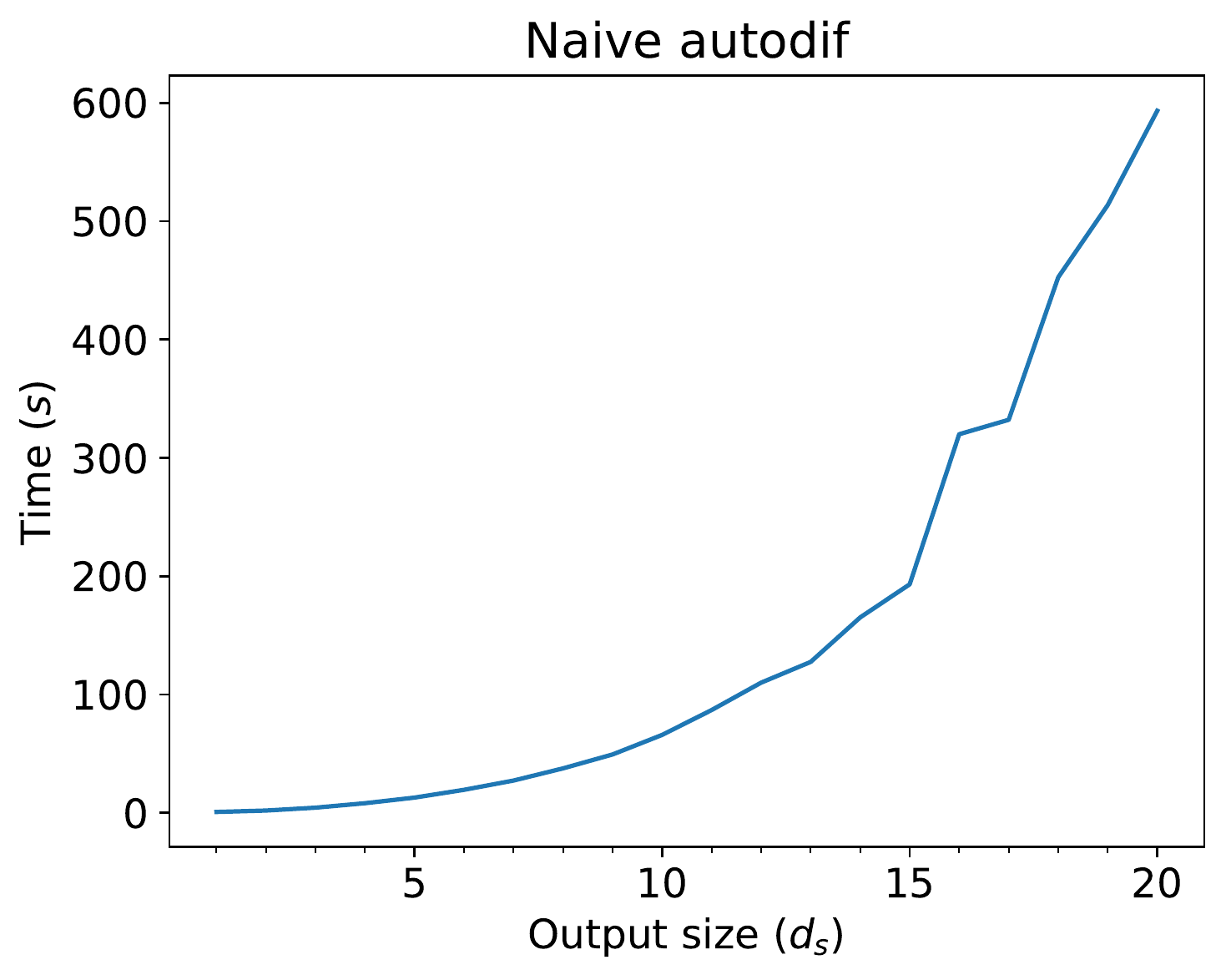}
			
		\end{subfigure}~
		\begin{subfigure}{0.49\textwidth}
			\centering
			\includegraphics[width=1\linewidth]{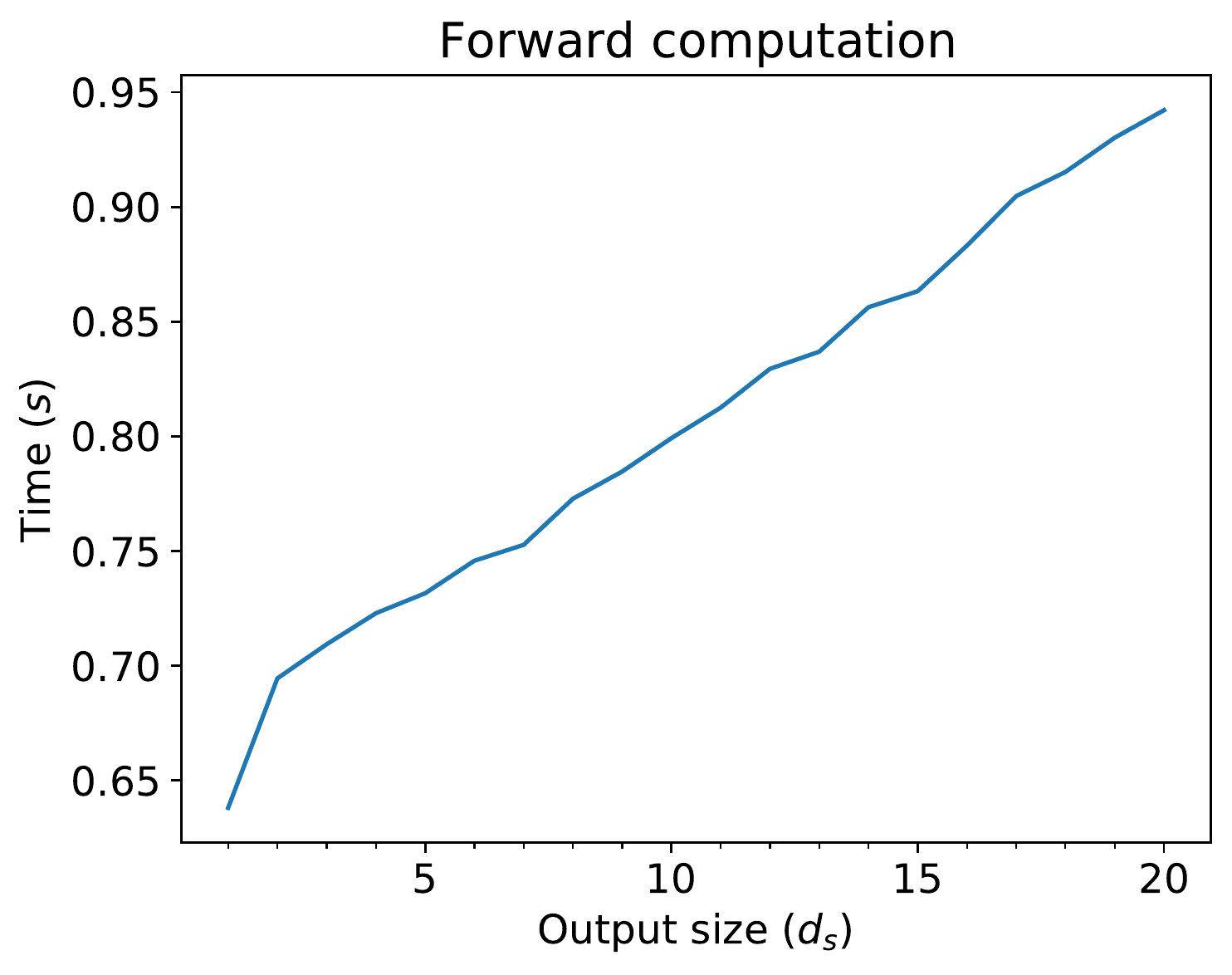}
			
		\end{subfigure}
		\caption{\textbf{Computational complexity for second order derivatives of Neural Network outputs} versus size of the output; we compare the forward computation of derivatives with naive autodifferentiation. Here, input size is fixed to 100, and one single batch of 5000 samples is fed to the network. Computations are done on a CPU machine with 8 cores.}
		\label{fig:NN_derivatives_time}
	\end{figure}

	\subsection{Batch normalization}\label{app:batch_norm}
	
	The exponential family form used as $ p_w $ depends on $ \eta_w(\theta)^T f_w(x) $; if you multiply $ f_w(x) $ with an invertible matrix $ A $ and multiply $ \eta $ with $ (A^T)^{-1} $, the product does not change. In order to remove this additional degeneracy, we use a Batch Normalization (BatchNorm) layer \citep{ioffe2015batch} to normalize the outputs of $ \eta_w(\theta) $. Essentially, BatchNorm rescales the different features to have always the same range across different batches. More specifically, BatchNorm performs the following operation on $ y $:
	\begin{equation}\label{}
		\tilde y = \frac{y -\E[Y]}{\sqrt{\Var[Y] + \epsilon}} * \gamma + b,
	\end{equation}
	where $ \epsilon $ is a small constant used for numerical stability, and $ \gamma $ and $ b $ are two (optionally learnable) sets of constants with dimension equal to $ y $ (set to 1 and 0 respectively by default). During training, the expectation $ \E $ and variance $ \Var $ are estimated over the batch of training samples fed to the Neural Network. In testing mode, BatchNorm rescales the features by using a fixed estimate of $ \E $ and $ \Var $; this estimate is obtained as a running mean over the batches; let $ s(Y) $ represent either the population expectation or variance. When the $ t $-th batch is fed through the network in training mode, the running mean estimate is updated as follows: 
	$$ \hat s_{new}(Y) = (1- p) \cdot \hat s_{old}(Y) + p \cdot s_t(Y), $$
	where $ s_t $ represent the estimate on the current batch, $ \hat s_{old} $ and $ \hat s_{new} $ respectively the old and updated running mean and $ p $ is a \textit{momentum} constant which determines how quickly the running mean changes (the smaller it is, the slower the change of the estimate). 
	
	For instance, if the training of a Neural Network is quite unstable, the running estimate may not be a correct estimate of $ \E $ and $ \Var $, so that the test loss across training epochs may be very spiky, until network training stabilized. To solve this issue, you can either increase $ p $ (so that the running estimate ``forgets'' past information faster) or, alternatively, do a forward pass of the training data set (without computing gradients) before evaluating the test loss, so that the running estimate is more precisely estimated. 
	
	Across this work, we apply BatchNorm to $ y = \eta_w(\theta) $. Moreover, we do not learn the translation parameters $ \gamma, b $, and rather we fix them to be a vector of $ 1$s and $ 0 $s.

	\subsection{Partially Exchangeable Networks}\label{app:PENs}
	
	Partially Exchangeable Networks (PENs) were introduced in \cite{wiqvist2019partially} as a Neural Network architecture that satisfies the probabilistic invariance of Markovian models.
	
	Specifically, let consider the case in which $ x = (x_1, \ldots, x_d)$ comes from a Markovian model of order $ r $, i.e.:
	\begin{equation}\label{}
		\begin{aligned}
			p(x|\theta) &= p(x_1|\theta) p(x_2|x_1;\theta) p(x_3|x_2, x_1;\theta)\prod_{i=4}^d p(x_i|x_1, \ldots, x_{i-1};\theta) \\&= p(x_1|\theta) \prod_{i=2}^d p(x_i|x_{i-r}, \ldots, x_{i-1};\theta).
		\end{aligned}
	\end{equation}
	
	This definition is an extension of the standard Markovianity assumption (of order 1), which corresponds to $ p(x|\theta) = p(x_1|\theta)\prod_{i=2}^{d} p (x_i|x_{i-1};\theta) $, and it means that each element of $ x $ only depends on the last $ r $ elements. When $ r=0 $, this corresponds to i.i.d. assumption.
	
	When a model is $ r $-Markovian, the probability density of an observation $ x $ is invariant to \textit{r}-\textbf{block-switch transformation}, which is defined as follows: 
	
	\begin{definition}\textbf{\textit{r}-\textbf{block-switch transformation} \citep{wiqvist2019partially}}
		Let $ \data_{i:j} $ and $ \data_{k:l} $ be two non-overlapping blocks with $\data_{i:(i+r)} = \data_{k:(k+r)}$ and $\data_{(j-r):j} = \data_{(l-r):l}$. Then, denoting $b = (i,j,k,l) $, with $j-i \ge r$ and $l-k \ge r$: 
		
		\begin{equation*}
			\begin{aligned}
				\data &=   \data_{1:i-1}\ \colorbox{green!50}{$\data_{i:j}$}\tikzmark{a}{} \ \data_{(j+1):(k-1)}\ \tikzmark{b}{\colorbox{red!50}{$ \data_{k:l} $}} \ \data_{(l+1):M}\\	
				T_b^{(r)}(\data) &=   \data_{1:i-1}\ \colorbox{red!50}{$ \data_{k:l} $}\tikzmark{b2} \ \data_{(j+1):(k-1)}\ \tikzmark{a2}{\colorbox{green!50}{$ \data_{i:j} $}}\ \data_{(l+1):M}.
			\end{aligned}
			\begin{tikzpicture}[overlay,remember picture,out=315,in=225,distance=0.4cm]
			\draw[<->,red,shorten >=3pt,shorten <=3pt, ultra thick, opacity=0.5] (b.center) -- (b2.center);
			\draw[<->,green,shorten >=3pt,shorten <=3pt, ultra thick, opacity=0.5] (a.center) -- (a2.center);
			\end{tikzpicture}	
		\end{equation*}
		
		Otherwise, if $\data_{i:(i+r)}\neq \data_{k:(k+d)}$ or $\data_{(j-r):j} \neq \data_{(l-r):l}$, then $T_b^{(r)}(\data) = \data$.
	\end{definition}
	
	For instance, let us consider the case in which the data space is $ \mathcal{X} = \{0,1\ldots 9\}^{16}$, and $ r=2 $. An example of the above transformation is the following:
	
	\begin{align*}
		\data &= 1\ \boxed{\textcolor{red}{7}\ \textcolor{red}{2}\ 3\ 6\ 4\ \textcolor{green}{5}\ \textcolor{green}{8}}\tikzmark{a}{}\ 1\ 7\ \tikzmark{b}\boxed{\textcolor{red}{7}\ \textcolor{red}{2}\ 9\ \textcolor{green}{5}\ \textcolor{green}{8}}\ 1   \\		
		T_{(2,8,11,15)}^{(2)}(\data) &= 1\ \boxed{\textcolor{red}{7}\ \textcolor{red}{2}\ 9\ \textcolor{green}{5}\ \textcolor{green}{8}}\tikzmark{b2}{}\ 1\ 7\ \tikzmark{a2} \boxed{\textcolor{red}{7}\ \textcolor{red}{2}\ 3\ 6\ 4\ \textcolor{green}{5}\ \textcolor{green}{8}}\ 1.
		\begin{tikzpicture}[overlay,remember picture,out=315,in=225,distance=0.4cm]
		\draw[<->,black,shorten >=1pt,shorten <=1pt, ultra thick, opacity=0.75] (b.center) -- (b2.center);
		\draw[<->,black,shorten >=1pt,shorten <=1pt, ultra thick, opacity=0.75] (a.center) -- (a2.center);
		\end{tikzpicture}	
	\end{align*}

	The authors of \cite{wiqvist2019partially} provide a simple Neural Network structure which is invariant to the $ r $-block-switch transformation, motivated by the following theorem:
	\begin{theorem}[\citep{wiqvist2019partially}]
		Let $f: \mathcal{X}^M \to A$ \textit{r}-block-switch invariant. If $\mathcal{X}$ is countable, $\exists\ \phi: \mathcal{X}^{r+1} \to \R$ and $\rho: \mathcal{X}^r \times \R \to A$ such that: 
		\begin{equation}\label{Eq:PEN}
			\forall \data \in \mathcal X^M,\ f(\data) = \rho \left( \data_{1:r},  \sum_{i=1}^{M-r} \phi(\data_{i:(i+r)} ) \right).
		\end{equation}
		
	\end{theorem}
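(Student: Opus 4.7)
The plan is to mimic a Deep Sets style sum-of-features construction, exploiting the countability of $\mathcal{X}$ to encode the multiset of overlapping $(r+1)$-grams by a single real number, and then to use $r$-block-switch invariance to define $\rho$ unambiguously.

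\textbf{Step 1 (Scalar encoding of the $(r+1)$-gram multiset).} Since $\mathcal{X}$ is countable, so is $\mathcal{X}^{r+1}$, which I enumerate as $\{y^{(1)}, y^{(2)}, \ldots\}$. Fix an integer $N > M-r$ and set $\phi(y^{(k)}) := N^{-k}$. For $x \in \mathcal{X}^M$, letting $m_k(x) := \#\{i \in \{1,\ldots,M-r\} : x_{i:(i+r)} = y^{(k)}\}$, I obtain
\begin{equation*}
\sum_{i=1}^{M-r} \phi(x_{i:(i+r)}) \;=\; \sum_{k} m_k(x)\, N^{-k},
\end{equation*}
which is a legitimate base-$N$ expansion since $0 \le m_k(x) \le M-r < N$. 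Hence the scalar uniquely determines the counts $(m_k(x))_k$, equivalently the multiset of $(r+1)$-grams appearing in $x$.

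\textbf{Step 2 (Multiset and initial block determine the block-switch orbit).} The crucial combinatorial claim is: if $x, x' \in \mathcal{X}^M$ satisfy $x_{1:r} = x'_{1:r}$ and have identical multisets of $(r+1)$-grams, then $x'$ is reachable from $x$ via a finite chain of $r$-block-switches. View the tuple $(x_{i:(i+r)})_{i=1}^{M-r}$ as an Eulerian trail in the directed multigraph whose vertices are elements of $\mathcal{X}^r$ and which has a labelled edge from $u$ to $v$ for every $(r+1)$-gram $y$ with $y_{1:r}=u$ and $y_{2:r+1}=v$. Two Eulerian trails with a common starting vertex and the same multiset of edges differ by a sequence of rotations at revisited vertices, and each such rotation is exactly an $r$-block-switch: the conditions $x_{i:(i+r)}=x_{k:(k+r)}$ and $x_{(j-r):j}=x_{(l-r):l}$ assert that the two swapped sub-trails begin and end with the same labelled edges, i.e. with matching $r$-gram endpoints. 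Combined with the hypothesis that $f$ is $r$-block-switch invariant, this yields $f(x) = f(x')$.

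\textbf{Step 3 (Definition of $\rho$).} For each pair $(u,s) \in \mathcal{X}^r \times \R$ arising as $(x_{1:r}, \sum_i \phi(x_{i:(i+r)}))$ for some $x \in \mathcal{X}^M$, set $\rho(u,s) := f(x)$; by Steps 1 and 2 this is independent of the chosen representative $x$. Extend $\rho$ arbitrarily on the remaining pairs. By construction, the identity in equation~\eqref{Eq:PEN} holds for every $x \in \mathcal{X}^M$.

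\textbf{Main obstacle.} Steps 1 and 3 are essentially bookkeeping. The technical core is Step 2, the Eulerian-trail equivalence: one must prove rigorously that within the set of trails in the de Bruijn-style multigraph with prescribed starting vertex and edge multiset, the ``block-switch'' rotations generate the entire set. A clean way is to induct on $M$ by locating the first index where $x$ and $x'$ diverge, using the equal-multiset hypothesis to find a later occurrence in $x$ of the edge that $x'$ uses at that point, and then swapping the two intermediate sub-trails; checking that this swap satisfies the boundary conditions of an $r$-block-switch is precisely the content of the matching-endpoint requirement in the definition.
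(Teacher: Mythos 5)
First, a point of reference: the paper does not prove this statement — it is quoted verbatim from \cite{wiqvist2019partially} with no proof supplied in the appendix — so there is no in-paper argument to compare yours against, and I am assessing your proposal on its own terms. Your architecture is the standard and correct one: use countability of $\mathcal{X}^{r+1}$ to build a Deep-Sets-style $\phi$ whose sum injectively encodes the multiset of $(r+1)$-grams (Step 1 is sound — only finitely many base-$N$ digits are nonzero, so the expansion is unambiguous), establish that the pair $(\data_{1:r}, \text{multiset of } (r+1)\text{-grams})$ is a complete invariant of the block-switch equivalence relation (Step 2), and read off $\rho$ on the image (Step 3, immediate given the first two). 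Steps 1 and 3 are fine.

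The gap is Step 2, which you correctly flag as the technical core but do not actually establish, and the mechanism you invoke is not quite the right one. A block-switch is a swap of two edge-disjoint sub-walks of the de Bruijn trail that share the same start vertex \emph{and} the same end vertex; it is not in general a ``rotation at a revisited vertex''. In the first-divergence induction, after locating the later occurrence $q$ in $\data$ of the edge $e'$ that $\data'$ uses at the divergence point $p$, the natural move — swap the closed sub-walk of $\data$ from $p$ to $q$ with a closed sub-walk based at the same vertex occurring after $q$ — can fail, because the trail need not return to that vertex after $q$. Concretely, for $r=1$ take $\data=(u,a,u,b,a,c)$ and $\data'=(u,b,a,u,a,c)$: same start, same multiset of $2$-grams, yet the trail never revisits $u$ after taking the edge $u\to b$, so no loop rotation at $u$ is available; the move that works is the swap of the sub-walks $(u,a)$ and $(u,b,a)$, which share start $u$ and end $a$. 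Proving that such a pair of matching sub-walks always exists at each inductive step is precisely the Diaconis--Freedman-type lemma underlying Markov exchangeability of order $r$; your write-up asserts it rather than proves it, so the theorem still rests on an unproven combinatorial claim that you must either argue in full or cite. A secondary caution: the conclusion is sensitive to the exact length and boundary-matching conventions in the definition of $T_b^{(r)}$ — if blocks are required to have length at least $r+1$ and to match on their first and last $r+1$ entries, as a literal reading of the paper's definition suggests, then short sequences such as $(u,a,u,b,u)$ for $r=1$ admit no nontrivial switch at all even though $(u,b,u,a,u)$ has the same start and $2$-gram multiset — so any rigorous Step 2 must first pin down the definition (the worked example in the paper indicates that only the first and last $r$ entries need to match).
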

	
	In practice, $\phi$ and $\rho$ are two independent Neural Networks (which we take to be fully connected in our case), giving rise to a PEN of order $ r $. 
	
	In \cite{wiqvist2019partially}, the authors show that the posterior mean of a Markovian variable of order $ r $ needs to be invariant to the $ r $-block-switch transformation. Therefore, this motivates using a PEN for learning a summary statistics as in the approach by \cite{fearnhead_constructing_2012}. Here, we use PEN for parametrizing the statistics in the approximating exponential family as well; this choice implies that the approximating family has the same Markovianity property as the true distribution, as we discuss in the following.
	
	\subsubsection{Results for the exponential family}		
	A deeper connection between \textit{r}-Markovian probability models and \textit{r}-block-switch transformation exists. We can in fact state the following result:
	
	\begin{lemma}
		A probability model $ p(\data|\theta) $ is \textit{r}-Markovian $ \iff $ the function $ \data \mapsto p(\data|\theta) $ is \textit{r}-block-switch invariant, i.e. $ p(\data| \parameter) = p(T_b^{(r)} (\data)| \parameter)  \ \forall\ T_b^{(r)} $.
	\end{lemma}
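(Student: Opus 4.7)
The plan is to treat the two directions of the iff separately. For the ($\Rightarrow$) direction, I would assume the Markov factorization
\[
p(x|\theta) = p(x_1|\theta) \prod_{m=2}^{d} p\bigl(x_m \mid x_{\max(1,m-r):(m-1)}; \theta\bigr)
\]
and compare it term-by-term with the analogous factorization of $p(T_b^{(r)}(x) \mid \theta)$ for a block-switch $T_b^{(r)}$ whose matching conditions $x_{i:(i+r)} = x_{k:(k+r)}$ and $x_{(j-r):j} = x_{(l-r):l}$ hold on the sequence $x$ (if they fail, $T_b^{(r)}(x) = x$ and there is nothing to prove). The bookkeeping partitions $\{1,\ldots,d\}$ into five regions: the prefix before block 1, the interior of swapped block 1 (positions at least $r+1$ from either block boundary), the middle between the blocks, the interior of swapped block 2, and the suffix after block 2. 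In the prefix, middle and suffix, neither the element nor its $r$-window is affected, so the factors are unchanged. In each block interior, both the current element and its full $r$-window lie inside the block, so the interior factors in the new block 1 position are simply a relabelling of the original block 2 interior factors (and vice versa). At the four boundaries, the matching of the first and last $r+1$ elements forces the $r$-windows straddling each boundary to be identical before and after the swap, so the boundary factors also coincide, and the two products agree term-by-term.

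For the ($\Leftarrow$) direction, the strategy is to show by induction on $m$ that $p(x_m \mid x_{<m}; \theta)$ depends on $x_{<m}$ only through $x_{(m-r):(m-1)}$, which immediately yields the Markov factorization. Concretely, given two prefixes $x_{<m}$ and $\tilde x_{<m}$ that agree on $x_{(m-r):(m-1)}$, I would extend both sequences by a common suffix long enough to host blocks of length at least $2r+3$ (the smallest length admitting a genuinely non-identity block-switch, since for shorter blocks the two matching conditions together pin down every entry), and then exhibit a composition of $r$-block-switches transforming one extended sequence into the other. Invariance along this chain collapses the joint densities, and marginalising over the shared suffix extracts the equality of the two conditionals.

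The main obstacle I expect is precisely this reverse direction: nontrivial block-switches only exist when the first- and last-$(r+1)$ matching conditions leave some interior positions of a block free, so the construction requires the working sequence to be long enough (roughly $d \ge 4r+6$) to accommodate the two required blocks and their common anchoring contexts. The forward direction, once the five-region decomposition is set up, is essentially a mechanical verification on the factorization.
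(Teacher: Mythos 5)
Your forward direction is a correct and properly detailed version of what the paper dismisses in one line (``straightforward \ldots by considering the decomposition''), and the five-region split is the right bookkeeping. One caveat, which you share with the paper: when the two swapped blocks have unequal lengths, the interior factors are relocated to different absolute positions $m$, so the term-by-term matching needs the conditional kernels $p(x_m \mid x_{(m-r):(m-1)};\theta)$ to be the same function of the pair (value, window) at all positions involved, i.e.\ time-homogeneity of the Markov model. Without it the claim already fails at $r=0$: the independent but non-identically-distributed density $\prod_m f_m(x_m)$ is $0$-Markovian, yet swapping the blocks $x_{1:3}=(a,u,b)$ and $x_{5:6}=(a,b)$ permutes the arguments fed to distinct $f_m$'s and changes the product.

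The genuine gap is the converse, and it is not one your construction (or any construction) can close, because the implication ``$r$-block-switch invariant $\Rightarrow$ $r$-Markovian'' is false as stated. Every $T_b^{(r)}$ rearranges the entries of $x$, so any exchangeable density --- for instance the mixture $\tfrac12\prod_{i}f(x_i)+\tfrac12\prod_{i}g(x_i)$ --- is invariant under all block switches for every $r$, yet its conditional $p(x_m\mid x_{1:(m-1)};\theta)$ depends on the entire past, so it is not $r$-Markovian for any $r<d-1$. Moreover, exactly as you observed, when $d$ is small relative to $r$ the matching conditions force every admissible $T_b^{(r)}$ to be the identity, so invariance is vacuous and carries no information. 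Your specific plan also fails on its own terms: $p(\cdot|\theta)$ is a density on the fixed space $\mathcal X^d$, so there is no ``extension by a common suffix'' on which $p$ is defined, and the induction on conditionals cannot get started. You should not read this as a defect of your attempt relative to the paper: the paper's own proof of this direction is circular --- it asserts that non-Markovianity yields an $x$ and a $b$ with $p(x|\theta)\neq p(T_b^{(r)}(x)|\theta)$, which is precisely the contrapositive of the statement being proved rather than a consequence of non-Markovianity. Only the forward implication (under homogeneity) is actually provable; it is the direction used in the subsequent lemma on $h$, although the Corollary does invoke the converse and inherits the problem.
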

	\begin{proof}
		The forward direction is straightforward and can be seen by considering the decomposition of an \textit{r}-Markovian model.
		
		The converse direction can be shown by contradiction; assume in fact that $ \data \mapsto p(\data|\theta) $ is \textit{r}-block-switch invariant but not Markovian. As it is not Markovian, $ \exists \data = (\data_i, \data_2, \ldots, \data_n) $ for which $\data_{i:(i+r)} = \data_{k:(k+r)}$ and $\data_{(j-r):j} = \data_{(l-r):l}$ such that, defining $ b=(i,j,k,l) $, $ p(\data| \parameter) \neq p(T_b^{(r)} (\data)| \parameter) $. This is however in contradiction with \textit{r}-block-switch invariance, which leads to our result.
	\end{proof}
	
	In the case where the model we consider has a sufficient statistic, we can write $ p(\data|\theta) = h(\data) g(t(\data)|\theta) $. We get therefore the following corollary, which can be seen by applying the above result: 
	
	\begin{corollary}
		Consider a distribution $ p(\data|\theta) = h(\data) g(t(\data)|\theta) $; if the function $ h(\data) $ and $ t(\data) $ are \textit{r}-block-switch invariant, then $ p(\data|\theta) $ is \textit{r}-Markovian.
	\end{corollary}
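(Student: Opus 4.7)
The plan is to derive the corollary as a direct consequence of the preceding lemma, which provides an equivalence between \textit{r}-Markovianity and \textit{r}-block-switch invariance of the density. Since the lemma is already established, the only work left is to show that the hypotheses on $h$ and $t$ imply \textit{r}-block-switch invariance of $p(\data|\theta)$; the Markovianity then follows automatically.

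First I would fix an arbitrary parameter value $\theta \in \Theta$, an arbitrary observation $\data \in \X^M$, and an arbitrary index tuple $b = (i,j,k,l)$ defining a valid \textit{r}-block-switch transformation $T_b^{(r)}$. By assumption, $h(T_b^{(r)}(\data)) = h(\data)$ and $t(T_b^{(r)}(\data)) = t(\data)$. Substituting into the factorization $p(\data|\theta) = h(\data)\, g(t(\data)|\theta)$ gives
\begin{equation}
p(T_b^{(r)}(\data)|\theta) = h(T_b^{(r)}(\data))\, g(t(T_b^{(r)}(\data))|\theta) = h(\data)\, g(t(\data)|\theta) = p(\data|\theta),
\end{equation}
so $\data \mapsto p(\data|\theta)$ is \textit{r}-block-switch invariant. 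Invoking the preceding lemma in the ``invariant $\Rightarrow$ Markovian'' direction concludes that $p(\cdot|\theta)$ is \textit{r}-Markovian; since $\theta$ was arbitrary, the claim holds for the whole model.

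There is essentially no obstacle here: the result is a one-line consequence of the factorization and the lemma, with the only subtlety being that one must not assume anything about $g$ itself (it need not be block-switch invariant as a function of its first argument in general; invariance is achieved only after composition with $t$). As long as the proof is careful to apply invariance \emph{after} composing $g$ with $t$, rather than asking $g$ to respect the transformation directly, the argument is immediate. Hence the corollary requires no more than a short paragraph in the main text.
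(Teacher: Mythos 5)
Your proof is correct and follows exactly the route the paper intends: show that the factorization $p(\data|\theta)=h(\data)\,g(t(\data)|\theta)$ is \textit{r}-block-switch invariant because both $h$ and $t$ are, then apply the ``invariant $\Rightarrow$ Markovian'' direction of the preceding lemma. The paper states the corollary ``can be seen by applying the above result'' without writing out the substitution, so your one-line computation is precisely the omitted argument, including the correct observation that invariance is needed only for the composition $g\circ t$ and not for $g$ itself.
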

	
	Without any further assumptions, this result is not enough to say that the sufficient statistics $ t(\data) $ for a Markovian model is \textit{r}-block-switch invariant; in fact, the choice $ t(x) = x $ always constitutes a sufficient statistic; moreover, in the decomponsition $ p(\data|\theta) = h(\data) g(t(\data)|\theta) $, it may be that the function $ t(x)  $ is not \textit{r}-block-switch invariant but $ g(t(x) |\theta) $ is, or otherwise that the product $ h(x) g(t(x) |\theta) $ is \textit{r}-block-switch invariant even if the individual terms are not. We can however get the following result:
	\begin{lemma}\label{lemma:r-d-b-s_h(x)}
		Consider a \textit{r}-Markovian distribution $ p(\data|\theta) = h(\data) g(t(\data)|\theta) $; if $ \data \mapsto t(\data) $ is not an injection mapping, then $x \mapsto  h(x) $ is \textit{r}-block-switch invariant.
	\end{lemma}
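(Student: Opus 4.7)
The plan is to first promote the $r$-Markovianity hypothesis into a pointwise functional identity by invoking the first half of the preceding lemma, and then leverage the non-injectivity of $t$ to pin down which of the two factors in the factorization absorbs the invariance. Concretely, the preceding lemma tells us that $x \mapsto p(x|\theta)$ is $r$-block-switch invariant for every $\theta \in \Theta$. Substituting the Fisher--Neyman factorization $p(x|\theta) = h(x)\, g(t(x)|\theta)$ on both sides yields the key identity
\begin{equation}
h(x)\, g(t(x)|\theta) \;=\; h(T_b^{(r)}(x))\, g(t(T_b^{(r)}(x))|\theta) \qquad \forall\, x,\, b,\, \theta.
\end{equation}

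Next, I would argue by contradiction. Suppose $h$ is not $r$-block-switch invariant, so there exist $x_0$ and a valid block index $b_0$ with $h(x_0) \neq h(T_{b_0}^{(r)}(x_0))$. Dividing the identity above at $(x_0, b_0)$ gives that the ratio $g(t(T_{b_0}^{(r)}(x_0))|\theta)\,/\,g(t(x_0)|\theta)$ equals the constant $h(x_0)/h(T_{b_0}^{(r)}(x_0)) \neq 1$, independently of $\theta$. In particular, $t(x_0) \neq t(T_{b_0}^{(r)}(x_0))$, since otherwise the ratio would be identically $1$. Now invoke the hypothesis that $t$ is not an injection: pick $x_1 \neq x_2$ with $t(x_1) = t(x_2)$. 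From the factorization applied at $x_1, x_2$ the ratio $p(x_1|\theta)/p(x_2|\theta) = h(x_1)/h(x_2)$ is also $\theta$-independent, and the $r$-block-switch invariance of $p$ lets us transport this ratio along any block-switch orbit.

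The final step is to combine these two $\theta$-independent ratios. The idea is that the existence of a non-trivial fiber of $t$ constrains $g$ enough that the only way the functional identity can force $h(x_0) \neq h(T_{b_0}^{(r)}(x_0))$ with $t(x_0) \neq t(T_{b_0}^{(r)}(x_0))$ is if $g(u|\theta)/g(v|\theta)$ is constant in $\theta$ for distinct $u,v$ in the range of $t$, which then conflicts with the freedom in the factorization induced by non-injectivity (one can always re-absorb a multiplicative function of $t(x)$ between $h$ and $g$, and the canonical choice $h(x) = p(x\mid T = t(x))$, $g(t|\theta) = p(T = t|\theta)$, which is $\theta$-independent in $h$ by sufficiency of $t$, then has to be invariant). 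Conclude that no such $(x_0, b_0)$ exists, so $h$ is $r$-block-switch invariant.

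The main obstacle will be the final step: rigorously turning the ``non-injectivity of $t$ forces invariance to land on $h$ rather than $g$'' intuition into a clean argument. The ambiguity comes from the non-uniqueness of the Fisher--Neyman factorization (one can shift any function of $t(x)$ between $h$ and $g$), so care is needed to either fix a canonical factorization up front and verify that $r$-block-switch invariance descends to its $h$-component, or to show that \emph{any} choice of $h$ must already be invariant once $t$ has a non-trivial fiber. I expect the cleanest route is to adopt the conditional/marginal factorization $h(x) = p(x\mid T = t(x))$, $g(t|\theta) = p(T=t|\theta)$, observe that sufficiency makes $h$ parameter-free, and then transfer $r$-block-switch invariance from $p(\cdot|\theta)$ to $h$ by integrating out $\theta$ against a non-degenerate prior, with non-injectivity of $t$ ensuring that the conditional density $h$ is a genuine non-trivial object rather than a point mass.
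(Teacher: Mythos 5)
You assemble the correct ingredients but stop one step short of a proof, and the step you yourself flag as ``the main obstacle'' is never carried out. Your proposal derives two $\theta$-independent ratios --- the $g$-ratio coming from the contradiction hypothesis, and the fiber ratio $p(x_1|\theta)/p(x_2|\theta)=h(x_1)/h(x_2)$ for $t(x_1)=t(x_2)$ --- and then asserts they can be ``combined'' to force a contradiction, without saying how. The paper's proof is in fact nothing more than your fiber-ratio observation, read as the finishing move rather than as half of an argument: choose $x_1\neq x_2$ with $t(x_1)=t(x_2)$ and $\theta$ with $p(x_1|\theta),p(x_2|\theta)>0$, so that
\[
\frac{p(x_1|\theta)}{p(x_2|\theta)}=\frac{h(x_1)}{h(x_2)}\cdot\frac{g(t(x_1)|\theta)}{g(t(x_2)|\theta)}=\frac{h(x_1)}{h(x_2)},
\]
and then note that the left-hand side is $r$-block-switch invariant in $x_1$ and in $x_2$ separately (by the preceding lemma applied to $x\mapsto p(x|\theta)$), so the invariance is inherited by $h$. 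No contradiction setup is needed. Your worry that a block switch applied to $x_1$ may leave the fiber of $t$ (so that the $g$-cancellation no longer applies verbatim) is a legitimate subtlety --- the paper's own write-up glosses over it --- but your proposal neither resolves it nor reduces to the paper's direct argument; it simply defers it.

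The fallback you propose for closing the gap is also not available: fixing the canonical factorization $h(x)=p(x\mid T=t(x))$, $g(t|\theta)=p(T=t|\theta)$ changes the statement being proved. The lemma asserts invariance of the particular $h$ appearing in a \emph{given} Fisher--Neyman factorization, and, as you correctly note, the factorization is non-unique (any positive function of $t(x)$ can be shifted between $h$ and $g$); proving that the canonical $h$ is $r$-block-switch invariant says nothing about an arbitrary $h$ satisfying the hypothesis. So even if that route were made rigorous, it would establish a weaker claim than the one in the lemma. To fix the proposal, discard the contradiction scaffolding and the canonical-factorization detour, and instead argue directly from the fiber ratio as the paper does, supplying the missing justification for why block-switch invariance of $p(\cdot|\theta)$ transfers through the identity $h(x_1)/h(x_2)=p(x_1|\theta)/p(x_2|\theta)$ to $h$ itself.
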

	\begin{proof}
		If $ \data \mapsto t(\data) $ is not an injection, $ \exists x, x' $ such that $ t(x) = t(x')$. If the density is not degenerate, moreover, $ \exists \theta: p(\data|\theta), p(\data'|\theta)>0 $. Therefore, we consider the following ratio: 
		
		\begin{equation}\label{}
			\frac{p(\data|\theta)}{p(\data'|\theta)} = \frac{h(\data)}{h(\data')} \cdot \frac{g(t(\data)|\theta)}{g(t(\data')|\theta)} = \frac{h(\data)}{h(\data')}.
		\end{equation}
		Now, the left hand side is \textit{r}-block-switch invariant with respect to both $ \data $ and $ \data' $ independently, implying that $ h(x) $ is as well.
	\end{proof}
	
	We remark that it does not seem possible in general to say anything about $ t(x) $, as in fact it may be that the function $ t(x)  $ is not \textit{r}-block-switch invariant but $ g(t(x) |\theta) $ is. In the specific case of an exponential family, however, a more specific result can be obtained:
	\begin{lemma}
		Consider an exponential family distribution $ p(\data|\theta) = h(\data) \exp(\eta(\theta)^T  f(\data))/Z(\theta) $  which is \textit{r}-Markovian distribution; if $ \data \mapsto f(\data) $ is not an injection mapping, then $x \mapsto  f(x) $ is \textit{r}-block-switch invariant.
	\end{lemma}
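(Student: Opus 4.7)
The plan is to follow the same ratio-based strategy used in the proof of Lemma~\ref{lemma:r-d-b-s_h(x)}. Since $x\mapsto f(x)$ is not injective, I would first pick $x\neq x'$ with $f(x)=f(x')$ and, by non-degeneracy, some $\theta\in\Theta$ with $p(x|\theta),p(x'|\theta)>0$. Using the exponential family form, the ratio
\begin{equation*}
\frac{p(x|\theta)}{p(x'|\theta)} \;=\; \frac{h(x)\exp(\eta(\theta)^T f(x))}{h(x')\exp(\eta(\theta)^T f(x'))} \;=\; \frac{h(x)}{h(x')}
\end{equation*}
is then a constant function of $\theta$. In particular, Lemma~\ref{lemma:r-d-b-s_h(x)} already applies to this same non-injective pair and gives me that $x\mapsto h(x)$ is $r$-block-switch invariant, which I plan to use in the next step.

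Second, I would invoke $r$-Markovianity, which by the earlier characterization is equivalent to $p(x|\theta)=p(T_b^{(r)}(x)|\theta)$ for every $r$-block-switch $T_b^{(r)}$ and every $\theta$. Substituting the exponential-family form on both sides and cancelling $Z(\theta)$ gives
\begin{equation*}
\eta(\theta)^T\bigl[f(x)-f(T_b^{(r)}(x))\bigr] \;=\; \log h(T_b^{(r)}(x)) - \log h(x),
\end{equation*}
valid for every $\theta$, every $x$, and every $T_b^{(r)}$. By the $h$-invariance just imported from Lemma~\ref{lemma:r-d-b-s_h(x)}, the right-hand side vanishes identically. Hence, writing $v_{x,b}:=f(x)-f(T_b^{(r)}(x))$, I get $\eta(\theta)^T v_{x,b}=0$ for all $\theta$.

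Third, I would conclude $v_{x,b}=0$ by appealing to the minimal/non-degenerate parameterization of the exponential family: the image $\eta(\Theta)$ spans an affine subspace of full dimension $d_s$ (this is the tacit non-degeneracy hypothesis underlying the identifiability discussion of Appendix~\ref{app:identifiability}), so any vector orthogonal to $\eta(\Theta)-\eta(\Theta)$ which also satisfies $\eta(\theta)^T v=0$ must be the zero vector. This yields $f(T_b^{(r)}(x))=f(x)$ for every $x$ and every $T_b^{(r)}$, which is the required $r$-block-switch invariance of $f$.

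The hard part, as I see it, is the last step. The hypothesis ``$f$ is not injective'' by itself provides information only at a single pair $(x,x')$; the leap to invariance at arbitrary $x$ rides on the richness of $\eta(\Theta)$ and on $h$-invariance being already in hand. I expect the cleanest way to make this rigorous is to assume the exponential family is in its minimal parameterization, which matches the standard stance used elsewhere in the paper. If one insisted on dispensing with minimality, the alternative would be to re-parameterize the family down to a minimal sub-family of dimension $d_s'\le d_s$, prove invariance of the reduced sufficient statistic there, and then argue that the directions projected out are irrelevant to the probabilistic invariance being claimed — a more involved route that I would only pursue if pressed.
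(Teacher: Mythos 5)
Your proof is correct and follows essentially the same route as the paper's: invoke the $h$-invariance lemma (which needs the non-injectivity hypothesis), use $r$-Markovianity of $p(\cdot|\theta)$ for every $\theta$ to force $\eta(\theta)^T\bigl[f(\data)-f(T_b^{(r)}(\data))\bigr]=0$ for all $\theta$, and then conclude that this difference vector vanishes by a non-degeneracy condition on $\eta$. The only point of divergence is the last step, where you are in fact more careful than the paper: the paper assumes (WLOG) merely that each component of $\eta$ is non-constant and then asserts that each $f_i$ must be invariant, which tacitly requires the affine independence of the components of $\eta$ that you make explicit through the minimal-parameterization (full affine span of $\eta(\Theta)$) assumption.
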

	\begin{proof}
		Without loss of generality, we consider the case in which all elements of $ \eta(\theta) $ are not constant with respect to $ \theta $; if this is not the case, in fact, you can redefine the exponential family by incorporating the elements of $ f(x)  $ corresponding to the constant ones of $ \eta$ in the $ h(x) $ factor. 
		
		Now, $ h(x) $ is \textit{r}-block-switch invariant thanks to Lemma~\ref{lemma:r-d-b-s_h(x)}. Consider now the following decomposition: 
		
		\begin{equation}\label{}
			\log p(x|\theta) = \log h(x) - \log Z(\theta) + \sum_i f_i(x) \eta_i(\theta);
		\end{equation}
		as that needs to be \textit{r}-block-switch invariant for any $ \theta $, this can happen only if each of the $ f_i(x) $ elements are \textit{r}-block-switch invariant.
	\end{proof}
	
	Overall, these results imply that an exponential family in which $ f $ is parametrized with a PEN of order \textit{r} is \textit{r}-Markovian. Moreover, provided that $ f $ is not an injection mapping, all \textit{r}-Markovian exponential families have $ f $ which satisfy the \textit{r}-block-switch invariant property, which is imposed by using a PEN network of order \textit{r}.

	\subsubsection{Forward computation of derivatives for PENs}
	
	We give here the derivation for the forward computation of derivatives with PENs. If we pick here $ \phi $ and $ \rho $ to be fully connected Neural Networks, we can moreover apply the forward computation of derivatives for them and we are able to compute the derivatives for PENs at a much lower cost with respect to using automatic differentiation libraries. 
	
	In Eq.~\eqref{Eq:PEN}, let us denote for brevity $ z = \sum_{i=1}^{M-r} \phi(\data_{i:(i+r)}) $. We are interested now in computing the derivative:
	\begin{equation}\label{}
		\frac{\partial f}{\partial \data_j} = \frac{\partial }{\partial \data_j} \rho(\data_{1:r}, z),
	\end{equation}
	where note that $ z $ depends in general on $ \data_i $. Therefore, in computing the above, we need to compute the derivative with respect to both arguments; let us denote by $ \frac{\partial'}{\partial \data_j} $ the derivative with respect to the first argument. Then, we have: 
	\begin{equation}\label{}
		\frac{\partial f}{\partial \data_j} = \frac{\partial' }{\partial \data_j}\rho(\data_{1:r}, z) \cdot \mathbbm{1}[ j \le r]  + \frac{\partial }{\partial z}\rho(\data_{1:r}, z) \cdot \frac{\partial z}{\partial \data_j},
	\end{equation}
	where the second term is (note that $ z $ and $ \rho  $ are multivariate, so that $ \frac{\partial' }{\partial \data_j}\rho(\data_{1:r}, z)$ is a Jacobian matrix): 
	\begin{equation}\label{}
		\frac{\partial z}{\partial \data_j} = \frac{\partial}{\partial \data_j} \sum_{i=1}^{M-r} \phi(\data_{i:(i+r)}) = \sum_{\substack{i=j-r\\i\ge 1}}^{j} \frac{\partial}{\partial \data_j} \phi(\data_{i:(i+r)}),
	\end{equation}
	where all other terms of the sum disappear as they do not contain $ \data_j $.
	
	Now, we are interested in obtaining the second derivative terms: 
	\begin{equation}\label{}
		\begin{aligned}
			\frac{\partial^2 f}{\partial \data_j^2} &= \frac{\partial }{\partial \data_j} \frac{\partial f}{\partial \data_j} = \frac{\partial }{\partial \data_j} \left[ \frac{\partial' }{\partial \data_j}\rho(\data_{1:r}, z) \cdot \mathbbm{1}[ j \le r]  + \sum_k \frac{\partial }{\partial z_k}\rho(\data_{1:r}, z) \frac{\partial z_k}{\partial \data_j} \right] \\ 
			&=\frac{\partial'^2 }{\partial \data_j^2}\rho(\data_{1:r}, z) \cdot \mathbbm{1}[ j \le r]  + 2 \sum_k \textcolor{red}{\frac{\partial' }{\partial \data_j} \frac{\partial}{\partial z_k} \rho(\data_{1:r}, z)}\cdot \frac{\partial z_k}{\partial \data_j} \cdot \mathbbm{1}[ j \le r] \\ &+  \sum_{k,k'} \textcolor{red}{
				\frac{\partial^2 }{\partial z_k \partial z_{k'}}\rho(\data_{1:r}, z)}  \frac{\partial z_k}{\partial \data_j} \frac{\partial z_{k'}}{\partial \data_j}  + \sum_k  \frac{\partial}{\partial z_k}\rho(\data_{1:r}, z) \frac{\partial^2 z_k}{\partial \data_j^2};
		\end{aligned}
	\end{equation}
	in the above expression, $ \sum_k $ runs over the elements of $ z $ and $ \frac{\partial'^2}{\partial \data_j^2} $ denotes second derivative with respect to the first element. Note that all terms appearing in the above formulas contain first and second derivatives of the Neural Networks $ \phi $ and $ \rho $ with respect to one single input, except for the terms highlighted in red. In order to compute that, obtaining the full hessian matrix of $ \rho $ is required. We remark that the latter can be very large in case the input dimension is large, therefore leading to memory overflow issues.

	\subsection{Exchange MCMC}\label{app:exchange}
	
	For convenience, we describe here the ExchangeMCMC algorithm by \cite{murray2012mcmc}. We consider the task of sampling from a posterior distribution $\pi(\theta|x)$. We can evaluate an unnormalized version of the likelihood $ \tilde p(x|\theta) $, and we denote the normalized version as $ p(x|\theta) = \tilde p(x|\theta) / Z(\theta) $, $ Z(\theta) $ being an intractable normalizing constant. We want to build an MCMC chain by using a proposal distribution $ q(\cdot|\theta; x) $ (which optionally depends on the considered $ x $ as well). Usually, the standard Metropolis acceptance threshold for a proposal $ \theta' $ is defined as: 
	
	\begin{equation}\label{}
		\alpha = \frac{\pi(\theta'|x)q(\theta|\theta';x)}{\pi(\theta|x)q(\theta'|\theta;x)} = \frac{\tilde p(x|\theta')q(\theta|\theta';x) \pi(\theta')}{\tilde p(x|\theta)q(\theta'|\theta;x)\pi(\theta)} \cdot \frac{Z(\theta)}{Z(\theta')},
	\end{equation}
	where the last factor cannot be evaluated, as we do not have access to the normalizing constant.
	
	The ExchangeMCMC algorithm proposed by \cite{murray2012mcmc} bypasses this issue by drawing an auxiliary observation $ x' \sim p(\cdot|\theta') $ and defining the acceptance probability as: 
	\begin{equation}\label{Eq:acc_rate_exchange}
		\alpha = \frac{p(x|\theta')q(\theta|\theta';x) \pi(\theta')}{p(x|\theta)q(\theta'|\theta;x)\pi(\theta)} \cdot\textcolor{blue}{\frac{p(x'|\theta)}{p(x'|\theta')}} = \frac{\tilde p(x|\theta')\textcolor{blue}{\tilde p(x'|\theta)}q(\theta|\theta';x) \pi(\theta')}{\tilde p(x|\theta)\textcolor{blue}{\tilde p(x'|\theta')}q(\theta'|\theta;x)\pi(\theta)} \cdot \frac{\cancel{Z(\theta)}}{\cancel{Z(\theta')}}\cdot \textcolor{blue}{\frac{\cancel{Z(\theta')}}{\cancel{Z(\theta)}}}.
	\end{equation}
	
	Here, all the normalizing constants cancel out, so that the acceptance threshold can be evaluated explicitly, at the expense of drawing a simulation from the likelihood for each MCMC step. \cite{murray2012mcmc} showed that an MCMC chain using the above acceptance rate targets the correct posterior $ \pi(\theta|x) $. The resulting algorithm is given in Algorithm~\ref{alg:ExchangeMCMC}:

	\begin{algorithm} 
		\caption{Original exchangeMCMC algorithm \citep{murray2012mcmc}.}
		\label{alg:ExchangeMCMC}
		\begin{algorithmic}[1]
			\Require Initial $ \theta $, number of iterations $ T $, proposal distribution $ q $, observation $ x $.
			\For{$i=1$ \textbf{to} $T$}
			\State Propose $ \theta' \sim q(\theta'|\theta;x)$
			\State Generate auxiliary observation $x'\sim p(\cdot|\theta')$ 
			\State Compute acceptance threshold $ \alpha $ as in Eq.~\eqref{Eq:acc_rate_exchange}
			\State With probability $ \alpha $, set $ \theta \leftarrow  \theta' $
			\EndFor
		\end{algorithmic}
	\end{algorithm}

	\paragraph{Bridging.} When considering more closely the acceptance rate in Eq.~\eqref{Eq:acc_rate_exchange}, it can be seen that it depends on two ratios: $ \frac{p(x|\theta')}{p(x|\theta)} $ represents how well the proposed parameter value explains the observation with respect to the previous parameter value, while instead $ \frac{p(x'|\theta)}{p(x'|\theta')} $ measures how well the auxiliary variable (generated using $ \theta' $) can be explained with parameter $ \theta $. Therefore, even if the former is large and $ \theta  $ would be a suitable parameter value, $ \alpha $ can still be small if the auxiliary random variable is not explained well by the previous parameter value. This can lead to slow mixing of the chain; to improve on this, \cite{murray2012mcmc} proposed to sample a set of auxiliary variables $ (x'_0, x'_1, \ldots, x'_K) $ from intermediate distributions in the following way\footnote{Differently from the rest of the work, here subscripts do not denote vector components, but rather different auxiliary variables.}: consider a set of densities 
	\begin{equation}\label{}
		\tilde p_k(x	|\theta, \theta') = \tilde p(x|\theta')^{\beta_k} \tilde p(x|\theta)^{1- \beta_k}, \quad \beta_k = \frac{K - k + 1}{K + 1};
	\end{equation}
	$ x'_0 $ is generated from $ p(\cdot|\theta') $ as before, and then each $ x'_k $ is generated from $ R(\cdot | x'_{k-1}; \theta, \theta') $, which denotes a Metropolis-Hastings transition kernel starting from $ x'_{k-1} $ with stationary density $ \tilde p_k(\cdot|\theta, \theta') $. Then, the acceptance rate is modified as follows: 
	\begin{equation}\label{Eq:acc_rate_exchange_bridging}
		\alpha = \frac{\tilde p(x|\theta')q(\theta|\theta';x) \pi(\theta')}{\tilde p(x|\theta)q(\theta'|\theta;x)\pi(\theta)} \cdot \prod_{k=0}^{K} \frac{\tilde p_{k+1}(x'_k|\theta, \theta') }{\tilde p_k(x'_k|\theta, \theta') }.
	\end{equation}
	
	The overall algorithm is given in Algorithm~\ref{alg:ExchangeMCMC_bridging}. Note that $ K=0 $ recovers the original ExchangeMCMC. This procedure generally improves the acceptance rate as it basically introduces a sequence of intermediate updates to the auxiliary data which by smoothening out the difference between the two distributions. 
	
	\begin{algorithm} 
		\caption{ExchangeMCMC algorithm with bridging \citep{murray2012mcmc}.}
		\label{alg:ExchangeMCMC_bridging}
		\begin{algorithmic}[1]
			\Require Initial $ \theta $, number of iterations $ T $, proposal distribution $ q $, number of bridging steps K, observation $ x $.
			\For{$i=1$ \textbf{to} $T$} 
			\State Propose $ \theta' \sim q(\theta'|\theta;x)$
			\State Generate auxiliary observation $x'_0\sim p(\cdot|\theta')$ 
			\For{$k=1$ \textbf{to} $K$} \Comment{Bridging steps}
			\State Generate $ x'_k \sim R(\cdot|x'_{k-1};\theta, \theta') $
			\EndFor	
			\State Compute acceptance threshold $ \alpha $ as in Eq.~\eqref{Eq:acc_rate_exchange_bridging}
			\State With probability $ \alpha $, set $ \theta \leftarrow \theta' $
			\EndFor
		\end{algorithmic}
	\end{algorithm}

	\paragraph{ExchangeMCMC without perfect simulations.}
	
	If, as in the setup considered across this work, we are not able to sample from $ p(\cdot|\theta') $ as it is required in the ExchangeMCMC algorithm (line 3 in Alg.~\ref{alg:ExchangeMCMC}), \cite{murray2012mcmc} suggested to run $ T_{in} $ steps of an MCMC chain on $ x $ targeting $ p(\cdot|\theta') $ at each step of ExchangeMCMC; if $ T_{in} $ is large enough, the last sample can be considered as (approximately) drawn from $ p(\cdot|\theta') $ itself and used in place of the unavailable perfect simulation. In practice, however, this only leads to an approximate ExchangeMCMC algorithm, as at each iteration of the inner chain a finite $ T_{in} $ is used, so that the inner chain would not perfectly converge to its target; for this reason, even an infinitely long outer chain would not target the right posterior for any finite $ T_{in} $. Nonetheless, this approach was shown empirically to work satisfactorily in \cite{caimo2011bayesian, everitt2012bayesian, liang2010double}. Some theoretical guarantees (albeit under strong conditions), are given in in Appendix B by \cite{everitt2012bayesian}, which bounds the total variation distance between target of approximate ExchangeMCMC with finite $ T_{in} $ and the target of the exact one, and shows that they become equal when $ T_{in} \to \infty $. 
	
	In \cite{liang2010double}, they argue that starting the inner chain from the observation value improves convergence; we adapt this approach in our implementation (Algorithm~\ref{alg:ExchangeMCMC_inner_MCMC_generate_true_model}).
	
	\begin{algorithm}
		\caption{ExchangeMCMC algorithm \citep{murray2012mcmc} with inner MCMC.}
		\label{alg:ExchangeMCMC_inner_MCMC_generate_true_model}
		\begin{algorithmic}[1]
			\Require Initial $ \theta $, number of iterations $ T $ and $ T_{in} $, proposal distributions $ q $ and $ q_x $, observation $ x $.
			
			\For{$i=1$ \textbf{to} $T$} \Comment{Outer chain}
			\State Propose $ \theta' \sim q(\theta'|\theta;x)$

			\State Set $x' = x$ \Comment{Start inner chain from the observation}
			
			\For{$j=1$ \textbf{to} $T_{in}$}  \Comment{Inner chain}
			\State Propose $ x'' \sim q_x(x''|x')$
			\State With probability $ \frac{p_w(x''|\theta')q_x(x'|x'')}{p_w(x'|\theta')q_x(x''|x')} $, set $ x' = x'' $
			
			\EndFor
			\State Compute acceptance threshold $ \alpha $ as in Eq.~\eqref{Eq:acc_rate_exchange} \Comment{This uses last point of inner MCMC}
			\State With probability $ \alpha $, set $ \theta \leftarrow  \theta' $
			\EndFor
		\end{algorithmic}
	\end{algorithm}

	Note that it is still possible to run bridging steps after the inner MCMC to sample from $ p(\cdot |\theta') $; the algorithm combining bridging and inner MCMC, which is used across this work, is given in Algorithm~\ref{alg:ExchangeMCMC_overall}).
	
	\begin{algorithm}
		\caption{ExchangeMCMC algorithm \citep{murray2012mcmc} with inner MCMC and bridging.}
		\label{alg:ExchangeMCMC_overall}
		\begin{algorithmic}[1]
			\Require Initial $ \theta $, number of iterations $ T $ and $ T_{in} $, number of bridging steps $ K $, proposal distributions $ q $ and $ q_x $, observation $ x $.
			
			\For{$i=1$ \textbf{to} $T$} \Comment{Outer chain}
			\State Propose $ \theta' \sim q(\theta'|\theta;x)$

			\State Set $x' = x$ \Comment{Start inner chain from the observation}
			
			\For{$j=1$ \textbf{to} $T_{in}$}  \Comment{Inner chain}
			\State Propose $ x'' \sim q_x(x''|x')$
			\State With probability $ \frac{p_w(x''|\theta')q_x(x'|x'')}{p_w(x'|\theta')q_x(x''|x')} $, set $ x' = x'' $
			
			\EndFor
			\State Set $ x'_0 = x' $
			\For{$k=1$ \textbf{to} $K$} \Comment{Bridging steps}
			\State Generate $ x'_k \sim R(\cdot|x'_{k-1};\theta, \theta') $
			\EndFor	
			\State Compute acceptance threshold $ \alpha $ as in Eq.~\eqref{Eq:acc_rate_exchange_bridging}
			\State With probability $ \alpha $, set $ \theta \leftarrow  \theta' $
			\EndFor
		\end{algorithmic}
	\end{algorithm}

	\paragraph{Related algorithms.}
	Algorithms for sampling from doubly-intractable targets which are suitable for parallel computing exist, for instance \cite{caimo2011bayesian} propose a parallel-chain MCMC algorithm, while \cite{everitt2017marginal} build instead an SMC-type algorithm which is also capable of recycling information from past simulations. However, in this work we stick to using ExchangeMCMC, which turned out to be relatively cheap to use and easy to implement.
	
	Finally, we remark that \cite{liang2016adaptive} proposed an algorithm which is inspired from ExchangeMCMC and, in the case of impossible perfect sampling, still targets the right invariant distribution. It works by considering a set of parallel chains targeting $ p(\cdot |\theta^{(i)}) $ for a fixed set of $ \{\theta^{(i)}\} $ and iteratively updating those and the main chain over $ \theta $. The algorithm relies on some assumptions which are probably satisfied in practice (as discussed in \citealp{park2018bayesian}). It also requires some hand-tuning and needs to keep in memory a large amount of data, which may hinder its applicability. For this reason, we do not investigate using that here.

	\subsubsection{Implementation details}\label{app:exchange_implementation}
	
	\paragraph{Acceptance rate tuning.}
	
	In our implementation of ExchangeMCMC, we discarded some burn-in steps to make sure the chain forgets its initial state. During burn-in, moreover, after each window of 100 outer steps, we tuned the proposal sizes considering the acceptance rate in the window, and increasing (respectively decreasing) if that is too large (small) with respect to a chosen interval (see below). Notice that we applied this strategy independently to the proposal size for the outer chain, the inner chain and the bridging step, when the latter is used. When tuning the proposal size for the inner chain or bridging steps, we considered the overall acceptance rate over each window of 100 outer steps. We remark how this was only done during burn-in, so that the convergence properties of the chain were not affected. 
	
	For inner and outer steps as well as bridging, we found that a target acceptance rate in the interval $ [0.2, 0.5] $ lead to good performance; this is consistent with the recommended range for Metropolis-Hastings MCMC \citep{roberts1997weak}.
	
	\paragraph{MCMC on bounded space.}
	When the inner (or outer) MCMC chain is run on a bounded domain, we apply the transformations discussed in Appendix~\ref{app:transformations} to map it to an unbounded domain, and therefore run the MCMC on that space. Notice that the Jacobian factor arising from the transformations has therefore to be taken into account when computing the acceptance rate.

	\section{Details on Neural Networks training}\label{app:experim_details}
	
	For all experiments, we used the \texttt{Pytorch} library \citep{pytorch} to train NNs. 
	In Tables~\ref{Tab:archit_exp_fam_models}, \ref{Tab:archit_AR2}, \ref{Tab:archit_MA2}, \ref{Tab:archit_Lorenz96_small} and \ref{Tab:archit_Lorenz96_large} we report the Neural Network architectures used in the different experiments. FC($ n,m $) denotes a fully connected layer with $ n $ inputs and $ m $ outputs. For the time-series and Lorenz96 experiments, $ \phi_w $ and $ \rho_w $ represent the two Neural Networks used to build the PEN network $ f_w $ as described in Eq.~\eqref{Eq:PEN}, and similarly $ \phi_\beta $ and $ \rho_\beta $ represent the ones used in building $ s_\beta $. Finally, BN($ p $) represents a BatchNorm layer with momentum $ p $ (as described in Appendix~\ref{app:batch_norm}), with $ \gamma, b $, fixed respectively to be vectors of $ 1$s and $ 0 $s. We remark that the momentum value does not impact on the training of the network, but it modifies the evaluation of the test loss, which we use for early stopping, as discussed below.

	\begin{table}[htbp!]
		\centering
		\begin{tabular}{ llll }
			\toprule
			\textit{Network} & $ f_w $ & $ \eta_w $ & $ s_\beta $\\ \midrule
			\textit{Structure} & \parbox[t]{4cm}{FC(10,30)\\FC(30,50)\\FC(50,50)\\FC(50,20)\\FC(20,3)} & \parbox[t]{4cm}{FC(2,15)\\FC(15,30)\\FC(30,30)\\FC(30,15)\\FC(15,2)\\BN(0.9)} & \parbox[t]{4cm}{FC(10,30)\\FC(30,50)\\FC(50,50)\\FC(50,20)\\FC(20,2)} \\
			\bottomrule
		\end{tabular}
		\caption{\textbf{Architectures used for the exponential family models} (Gaussian, Gamma and Beta).}
		\label{Tab:archit_exp_fam_models}
		
	\end{table}

	\begin{table}[htbp!]
		\centering
		\begin{tabular}{ llllll }
			\toprule
			\multirow{2}{*}{\textit{Network}}  & \multicolumn{2}{c}{$ f_w $}  & \multirow{2}{*}{$ \eta_w $} &  \multicolumn{2}{c} {$ s_\beta $} \\ \cmidrule(r){2-3} \cmidrule(r){5-6}
			& $\phi_w$ & $ \rho_w $ & & $ \phi_\beta $ & $ \rho_\beta $ \\ 
			\midrule
			\textit{Structure} & \parbox[t]{2cm}{FC(3,50)\\FC(50,50)\\FC(50,50)\\FC(30,20)} & \parbox[t]{2cm}{FC(22,50)\\FC(50,50)\\FC(50,3)} & \parbox[t]{2cm}{FC(2,15)\\FC(15,30)\\FC(30,30)\\FC(30,15)\\FC(15,2)\\BN(0.9)} & \parbox[t]{2cm}{FC(3,50)\\FC(50,50)\\FC(50,30)\\FC(30,20)} & \parbox[t]{2cm}{FC(22,50)\\FC(50,50)\\FC(50,2)} \\ 		
			\bottomrule
		\end{tabular}
		\caption{\textbf{Architectures used for the AR(2) model.}}
		\label{Tab:archit_AR2}
		
	\end{table}
	
	\begin{table}[htbp!]
		\centering
		\begin{tabular}{ llllll }
			\toprule
			\multirow{2}{*}{\textit{Network}}  & \multicolumn{2}{c}{$ f_w $}  & \multirow{2}{*}{$ \eta_w $}&  \multicolumn{2}{c} {$ s_\beta $} \\ \cmidrule(r){2-3} \cmidrule(r){5-6}
			& $\phi_w$ & $ \rho_w $ & & $ \phi_\beta $ & $ \rho_\beta $ \\ 
			\midrule
			\textit{Structure} & \parbox[t]{2cm}{FC(11,50)\\FC(50,50)\\FC(50,30)\\FC(30,20)} & \parbox[t]{2cm}{FC(30,50)\\FC(50,50)\\FC(50,3)} & \parbox[t]{2cm}{FC(2,15)\\FC(15,30)\\FC(30,30)\\FC(30,15)\\FC(15,2)\\BN(0.9)} & \parbox[t]{2cm}{FC(11,50)\\FC(50,50)\\FC(50,30)\\FC(30,20)} & \parbox[t]{2cm}{FC(30,50)\\FC(50,50)\\FC(50,2)} \\ 		
			\bottomrule
		\end{tabular}
		\caption{\textbf{Architectures used for the MA(2) model.}}
		\label{Tab:archit_MA2}
	\end{table}

	\begin{table}[htbp!]
		\centering
		\begin{tabular}{ llllll }
			\toprule
			\multirow{2}{*}{\textit{Network}}  & \multicolumn{2}{c}{$ f_w $}  & \multirow{2}{*}{$ \eta_w $}&  \multicolumn{2}{c} {$ s_\beta $} \\ \cmidrule(r){2-3} \cmidrule(r){5-6}
			& $\phi_w$ & $ \rho_w $ & & $ \phi_\beta $ & $ \rho_\beta $ \\ 
			\midrule
			\textit{Structure} & \parbox[t]{2cm}{FC(16,50)\\FC(50,100)\\FC(100,50)\\FC(50,20)} & \parbox[t]{2cm}{FC(28,40)\\FC(40,90)\\FC(90,35)\\FC(35,5)} & \parbox[t]{2cm}{FC(4,30)\\FC(30,50)\\FC(50,50)\\FC(50,30)\\FC(30,4)\\BN(0.9)} & \parbox[t]{2cm}{FC(16,50)\\FC(50,100)\\FC(100,50)\\FC(50,20)} & \parbox[t]{2cm}{FC(28,40)\\FC(40,90)\\FC(90,35)\\FC(35,4)} \\ 		
			\bottomrule
		\end{tabular}
		\caption{\textbf{Architectures used for the Lorenz96 model in the small setup.}}
		\label{Tab:archit_Lorenz96_small}
	\end{table}

	\begin{table}[htbp!]
		\centering
		\begin{tabular}{ llllll }
			\toprule
			\multirow{2}{*}{\textit{Network}}  & \multicolumn{2}{c}{$ f_w $}  & \multirow{2}{*}{$ \eta_w $}&  \multicolumn{2}{c} {$ s_\beta $} \\ \cmidrule(r){2-3} \cmidrule(r){5-6}
			& $\phi_w$ & $ \rho_w $ & & $ \phi_\beta $ & $ \rho_\beta $ \\ 
			\midrule
			\textit{Structure} & \parbox[t]{2cm}{FC(80,120)\\FC(120,160)\\FC(160,120)\\FC(120,20)} & \parbox[t]{2cm}{FC(60,80)\\FC(80,100)\\FC(100,80)\\FC(80,5)} & \parbox[t]{2cm}{FC(4,30)\\FC(30,50)\\FC(50,50)\\FC(50,30)\\FC(30,4)\\BN(0.9)} & \parbox[t]{2cm}{FC(80,120)\\FC(120,160)\\FC(160,120)\\FC(120,20)} & \parbox[t]{2cm}{FC(60,80)\\FC(80,100)\\FC(100,80)\\FC(80,4)}\\ 		
			\bottomrule
		\end{tabular}
		\caption{\textbf{Architectures used for the Lorenz96 model in the large setup.}}
		\label{Tab:archit_Lorenz96_large}
	\end{table}

	In all experiments, stochastic gradient descent with a batch size of 1000 samples is used with Adam optimizer \citep{kingMA2014adam}, whose parameters are left to the default values as implemented in \texttt{Pytorch}. Finally, we evaluate the test loss on a test set with same size as the training set at intervals of $ T_\text{check} $ epochs and early-stop training if the test loss increased with respect to the last evaluation. In order to have a better running estimate of the quantities of interest for the BatchNorm layer, before each test epoch we perform a forward pass of the whole training data set, without computing gradients, as discussed in Appendix~\ref{app:batch_norm}. We do not perform early stopping before epoch $ T_{\text{start}} $. The nets are trained for a maximum of $ T $ training epochs with $ N_{train} $ training samples. In some experiments, we used an exponential learning rate scheduler, which decreases progressively the learning rate by multiplying it by a factor $ \zeta<1 $ at each epoch. We fixed $ \zeta=0.99 $. The values of the parameters are reported in Table~\ref{Tab:hyperparam_exp} for all the different models and setups, together with the values of the learning rates (lr) used and whether the scheduler was used or not (Sch).
	
	We remark that, in order to make the comparison fair, the learning rates for the FP experiments were chosen by cross validation with several learning rates choices. Similarly, we hand-picked the best learning rate values for training the Neural Networks with SM and SSM, even if in that case it was not possible, due to computational constraints, to perform a full search on a large set of values of learning rates for all experiments. In fact, the computational cost of training the exponential family approximation with SM or SSM is larger than the cost of learning the summary statistics with the FP approach. Moreover, in this scenario we have two independent learning rates values to tune, as we train two networks simultaneously. For this reason, we did not spend too much time trying different lr values for models for which we already got satisfactory results.
	
	\begin{table}[htbp!]
		\centering
		\begin{tabular}{ llllllll }
			\toprule
			\textit{Model} & \textit{Setup} & \textit{Learning rates} & $ N_{train} $ & $ T $ & $T_{\text{start}} $ & $ T_{\text{check}} $ & \textit{Sch} \\ \midrule
			\multirow{3}{*}{Gaussian} & SM & $ \lr(f_w)=0.0003,\ \lr(\eta_w)=0.003  $  & $ 10^4 $ & 500 & 150 & 10 & Yes\\
			& SSM & $ \lr(f_w)=0.001,\ \lr(\eta_w)=0.001  $ & $ 10^4 $ & 500 & 200  & 10 & Yes\\
			& FP & $ \lr(s_\beta)=0.01 $ & $ 10^4 $ & 1000 & 300 & 25 & No\\ \midrule
			\multirow{3}{*}{Gamma}  & SM & $ \lr(f_w)=0.001,\ \lr(\eta_w)=0.001  $ & $ 10^4 $ & 500 & 200  & 10 & Yes\\
			& SSM & $ \lr(f_w)=0.001,\ \lr(\eta_w)=0.001  $ & $ 10^4 $ & 500 & 200  & 10 & Yes\\
			& FP & $ \lr(s_\beta)=0.001 $ & $ 10^4 $ &1000 & 300 & 25 & Yes\\  \midrule
			\multirow{3}{*}{Beta}  & SM & $ \lr(f_w)=0.001,\ \lr(\eta_w)=0.001  $  & $ 10^4 $ & 500 & 200  & 10 & Yes\\
			& SSM & $ \lr(f_w)=0.001,\ \lr(\eta_w)=0.001  $  & $ 10^4 $ & 500 & 200  & 10 & Yes\\
			& FP & $ \lr(s_\beta)=0.01 $ & $ 10^4 $ &1000 & 250 & 50 & Yes\\  \midrule
			\multirow{3}{*}{AR(2)}  & SM & $ \lr(f_w)=0.001,\ \lr(\eta_w)=0.001  $ & $ 10^4 $& 500 & 100 & 25 & Yes \\
			& SSM & $ \lr(f_w)=0.001,\ \lr(\eta_w)=0.001  $ & $ 10^4 $& 500 & 100 & 25 & Yes \\
			& FP & $ \lr(s_\beta)=0.001 $ & $ 10^4 $ &1000 & 500 & 25 & Yes\\  \midrule
			\multirow{3}{*}{MA(2)} & SM & $ \lr(f_w)=0.001,\ \lr(\eta_w)=0.001  $ & $ 10^4 $ & 500 & 100 & 25 & Yes\\
			& SSM & $ \lr(f_w)=0.001,\ \lr(\eta_w)=0.001  $ & $ 10^4 $& 500 & 100 & 25 & Yes \\
			& FP & $ \lr(s_\beta)=0.001 $ & $ 10^4 $ &1000 & 500 & 25 & Yes\\  \midrule
			\multirow{2}{*}{\parbox[c]{1.5cm}{Lorenz96\\Small}}  & SSM & $ \lr(f_w)=0.001,\ \lr(\eta_w)=0.001  $ & $ 10^4 $ & 1000 & 500 & 50 & Yes \\
			& FP & $ \lr(s_\beta)=0.001 $ & $ 10^4 $ & 1000 & 200 & 25 & Yes\\  \midrule
			\multirow{2}{*}{\parbox[c]{1.5cm}{Lorenz96\\Large}}  & SSM & $ \lr(f_w)=0.001,\ \lr(\eta_w)=0.001  $ & $ 10^4 $ & 1000 & 500 & 50 & Yes \\
			& FP & $ \lr(s_\beta)=0.001 $ & $ 10^4 $ & 1000 & 200 & 25 & Yes\\ 
			\bottomrule
		\end{tabular}
		\caption{\textbf{Hyperparameter values for NN training}: ``FP'' denotes the least squares regression by \cite{fearnhead_constructing_2012, jiang2017learning} used for the ABC-FP experiment, while ``SM'' and ``SSM'' denote respectively exponential family trained with Score Matching and Sliced Score Matching. }
		
		\label{Tab:hyperparam_exp}
	\end{table}

	\newpage
	\FloatBarrier
	\section{Additional experimental results}
	
	\subsection{Exponential family models}\label{app:exp_fam_models}
	
	\subsubsection{Mean Correlation Coefficients for Neural Networks trained with SSM}
	
	We report in Table~\ref{Tab:MCC_toy_SSM} the weak and strong Mean Correlation Coefficient (MCC, Appendix~\ref{app:MCC}) for Neural Networks trained with SSM, for the exponential family models. MCC is a metric in $ [0,1] $, with 1 denoting perfect recovery up to a linear transformation (weak) or permutation (strong). As it can be seen in Table~\ref{Tab:MCC_toy_SSM}, our method leads to values quite close to 1, particularly for the weak MCC, implying that our method is able to recover the embeddings up to a linear transformation, as expected. 
	
	\begin{table}[htbp]
		\centering
		\resizebox{\linewidth}{!}{\begin{tabular}{lx{2.73cm}x{2.73cm}x{2.73cm}x{2.73cm}}
				\toprule
				\textit{Model} & \textit{MCC weak in} & \textit{MCC weak out} & \textit{MCC strong in} & \textit{MCC strong out} \\
				\midrule
				Beta (statistics) & 0.990 & 0.986 & 0.982 & 0.979 \\
				Beta (nat. par.) & 0.987 & 0.989 & 0.983 & 0.985 \\
				\midrule
				Gamma (statistics) & 0.939 & 0.928 & 0.723 & 0.709 \\
				Gamma (nat. par.) & 0.977 & 0.977 & 0.792 & 0.794 \\
				\midrule
				Gaussian (statistics) & 0.874 & 0.844 & 0.623 & 0.638 \\
				Gaussian (nat. par.) & 0.861 & 0.862 & 0.581 & 0.543 \\
				\bottomrule
		\end{tabular}}
		\caption{\textbf{MCC for exponential family models between exact embeddings and those learned with SSM}. We show weak and strong MCC values; MCC is between 0 and 1 and measures how well an embedding is recovered up to permutation and rescaling of its components (strong) or linear transformation (weak); the larger, the better. ``in'' denotes MCC on training data used to find the best transformation, while ``out'' denote MCC on test data. We used 500 samples in both training and test data sets.!}
		\label{Tab:MCC_toy_SSM}
	\end{table}
	
	\subsubsection{Learned and exact embeddings for the exponential family models}
	
	We compare here the exact and learned sufficient statistics and natural parameters of the exponential family models; precisely, we draw samples $ (x^{(j)}, \theta^{(j)} ) $ and then plot the learned statistics $ f_w(x^{(j)}) $ versus the exact one, and similarly for the natural parameters. Figure~\ref{fig:statistics_exp_fam_models} reports the results for Neural Networks trained with SM, while Figure~\ref{fig:statistics_exp_fam_models_SSM} reports the results for Neural Networks trained with SSM.

	\begin{figure}[!htbp]
		\centering
		\begin{subfigure}{0.32\textwidth}
			\centering
			\includegraphics[width=1\linewidth]{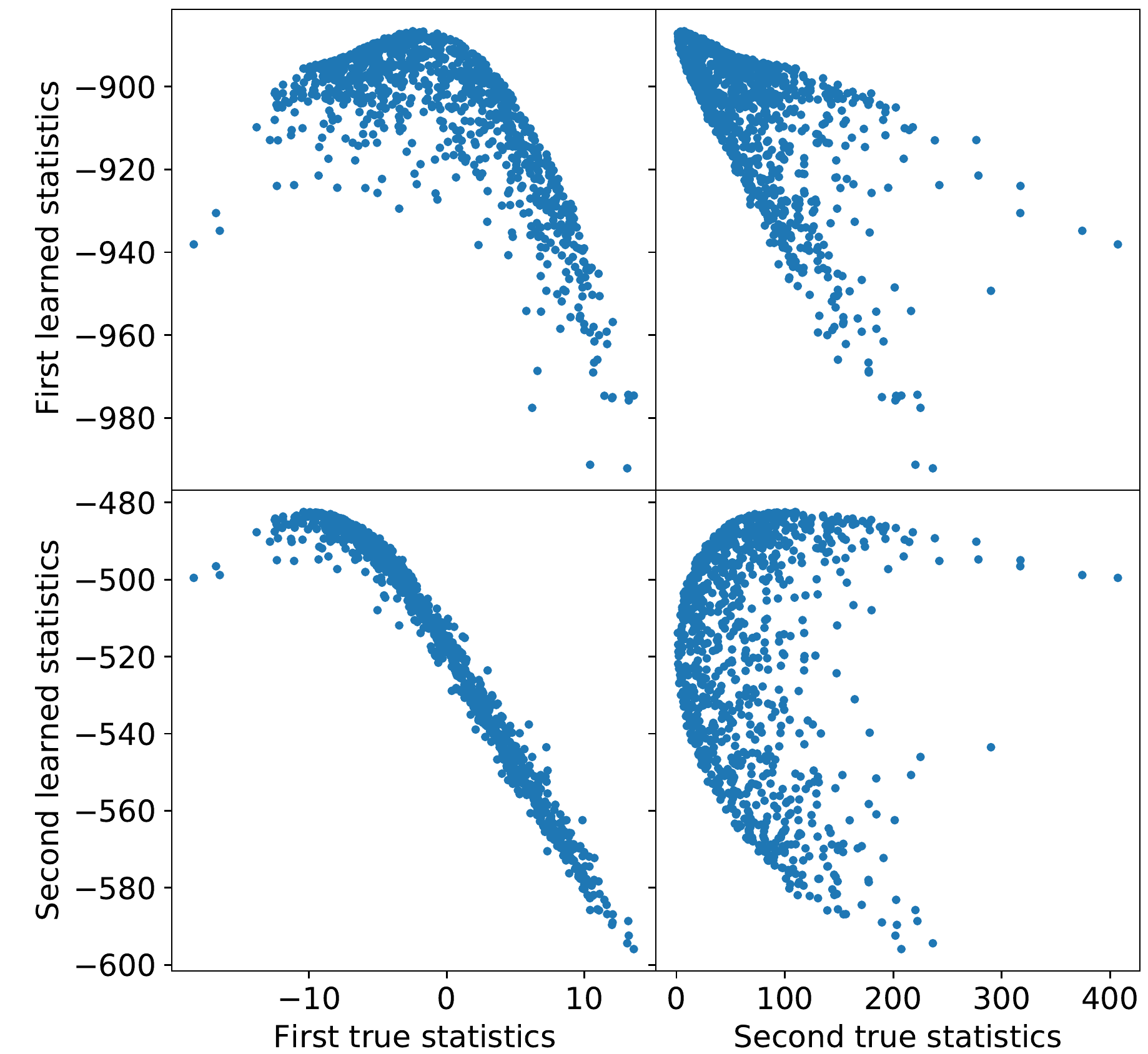}
			\caption{Statistics Gaussian}
		\end{subfigure}~
		\begin{subfigure}{0.32\textwidth}
			\centering
			\includegraphics[width=1\linewidth]{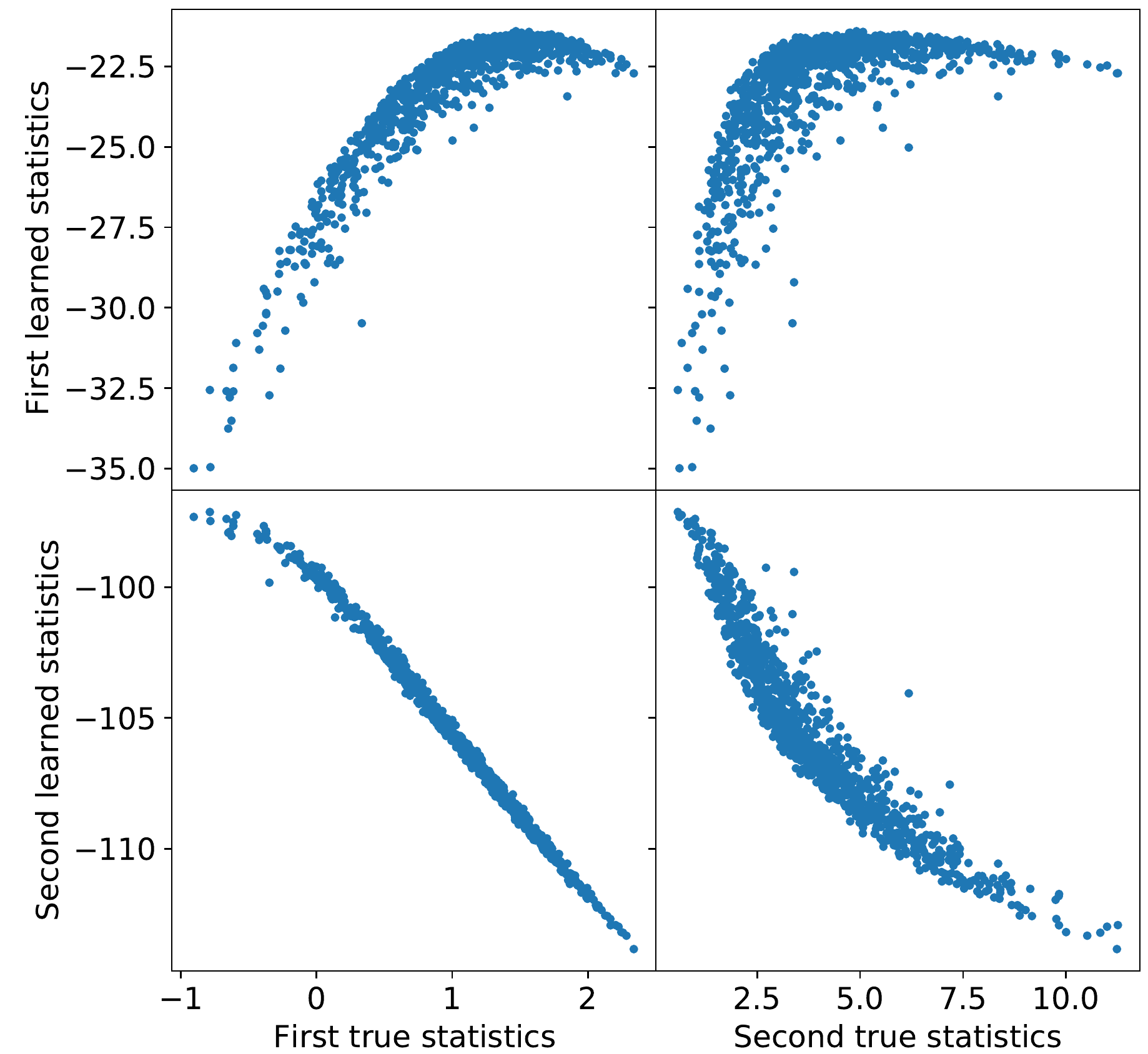}
			\caption{Statistics Gamma}
		\end{subfigure}~	
		\begin{subfigure}{0.32\textwidth}
			\centering
			\includegraphics[width=1\linewidth]{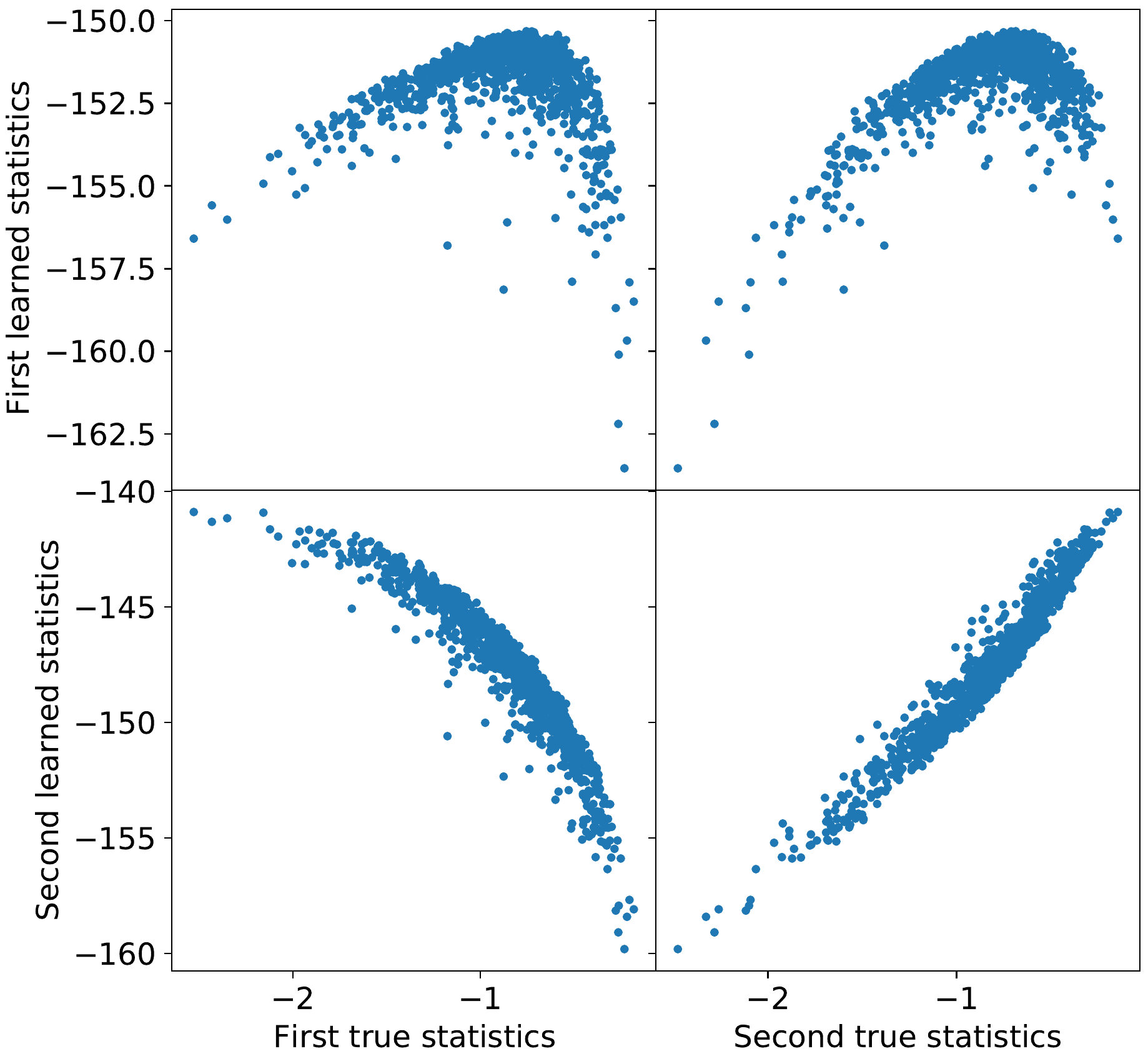}
			\caption{Statistics Beta}
		\end{subfigure}\\
		\begin{subfigure}{0.32\textwidth}
			\centering
			\includegraphics[width=1\linewidth]{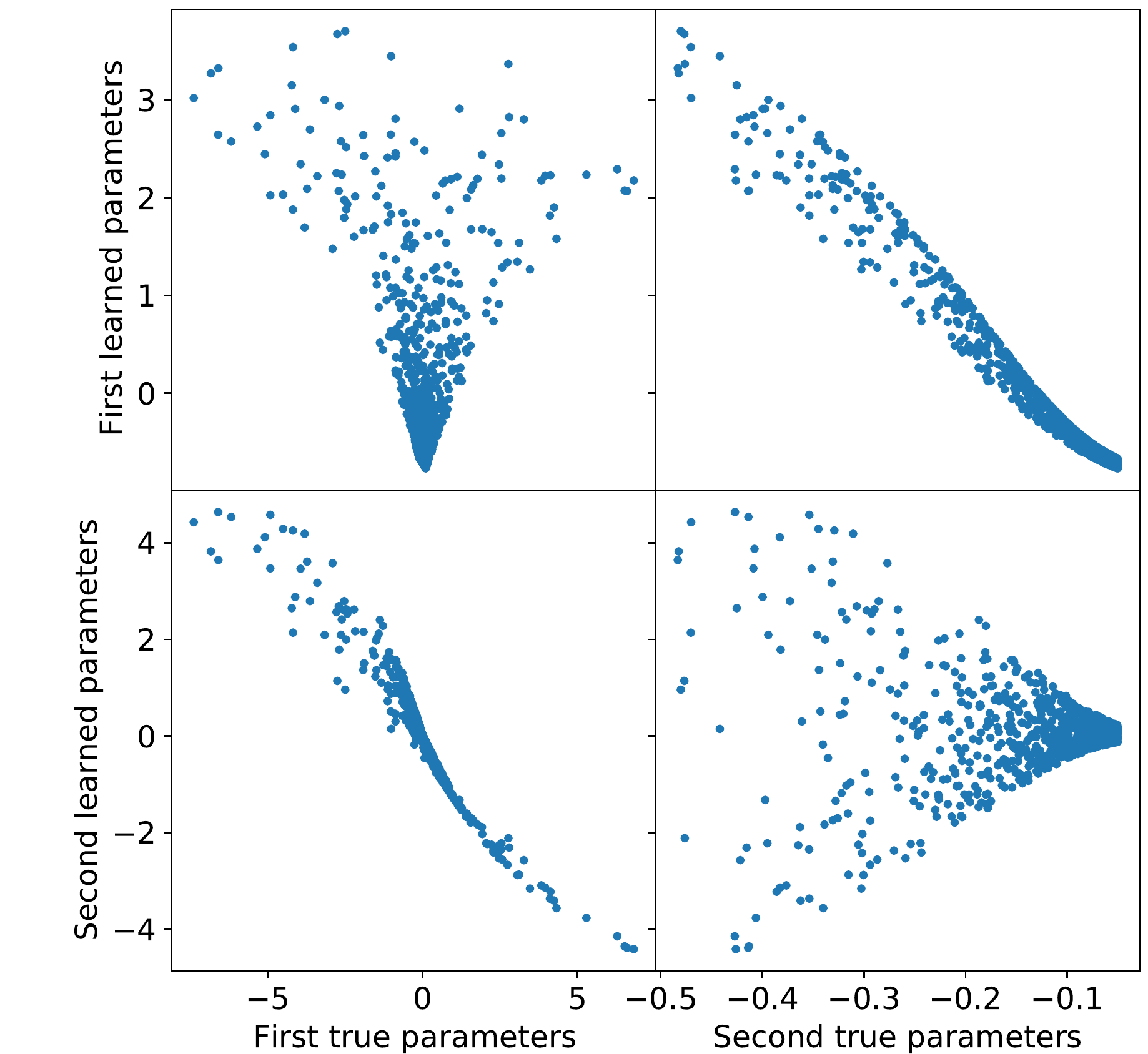}
			\caption{Parameters Gaussian}
		\end{subfigure}~
		\begin{subfigure}{0.32\textwidth}
			\centering
			\includegraphics[width=1\linewidth]{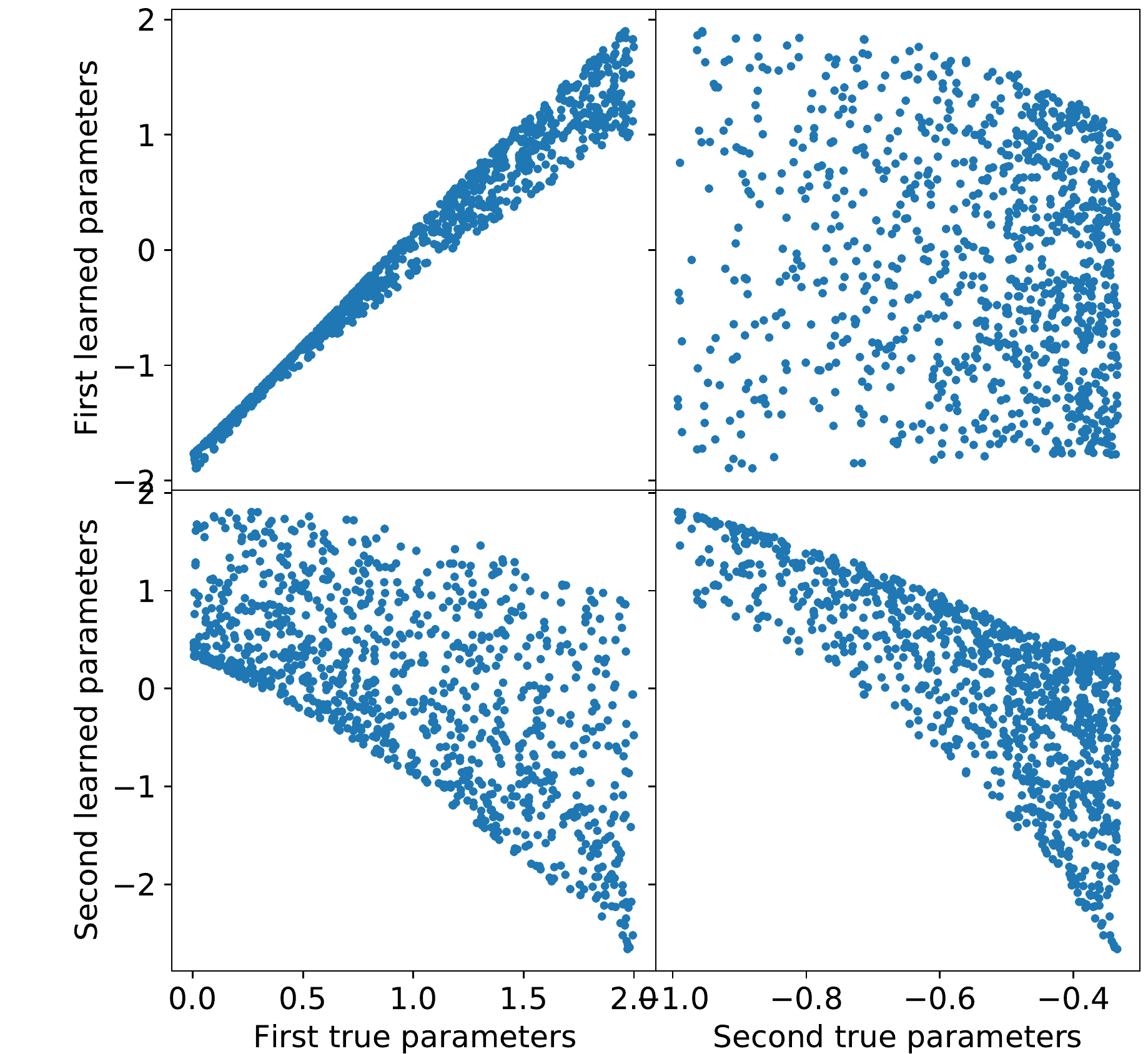}
			\caption{Parameters Gamma}
		\end{subfigure}~
		\begin{subfigure}{0.32\textwidth}
			\centering
			\includegraphics[width=1\linewidth]{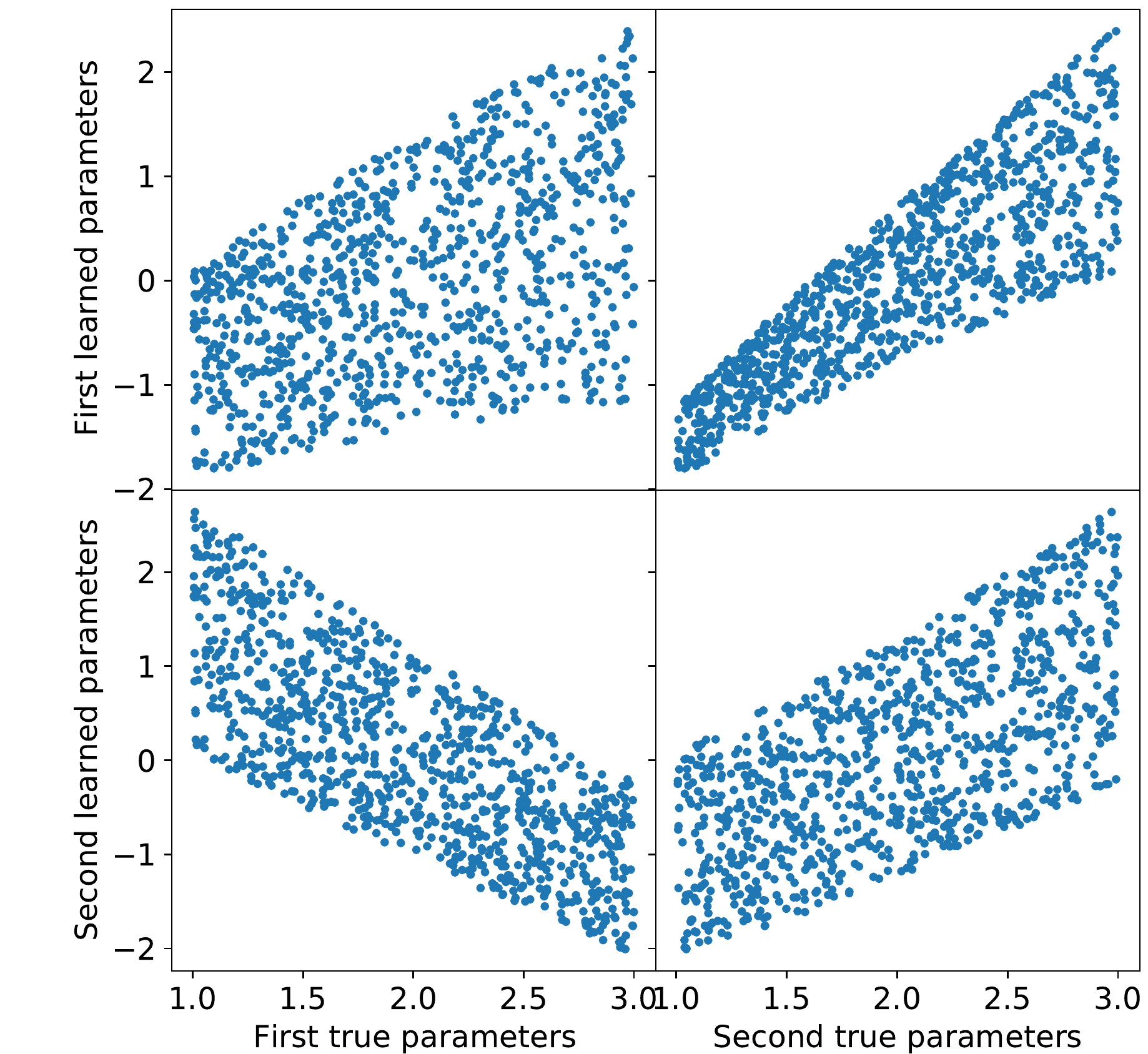}
			\caption{Parameters Beta}
		\end{subfigure}
		\caption{\textbf{Learned and exact embeddings for exponential family models obtained with SM.} 
			Each point represents a different $ x $ (for the statistics) or $ \theta $ (for the natural parameters); 1000 of each were used here.}
		\label{fig:statistics_exp_fam_models}
	\end{figure}
	
	\begin{figure}[!htbp]
		\centering
		\begin{subfigure}{0.32\textwidth}
			\centering
			\includegraphics[width=1\linewidth]{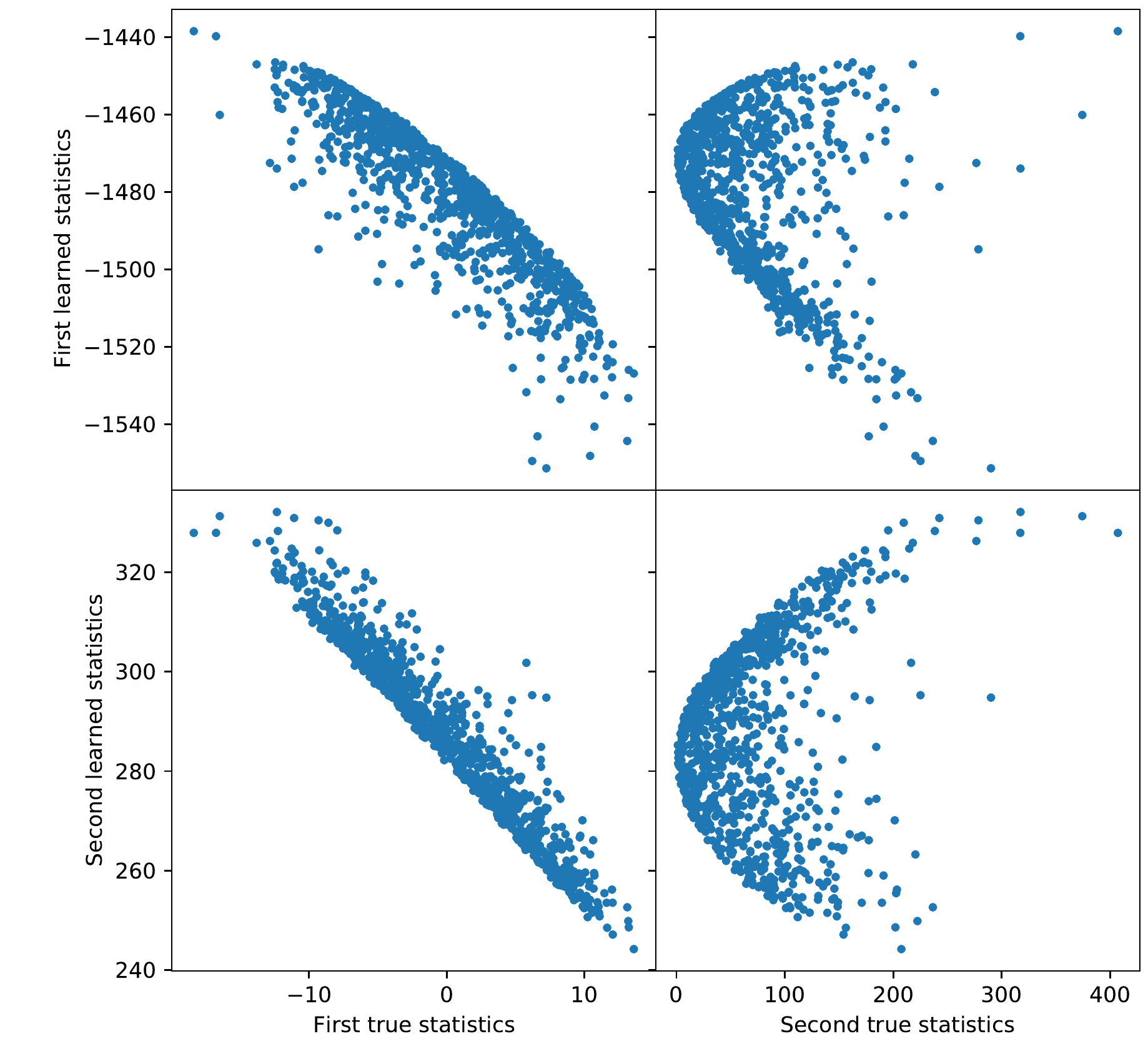}
			\caption{Statistics Gaussian}
		\end{subfigure}~
		\begin{subfigure}{0.32\textwidth}
			\centering
			\includegraphics[width=1\linewidth]{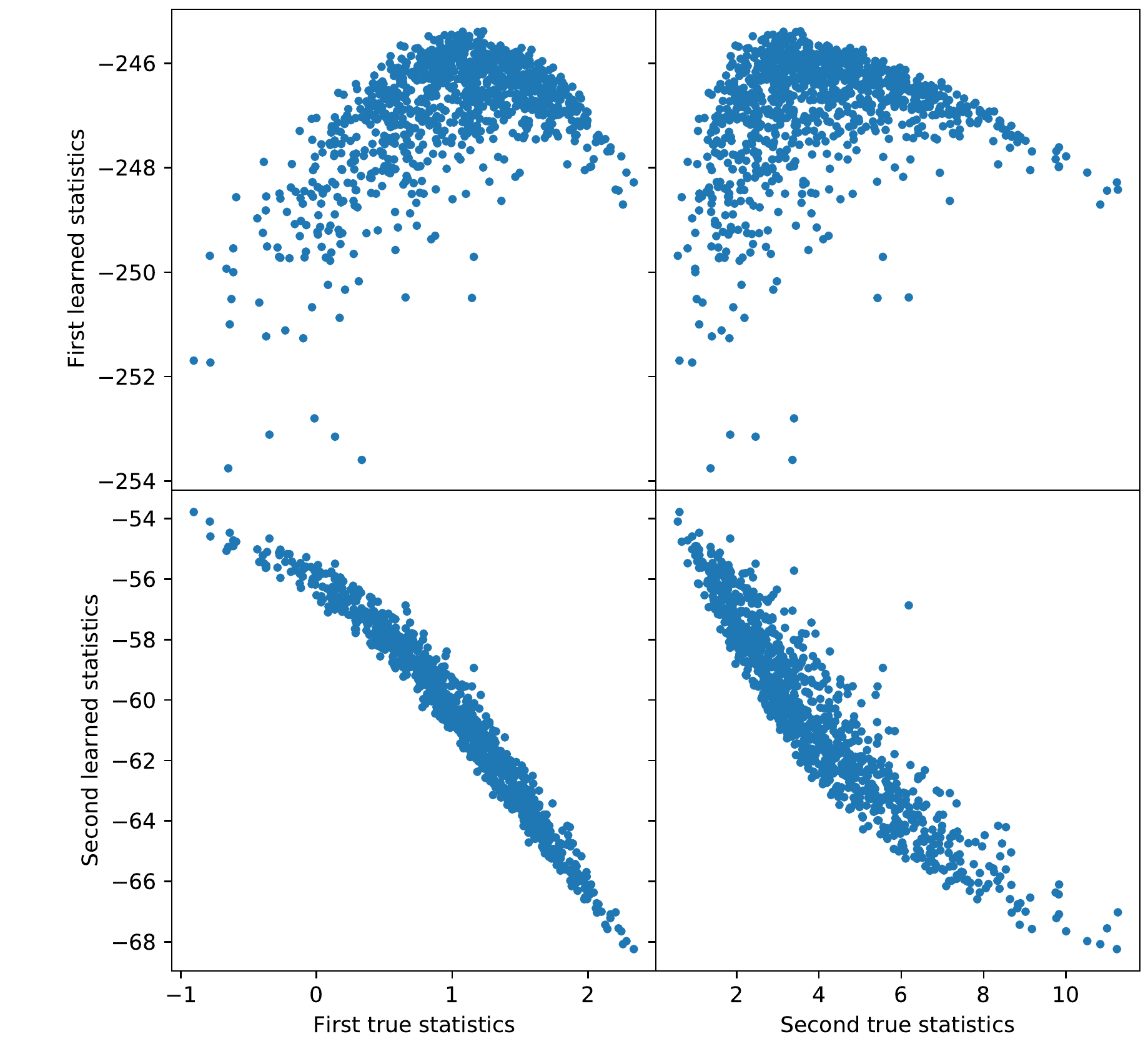}
			\caption{Statistics Gamma}
		\end{subfigure}~	
		\begin{subfigure}{0.32\textwidth}
			\centering
			\includegraphics[width=1\linewidth]{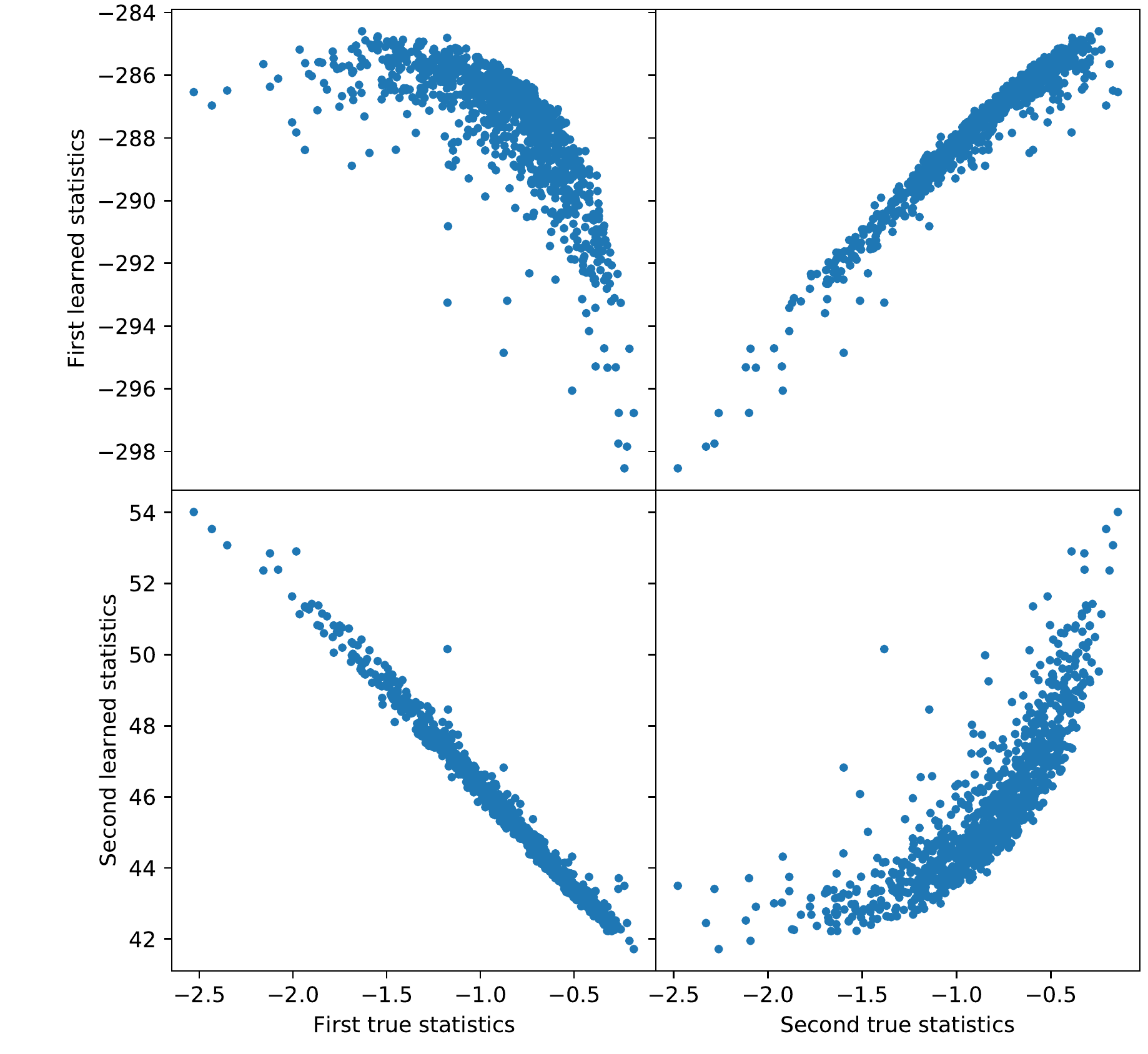}
			\caption{Statistics Beta}
		\end{subfigure}\\
		\begin{subfigure}{0.32\textwidth}
			\centering
			\includegraphics[width=1\linewidth]{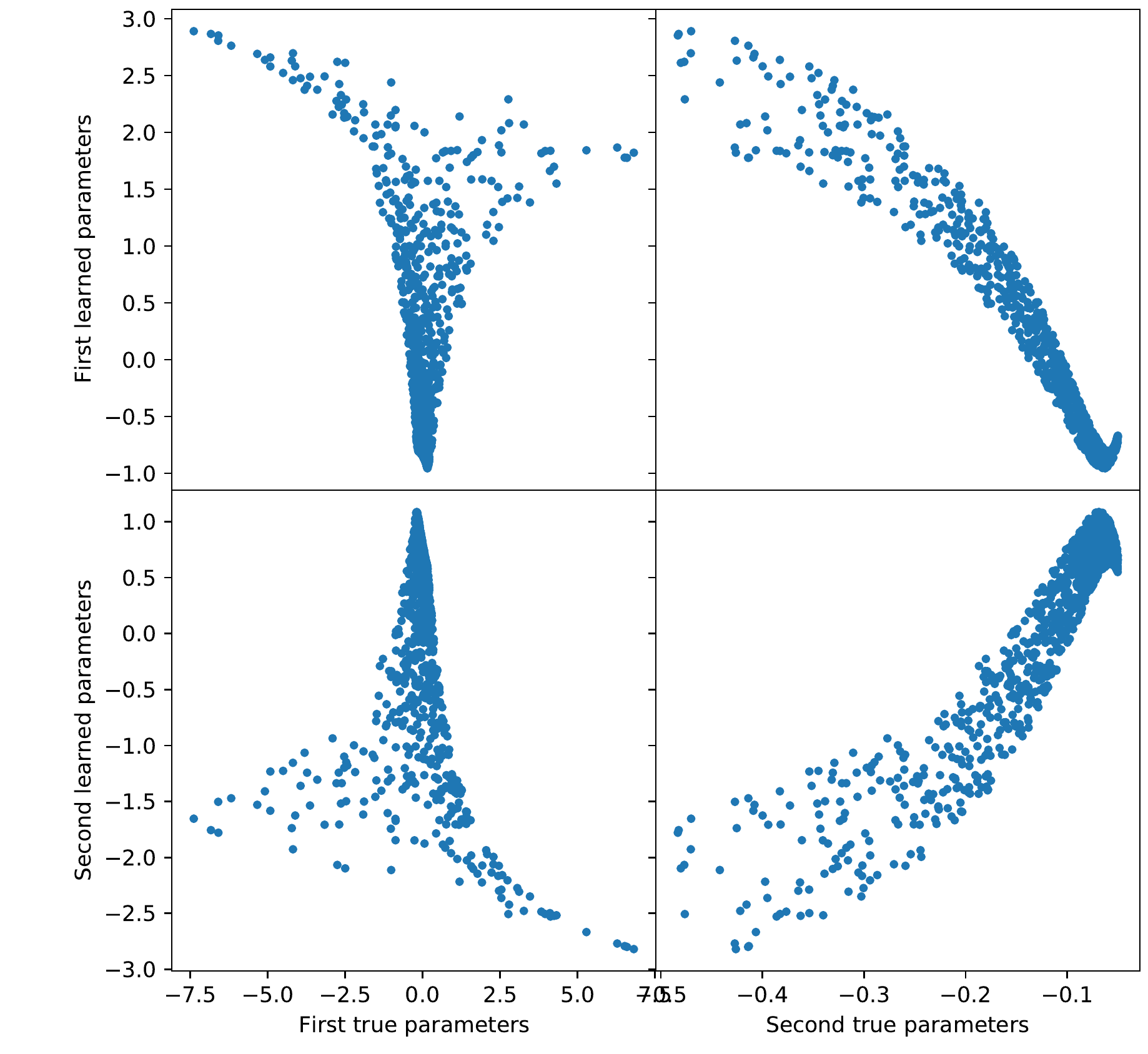}
			\caption{Parameters Gaussian}
		\end{subfigure}~
		\begin{subfigure}{0.32\textwidth}
			\centering
			\includegraphics[width=1\linewidth]{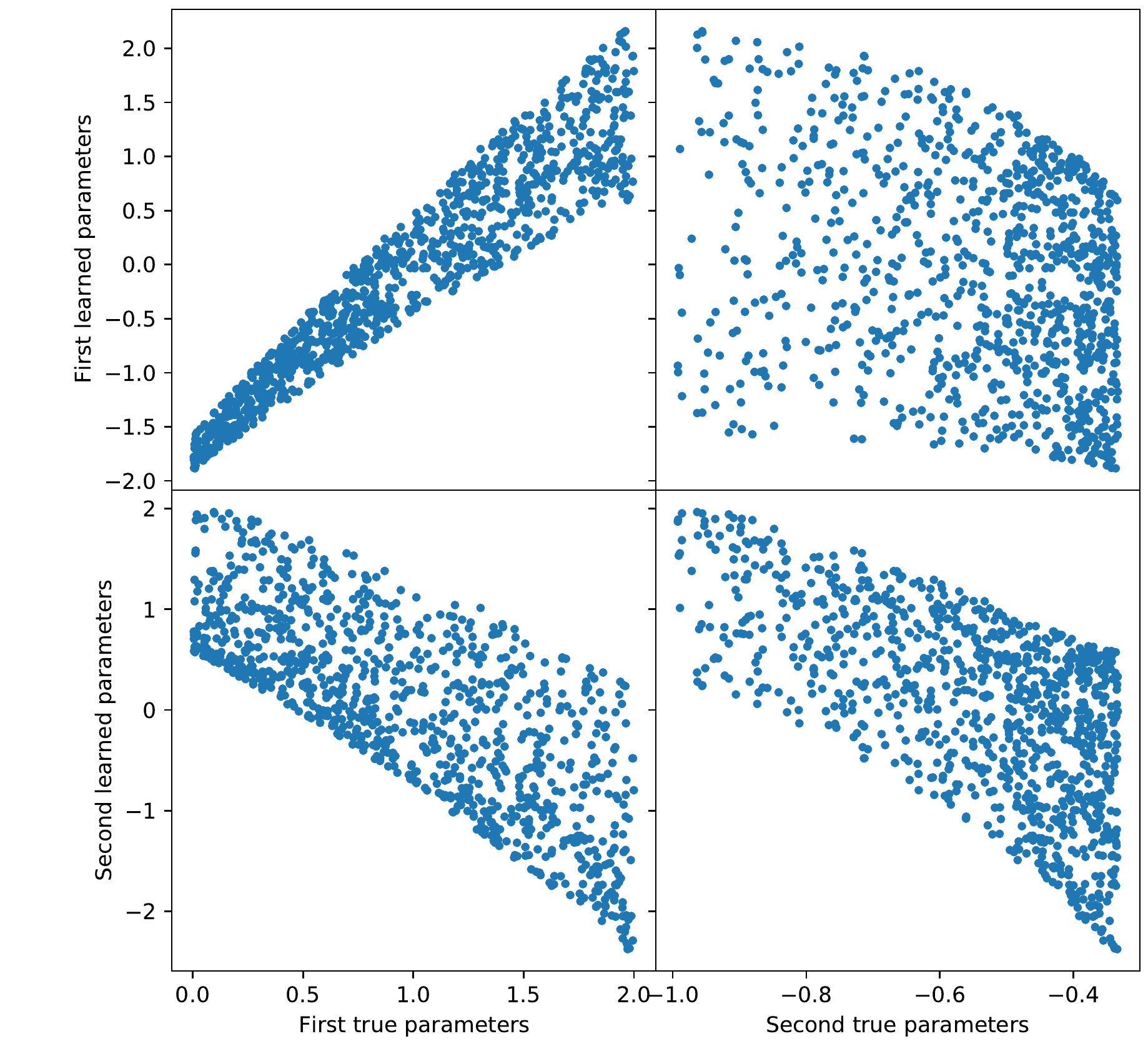}
			\caption{Parameters Gamma}
		\end{subfigure}~
		\begin{subfigure}{0.32\textwidth}
			\centering
			\includegraphics[width=1\linewidth]{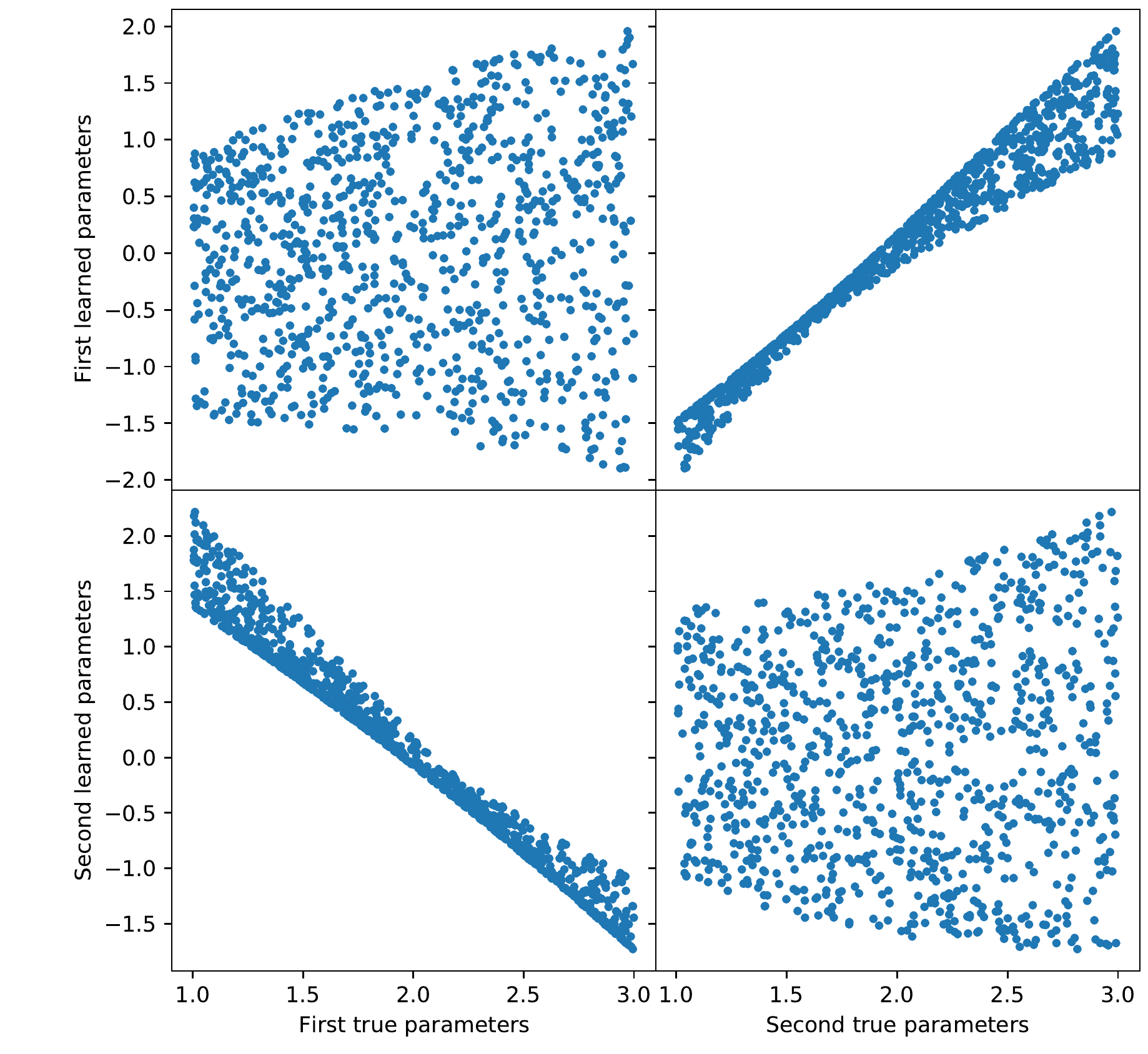}
			\caption{Parameters Beta}
		\end{subfigure}
		\caption{\textbf{Learned and exact embeddings for exponential family models obtained with SSM.} 
			Each point represents a different $ x $ (for the statistics) or $ \theta $ (for the natural parameters); 1000 of each were used here.}
		\label{fig:statistics_exp_fam_models_SSM}	
	\end{figure}

	\newpage
	\FloatBarrier
	\subsubsection{Performance of Exc-SM and Exc-SSM}\label{app:inner_steps_exp_fam_models}
	We study here the performance of Exc-SM and Exc-SSM with different numbers of inner steps in the ExchangeMCMC algorithm (Algorithm~\ref{alg:ExchangeMCMC_overall}) for the Exponential family models. Specifically, we run the inference with 10, 30, 100 and 200 inner MCMC steps, and we evaluate the performance in these 4 cases (Figure~\ref{fig:boxplots_inner_MCMC_step}); considering the different models, we observe that the performance with 30 steps is almost equivalent to the one with 100 and 200, albeit being faster (see the computational time in Table~\ref{Tab:ExcSM_time_exp_fam}). In the main text, we therefore present results using 30 inner MCMC steps.

	\begin{table}[tb]
		\centering
		\begin{tabular}{lcccc}
			\toprule
			\textit{Inner MCMC steps} & 10 & 30 & 100 & 200 \\
			\midrule
			\textit{Time (minutes)} & $ \approx 2 $ & $ \approx 4 $ & $ \approx 16 $& $ \approx 28 $\\
			\bottomrule
		\end{tabular}
		\caption{\textbf{Approximate computational time of Exc-SM with different number of inner MCMC steps for the exponential family models.} These values were obtained by running on a single core.} 
		\label{Tab:ExcSM_time_exp_fam}
	\end{table}

	\begin{figure}[!tb]
		\centering
		\begin{subfigure}{1\textwidth}
			\centering
			\begin{subfigure}{0.20\textwidth}
				\centering
				\includegraphics[width=1\linewidth]{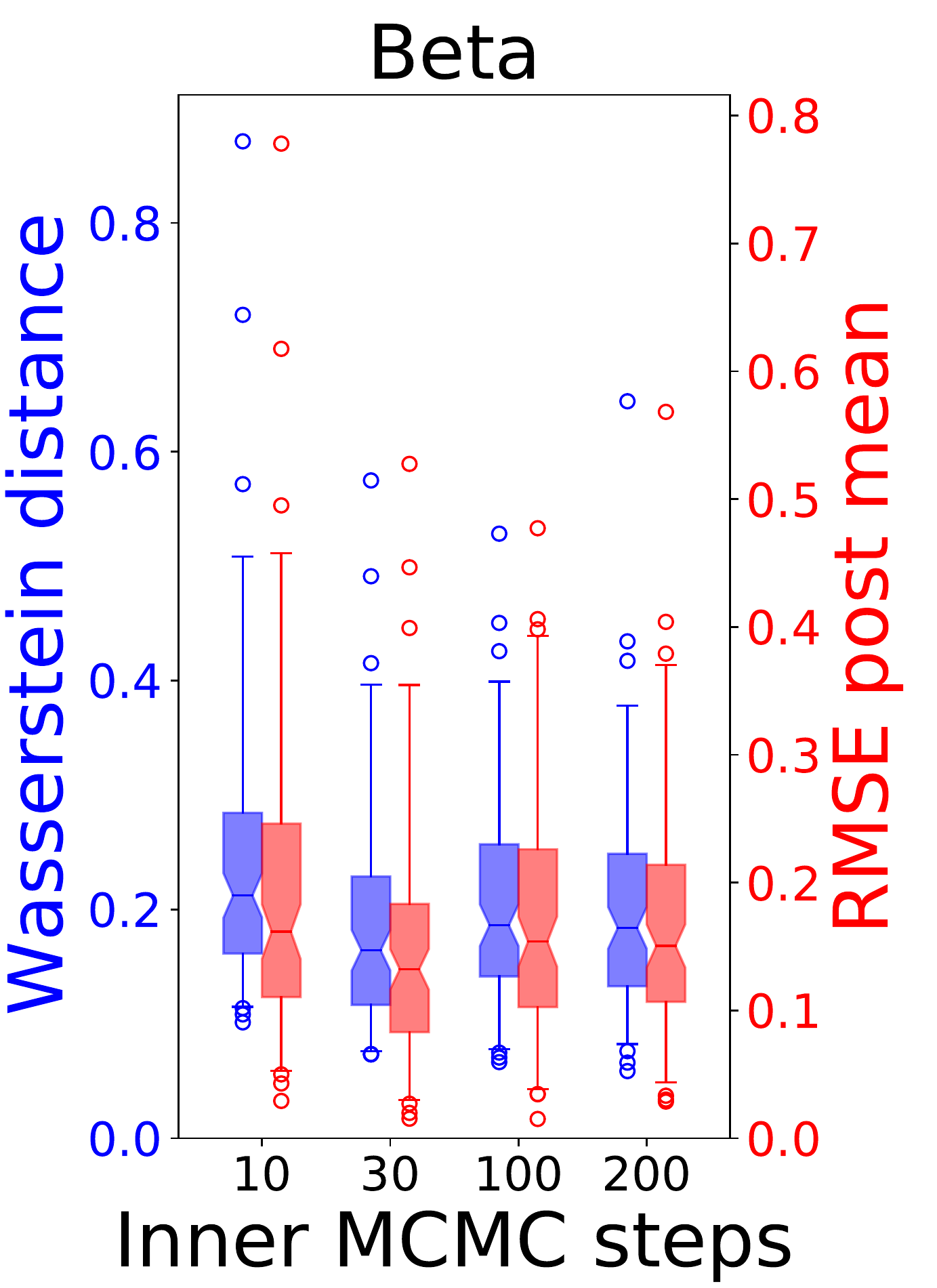}
				
			\end{subfigure}~
			\begin{subfigure}{0.20\textwidth}
				\centering
				\includegraphics[width=1\linewidth]{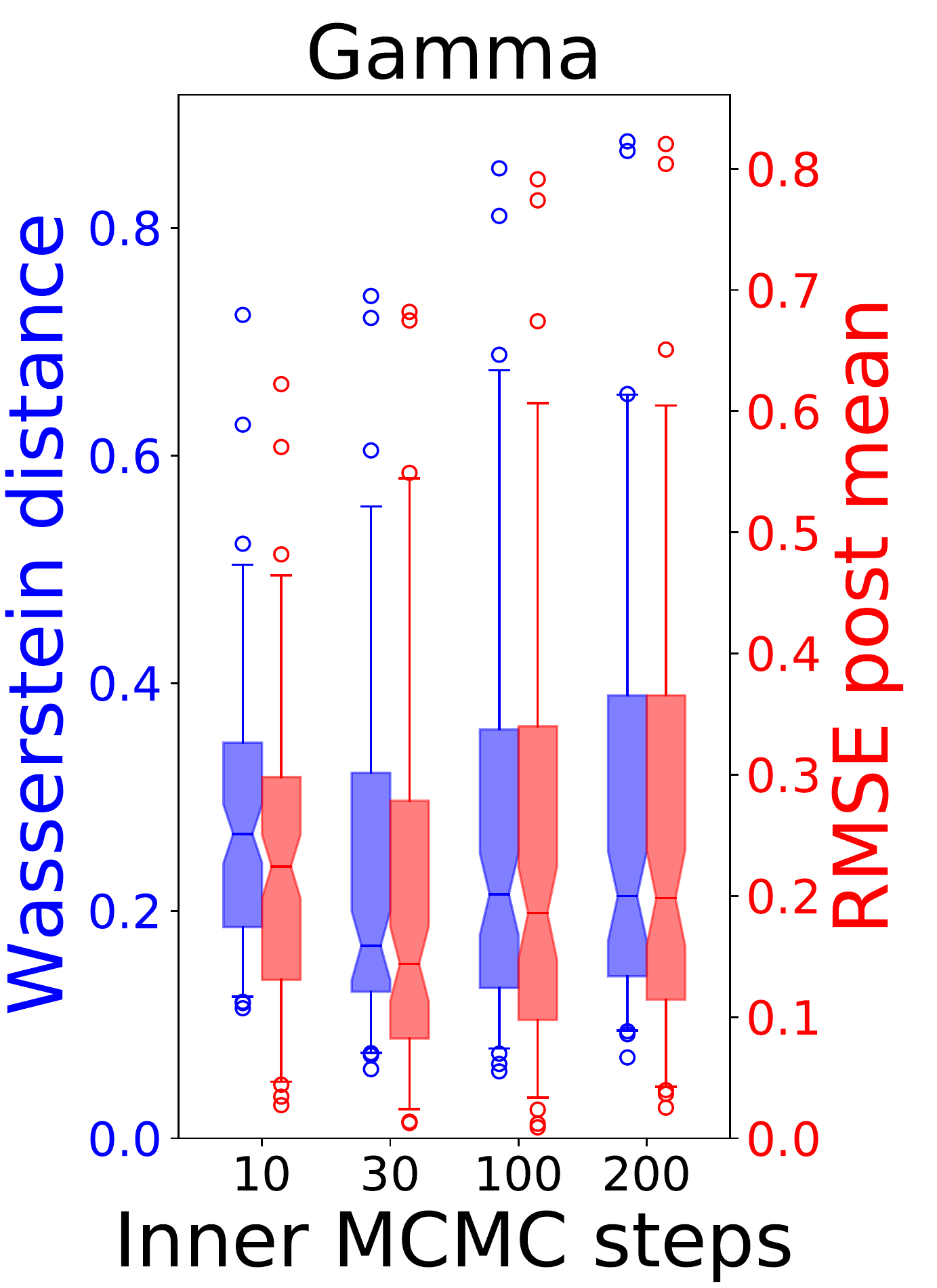}
				
			\end{subfigure}~
			\begin{subfigure}{0.20\textwidth}
				\centering
				\includegraphics[width=1\linewidth]{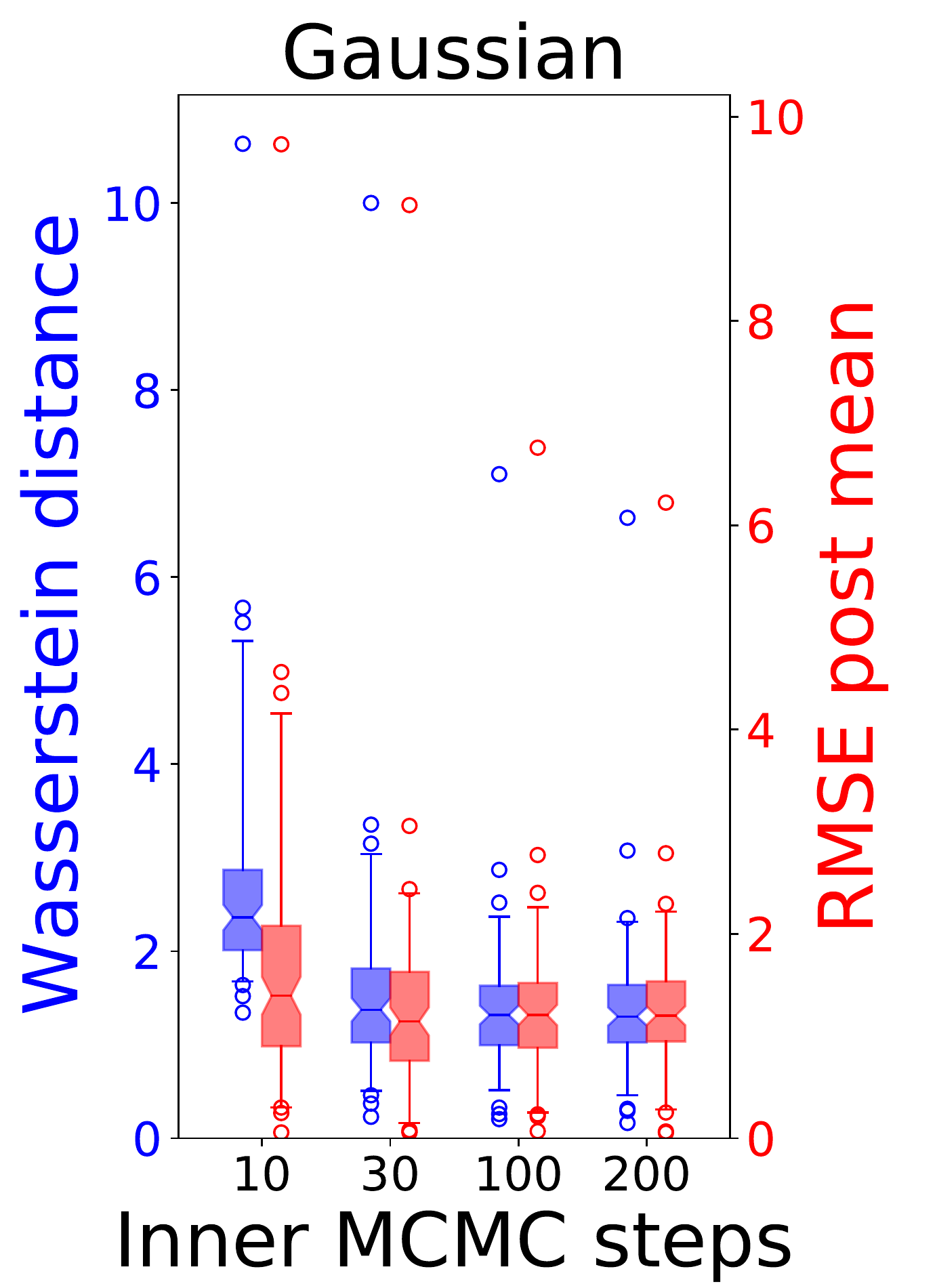}
				
			\end{subfigure}~
			\caption{Exc-SM}
		\end{subfigure}\\
		
		\begin{subfigure}{1\textwidth}
			\centering
			\begin{subfigure}{0.20\textwidth}
				\centering
				\includegraphics[width=1\linewidth]{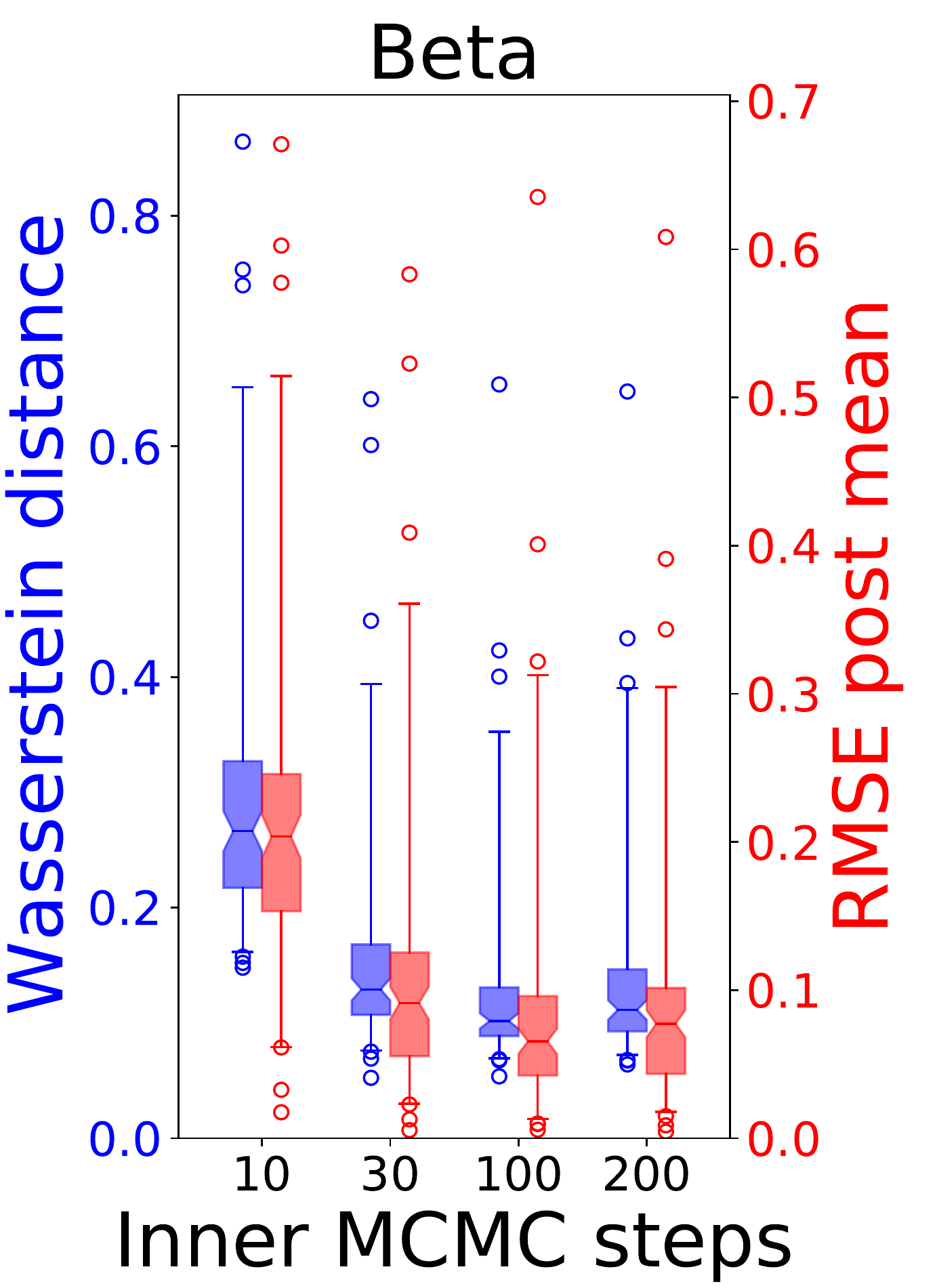}
				
			\end{subfigure}~
			\begin{subfigure}{0.20\textwidth}
				\centering
				\includegraphics[width=1\linewidth]{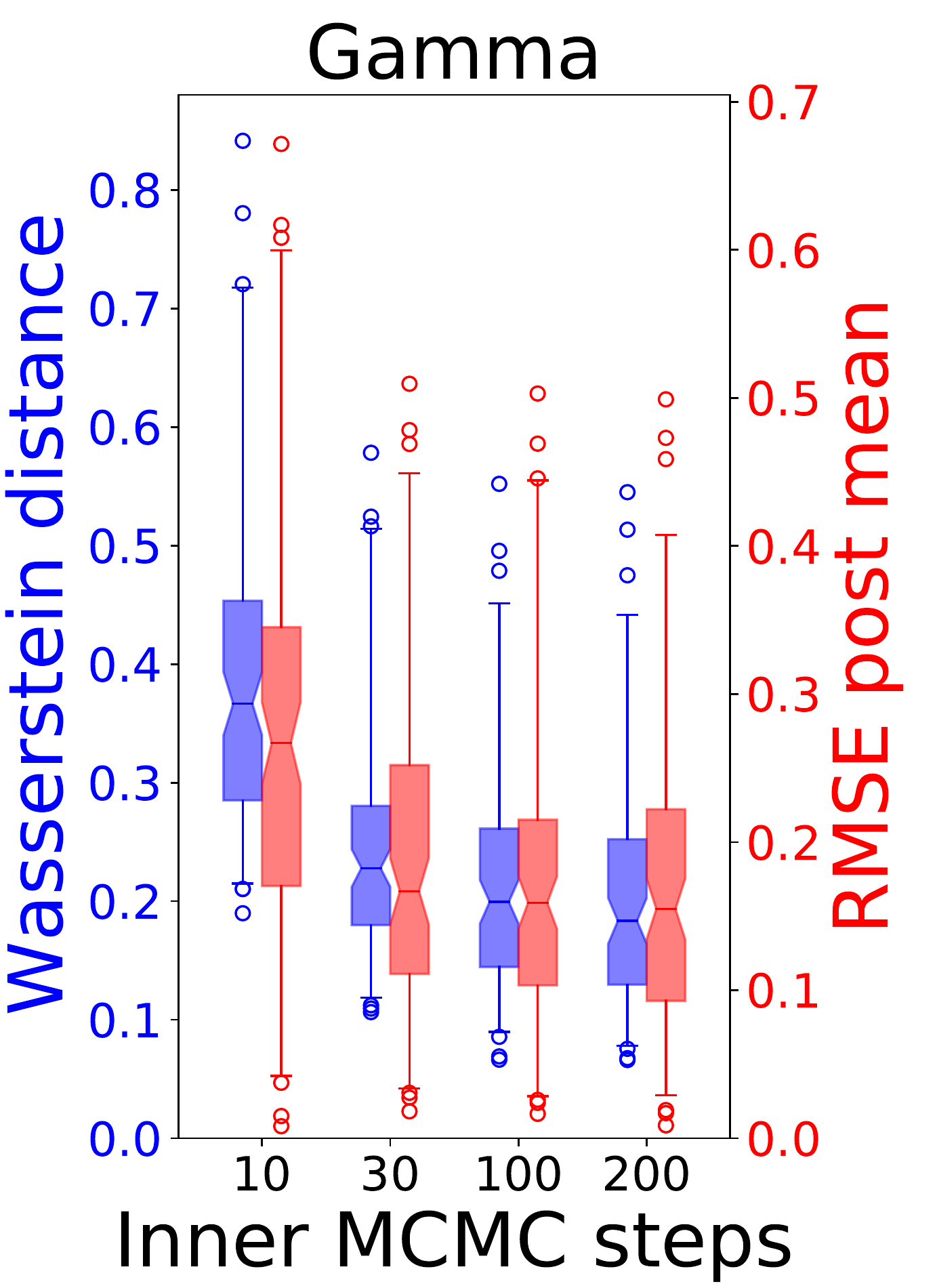}
				
			\end{subfigure}~
			\begin{subfigure}{0.20\textwidth}
				\centering
				\includegraphics[width=1\linewidth]{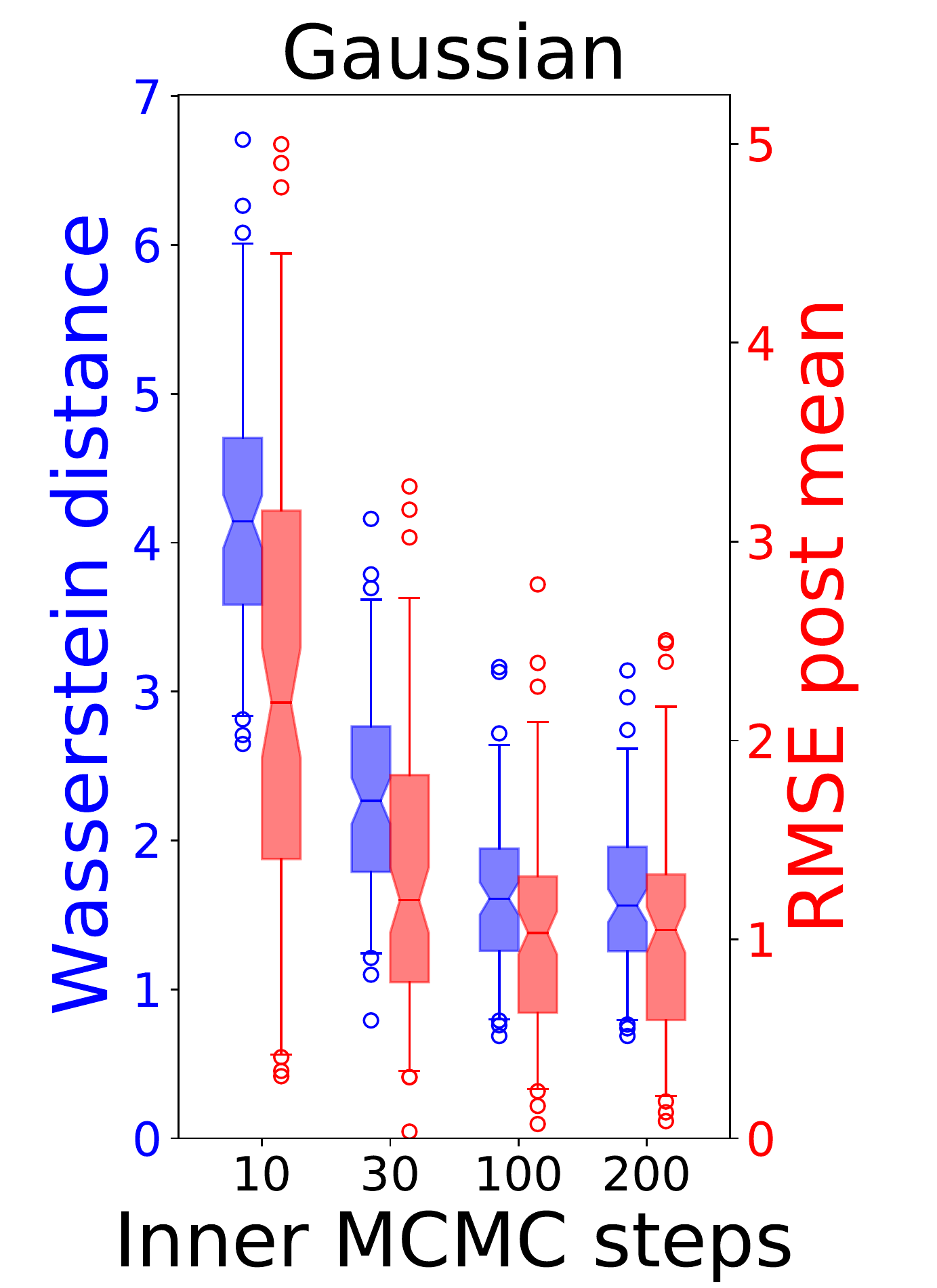}
				
			\end{subfigure}~
			\caption{Exc-SSM}
		\end{subfigure}\\

		\caption{\textbf{Performance of Exc-SM with different number of inner MCMC steps, for exponential family models.} Wasserstein distance from the exact posterior and RMSE between exact and approximate posterior means are reported for 100 observations using boxplots. Boxes span from 1st to 3rd quartile, whiskers span 95\% probability density region and horizontal line denotes median. The numerical values are not comparable across examples, as they depend on the range of parameters.}
		\label{fig:boxplots_inner_MCMC_step}
	\end{figure}

	\subsection{AR(2) and MA(2) models}\label{app:inner_MCMC_arma}

	We study here the performance of Exc-SM and Exc-SSM with different numbers of inner steps in the ExchangeMCMC algorithm (Algorithm~\ref{alg:ExchangeMCMC_overall}) for the AR(2) and MA(2) models. Specifically, we run the inference with 10, 30, 100 and 200 inner MCMC steps, and we evaluate the performance in these 4 cases (Figure~\ref{fig:boxplots_inner_MCMC_step_arma}); considering the different models, we observe that the performance with 30 steps is almost equivalent to the one with 100 and 200, albeit being faster (see the computational time in Table~\ref{Tab:ExcSM_time_timeseries}). In the main text, we therefore present results using 30 inner MCMC steps.

	\begin{table}[tb]
		\centering
		\begin{tabular}{lcccc}
			\toprule
			\textit{Inner MCMC steps} & 10 & 30 & 100 & 200 \\
			\midrule
			\textit{Time (minutes)} & $ \approx 2 $ & $ \approx 5 $ & $ \approx 17 $& $ \approx 29 $\\
			\bottomrule
		\end{tabular}
		\caption{\textbf{Approximate computational time of Exc-SM with different number of inner MCMC steps, for the AR(2) and MA(2) models.} These values were obtained by running on a single core.} 
		\label{Tab:ExcSM_time_timeseries}
	\end{table}

	\begin{figure}[!tb]
		\centering
		\begin{subfigure}{0.48\textwidth}
			\centering
			\begin{subfigure}{0.48\textwidth}
				\centering
				\includegraphics[width=1\linewidth]{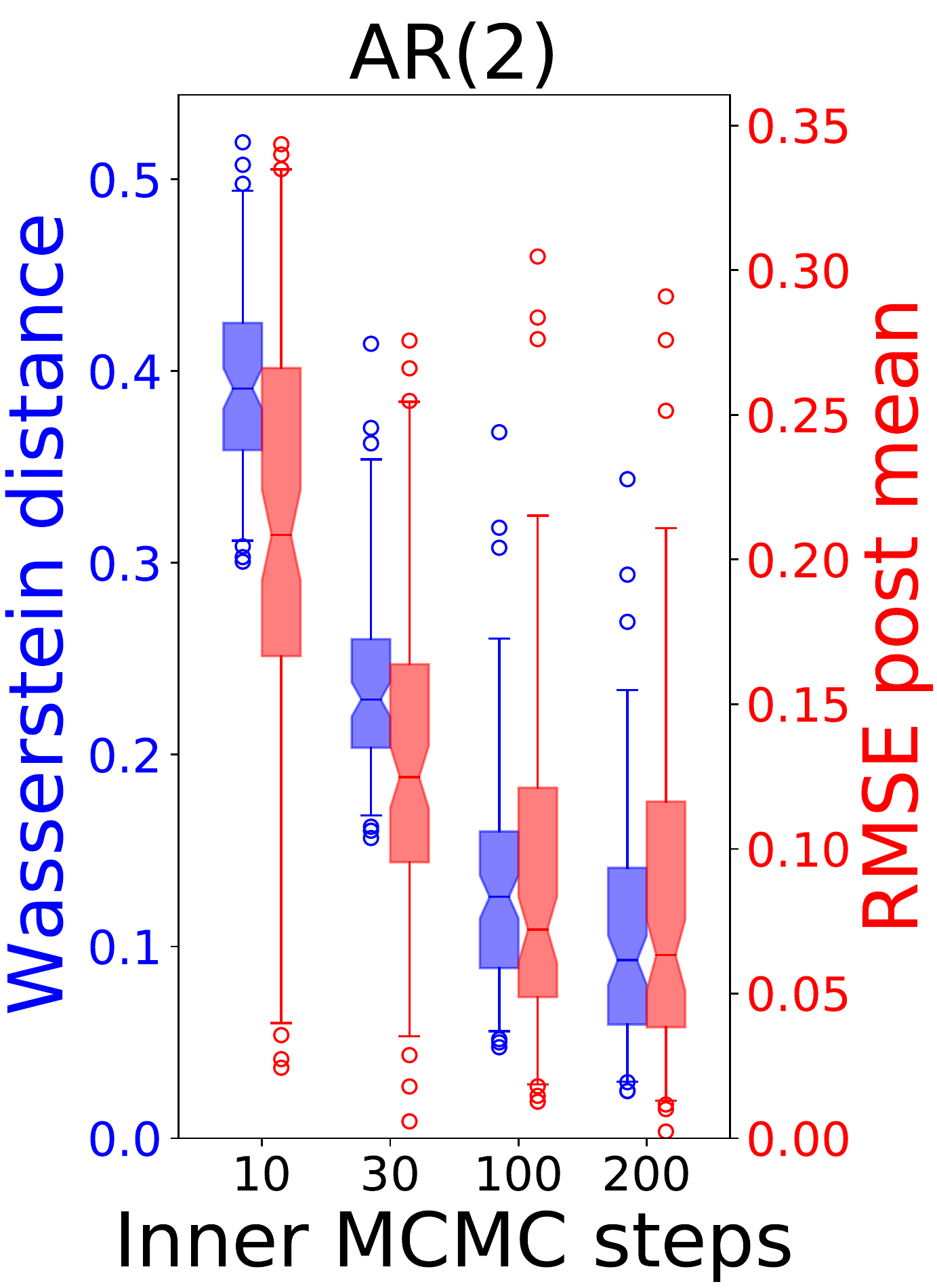}
				
			\end{subfigure}~
			\begin{subfigure}{0.48\textwidth}
				\centering
				\includegraphics[width=1\linewidth]{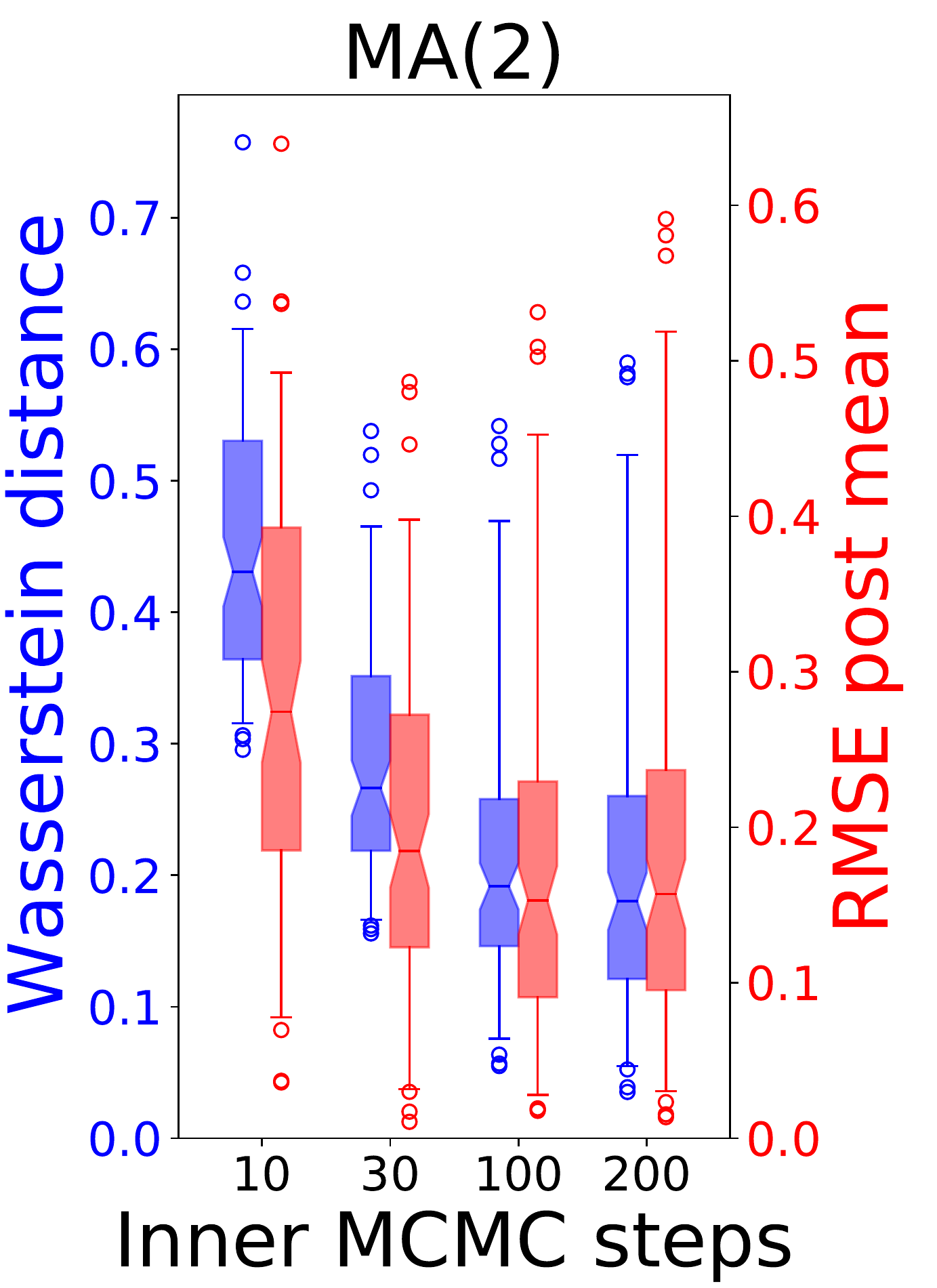}

			\end{subfigure}~
			\caption{Exc-SM}
		\end{subfigure}~
		\begin{subfigure}{0.48\textwidth}
			\centering
			
			\begin{subfigure}{0.48\textwidth}
				\centering
				\includegraphics[width=1\linewidth]{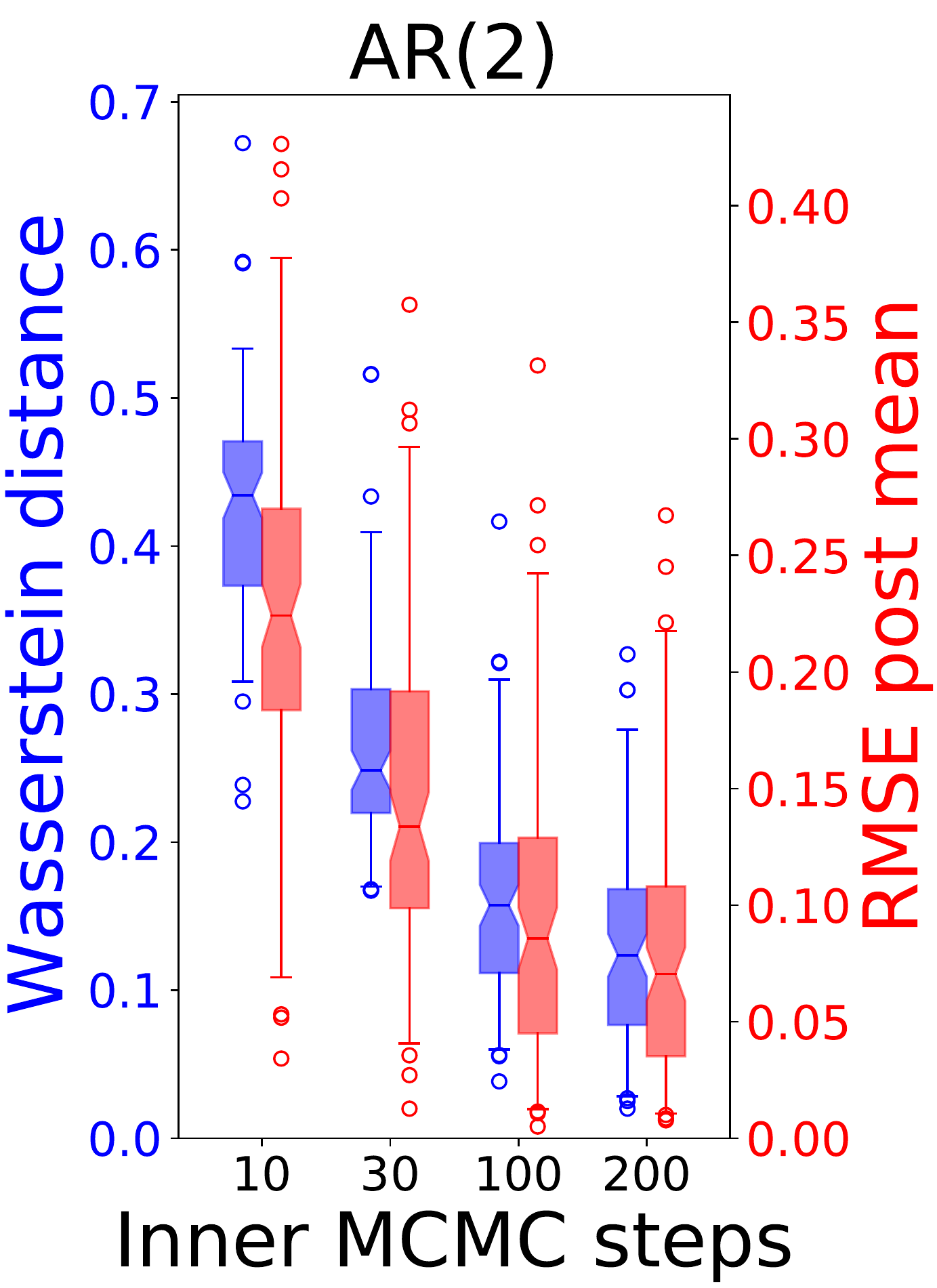}
				
			\end{subfigure}~
			\begin{subfigure}{0.48\textwidth}
				\centering
				\includegraphics[width=1\linewidth]{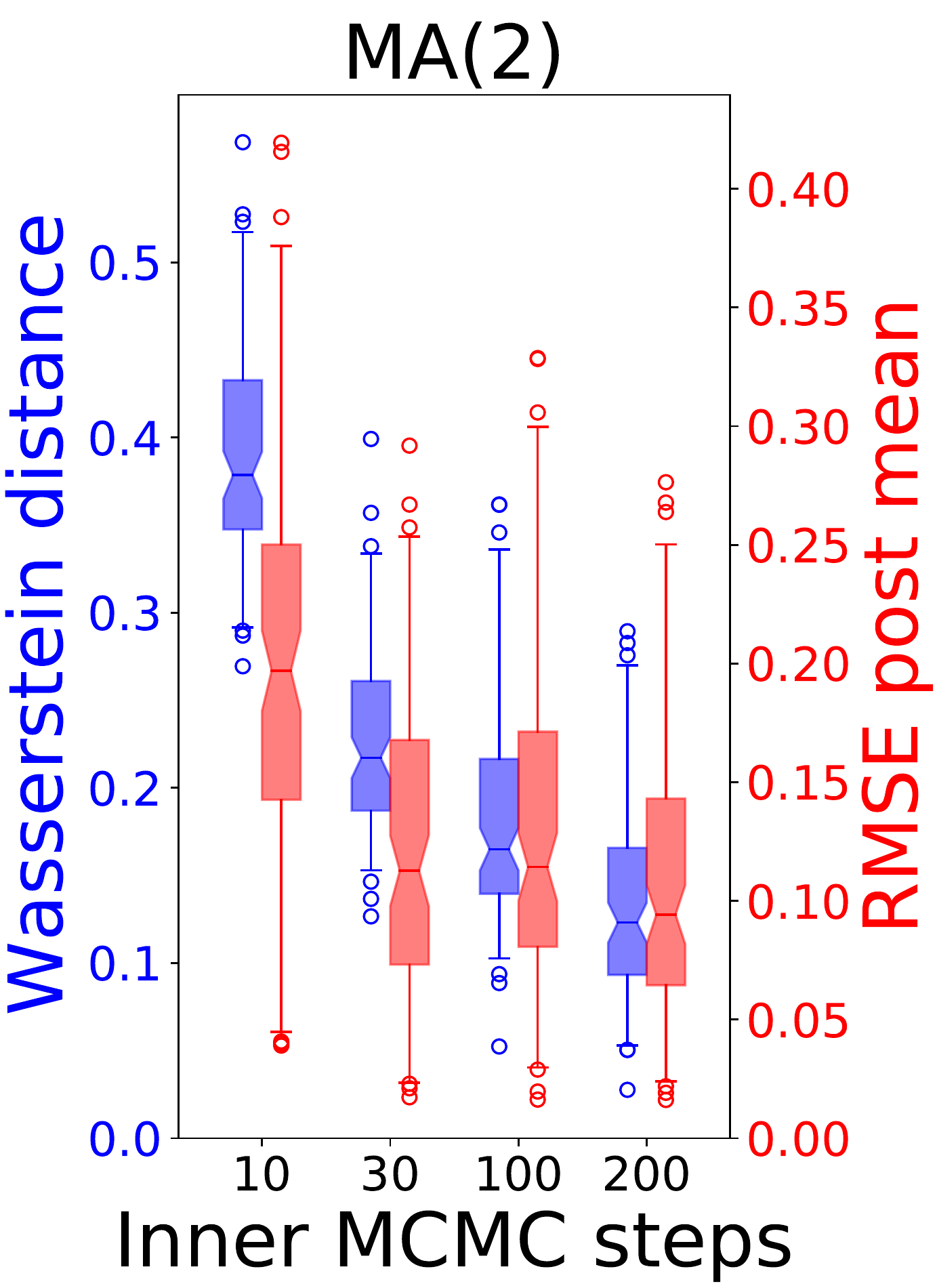}

			\end{subfigure}~
			\caption{Exc-SSM}
		\end{subfigure}~
		
		\caption{\textbf{Performance of Exc-SM with different number of inner MCMC steps, for AR(2) and MA(2) models.} Wasserstein distance from the exact posterior and RMSE between exact and approximate posterior means are reported for 100 observations using boxplots. Boxes span from 1st to 3rd quartile, whiskers span 95\% probability density region and horizontal line denotes median.}
		\label{fig:boxplots_inner_MCMC_step_arma}
	\end{figure}

	\subsection{Simulation cost to reach equivalent performance as Exc-SM}\label{app:evolution_WD_iter}

	Above, we have showed that Exc-SM and Exc-SSM are competitive with the other approaches, even if they require no additional model simulations. Here, we quantify how many model simulations are needed for the other techniques to reach the same performance of Exc-SM in the exponential family and time-series models. The same analysis could be done with respect to Exc-SSM but, as its performance is generally close to the one achieved by Exc-SM, we avoid repeating it. 
	
	We compute therefore the performance of ABC-SM, ABC-SSM, ABC-FP, PMC-SL, PMC-RE at each iteration for all 100 observations and find when their median performance (as quantifies by the Wasserstein distance with respect to the true posterior) becomes comparable or better than the median one achieved by Exc-SM. The number of required simulations are reported in Table~\ref{Table:num_sim}. Notice that some techniques are not able to outperform Exc-SM for some models; we highlight that by a dash in the Table. SL and RE reach similar performance to Exc-SM with one single iteration, when they are able to do so; in fact, we found empirically that the performance of them does not significantly improve with iterations. Still, we remark that one single iteration of SL and RE requires a very large number of model simulations.

	In Table~\ref{Table:num_sim}, we also give the number of simulations required for the preliminary NN training in the methods which use one; further, we compute the overall cost of inference in terms of model simulations (taking into account both NN training and inference) for different number of observations; we remark that, as discussed previously, our method requires no additional model simulations to perform inference after the NNs have been trained. From Table~\ref{Table:num_sim}, it can be seen that, for all models, ABC-SM, ABC-FP, PMC-RE and PMC-SL require a number of simulations larger than the one needed to train the NNs in Exc-SM to reach the performance achieved by Exc-SM, which makes the latter an interesting option for models in which simulations are very expensive.

	\begin{table}[!tbp]
		\centering
		\small
		\begin{tabular}{ lcccccc }
			\toprule
			& \multicolumn{6}{c}{Beta}   \\ \cmidrule[0.1pt](r){2-7}
			& Exc-SM & ABC-SM & ABC-SSM & ABC-FP & PMC-SL & PMC-RE \\ 
			\midrule
			NN training & $2 \cdot 10^4 $ & $  2 \cdot  10^4$ & $2 \cdot  10^4 $ & $2 \cdot  10^4 $ & - & - \\
			Inference & $ 0 $ &  $ 2.5 \cdot 10^4 $ & $ 1.8 \cdot 10^4 $  & - & $1 \cdot 10^5 $ & $1 \cdot10^6 $\\
			\midrule[0.1pt]
			Total 1 obs & $ 2 \cdot 10^4  $ & $ 4.5 \cdot 10^4 $ & $ 3.8\cdot 10^4 $ & - & $1 \cdot 10^5 $ & $1 \cdot10^6 $\\
			Total 3 obs & $ 2 \cdot 10^4  $ & $ 9.5 \cdot 10^4 $ & $ 7.4\cdot 10^4 $ & - & $3 \cdot 10^5 $ & $3 \cdot10^6 $\\
			Total 100 obs & $ 2 \cdot 10^4  $ & $ 2.52\cdot10^6 $ & $ 1.82\cdot 10^6  $ & - & $1 \cdot 10^7 $ & $1 \cdot10^8 $\\
			\bottomrule
			& \multicolumn{6}{c}{Gamma}   \\ \cmidrule[0.1pt](r){2-7}
			& Exc-SM & ABC-SM & ABC-SSM & ABC-FP & PMC-SL & PMC-RE \\ 
			\midrule
			NN training & $2 \cdot 10^4 $ & $  2 \cdot  10^4$ & $2 \cdot  10^4 $ & $2 \cdot  10^4 $ & - & - \\
			Inference & $ 0 $ & $ 4.7\cdot 10^4 $ & $ 2.9 \cdot 10^4 $ & - & $ 1 \cdot10^5 $ & $1 \cdot10^6 $\\
			\midrule[0.1pt]
			Total 1 obs & $ 2 \cdot 10^4  $ &  $ 6.7\cdot 10^4 $ & $ 4.9 \cdot 10^4 $ & - & $ 1 \cdot10^5 $ & $1 \cdot10^6 $\\
			Total 3 obs & $ 2 \cdot 10^4  $ & $ 1.61\cdot 10^5 $ & $ 1.07 \cdot 10^5 $ & - & $ 3 \cdot10^5 $ & $3 \cdot10^6 $\\
			Total 100 obs & $ 2 \cdot 10^4  $ & $ 4.72\cdot 10^6  $ & $ 2.92\cdot 10^6 $ & - & $ 1 \cdot10^7 $ & $1 \cdot10^8 $\\
			\bottomrule
			& \multicolumn{6}{c}{Gaussian}   \\ \cmidrule[0.1pt](r){2-7}
			& Exc-SM & ABC-SM & ABC-SSM & ABC-FP & PMC-SL & PMC-RE \\ 
			\midrule
			NN training & $2 \cdot 10^4 $ & $  2 \cdot  10^4$ & $2 \cdot  10^4 $ & $2 \cdot  10^4 $ & - & - \\
			Inference & $ 0 $ & $ 2.7\cdot 10^4 $ & $ 2.6\cdot 10^4 $  & - & - & $1 \cdot10^6 $\\
			\midrule[0.1pt]
			Total 1 obs & $ 2 \cdot 10^4  $ & $ 4.7\cdot 10^4 $ & $ 4.6\cdot 10^4 $ & - & - & $1 \cdot10^6 $\\
			Total 3 obs & $ 2 \cdot 10^4  $ &$ 1.01\cdot 10^5 $ & $ 9.8\cdot 10^4 $  & - & - & $3 \cdot10^6 $\\
			Total 100 obs & $ 2 \cdot 10^4  $ &  $ 2.72\cdot 10^6 $ & $ 2.62\cdot 10^6 $  & - & - & $1 \cdot10^8 $\\
			\bottomrule
			& \multicolumn{6}{c}{AR2}   \\ \cmidrule[0.1pt](r){2-7}
			& Exc-SM & ABC-SM & ABC-SSM & ABC-FP & PMC-SL & PMC-RE \\ 
			\midrule
			NN training & $2 \cdot 10^4 $ & $  2 \cdot  10^4$ & $2 \cdot  10^4 $ & $2 \cdot  10^4 $ & - & - \\
			Inference & $ 0 $ & $ 3.2\cdot 10^4 $ & $ 3.1\cdot 10^4 $ &$  2.3\cdot 10^4 $  & $ 1 \cdot10^5 $ & - \\
			\midrule[0.1pt]
			Total 1 obs & $ 2 \cdot 10^4  $ & $ 5.2\cdot 10^4$ & $5.1\cdot 10^4$ & $4.3\cdot 10^4 $ & $ 1 \cdot10^5 $ & - \\
			Total 3 obs & $ 2 \cdot 10^4  $ & $ 1.16\cdot 10^5$ & $1.13\cdot 10^5$ & $8.9\cdot 10^4 $ & $ 3 \cdot10^5 $ & -\\
			Total 100 obs & $ 2 \cdot 10^4  $ & $ 3.22\cdot 10^6$ & $3.12\cdot 10^6$ & $2.32\cdot 10^6 $ & $ 1 \cdot 10^7 $ & - \\
			\bottomrule
			& \multicolumn{6}{c}{MA2}   \\ \cmidrule[0.1pt](r){2-7}
			& Exc-SM & ABC-SM & ABC-SSM & ABC-FP & PMC-SL & PMC-RE \\ 
			\midrule
			NN training & $2 \cdot 10^4 $ & $  2 \cdot  10^4$ & $2 \cdot  10^4 $ & $2 \cdot  10^4 $ & - & - \\
			Inference & $ 0 $ & $3.1\cdot 10^4 $& $ 2.1\cdot 10^4 $ & $ 2.0\cdot 10^4 $  & $ 1\cdot 10^5 $ & - \\
			\midrule[0.1pt]
			Total 1 obs & $ 2 \cdot 10^4  $ & $ 5.1\cdot 10^4 $ & $ 4.1\cdot 10^4 $ & $4.0\cdot 10^4$  & $ 1 \cdot10^5 $ & -\\
			Total 3 obs & $ 2 \cdot 10^4  $ & $ 1.13\cdot 10^5 $ & $ 8.3\cdot 10^4 $ & $8.0\cdot 10^4$ & $ 3 \cdot10^5 $ & -\\
			Total 100 obs & $ 2 \cdot 10^4  $ & $ 3.12\cdot 10^6 $ & $ 2.12\cdot 10^6 $ & $2.02\cdot 10^6$  & $ 1 \cdot 10^7 $ &-\\
			\bottomrule
		\end{tabular}
		\caption{\textbf{Model simulations needed for the different techniques,} for both NN training and inference; for ABC-FP, ABC-SM, ABC-SSM, PMC-SL and PMC-RE, we report simulations needed to obtain performance at least as good as Exc-SM; in case the approach does not reach the same performance as Exc-SM, we denote that by a dash. Notice that PMC-SL and PMC-RE do not require NN training before performing inference. We also show the total number of simulations needed to apply the different approaches on 1, 3 and 100 observations, by taking into account NN training and inference steps.}
		\label{Table:num_sim}
		
	\end{table}

	In Figures~\ref{Fig:wass_dist_iter_exp_fam_models} and \ref{Fig:wass_dist_iter_arma_models} we represent the performance attained by ABC-FP, ABC-SM, ABC-SSM, PMC-SL and PMC-RE at each iteration of the iterative algorithm. On the horizontal axis of all plots, we report the number of model simulations corresponding to the iteration of the algorithm.
	
	\begin{figure}[!htbp]
		\centering
		\begin{subfigure}{0.32\textwidth}
			\centering
			\includegraphics[width=1\linewidth]{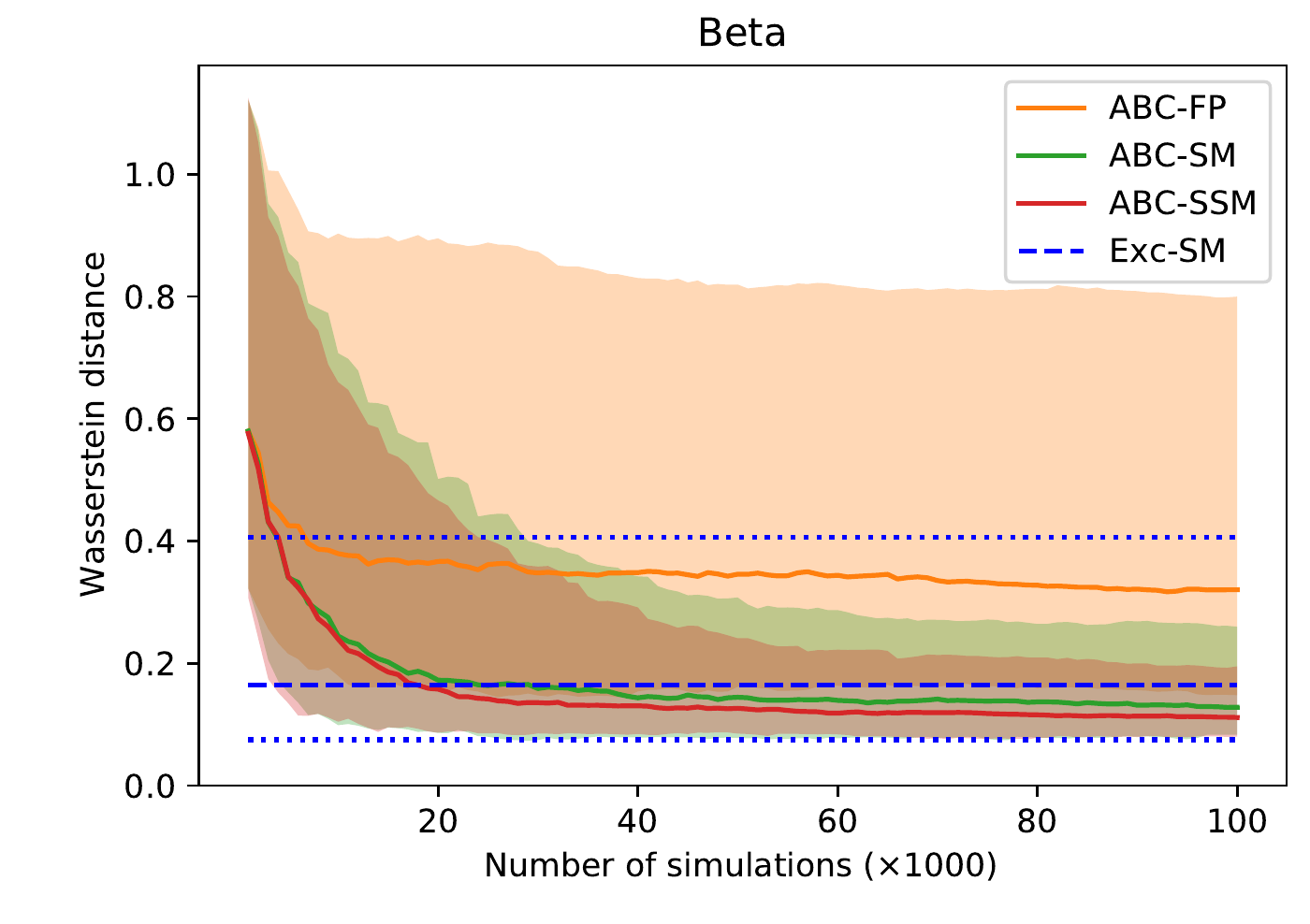}
			
		\end{subfigure}~
		\begin{subfigure}{0.32\textwidth}
			\centering
			\includegraphics[width=1\linewidth]{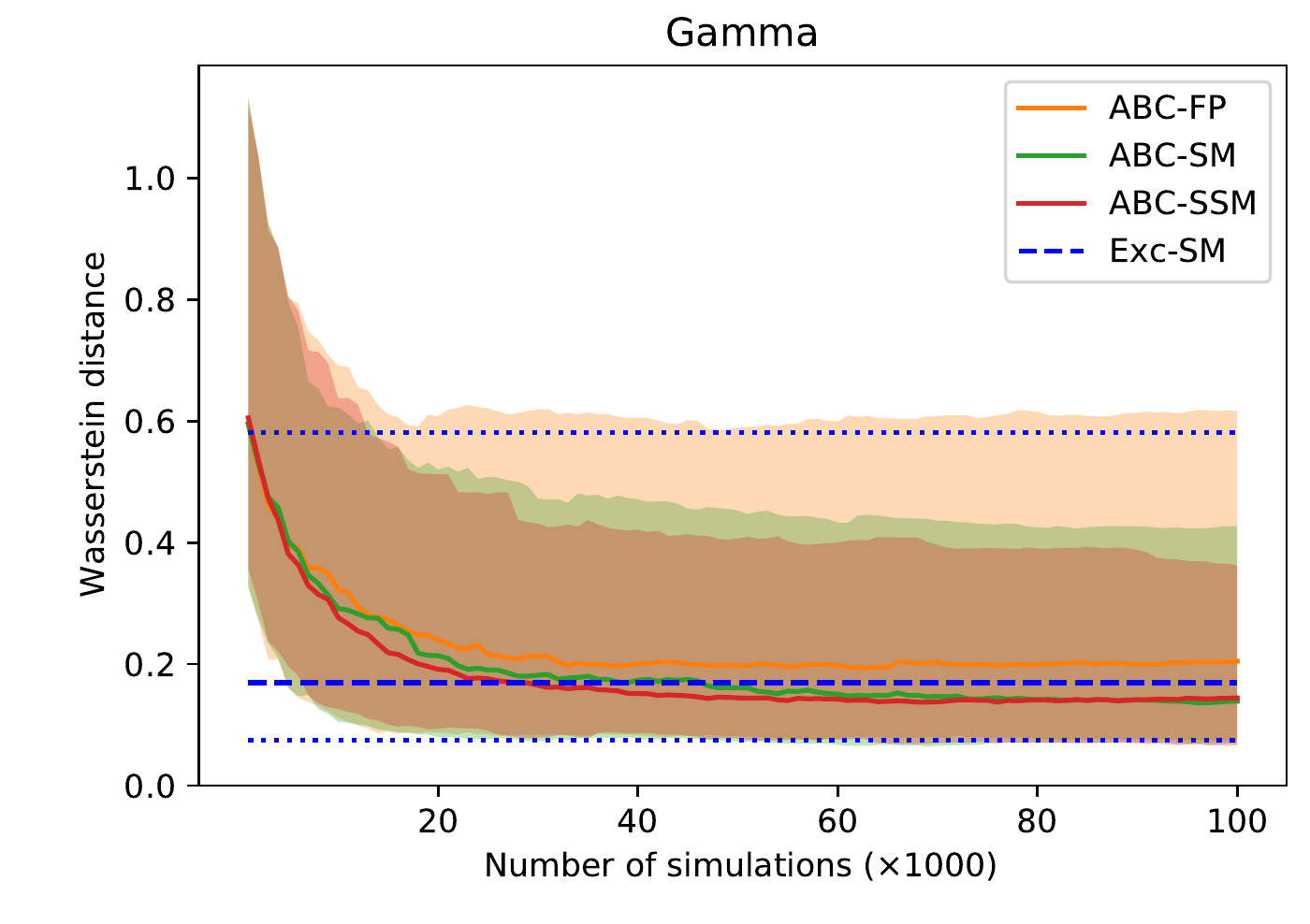}
			
		\end{subfigure}~
		\begin{subfigure}{0.32\textwidth}
			\centering
			\includegraphics[width=1\linewidth]{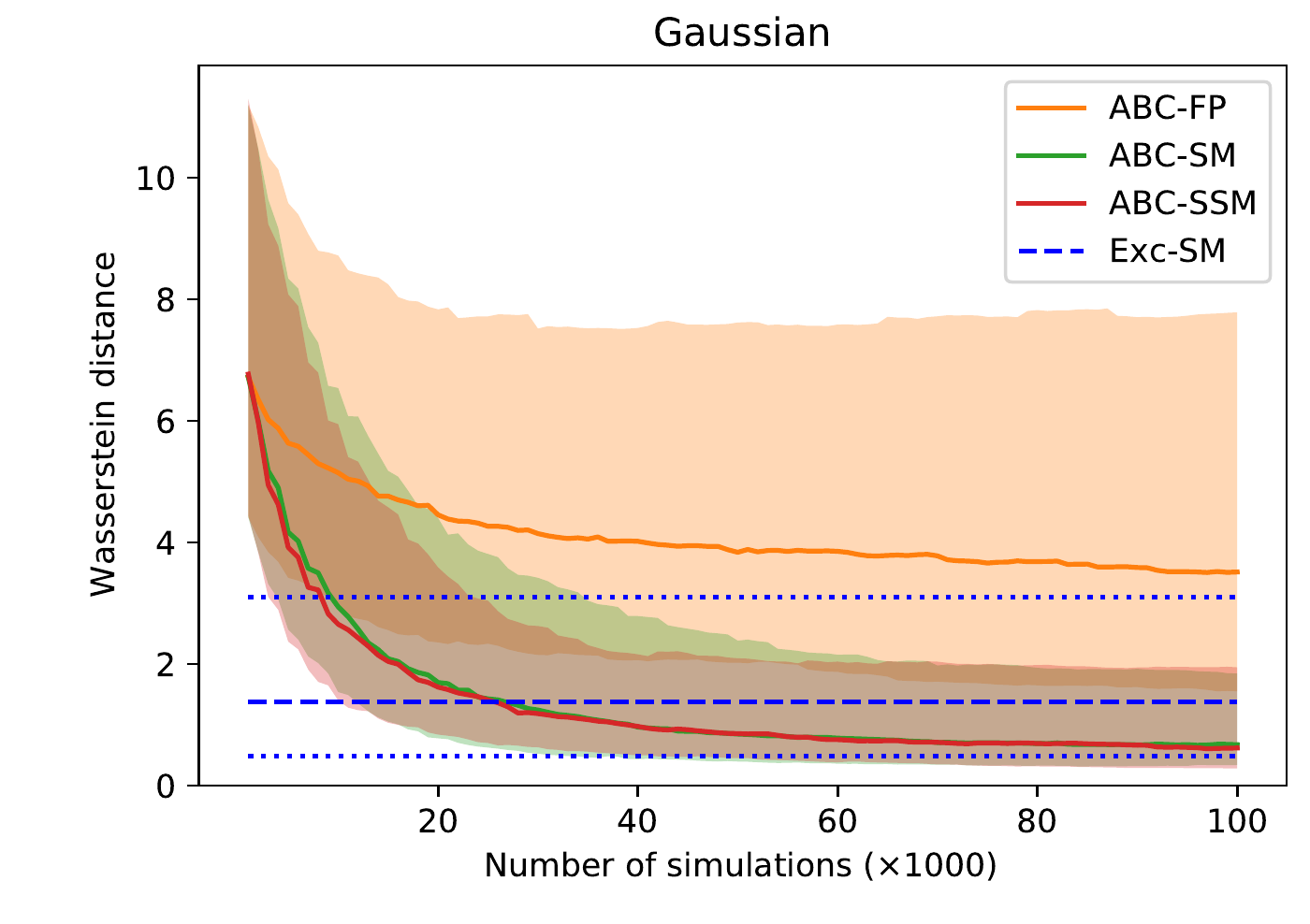}
			
		\end{subfigure}\\
		\begin{subfigure}{0.32\textwidth}
			\centering
			\includegraphics[width=1\linewidth]{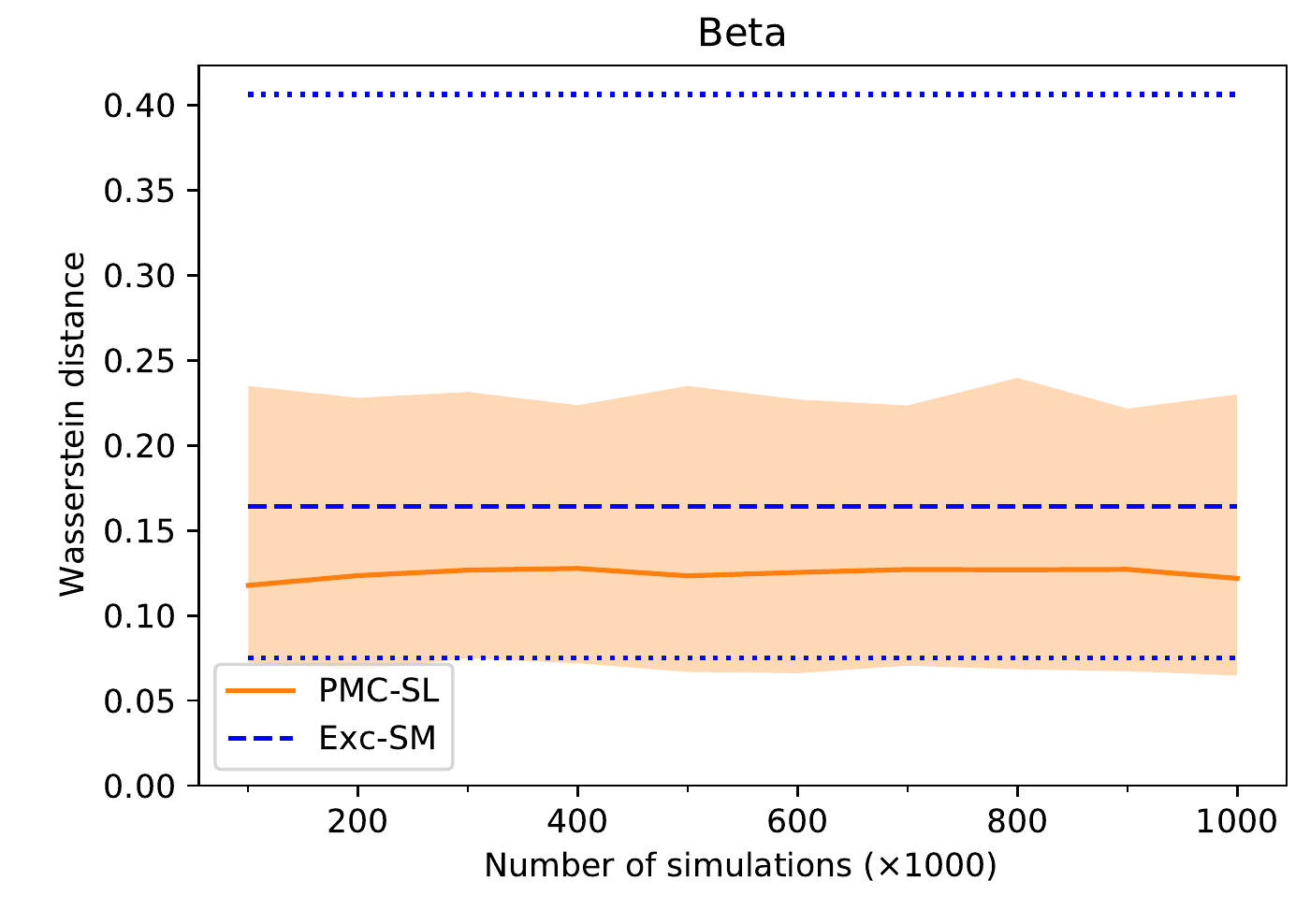}
			
		\end{subfigure}~
		\begin{subfigure}{0.32\textwidth}
			\centering
			\includegraphics[width=1\linewidth]{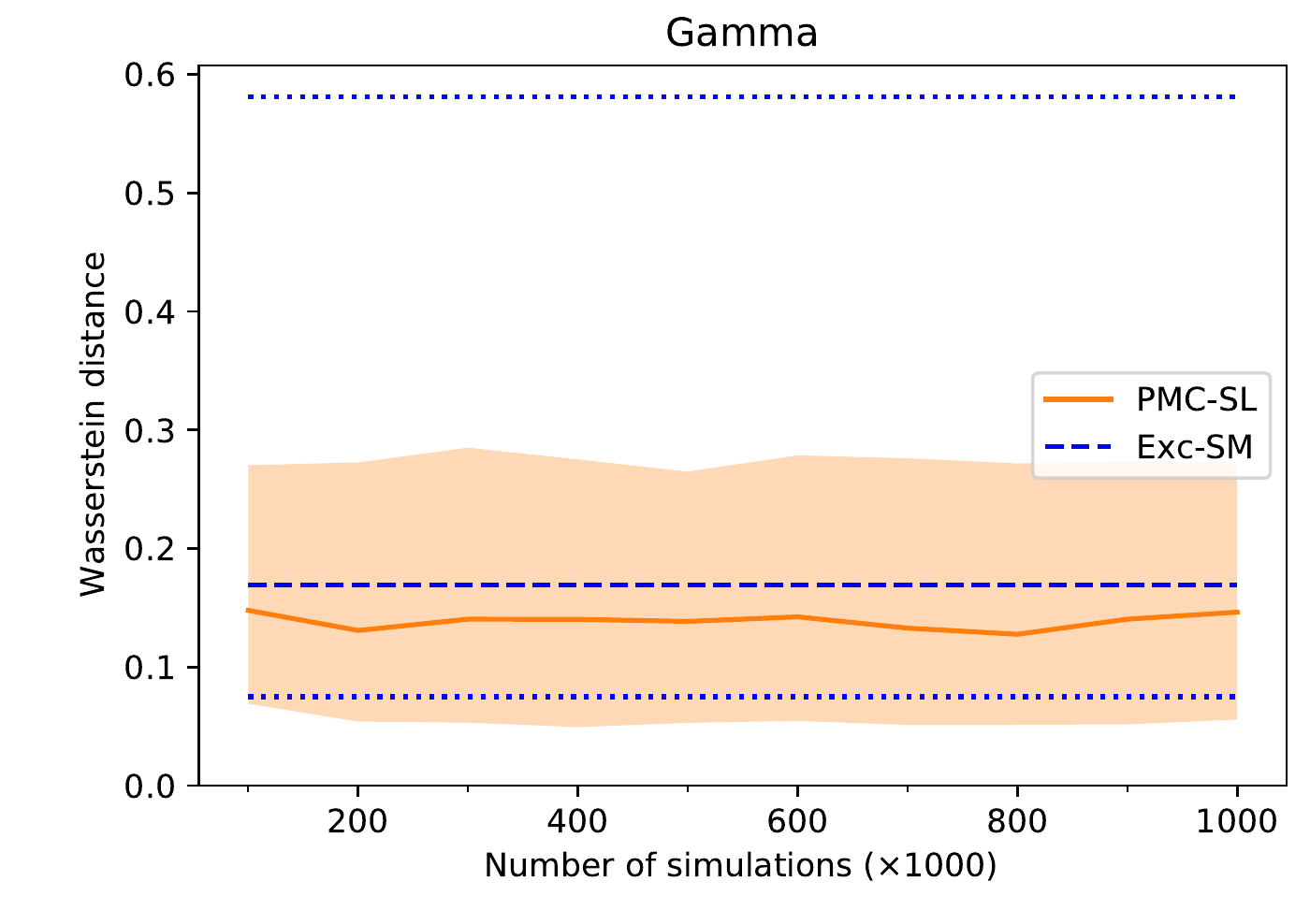}
			
		\end{subfigure}~
		\begin{subfigure}{0.32\textwidth}
			\centering
			\includegraphics[width=1\linewidth]{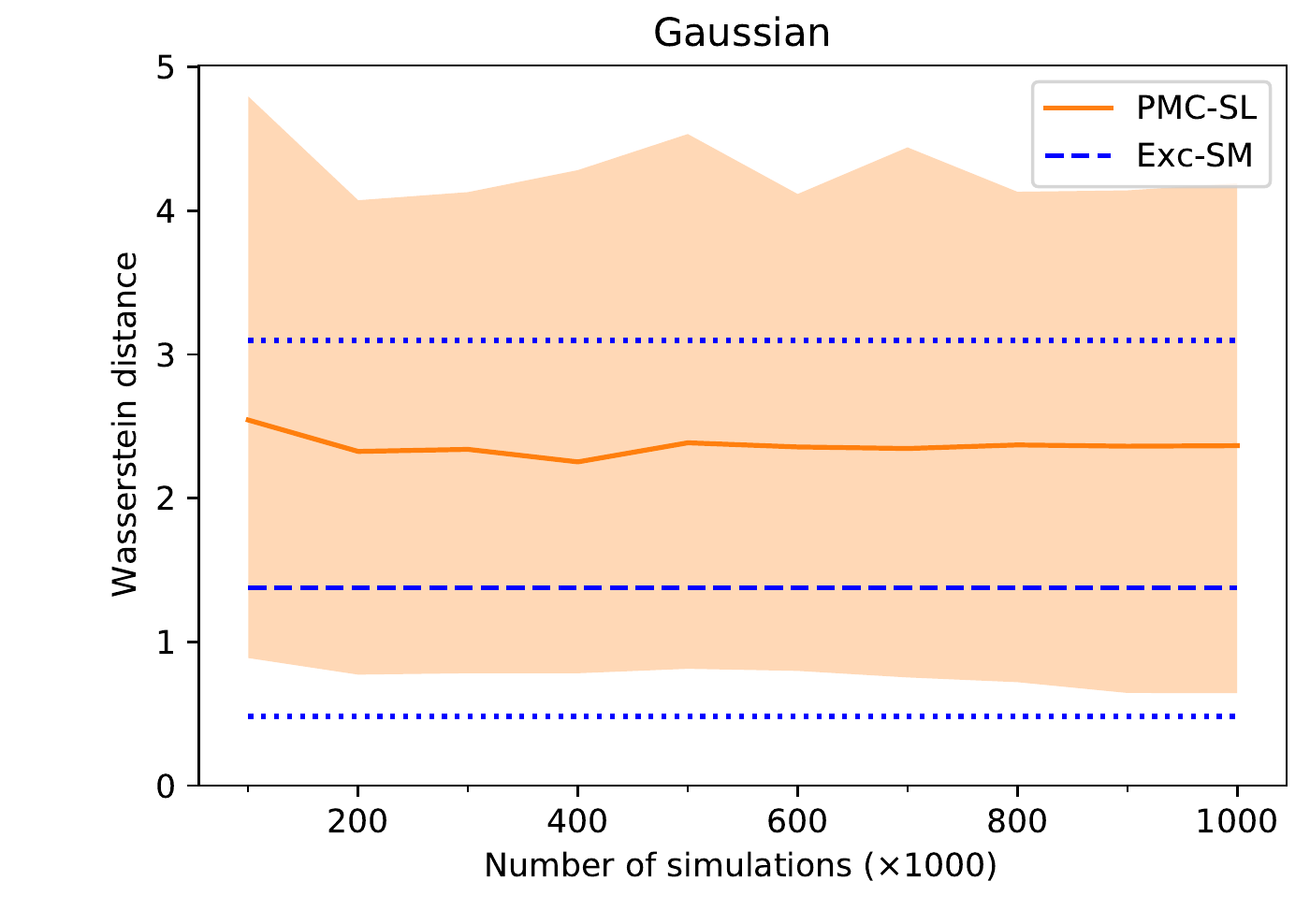}
			
		\end{subfigure}\\
		\begin{subfigure}{0.32\textwidth}
			\centering
			\includegraphics[width=1\linewidth]{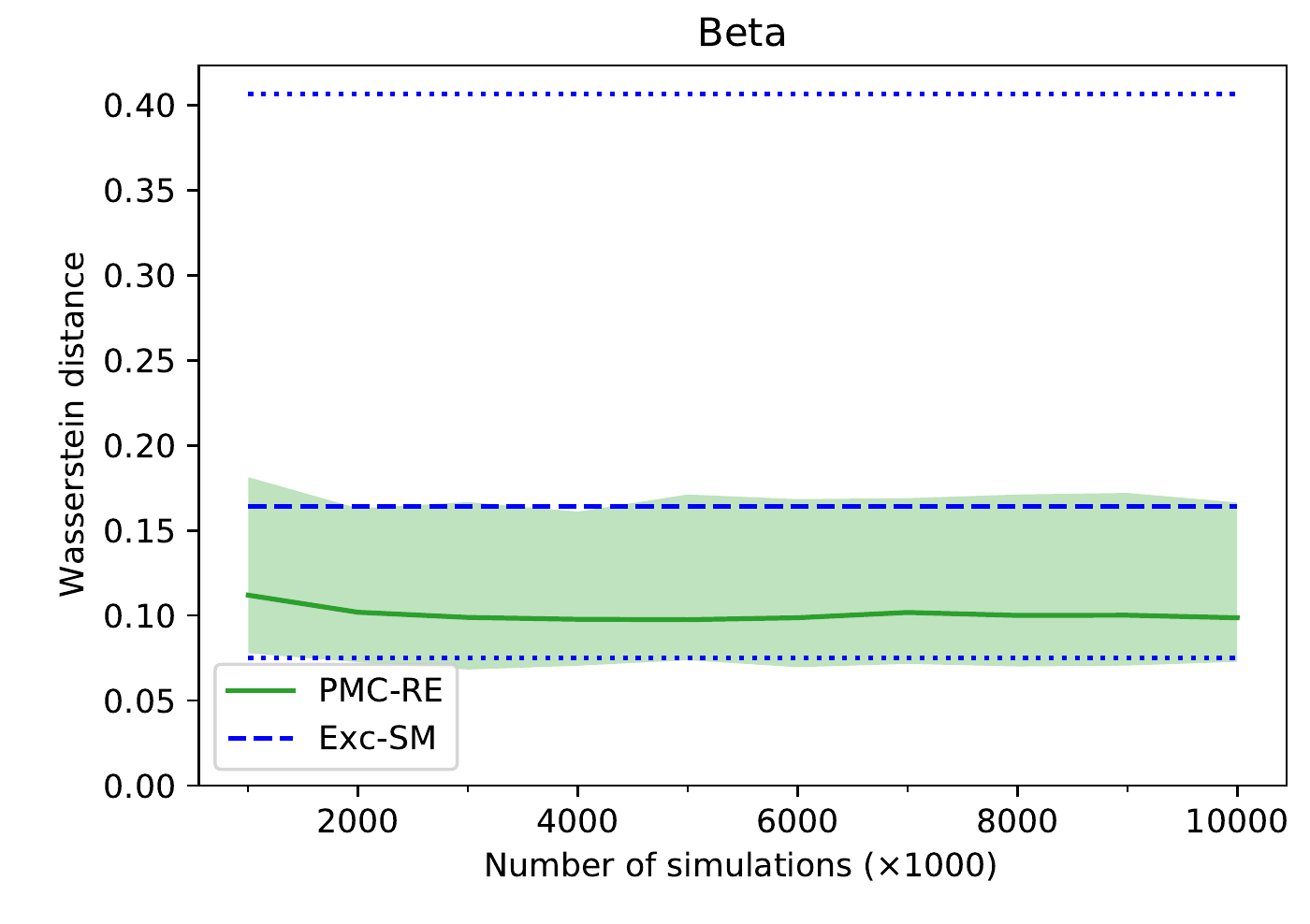}
			
		\end{subfigure}~
		\begin{subfigure}{0.32\textwidth}
			\centering
			\includegraphics[width=1\linewidth]{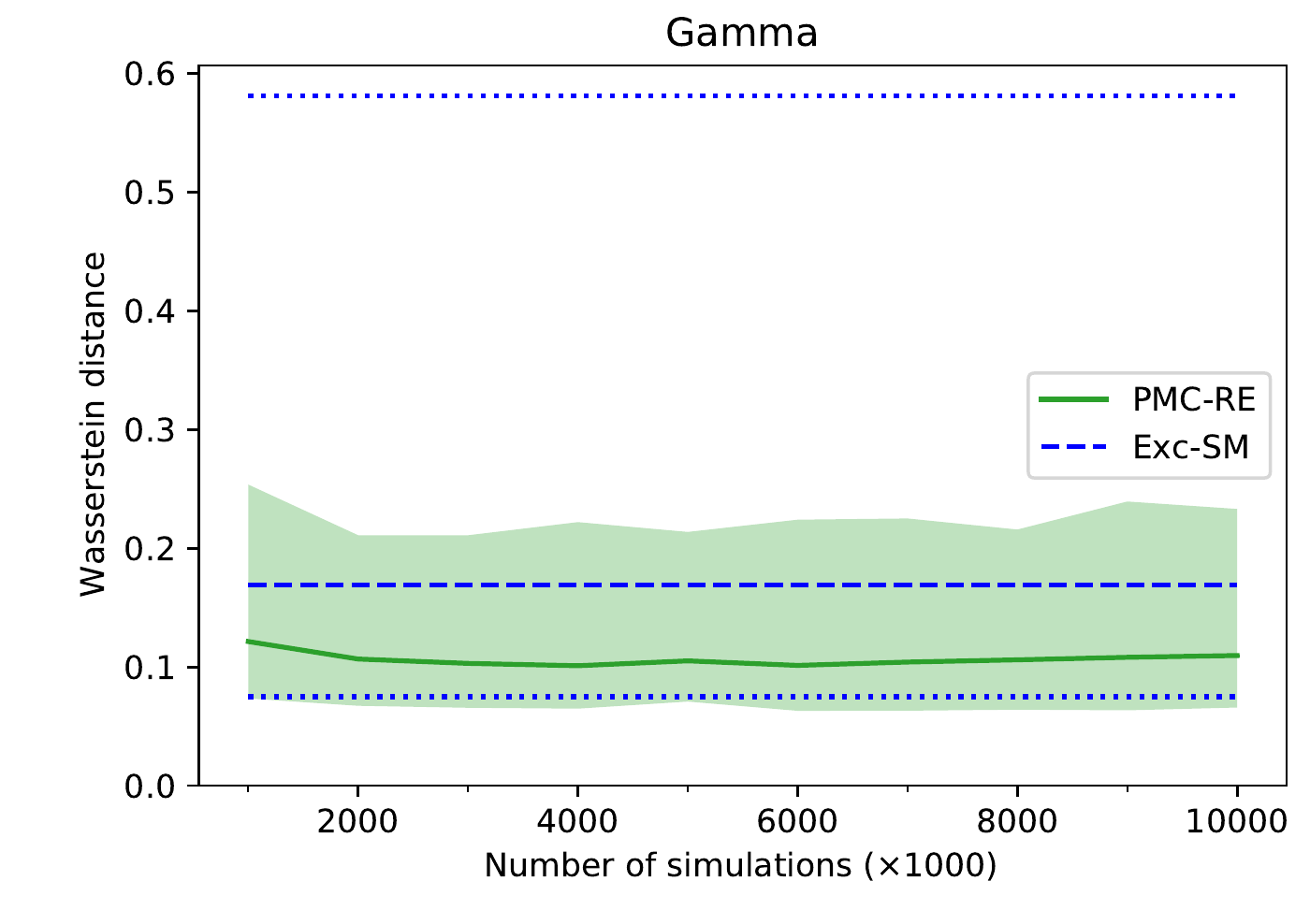}
			
		\end{subfigure}~
		\begin{subfigure}{0.32\textwidth}
			\centering
			\includegraphics[width=1\linewidth]{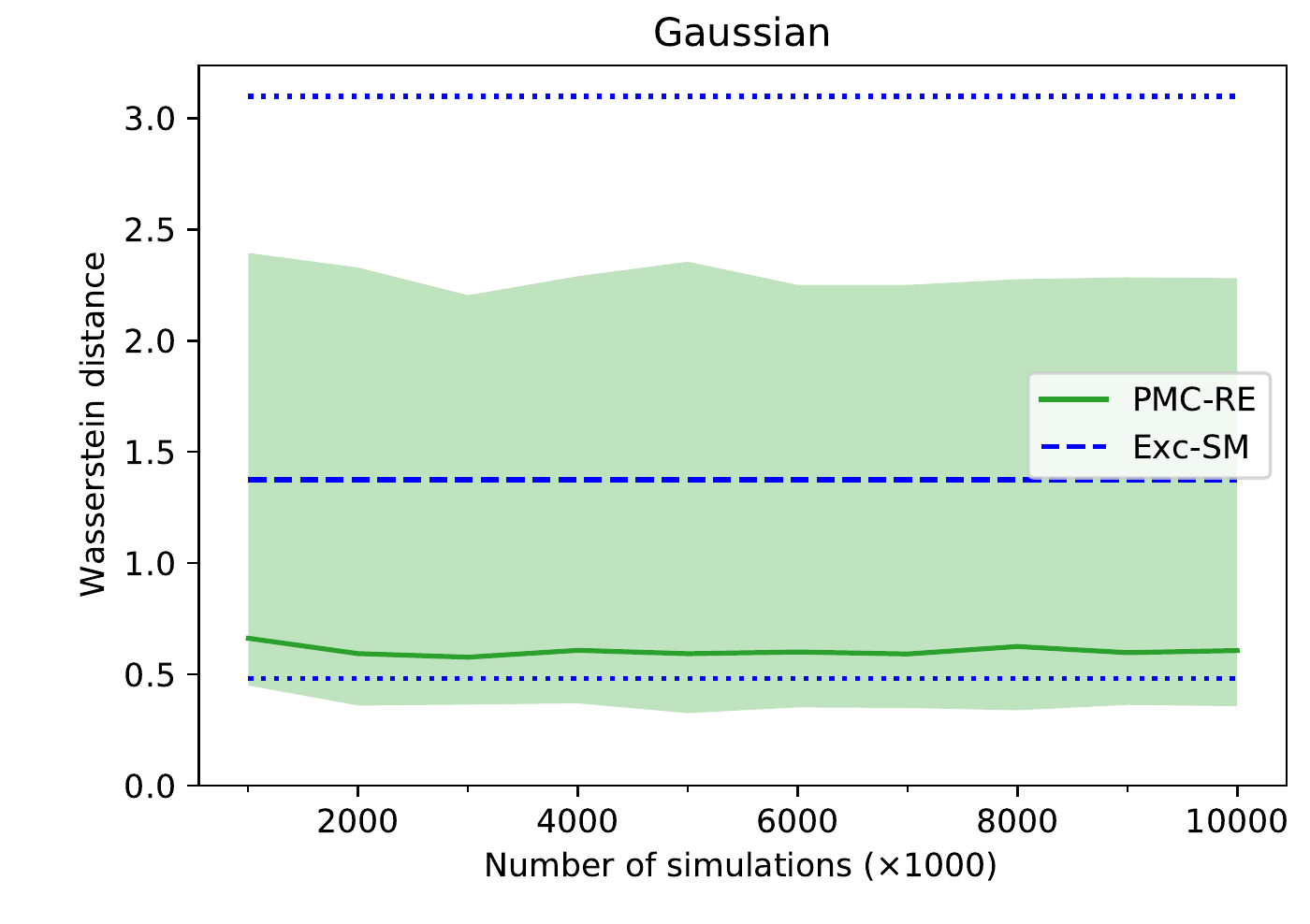}
			
		\end{subfigure}\\
		\caption{\textbf{Wasserstein distance between approximate and exact posterior at different iterations of the sequential algorithms for the exponential family models}, for 100 different observations. The solid line denotes median, while colored regions denote 95\% probability density region; an horizontal line denoting the value obtained with Exc-SM is also represented, 95\% probability density region denoted by dotted horizontal lines. The horizontal axis reports the number of model simulations corresponding to the iteration of the different algorithms. }
		\label{Fig:wass_dist_iter_exp_fam_models}
	\end{figure}

	\begin{figure}[!htbp]
		\centering
		\begin{subfigure}{0.32\textwidth}
			\centering
			\includegraphics[width=1\linewidth]{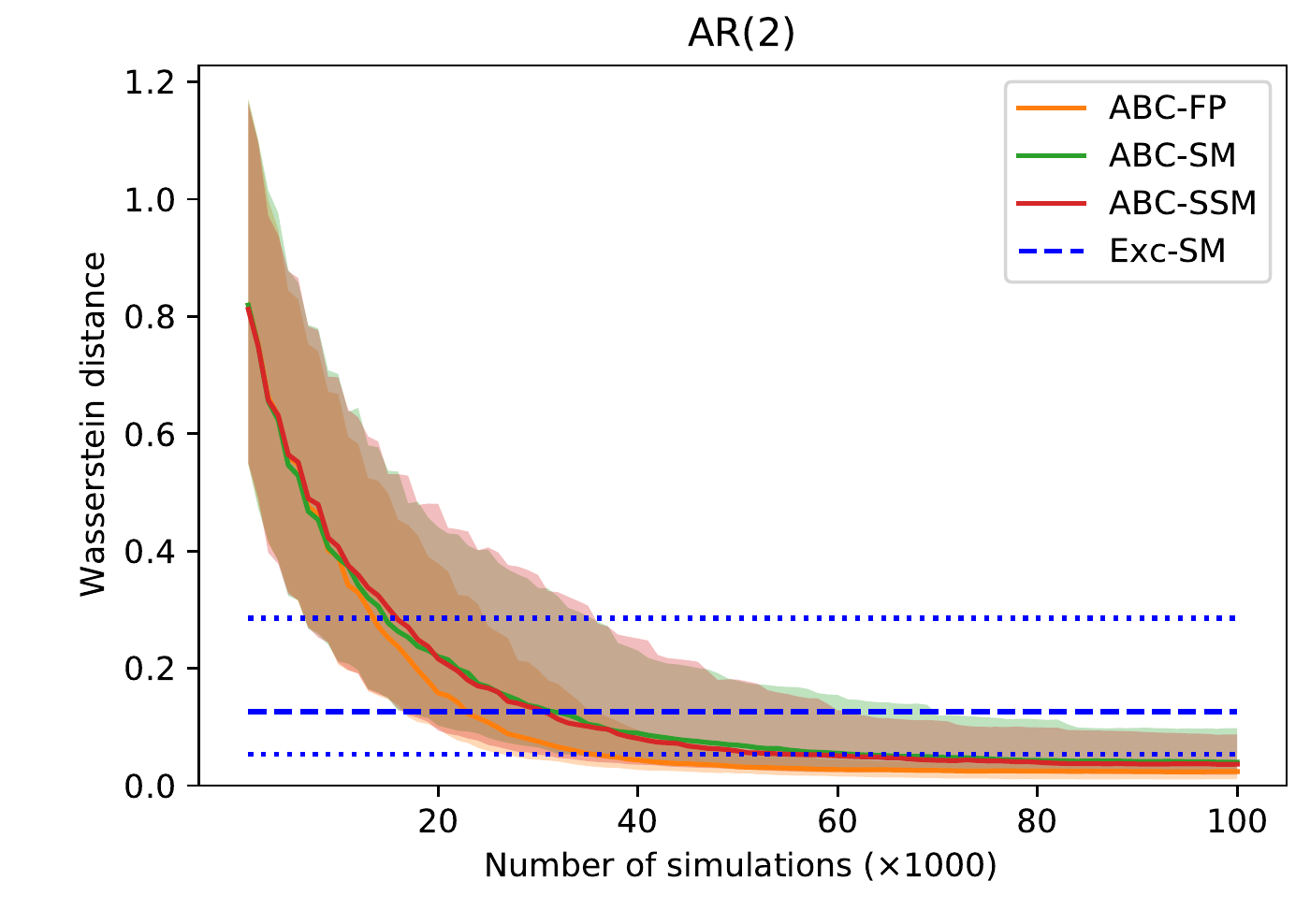}
			
		\end{subfigure}~
		\begin{subfigure}{0.32\textwidth}
			\centering
			\includegraphics[width=1\linewidth]{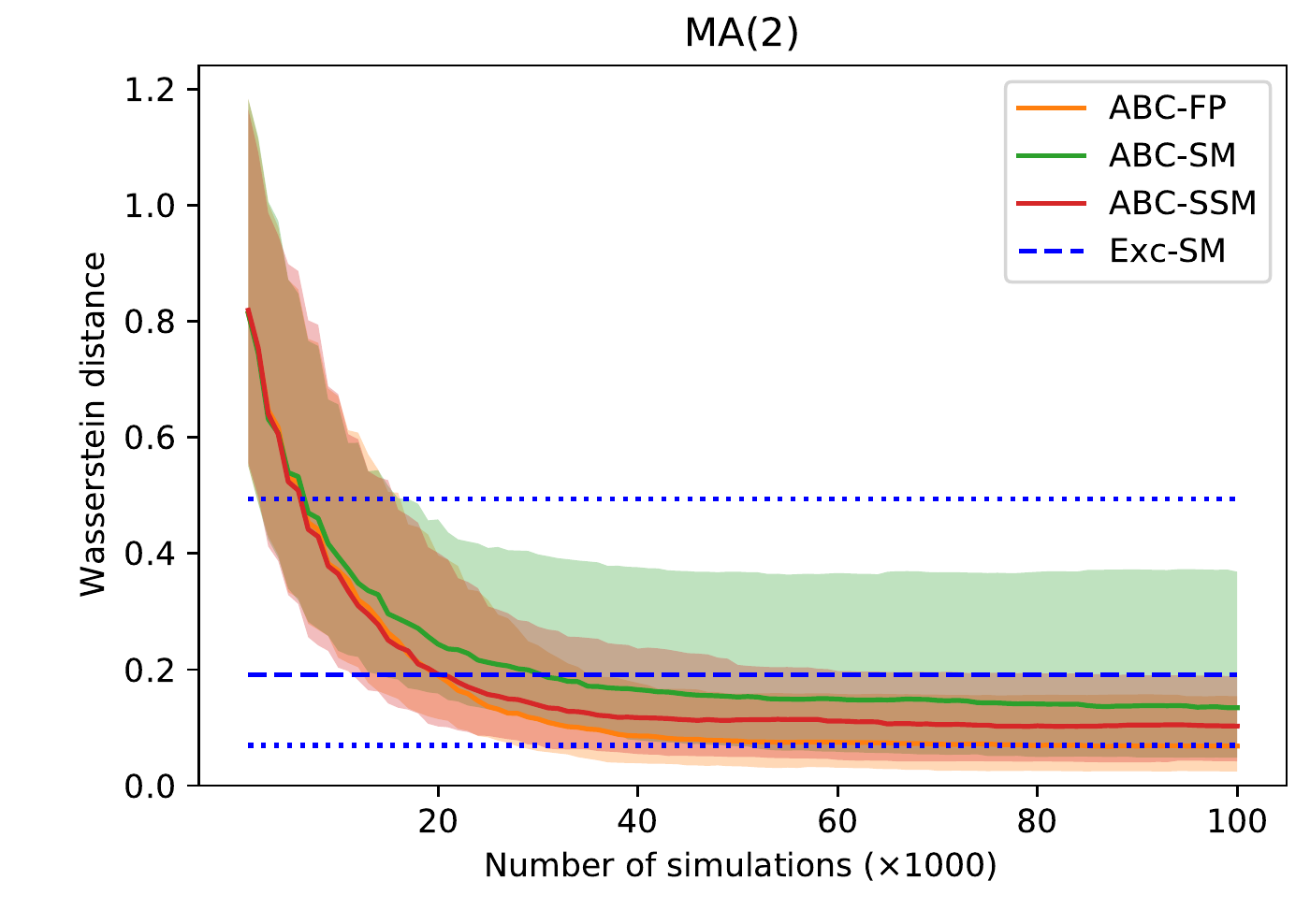}
			
		\end{subfigure}\\
		\begin{subfigure}{0.32\textwidth}
			\centering
			\includegraphics[width=1\linewidth]{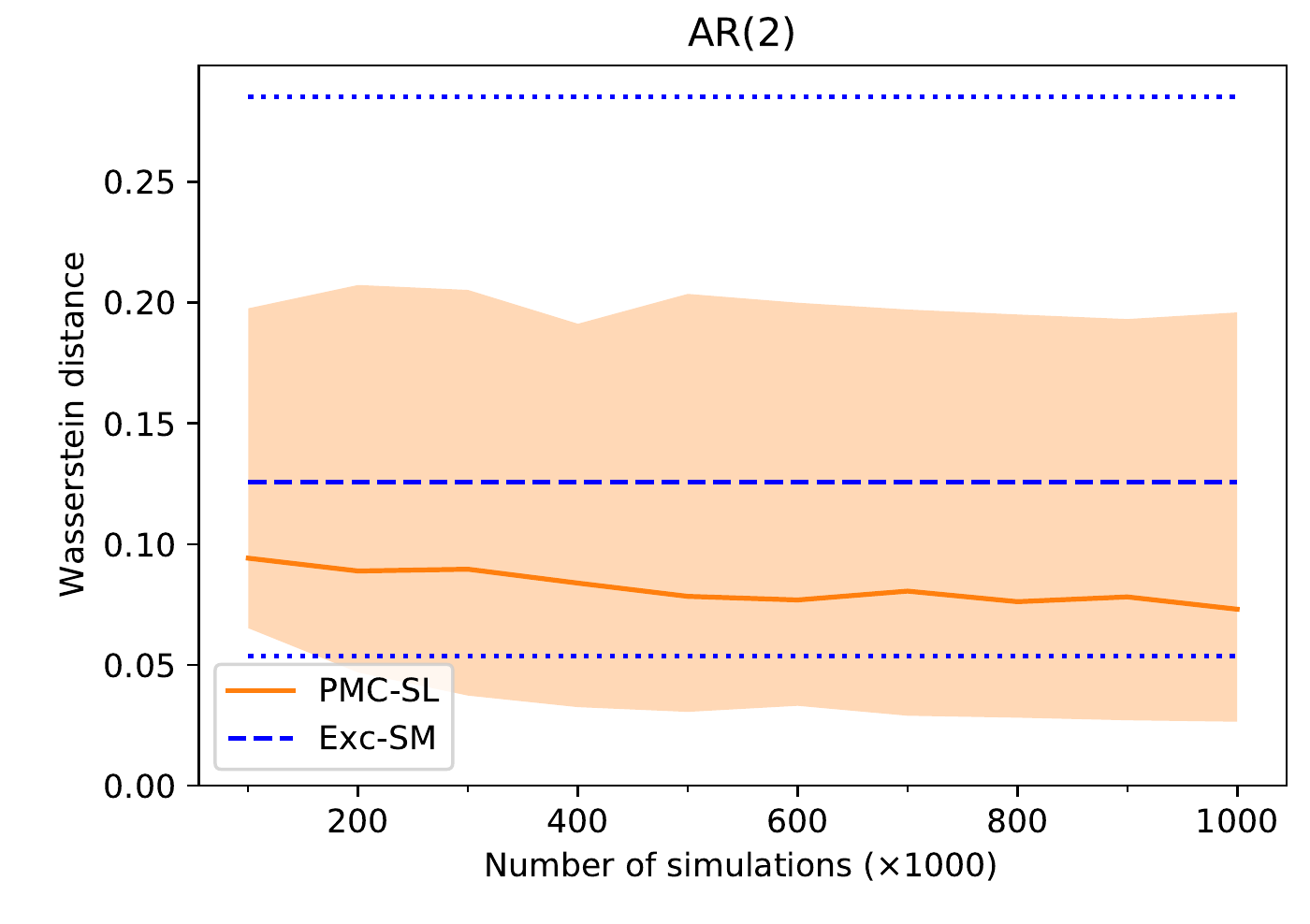}
			
		\end{subfigure}~
		\begin{subfigure}{0.32\textwidth}
			\centering
			\includegraphics[width=1\linewidth]{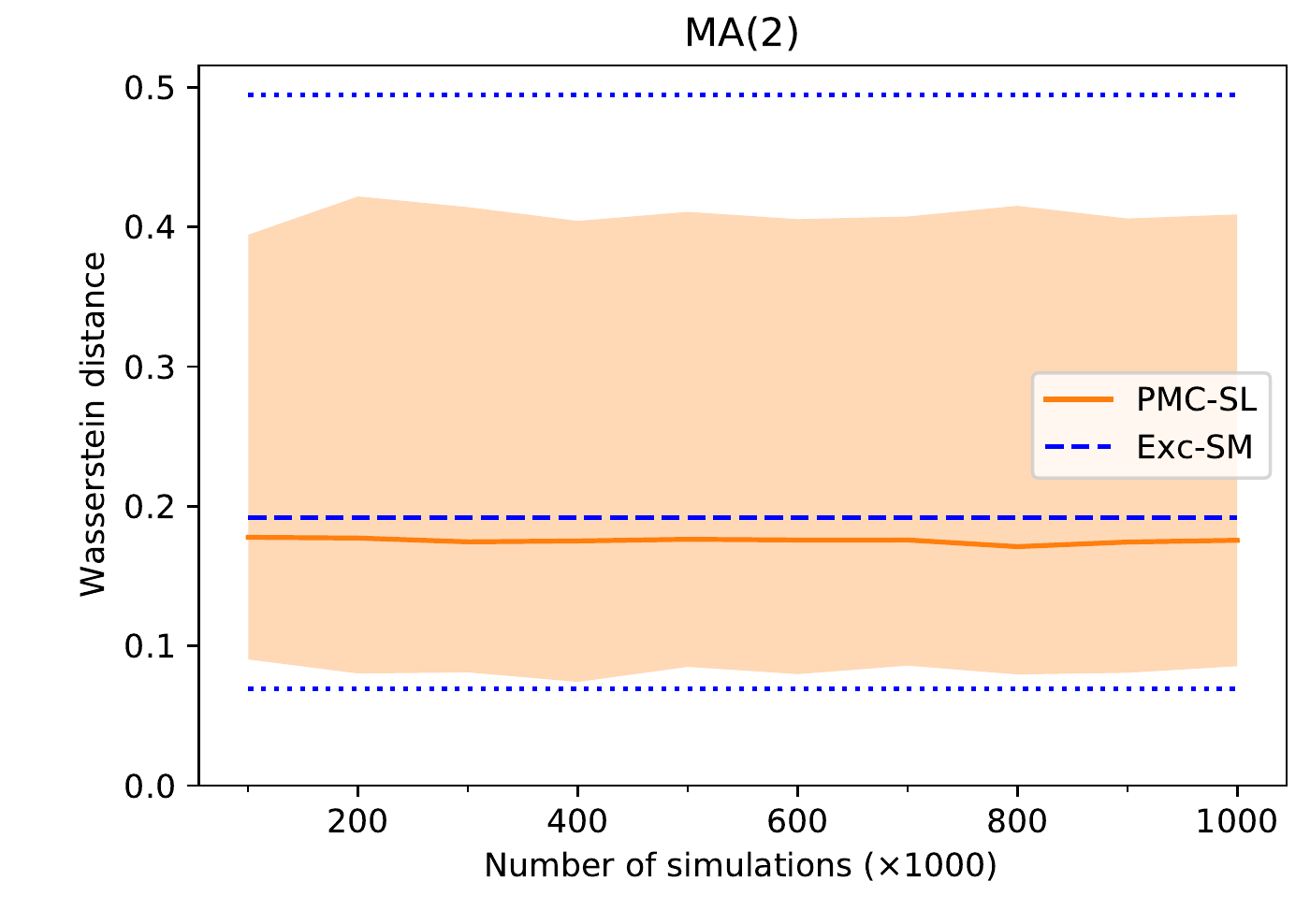}
			
		\end{subfigure}\\
		\begin{subfigure}{0.32\textwidth}
			\centering
			\includegraphics[width=1\linewidth]{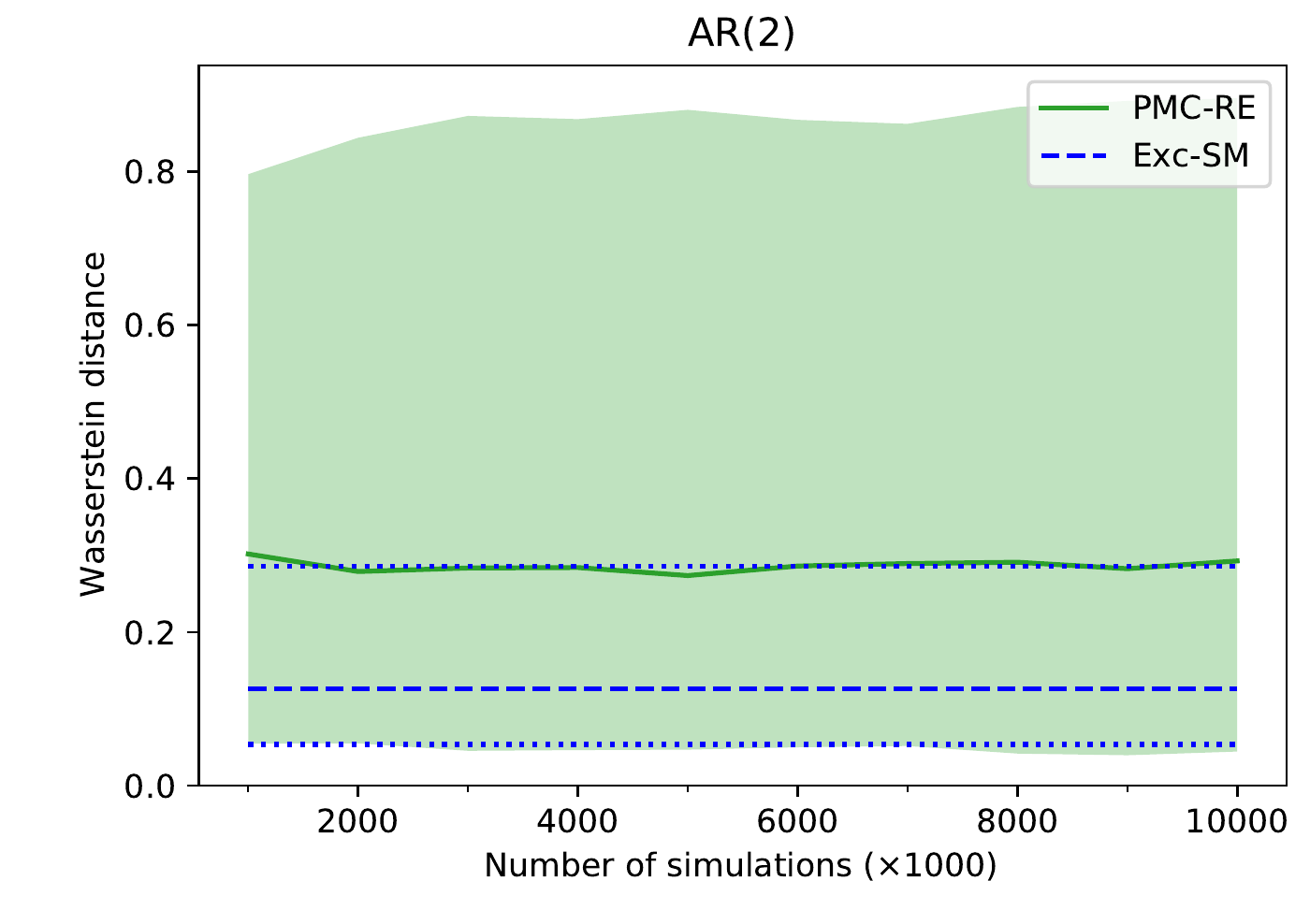}
			
		\end{subfigure}~
		\begin{subfigure}{0.32\textwidth}
			\centering
			\includegraphics[width=1\linewidth]{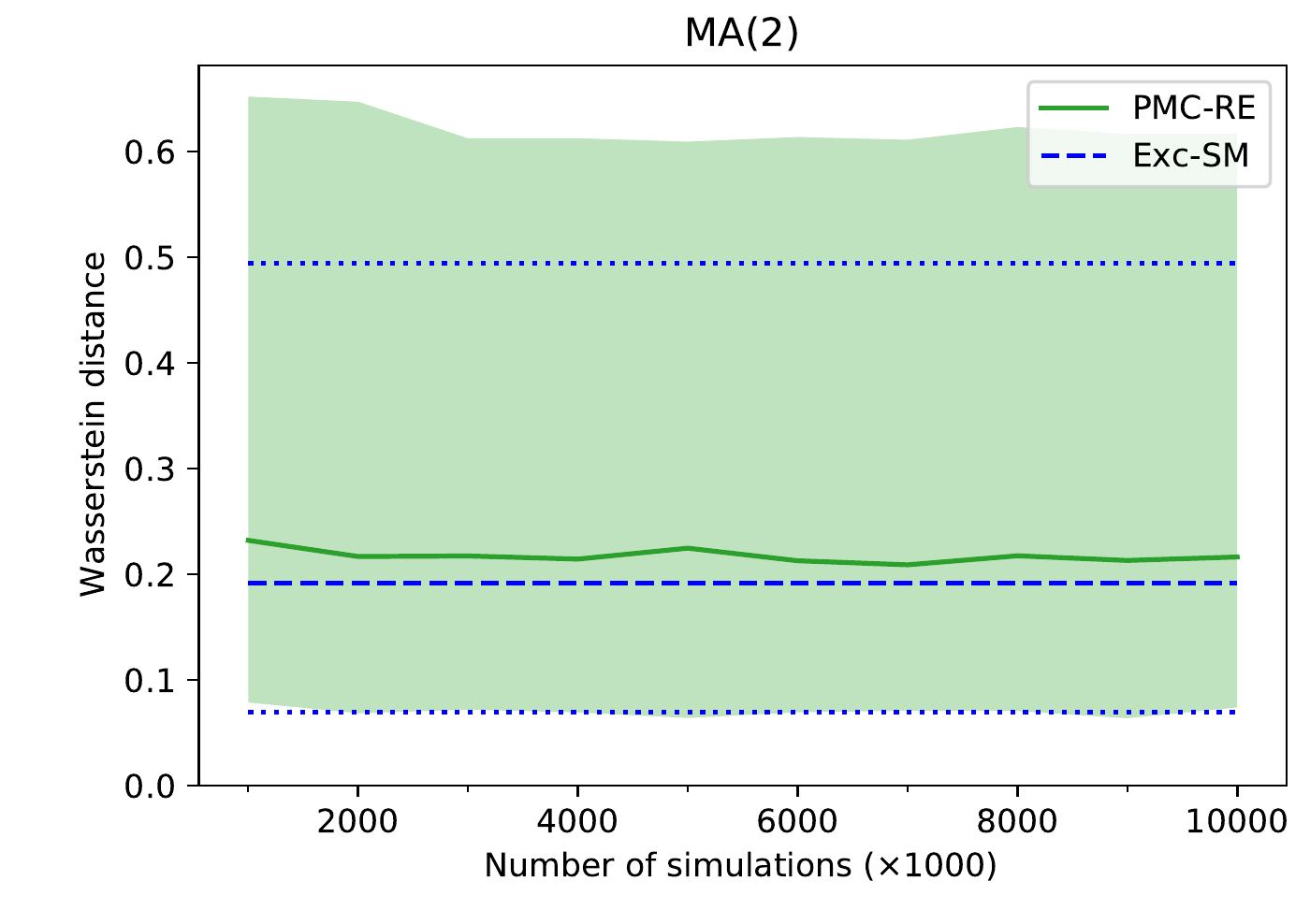}
			
		\end{subfigure}\\
		\caption{\textbf{Wasserstein distance between approximate and exact posterior at different iterations of the sequential algorithms for the AR(2) and MA(2) models}, for 100 different observations. The solid line denotes median, while colored regions denote 95\% probability density region; an horizontal line denoting the value obtained with Exc-SM is also represented, 95\% probability density region denoted by dotted horizontal lines. The horizontal axis reports the number of model simulations corresponding to the iteration of the different algorithms. }
		
		\label{Fig:wass_dist_iter_arma_models}
	\end{figure}

	\subsection{Validation with Scoring Rules for Lorenz96 model}\label{app:Lorenz_validation_results}

	Recall that the Lorenz96 model (Section~\ref{sec:Lorenz}) is a multivariate time-series model. Therefore, we use the Scoring Rules (on which more details are given in Appendix~\ref{app:SRs}) to evaluate the predictive performance at each timestep separately.
	
	First, let us recall the definition of the posterior predictive density: 
	\begin{equation}\label{}
		p(x|\dataObs) = \int p(x|\theta) \pi(\theta|\dataObs),
	\end{equation}
	whose corresponding posterior predictive distribution we denote as $ P (\cdot|\dataObs) $. In the above expression, $ \pi(\theta|\dataObs) $ may represent the posterior obtained with any of the considered methods (Exc-SM, Exc-SSM, ABC-FP, ABC-SM, ABC-SSM). 
	
	Let $ P^{(t)}(\cdot|\dataObs) $ denote the posterior predictive distribution at time $ t $ conditioned on an observation $ \dataObs $, and let $ x^{0,(t)} $ denote the \textit{t}-th timestep of the observation. Then, we are interested in: 
	\begin{equation}\label{}
		\SE(P^{(t)}(\cdot|\dataObs), x^{0,(t)}) \quad \text{ and }\quad   S_k(P^{(t)}(\cdot|\dataObs), x^{0,(t)});
	\end{equation}
	we estimate these using samples from $ P^{(t)}(\cdot|\dataObs) $ with the unbiased estimators discussed in Appendix~\ref{app:SRs}. In the Kernel Score, the bandwidth of the Gaussian kernel is set from simulations as described in Appendix~\ref{app:tuning_gamma}. 
	
	In this way, we obtain a score for an observation $ \dataObs $ at each timestep of the model. This procedure is repeated for 100 different observations; the scores at each timestep are reported in Figure~\ref{fig:Lorenz_validation_timestep}, while the summed scores over timesteps were reported in Figure~\ref{fig:SRs} in the main text. In both Figures, we report the median value and various quantiles over the considered 100 observations. It can be seen that ABC-FP is slightly outperformed by our proposed methods for both the large and small Lorenz96 configuration. Additionally, notice how, in the small configuration, the Energy score has an increasing trend over $ t $, while the Kernel one instead decreases. 	
	
	\begin{figure}[!htbp]
		\centering
		\begin{subfigure}{0.42\textwidth}
			\centering
			\includegraphics[width=1\linewidth]{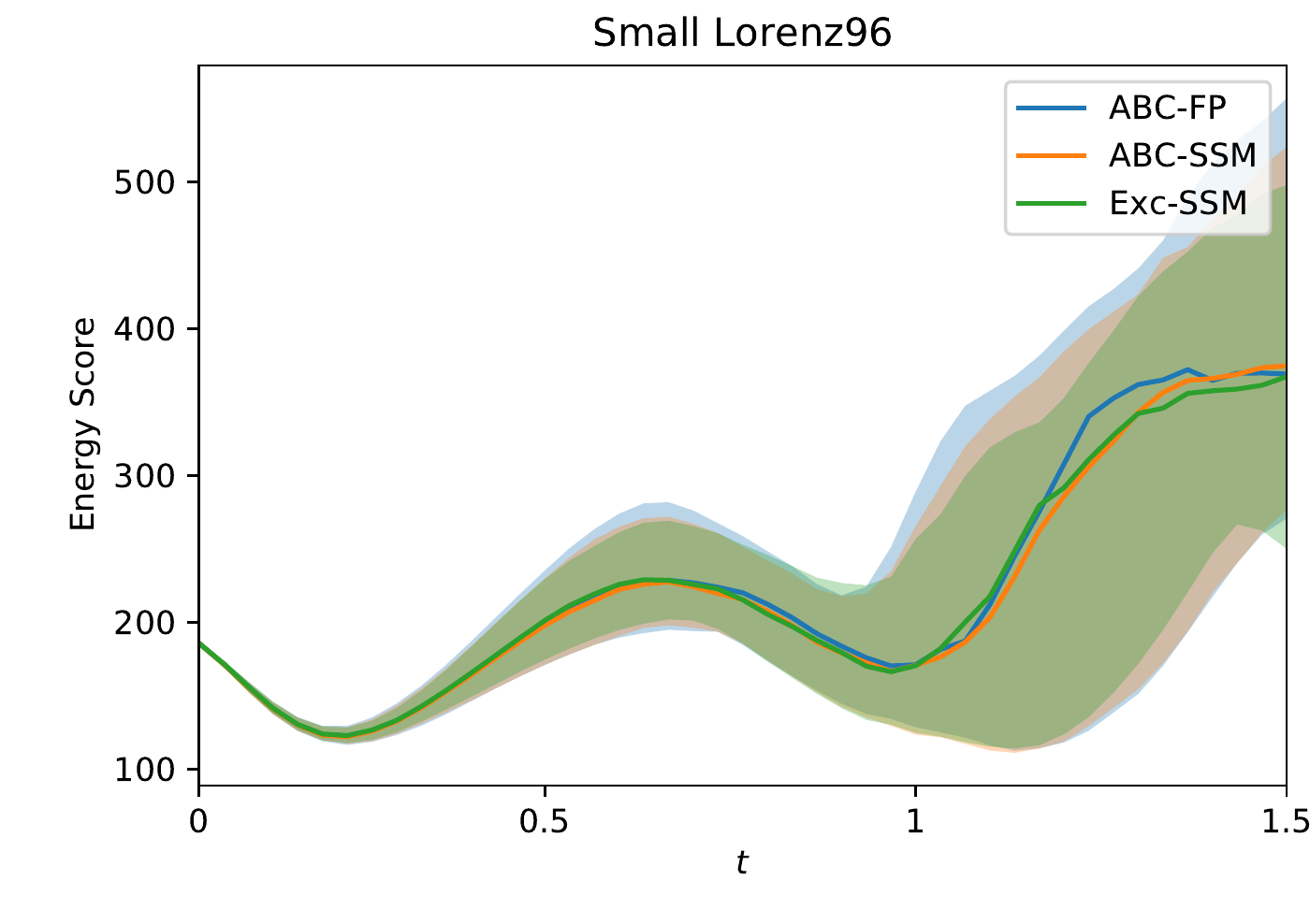}
			
		\end{subfigure}~
		\begin{subfigure}{0.42\textwidth}
			\centering
			\includegraphics[width=1\linewidth]{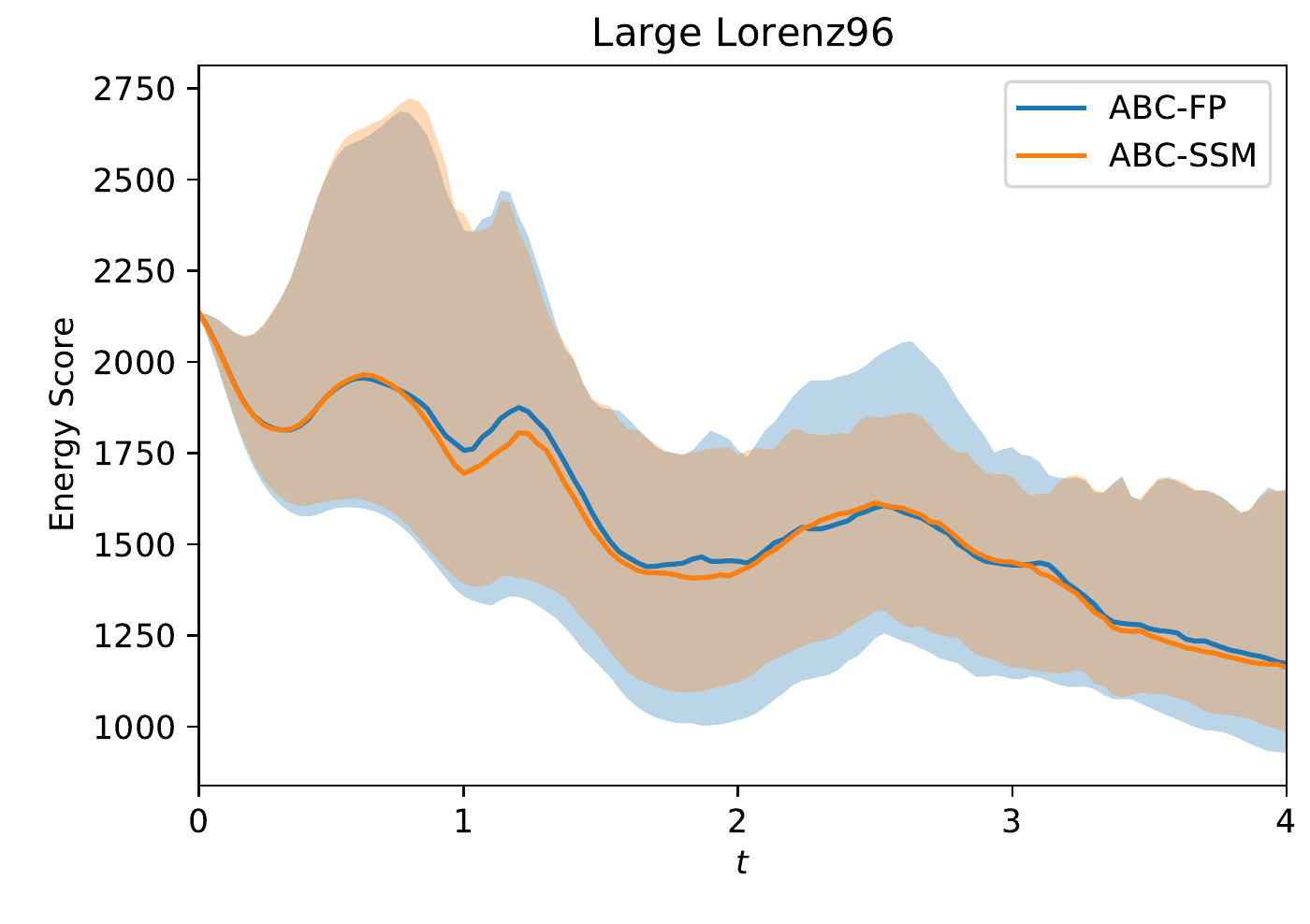}
			
		\end{subfigure}\\
		\begin{subfigure}{0.42\textwidth}
			\centering
			\includegraphics[width=1\linewidth]{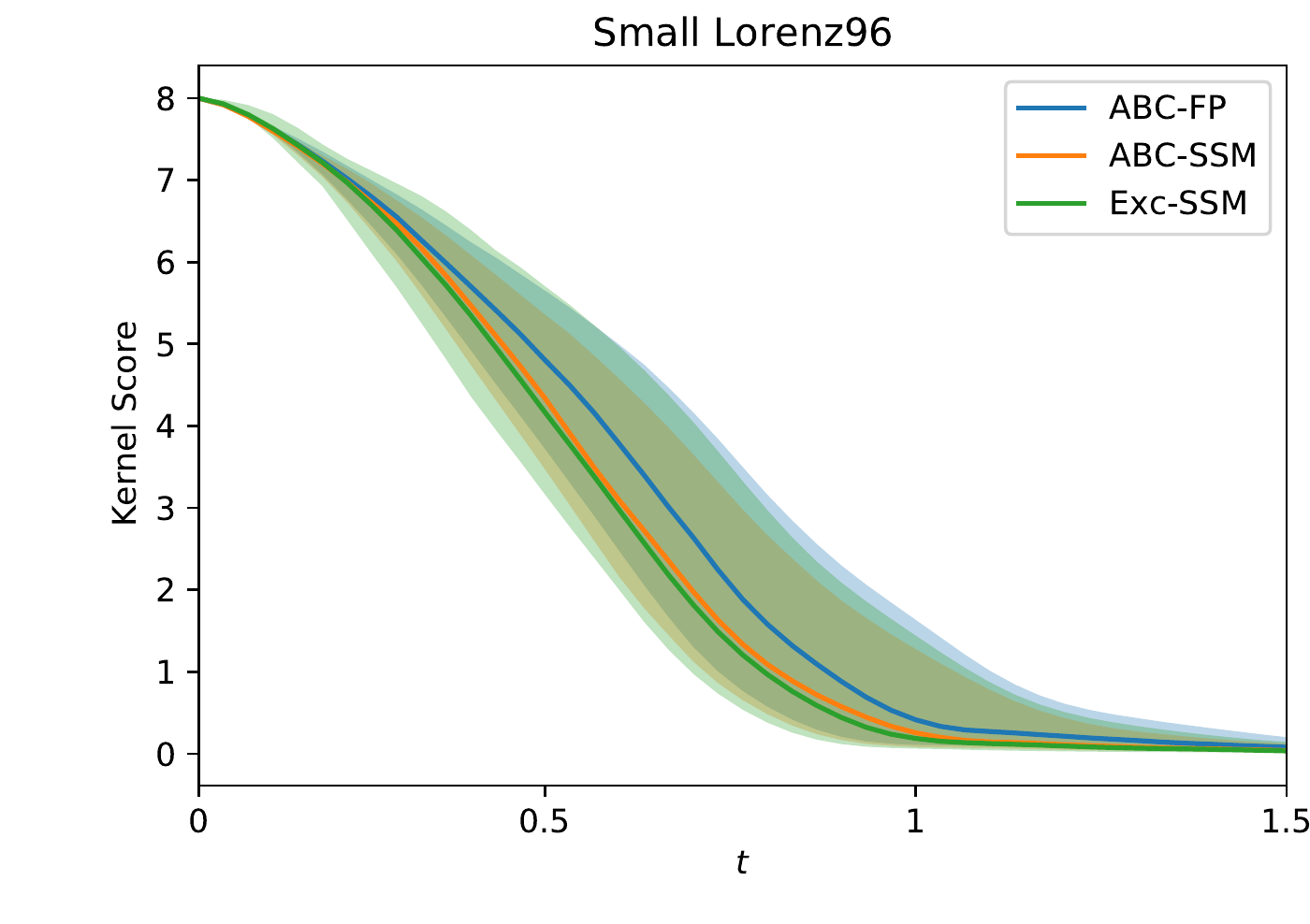}
			
		\end{subfigure}~
		\begin{subfigure}{0.42\textwidth}
			\centering
			\includegraphics[width=1\linewidth]{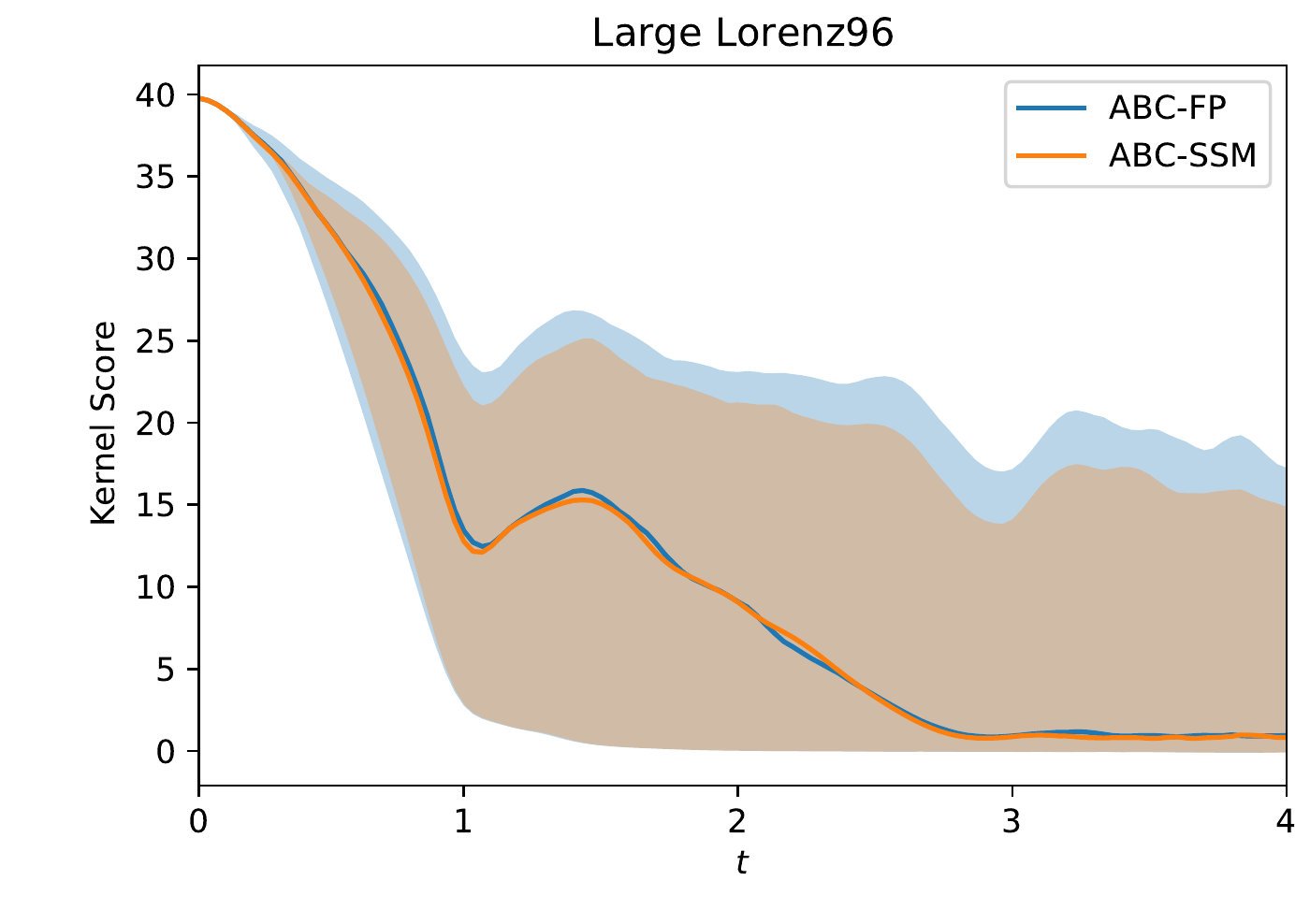}
			
		\end{subfigure}
		\caption{\textbf{Posterior predictive performance of the different methods at each timestep according to the Kernel and Energy Scores;} the smaller, the better. Samples from the posterior predictive were obtained for 100 observations, and both Scoring Rules estimated. The solid lines denote medians over the 100 observations, while colored regions denote 95\% probability density region.}	\label{fig:Lorenz_validation_timestep}
	\end{figure}

	\subsubsection{Setting $ \gamma $ in the kernel Scoring Rule}\label{app:tuning_gamma}
	
	In the kernel score $ S_k $, the Gaussian kernel in Eq.~\eqref{Eq:gau_k} is used across this work. There, the kernel bandwidth $ \gamma $ is a free parameter. In order to ensure comparability between the Scoring Rule values for different observations and inference methods, the value of $ \gamma $ needs to be fixed independently on both.
	
	Inspired by \cite{park2016k2}, we exploit an empirical procedure to set $ \gamma $ for the specific case of multivariate time-series models (of which our Lorenz96 model is an instance, see Section~\ref{sec:Lorenz}). Specifically, we use the following procedure: 
	\begin{enumerate}
		\item Draw a set of parameter values $ \theta_j \sim \pi(\theta) $ and simulations $ \dsim_{j} \sim p(\cdot|\theta_j) $, for $ j=1, \ldots, m $.
		\item Estimate the median of $ \{ ||\dsim_{j}^{(t)} - \dsim_{k}^{(t)} ||_2 \}_{jk}^{m} $ and call it $ \hat \gamma^{(t)} $, for all values of $ t $.
		\item Set the estimate for $ \gamma $ as the median of $  \hat \gamma^{(t)}$ over all considered timesteps $ t $.
	\end{enumerate}
	
	Empirically, we use $ m=1000 $. Note that the above strategy uses medians rather than means as those are more robust to outliers in the estimates. With this method, we obtain $ \gamma\approx1.54  $ for the small version of the Lorenz96 model and $ \gamma=6.38 $ for the large one.

\end{document}